\newcolumntype{H}{>{\setbox0=\hbox\bgroup}c<{\egroup}@{}}
\newcommand*\bigcdot{\mathpalette\bigcdot@{.5}}
\newcommand*\bigcdot@[2]{\mathbin{\vcenter{\hbox{\scalebox{#2}{$\m@th#1\bullet$}}}}}
\newcommand{\true}{\top\!\!\!\!\top}
\tikzstyle{annotateArrow} = [black]
\tikzstyle{event} = [rectangle, fill=white, inner sep=0, minimum height=4.5mm, minimum width=11mm, rounded corners]
\tikzset{
   every path/.style={>=stealth},
   po/.style={->,color=colorPO, thick},
   ppo/.style={->,color=colorPPO, thick},
   sw/.style={->,color=colorSW, thick},
   rf/.style={->,color=colorRF,dashed, thick},
   mrf/.style={->,color=colorRF,dashed, thick},   
   urf/.style={->,color=colorRF, thick},
   fr/.style={->,color=colorFR,dashed, thick},
   hb/.style={->,color=colorHB,thick, thick},
   mo/.style={->,color=colorMO,dotted, very thick},
   dob/.style={->,color=colorDOB, very thick},
   ob/.style={->,color=colorOB, very thick},
   rmw/.style={->,color=colorRMW,thick, thick},
   rseq/.style={->,color=colorRSEQ,thick,dotted, thick},
   com/.style={->,color=colorCOM,thick, thick},
}
\colorlet{ColorOne}{black}
\colorlet{ColorTwo}{gray!70!black!70}
\definecolor{ColorThree}{HTML}{559310}
\definecolor{ColorFour}{HTML}{7f42a9}
\definecolor{ColorFive}{HTML}{ef972d}
\definecolor{ColorSix}{HTML}{1e90ff}
\definecolor{ColorSeven}{HTML}{4c319e}
\definecolor{ColorEight}{HTML}{ec102f}
\definecolor{ColorNine}{HTML}{86dc3d}
\definecolor{ColorTen}{HTML}{208eb7}
\definecolor{ColorEleven}{HTML}{f6248f}
\definecolor{ColorTwelve}{HTML}{0000ff}
\newcommand{\VarXOne}{\textcolor{ColorOne}{x_1}}
\newcommand{\VarXTwo}{\textcolor{ColorTwo}{x_2}}
\newcommand{\VarYOne}{\textcolor{ColorThree}{y_1}}
\newcommand{\VarYTwo}{\textcolor{ColorFour}{y_2}}
\newcommand{\VarYThree}{\textcolor{ColorFive}{y_3}}
\newcommand{\VarYFour}{\textcolor{ColorSix}{y_4}}
\newcommand{\VarYFive}{\textcolor{ColorSeven}{y_5}}
\newcommand{\VarYSix}{\textcolor{ColorEight}{y_6}}
\newcommand{\VarYSeven}{\textcolor{ColorNine}{y_7}}
\newcommand{\VarYEight}{\textcolor{ColorTen}{y_8}}
\newcommand{\VarYEll}{\textcolor{ColorThree}{y_{\ell}}}
\newcommand{\VarZOne}{\textcolor{ColorThree}{z_1}}
\newcommand{\VarZTwo}{\textcolor{ColorFour}{z_2}}
\newcommand{\VarZThree}{\textcolor{ColorFive}{z_3}}
\newcommand{\VarZFour}{\textcolor{ColorSix}{z_4}}
\newcommand{\VarZFive}{\textcolor{ColorSeven}{z_5}}
\newcommand{\VarZSix}{\textcolor{ColorEight}{z_6}}
\newcommand{\VarZSeven}{\textcolor{ColorNine}{z_7}}
\newcommand{\VarZEight}{\textcolor{ColorTen}{z_8}}
\newcommand{\VarZEll}{\textcolor{ColorThree}{z_{\ell}}}
\newcommand{\VarA}[1]{\textcolor{ColorEleven}{a_{#1}}}
\newcommand{\VarB}{\textcolor{ColorTwelve}{b}}
\newcommand{\ReadXOne}[1]{\textcolor{ColorOne}{\rd(x_1, #1)}}
\newcommand{\WriteXOne}[1]{\textcolor{ColorOne}{\wt(x_1, #1)}}
\newcommand{\ReadXTwo}[1]{\textcolor{ColorTwo}{\rd(x_2, #1)}}
\newcommand{\WriteXTwo}[1]{\textcolor{ColorTwo}{\wt(x_2, #1)}}
\newcommand{\ReadYOne}[1]{\textcolor{ColorThree}{\rd(y_1, #1)}}
\newcommand{\WriteYOne}[1]{\textcolor{ColorThree}{\wt(y_1, #1)}}
\newcommand{\ReadYTwo}[1]{\textcolor{ColorFour}{\rd(y_2, #1)}}
\newcommand{\WriteYTwo}[1]{\textcolor{ColorFour}{\wt(y_2, #1)}}
\newcommand{\ReadYThree}[1]{\textcolor{ColorFive}{\rd(y_3, #1)}}
\newcommand{\WriteYThree}[1]{\textcolor{ColorFive}{\wt(y_3, #1)}}
\newcommand{\ReadYFour}[1]{\textcolor{ColorSix}{\rd(y_4, #1)}}
\newcommand{\WriteYFour}[1]{\textcolor{ColorSix}{\wt(y_4, #1)}}
\newcommand{\ReadYFive}[1]{\textcolor{ColorSeven}{\rd(y_5, #1)}}
\newcommand{\WriteYFive}[1]{\textcolor{ColorSeven}{\wt(y_5, #1)}}
\newcommand{\ReadYSix}[1]{\textcolor{ColorEight}{\rd(y_6, #1)}}
\newcommand{\WriteYSix}[1]{\textcolor{ColorEight}{\wt(y_6, #1)}}
\newcommand{\ReadYSeven}[1]{\textcolor{ColorNine}{\rd(y_7, #1)}}
\newcommand{\WriteYSeven}[1]{\textcolor{ColorNine}{\wt(y_7, #1)}}
\newcommand{\ReadYEight}[1]{\textcolor{ColorTen}{\rd(y_8, #1)}}
\newcommand{\WriteYEight}[1]{\textcolor{ColorTen}{\wt(y_8, #1)}}
\newcommand{\ReadZOne}[1]{\textcolor{ColorThree}{\rd(z_1, #1)}}
\newcommand{\WriteZOne}[1]{\textcolor{ColorThree}{\wt(z_1, #1)}}
\newcommand{\ReadZTwo}[1]{\textcolor{ColorFour}{\rd(z_2, #1)}}
\newcommand{\WriteZTwo}[1]{\textcolor{ColorFour}{\wt(z_2, #1)}}
\newcommand{\ReadZThree}[1]{\textcolor{ColorFive}{\rd(z_3, #1)}}
\newcommand{\WriteZThree}[1]{\textcolor{ColorFive}{\wt(z_3, #1)}}
\newcommand{\ReadZFour}[1]{\textcolor{ColorSix}{\rd(z_4, #1)}}
\newcommand{\WriteZFour}[1]{\textcolor{ColorSix}{\wt(z_4, #1)}}
\newcommand{\ReadZFive}[1]{\textcolor{ColorSeven}{\rd(z_5, #1)}}
\newcommand{\WriteZFive}[1]{\textcolor{ColorSeven}{\wt(z_5, #1)}}
\newcommand{\ReadZSix}[1]{\textcolor{ColorEight}{\rd(z_6, #1)}}
\newcommand{\WriteZSix}[1]{\textcolor{ColorEight}{\wt(z_6, #1)}}
\newcommand{\ReadZSeven}[1]{\textcolor{ColorNine}{\rd(z_7, #1)}}
\newcommand{\WriteZSeven}[1]{\textcolor{ColorNine}{\wt(z_7, #1)}}
\newcommand{\ReadZEight}[1]{\textcolor{ColorTen}{\rd(z_8, #1)}}
\newcommand{\WriteZEight}[1]{\textcolor{ColorTen}{\wt(z_8, #1)}}
\newcommand{\ReadA}[2]{\textcolor{ColorEleven}{\rd(a_{#2}, #1)}}
\newcommand{\WriteA}[2]{\textcolor{ColorEleven}{\wt(a_{#2}, #1)}}
\newcommand{\ReadB}[1]{\textcolor{ColorTwelve}{\rd(b, #1)}}
\newcommand{\WriteB}[1]{\textcolor{ColorTwelve}{\wt(b, #1)}}
\newcommand{\TReadXOne}[2]{\textcolor{ColorOne}{\rd(#2, x_1, #1)}}
\newcommand{\TWriteXOne}[2]{\textcolor{ColorOne}{\wt(#2, x_1, #1)}}
\newcommand{\TReadXTwo}[2]{\textcolor{ColorTwo}{\rd(#2, x_2, #1)}}
\newcommand{\TWriteXTwo}[2]{\textcolor{ColorTwo}{\wt(#2, x_2, #1)}}
\newcommand{\TReadYOne}[2]{\textcolor{ColorThree}{\rd(#2, y_1, #1)}}
\newcommand{\TWriteYOne}[2]{\textcolor{ColorThree}{\wt(#2, y_1, #1)}}
\newcommand{\TReadYTwo}[2]{\textcolor{ColorFour}{\rd(#2, y_2, #1)}}
\newcommand{\TReadYFour}[2]{\textcolor{ColorSix}{\rd(#2, y_4, #1)}}
\newcommand{\TWriteYFour}[2]{\textcolor{ColorSix}{\wt(#2, y_4, #1)}}
\newcommand{\TReadYFive}[2]{\textcolor{ColorSeven}{\rd(#2, y_5, #1)}}
\newcommand{\TReadYSeven}[2]{\textcolor{ColorNine}{\rd(#2, y_7, #1)}}
\newcommand{\TWriteYSeven}[2]{\textcolor{ColorNine}{\wt(#2, y_7, #1)}}
\newcommand{\TReadYEight}[2]{\textcolor{ColorTen}{\rd(#2, y_8, #1)}}
\newcommand{\TWriteYEight}[2]{\textcolor{ColorTen}{\wt(#2, y_8, #1)}}
\newcommand{\TReadZOne}[2]{\textcolor{ColorThree}{\rd(#2, z_1, #1)}}
\newcommand{\TWriteZOne}[2]{\textcolor{ColorThree}{\wt(#2, z_1, #1)}}
\newcommand{\TReadZTwo}[2]{\textcolor{ColorFour}{\rd(#2, z_2, #1)}}
\newcommand{\TReadZFive}[2]{\textcolor{ColorSeven}{\rd(#2, z_5, #1)}}
\newcommand{\TReadZSix}[2]{\textcolor{ColorEight}{\rd(#2, z_6, #1)}}
\newcommand{\TReadZSeven}[2]{\textcolor{ColorNine}{\rd(#2, z_7, #1)}}
\newcommand{\TReadZEight}[2]{\textcolor{ColorTen}{\rd(#2, z_8, #1)}}
\newcommand{\TReadA}[3]{\textcolor{ColorEleven}{\rd(#3, a_{#2}, #1)}}
\newcommand{\TWriteA}[3]{\textcolor{ColorEleven}{\wt(#3, a_{#2}, #1)}}
\newcommand{\TReadB}[2]{\textcolor{ColorTwelve}{\rd(#2, b, #1)}}
\newcommand{\TWriteB}[2]{\textcolor{ColorTwelve}{\wt(#2, b, #1)}}
\newcommand*\circled[1]{\tikz[baseline=(char.base)]{
            \node[shape=circle,draw=none,fill=orange!20!white,inner sep=0.7pt, solid] (char) {\textcolor{black}{\texttt{#1}}};}}
\newcommand*\circledsmall[1]{\tikz[baseline=(char.base)]{
  \node[shape=circle,draw=none,fill=orange!20!white,inner sep=0.7pt, solid] (char) {\textcolor{black}{\texttt{#1}}};}}
\author[S. Chakraborty]{Soham Chakraborty}
\affiliation{
  \institution{TU Delft}            
  \country{Netherlands}                    
}
\email{s.s.chakraborty@tudelft.nl} 
\author[S. Krishna]{Shankaranarayanan Krishna}
\affiliation{
  \institution{IIT Bombay}            
  \country{India}                    
}
\email{krishnas@cse.iitb.ac.in}
\author[U. Mathur]{Umang Mathur}
\affiliation{
  \institution{National University of Singapore}            
  \country{Singapore}                    
}
\affiliation{
  \institution{IIT Bombay}            
  \country{India}                    
}
\email{umathur@comp.nus.edu.sg}          
\author[A. Pavlogiannis]{Andreas Pavlogiannis}
\affiliation{
  \institution{Aarhus University}            
  \country{Denmark}                    
}
\email{pavlogiannis@cs.au.dk}          
\begin{document}

\title{How Hard is Weak-Memory Testing?}


\begin{CCSXML}
<ccs2012>
<concept>
<concept_id>10011007.10011074.10011099</concept_id>
<concept_desc>Software and its engineering~Software verification and validation</concept_desc>
<concept_significance>500</concept_significance>
</concept>
<concept>
<concept_id>10003752.10010070</concept_id>
<concept_desc>Theory of computation~Theory and algorithms for application domains</concept_desc>
<concept_significance>300</concept_significance>
</concept>
<concept>
<concept_id>10003752.10010124.10010138.10010143</concept_id>
<concept_desc>Theory of computation~Program analysis</concept_desc>
<concept_significance>300</concept_significance>
</concept>
</ccs2012>
\end{CCSXML}

\ccsdesc[500]{Software and its engineering~Software verification and validation}
\ccsdesc[300]{Theory of computation~Theory and algorithms for application domains}
\ccsdesc[300]{Theory of computation~Program analysis}

\keywords{concurrency, testing, consistency checking, complexity}  

\sloppy
\begin{abstract}
Weak-memory models are standard formal specifications of concurrency across hardware, programming languages, and distributed systems. 
A fundamental computational problem is \emph{consistency testing}:~is the observed execution of a concurrent program in alignment with the specification of the underlying system?
The problem has been studied extensively across Sequential Consistency (SC) and weak memory, and proven to be $\NP$-complete when some aspect of the input (e.g., number of threads/memory locations) is unbounded.
This unboundedness has left a natural question open:~are there efficient \emph{parameterized} algorithms for testing?

The main contribution of this paper is a deep hardness result for consistency testing under many popular weak-memory models:~
the problem remains $\NP$-complete even in its \emph{bounded} setting, where candidate executions contain a bounded number of threads, memory locations, and values.
This hardness spreads across several Release-Acquire variants of C11, a popular variant of its Relaxed fragment, popular Causal Consistency models, and the POWER architecture.
To our knowledge, this is the first result that fully exposes the hardness of weak-memory testing and proves that the problem \emph{admits no parameterization} under standard input parameters.
It also yields a computational separation of these models from SC, x86-TSO, PSO, and Relaxed, for which bounded consistency testing is either known (for SC), or shown here (for the rest), to be in polynomial time.
\end{abstract}
\maketitle


\section{Introduction}\label{SEC:INTRO}



Memory-consistency models play a crucial role in the design, use, and verification of concurrent systems spanning hardware, programming languages, and distributed computing. 
These models formally define the set of behaviors that the system can exhibit as a whole, accounting for the intricate communication patterns between its entities due to buffers, caching, message delays, etc.
The simplest and most widespread, general model is Sequential Consistency ($\scmm$)~\cite{Lamport1978}, which defines program behavior by thread interleaving.
Although its simplicity is a big advantage, $\scmm$ fails to capture the additional, complex behaviors that are abundant in modern concurrency.

In contrast, \emph{weak-memory models}  are richer and more faithful specifications of concurrent/distributed communication and are developed specifically for the system under consideration.
For example, the x86, POWER, and ARM architectures follow their own memory models~\cite{Owens:2010, Alglave:2014,Alglave:2021}, programming languages provide certain primitives for writing weak-memory concurrent programs~\cite{Batty:2013}, and distributed systems implement various models of causal consistency~\cite{Hutto1990,Burckhardt2014,Bouajjani2017}. 
Naturally, verification techniques are specific to the memory model at hand, so as to account for (and verify) all the possible behaviors that the system can exhibit according to the model.

One of the core computational problems associated with a memory model is that of \emph{consistency testing:~is a high-level, observed behavior of a program in alignment with the semantics of the underlying model}~\cite{Gibbons:1997}?
The observed behavior is specified in terms of an abstract execution that defines the sequence of instructions each process/thread executed, the shared memory locations it read from/wrote to, along with the respective values read/written.
Answering this question requires determining the low-level, unobserved behavior of the architecture that gave rise to the observed behavior of the program;
for example, the order in which writes were made visible to (one or more of) the threads, and the dataflow between writes and reads.

Consistency testing is a natural task in both the development and the implementation of memory models.
In particular, memory models are contracts between the designers of a system and its users~\cite{Adve:1990}.
When designing hardware architectures, memory subsystems, compiler optimizations, and distributed-communication protocols, 
consistency-testing serves to validate that the contract has been respected~\cite{Gibbons:1997,Qadeer2003,Manovit2006,Chen2009,Windsor:STVR22}.
From the opposite direction, litmus testing is a standard approach to understanding the semantics of hardware architectures~\cite{Alglave2011,Alglave:2014} so as to design faithful models around them.
Here, given a candidate memory model and the observed execution of a litmus test, consistency checking  verifies whether the execution is a counterexample to the model.
Finally, consistency testing is also used as a separability criterion between different memory models~\cite{Wickerson2017,Kokologiannakis2023}.

Consistency checks are also a widespread task in program verification and testing.
In stateless model checking, the goal of the model checker is to enumerate-and-check the absence of errors in all program executions (typically up to some bound).
To reduce the load of the verification task, an abstraction mechanism partitions the space of all behaviors into equivalence classes, each represented by an abstract execution.
Instead of enumerating concrete executions, the model checker enumerates abstract executions, which yields an exponential reduction of the search space.
Each candidate abstract execution undergoes a consistency check to ensure that the model checker does not diverge to unrealizable parts of the search space~\cite{Chalupa2017,Abdulla:2018,Chatterjee:2019,Kokologiannakis:2019,Kokologiannakis:2022,Abdulla:2019b,Agarwal2021,Bui:2021,Abdulla2023}.
In runtime testing, predictive techniques aim to infer the presence of unobserved, erroneous executions from observed executions that are bug-free. 
Such techniques operate by constructing a candidate execution that manifests the bug (using the observed execution as a guide), and then apply a consistency check (explicitly or implicitly) to verify that the  execution indeed represents valid program behavior (hence the bug report is a true positive)~\cite{Kalhauge2018,Huang14,Pavlogiannis2019,Mathur:lics2020,Luo:2021,Kini2017,Mathur2021}.

Owing to the widespread applicability, the computational complexity of
consistency testing has been studied thoroughly for a wide variety of memory models 
and a wide variety of settings.
The seminal work of \citet{Gibbons:1997} showed that the problem is $\NP$-complete for $\scmm$, even when either the number of threads or the number of memory locations is bounded (but not both).
Later, \citet{Cantin2005} proved that the problem remains $\NP$-complete even with a single memory location (but with unboundedly many threads), which also implies $\NP$-completness for all memory models that adhere to the ``$\scmm$-per-location'' property, such as $\tsomm$, $\psomm$, $\ramm$, $\sramm$ and $\rlxmm$.
\citet{Gonthmakher2003} showed a similar $\NP$-completeness for a Java memory model.
\citet{Furbach2015} proposed a unified treatment of the consistency problem on many weak-memory models which lead to similar $\NP$-completeness results,
while the $\NP$-completeness of consistency testing under various causal-consistency models was proven in~\cite{Bouajjani2017}.

A popular approach to tackle the intractability in consistency testing is via \emph{parameterization}:~an intractable problem becomes tractable when some of its input parameters, such as the number of threads, is bounded~\cite{Gibbons1994,Mathur:lics2020,Abdulla:2018}.
On the other hand, all existing results that establish the $\NP$-hardness of consistency testing in all memory models rely on some input parameters being unbounded.
For $\scmm$, this unboundedness is, in fact, a prerequisite for intractability:~the problem becomes polynomial-time when both the number of threads and memory locations are bounded, a result that has led to efficient parameterized model checking~\cite{Agarwal2021}.
For weak-memory, however, analogous results have thus far remained elusive.
In particular, are there any efficient parameterized algorithms for consistency in weak memories?
\emph{How hard, actually, is weak-memory testing?}

Here we resolve this question by establishing a deep hardness result for many popular weak-memory models:~consistency testing is $\NP$-complete even in its \emph{bounded} setting,
where executions contain a constant number of threads, memory locations and values, and the size of the input is solely determined by the (unbounded) number of events.
To our knowledge, this is the first result that fully exposes the hardness of weak-memory testing and proves that the problem admits no parameterization under standard input parameters.
In turn, this implies that practical approaches to testing have to resort to heuristics, while model checkers might be more performant when exploring finer abstractions (such as reads-from~\cite{Chalupa2017,Abdulla:2019b,Tunc2023}, or those based on executions graphs~\cite{Kokologiannakis:2018,Lahav:2019}).


\Paragraph{Our contributions.}
We study the \emph{bounded consistency testing problem} for many popular weak-memory models found in software, hardware, and distributed systems.
The input is always an abstract execution $\expartial=(\E, \po)$ consisting of a set of events $\E$ and a program order $\po$ defining the order of execution of these events in each thread.
The task is to determine whether $\expartial$ is consistent in a given memory model.
The \emph{boundedness} of the problem refers to the number of threads, memory locations, and values accessed by $\expartial$ being bounded (i.e., constant).
It is easy to see that the problem is in $\NP$ in all the models we consider, and we will not be establishing this fact formally.
We write $\MemModel_1\mmorder \MemModel_2$ to denote that memory model $\MemModel_2$ is weaker than memory model $\MemModel_1$, i.e., any execution that is consistent in $\MemModel_1$ is also consistent in $\MemModel_2$.

We begin with release-acquire semantics, as popularized by C11.
We consider the Release-Acquire model ($\ramm$), as well as its Strong ($\sramm$) and Weak variants ($\wramm$)~\cite{Lahav:2022}.
In addition, we consider $\arlxmm$, the standard Relaxed semantics of C11 equipped with the common assumption of causal acyclicity (aka $(\po\cup\rf)$ acyclicity).
This assumption is often used as an additional axiom~\cite{Margalit2021,NorrisDemsky:2013} as it has been argued that $(\po\cup\rf)$-cycles do not arise in practice~\cite{Lee:PLDI:2023}.
We prove the following theorem.

\begin{restatable}{theorem}{thmlowerra}\label{thm:lower_ra}
Consistency testing for bounded inputs is $\NP$-complete for $\arlxmm$ as well as 
for any memory model $\MemModel$ such that $\sramm \mmorder \MemModel \mmorder \wramm$, even in their atomic read-modify-write (RMW)-free fragment.
\end{restatable}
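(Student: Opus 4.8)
The plan is to establish $\NP$-hardness by a polynomial-time reduction from a suitable $\NP$-complete satisfiability problem (membership in $\NP$ is immediate and not re-proved). A convenient source, given the exactly-one flavour of the available gadgets, is $1$-in-$3$-SAT (or monotone $3$-SAT). From a formula $\phi$ over variables $\prop{1},\dots,\prop{n}$ with clauses $\clause{1},\dots,\clause{m}$ I will build a single abstract execution $\expartial_\phi=(\E,\po)$ over a \emph{constant} number of threads, locations, and values, of size polynomial in $n+m$. The key observation that makes bounded hardness possible at all is that, although the number of locations is bounded, each location may host \emph{unboundedly many} accesses; consequently both the reads-from relation $\rf$ and the per-location modification order $\mo$ (a total order on unboundedly many writes) range over exponentially many candidates. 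The combinatorial content of $\phi$ is therefore encoded not in the number of locations but in these choices of $\rf$ and $\mo$, which the model's coherence and happens-before constraints are engineered to couple together.

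Concretely, I would realise the encoding with the three reusable gadgets already named in the paper. A \textbf{copy gadget} $\CopyGadget{k}{j}$ propagates a single chosen truth value of $\prop{k}$ along a chain of reads and writes on a shared location, so that one boolean choice can be consulted by many clauses despite the bounded location pool; an \textbf{at-least-one} gadget $\AtLeastOneGadget{i}{p}{q}{r}$ and an \textbf{at-most-one} gadget $\AtMostOneGadget{k}{j}{p}{q}$ make the surrounding execution consistent exactly when the number of ``true'' literals entering a clause block meets the clause requirement (together enforcing the exactly-one condition). The clauses are laid out in $\po$ as blocks $\blk{i}{1},\dots,\blk{i}{m_i}$, and the bounded pool of locations is reused by \emph{time-multiplexing}: after a value has been read and copied past a block, its location is reset and recycled for the next block. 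Crucially, all three gadgets are built from plain reads and writes, so the construction lies in the RMW-free fragment, as the statement requires.

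To obtain hardness uniformly for every $\MemModel$ with $\sramm\mmorder\MemModel\mmorder\wramm$ at once, I would use a \emph{sandwiching} argument that exploits the ordering $\mmorder$. I show (i) \emph{completeness}: if $\phi$ is satisfiable then $\expartial_\phi$ is consistent in the strongest model $\sramm$, by reading $\rf$ and $\mo$ off a satisfying assignment and checking $\sramm$'s strong acyclicity condition; and (ii) \emph{soundness}: if $\expartial_\phi$ is consistent in the weakest model $\wramm$ then $\phi$ is satisfiable, by extracting an assignment from any witnessing $\rf$ and arguing that a violated clause would force a coherence/happens-before cycle forbidden even in $\wramm$. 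Since $\sramm\mmorder\MemModel$ gives $\sramm$-consistency $\Rightarrow$ $\MemModel$-consistency and $\MemModel\mmorder\wramm$ gives $\MemModel$-consistency $\Rightarrow$ $\wramm$-consistency, (i) and (ii) together pin $\MemModel$-consistency of $\expartial_\phi$ to satisfiability of $\phi$ for the entire range simultaneously. The same execution, or a minor variant, then handles $\arlxmm$ by invoking its $(\po\cup\rf)$-acyclicity axiom in place of the release-acquire happens-before condition.

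The step I expect to be the main obstacle is the bounded-resource gadget design, specifically the copy gadget together with the location-recycling discipline. With only constantly many locations and values I cannot dedicate a fresh location to each variable or clause, so the same locations must be overwritten and re-read many times; I must guarantee that the $\mo$ induced on each such location neither leaks a stale value into a later block nor creates a spurious $\eco$ edge that breaks consistency \emph{independently} of $\phi$. Verifying that each copy chain transmits exactly the intended boolean and that its coherence footprint is inert outside its intended window — and doing so \emph{simultaneously} under the strong reading (for completeness in $\sramm$) and the weak reading (for soundness in $\wramm$) — is the delicate part. By contrast, once the read/write interfaces of the gadgets are fixed, the clause gadgets and the overall sandwiching should follow routinely.
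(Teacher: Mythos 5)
Your plan for the range $\sramm \mmorder \MemModel \mmorder \wramm$ is essentially the paper's own approach: a reduction from monotone 1-in-3 SAT using copy, at-most-one-true, and at-least-one-true gadgets over constantly many threads and locations, with the sandwich argument (completeness witnessed in $\sramm$, soundness extracted from any $\wramm$-consistent extension) pinning down every model in between, and with the bounded-value issue handled by recycling locations and blocking cross-block reads-from edges via interposed accesses (the paper does this by inserting auxiliary write/read pairs between consecutive gadget instances so that $(\po_x\cup\rf_x)^+$ paths forbid stale reads). Your identification of the copy-chain/recycling discipline as the delicate step is also where the paper spends most of its effort; the paper additionally needs a \emph{copy-down} gadget to couple the truth value across consecutive phases, but that is a refinement of the same idea.

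The genuine gap is your last claim, that ``the same execution, or a minor variant, then handles $\arlxmm$ by invoking its $(\po\cup\rf)$-acyclicity axiom in place of the release-acquire happens-before condition.'' In the paper's lattice, $\arlxmm$ is \emph{not} in the range $[\sramm,\wramm]$: it is weaker than $\ramm$ but incomparable with $\wramm$, because under relaxed semantics $\rf$-edges do not contribute to any cross-location happens-before, so the \WRCoh{} axiom that drives every ``forced $\rf$-edge'' step in the $\wramm$ soundness argument is simply unavailable. Completeness for $\arlxmm$ would indeed follow for free from $\sramm$-consistency, but soundness does not transfer: an $\arlxmm$-consistent extension of the $[\sramm,\wramm]$ execution need not satisfy \WRCoh{}, so a violated clause no longer yields a forbidden cycle. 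This is why the paper builds a structurally different family of gadgets for $\arlxmm$ (same names and interfaces, different internals), whose forcing steps are powered instead by \PORF{} together with the per-location axioms \RlxWCoh{} and \RlxRCoh{} acting on $\mo$. To complete your proof you would need to design and verify this second gadget family; it is not a minor variant of the first.
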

Note that \cref{thm:lower_ra} establishes hardness for $\arlxmm$, $\wramm$, $\ramm$, $\sramm$ as well as the whole range of models between $\sramm$ and $\wramm$.
This result improves existing results on consistency checking, for which hardness relied on an unbounded domain of threads and/or memory locations~\cite{Gibbons1994,Cantin2005,Furbach2015}.

Next, we turn our attention to popular causal-consistency models~\cite{Lamport1978,Fidge1988}.
There have been several efforts to formalize various aspects of causal consistency,
out of which have emerged three well-accepted models, namely
Causal Consistency $\ccmm$~\cite{Hutto1990,Bouajjani2017},
Causal Convergence $\cvmm$~\cite{Burckhardt2014,Perrin2016,Bouajjani2017}, and
Causal Memory $\cmmm$~\cite{Ahamad1995,Perrin2016,Bouajjani2017}.
It was recently shown that $\ccmm$ coincides with $\wramm$ while $\cvmm$ coincides with $\sramm$~\cite{Lahav:2022}.
Thus \cref{thm:lower_ra} extends to $\ccmm$ and $\cvmm$.
We prove that the problem is also hard for $\cmmm$, thereby establishing hardness for the ranges defined by the three main models.


\begin{figure}
\centering
\begin{tikzpicture}
\def\xstep{1.4}
\def\ystep{0.8}
\node (sc) at (0*\xstep,0*\ystep)  {$\scmm$};
\node (tso) at (0*\xstep,-1*\ystep)  {$\tsomm$};
\node (cm) at (-2*\xstep,-2*\ystep)  {$\cmmm$};
\node (pso) at (2*\xstep,-2*\ystep)  {$\psomm$};
\node (sra) at (0*\xstep,-2*\ystep) {$\sramm=\cvmm$};
\node (ra) at (0*\xstep,-3*\ystep) {$\ramm$};
\node (wra) at (-2*\xstep,-4*\ystep) {$\wramm=\ccmm$};
\node (arlx) at (0*\xstep, -4*\ystep) {$\arlxmm$}; 
\node (rlx) at (0*\xstep, -5*\ystep) {$\rlxmm$}; 
\node (power) at (2.5*\xstep, -4.75*\ystep) {$\powermm$};


    
%
  \draw[po] (sc) to (tso); 
  \draw[po] (tso) to (cm);
  \draw[po] (tso) to (sra);
  \draw[po] (tso) to (pso);
  \draw[po, line width=0.8mm] (sra) to (ra);
  \draw[po, line width=0.8mm] (ra) to (wra);
  \draw[po, line width=0.8mm] (cm) to (wra);
  \draw[po] (ra) to (arlx);
  \draw[po] (arlx) to (rlx);
  \draw[po] (pso) to (arlx);
  
%
%

\node[] at (1.8*\xstep, -0.8*\ystep) {\large $\PComplexity$};
\node[] at (-1.5*\xstep, -5*\ystep) {\large $\PComplexity$};
\node[] at (2.5*\xstep, -3.5*\ystep) {\large $\NPComplete$};

\draw[draw=gray, very thick, dashed, rounded corners] (-1*\xstep,0.2*\ystep) to[out=-50, in=160] (-0.3*\xstep,-1.3*\ystep) to (0.5*\xstep, -1.5*\ystep) to[out=-50, in=180] (3.5*\xstep, -2.5*\ystep);

\draw[draw=gray, very thick, dashed, rounded corners] (-3*\xstep,-4.6*\ystep) to[out=5, in=170] (0.3*\xstep,-4.7*\ystep)  to[out=-10, in=180] (3.5*\xstep, -5.3*\ystep) ;
\end{tikzpicture}   
\caption{\label{fig:models_lattice}
The complexity landscape of bounded weak-memory testing.
An arrow $\MemModel_1\to \MemModel_2$ means that $\MemModel_1$ is stronger than $\MemModel_2$.
Thick arrows represent range-hardness for all models between the endpoints.
The complexity of bounded consistency checking for all models except for $\scmm$ are established in this paper.
}
\end{figure}

\begin{restatable}{theorem}{thmlowercc}\label{thm:lower_cc}
Consistency testing for bounded inputs is $\NP$-complete for any memory model $\MemModel$ such that
(i)~$\cvmm \mmorder \MemModel \mmorder \ccmm$ or
(ii)~$\cmmm \mmorder \MemModel \mmorder \ccmm$.
\end{restatable}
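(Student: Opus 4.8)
The plan is to split along the two ranges and lean on the monotonicity of $\mmorder$, so that each range reduces to controlling only its two \emph{endpoints}. For part~(i) there is essentially nothing new to do: by the coincidences $\cvmm=\sramm$ and $\ccmm=\wramm$ of~\cite{Lahav:2022}, the interval $[\cvmm,\ccmm]$ is literally $[\sramm,\wramm]$, so any $\MemModel$ with $\cvmm\mmorder\MemModel\mmorder\ccmm$ already satisfies $\sramm\mmorder\MemModel\mmorder\wramm$ and \cref{thm:lower_ra} applies verbatim. The only thing to note is that boundedness is inherited, since the model-level coincidence does not alter the instance (same threads, locations, and values).

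For part~(ii) I would reuse the abstract execution $\expartial=(\E,\po)$ produced by the reduction behind \cref{thm:lower_ra} and argue the two implications against the endpoints of $[\cmmm,\ccmm]$, letting monotonicity fill the interior. The weak endpoint is free: since $\ccmm=\wramm$ and $\cmmm$ is stronger than $\ccmm$, the no-instance half is exactly ``$\expartial$ is not $\wramm$-consistent,'' which I inherit directly from the \cref{thm:lower_ra} reduction (whose no-instances are already non-$\wramm$-consistent). Consequently the entire burden of part~(ii) falls on the yes-instance half: showing that $\expartial$ is $\cmmm$-consistent.

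The crux, and the step I expect to be the main obstacle, is therefore exhibiting a $\cmmm$-witness for yes-instances. Because $\cmmm$ and $\cvmm=\sramm$ are incomparable --- Causal Memory demands a \emph{per-thread} sequentialization of each thread's reads together with all writes, rather than the single global arbitration that $\sramm$ supplies --- the $\sramm$-witness of \cref{thm:lower_ra} does not transfer for free. I would start from the write-arbitration order $\co$ and the reads-from $\rf$ that justify $\sramm$-consistency of a yes-instance, and for each thread $t$ build a total order $\mathsf{arb}_t$ over the operations visible to $t$ (its own events plus all writes) by threading $t$'s reads into $\co$ immediately after the write each reads from. I then have to verify that each $\mathsf{arb}_t$ extends the causal (happens-before) order and makes every read of $t$ return the $\mathsf{arb}_t$-last preceding write to its location. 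This is where the gadget structure is load-bearing: within each clause, copy, at-most-one, and at-least-one block the accesses either touch disjoint locations or are $\po$-separated, so the only way the insertion can fail is if a single thread reads two writes to the \emph{same} location out of $\co$-order; I expect to argue this cannot arise in yes-instances because the satisfying assignment pins down a coherent per-location read pattern.

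If this direct transfer clashes with a particular gadget, the fallback is to redesign only that gadget so its yes-witness is natively a family of per-thread linearizations rather than a global order, leaving the no-instance behaviour (non-$\wramm$-consistency) untouched, since that half never depends on which strong model we target. Either way the construction introduces no new threads, locations, or values beyond those of the \cref{thm:lower_ra} instance, so it stays polynomial and bounded. I would then close part~(ii) with the monotonicity argument stated once: for any $\MemModel$ with $\cmmm\mmorder\MemModel\mmorder\ccmm$, a yes-instance is $\cmmm$-consistent and hence $\MemModel$-consistent, while a no-instance is non-$\ccmm$-consistent and hence non-$\MemModel$-consistent.
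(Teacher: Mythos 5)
Your part~(i) is exactly the paper's argument: the coincidences $\cvmm=\sramm$ and $\ccmm=\wramm$ from \cite{Lahav:2022} make the range $[\cvmm,\ccmm]$ literally $[\sramm,\wramm]$, so \cref{thm:lower_ra} applies verbatim. For part~(ii) you also correctly isolate the crux: soundness is free from $\ccmm=\wramm$, and the whole burden is showing that yes-instances of the reduction of \cref{SEC:LOWER_THREADS_LOCATIONS_WRA_RA_SRA} are $\cmmm$-consistent, which does not follow from $\sramm$-consistency since the two models are incomparable. Where you diverge from the paper is in how that $\cmmm$-witness is certified. You work with the operational characterization of Causal Memory (a per-thread total order $\mathsf{arb}_t$ over $t$'s events and all writes) and propose to thread each read into a linearization of the arbitration order immediately after its writer. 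The paper instead uses the axiomatic characterization of \cite{Bouajjani2017}: $\cmmm$ is $\wramm$ plus irreflexivity of a per-thread \emph{observed-before} relation $\ob{t}$, defined as a fixpoint that orders $\wt'$ before $\wt$ whenever a triplet $(\wt,\rd,\wt')$ has $(\wt',\rd)\in\ob{t}$. Its key moves are (a) \cref{lem:wra_ra_sra_rf_immediate}/\cref{lem:cm_rf_immediate}, showing a writer never $\hb$-reaches a $\po$-predecessor of its reader, which collapses the fixpoint after one iteration (\cref{lem:cm_ob}); and (b) the observation that every derived ordering already lies in $(\hb\cup\mopartial)^+$ by triplet safety (\cref{lem:wra_ra_sra_completeness_safe_triplets}), so acyclicity of $\ob{t}$ follows from \cref{lem:wra_ra_sra_completeness_no_cycle_hb_mo}. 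The two routes are morally equivalent --- your $\mathsf{arb}_t$ exists iff the paper's $\ob{t}$ is acyclic --- but the paper's route reuses lemmas already proved for the $\sramm$ witness, whereas yours rebuilds a linearization from scratch.

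One caution on your verification sketch: the invariant you propose to check (``accesses touch disjoint locations or are $\po$-separated, so the only failure is a single thread reading two same-location writes out of $\co$-order'') is not quite the right one. The failure mode for $\cmmm$ is a same-location write $\wt'$ with $\wt'\LPath{\hb}\rd$ that your linearization is forced to place between $\wt$ and $\rd$; ruling this out is precisely the safety of the triplet $(\wt,\rd,\wt')$, i.e., $(\wt',\wt)\in(\hb\cup\mopartial)^+$, and in addition you must check that the orderings your insertions induce do not cascade into new violations --- the fixpoint issue the paper disposes of by proving $\obOne{t}=\ob{t}$. Both facts do hold for the constructed witness, but they rest on the gadget-level analysis behind \cref{lem:wra_ra_sra_completeness_safe_triplets} and \cref{lem:wra_ra_sra_rf_immediate}; ``the satisfying assignment pins down a coherent per-location read pattern'' is a statement of the goal rather than an argument, and your fallback of redesigning a gadget turns out to be unnecessary.
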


Next, we turn our attention to the $\powermm$ architecture.
\citet{Lahav:2016} show that $\sramm$ captures precisely the guarantees of $\powermm$ for programs that are compiled from the release-acquire fragment of C/C++.
Thus \cref{thm:lower_ra} extends to the following corollary.

\begin{restatable}{corollary}{corpower}\label{cor:power}
Consistency testing for bounded inputs is $\NP$-complete for $\powermm$.
\end{restatable}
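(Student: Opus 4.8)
The plan is to obtain the corollary with essentially no new combinatorial work, by transporting the hardness established in \cref{thm:lower_ra} across the model correspondence of \citet{Lahav:2016}. Since $\sramm \mmorder \sramm \mmorder \wramm$ holds trivially, \cref{thm:lower_ra} already yields that bounded consistency testing is $\NP$-complete for $\sramm$, even on RMW-free inputs. Membership in $\NP$ for $\powermm$ is immediate, as noted in the paper for all models under consideration, so the only task is to establish $\NP$-hardness for $\powermm$, which I would do via a reduction from the $\sramm$ instances produced by \cref{thm:lower_ra}.

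First I would recall the precise content of \citet{Lahav:2016}: for programs obtained by compiling the release-acquire fragment of C/C++ to $\powermm$ under the standard fence-insertion scheme (inserting $\lwsync$-style synchronization to realize release writes and acquire reads), the guarantees of $\powermm$ coincide exactly with those of $\sramm$. The key observation is that the hard instances manufactured by \cref{thm:lower_ra} lie entirely within this fragment: every read is an acquire, every write is a release, there are no RMWs, and the number of threads, locations, and values is bounded. I would therefore define a polynomial-time reduction that takes such an instance $\expartial = (\E, \po)$ and produces the corresponding $\powermm$ instance by applying the compilation thread-by-thread: each event of $\E$ is preserved, and a constant number of fence events is inserted per access according to the scheme. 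The central claim is then that $\expartial$ is $\sramm$-consistent if and only if its image is $\powermm$-consistent, which is precisely the per-execution content of the Lahav correspondence.

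The step I expect to require the most care is matching granularity: \citet{Lahav:2016} phrase their result as a soundness-and-completeness statement about program behaviors, whereas consistency testing operates on a single fixed execution. I would bridge this gap by noting that the correspondence can be witnessed at the level of individual executions — an $\sramm$-consistent justification of $\expartial$ lifts to a $\powermm$-consistent justification of its compilation, and conversely a $\powermm$-consistent execution projects back by erasing the inserted fences — so the two consistency questions are equivalent. The remaining checks are routine: inserted fences carry no memory location and read or write no value, so the compiled instance accesses the same set of locations and values and uses the same threads as the original, keeping the number of threads, locations, and values bounded; and only a constant number of auxiliary events is added per access, so the total size is linear in $|\E|$ and the reduction runs in polynomial time. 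Once this per-execution transfer is in place, the corollary follows directly.
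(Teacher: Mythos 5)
Your proposal takes essentially the same route as the paper: the paper's entire argument for \cref{cor:power} is to invoke the $\sramm$ hardness from \cref{thm:lower_ra} together with the \citet{Lahav:2016} correspondence between $\sramm$ and $\powermm$ on the release-acquire fragment. Your additional care about bridging the program-level correspondence to a per-execution statement and about preserving boundedness under fence insertion is sound and, if anything, more explicit than the paper's one-sentence justification.
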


Continuing with hardware models, we study Total Store Order $(\tsomm)$ as employed in x86 architectures (aka x86-TSO) and its extension to Partial Store Order $(\psomm)$.
It turns out that, in the bounded setting, consistency checks become tractable in these models.

\begin{restatable}{theorem}{thmuppertsopso}\label{thm:upper_tso_pso}
Consistency testing for bounded threads and memory locations is in polynomial time for $\tsomm$ and $\psomm$.
\end{restatable}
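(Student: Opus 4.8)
The plan is to give a dynamic-programming (reachability) algorithm over a compressed state space, exploiting the store-buffer operational semantics of $\tsomm$ and $\psomm$. Recall that under $\tsomm$ each thread owns a single FIFO store buffer: a write is first appended to the issuing thread's buffer, a read returns the most recent same-location write in its own buffer if one exists (store forwarding) and otherwise the current value in main memory, and a \emph{flush} step dequeues the oldest buffered write and commits it to memory. $\psomm$ is identical except that each thread keeps a \emph{separate} FIFO buffer per location. Since the input $\expartial=(\E,\po)$ is a finite set of $\NumEvents$ events, any run of this operational model consists of at most $\NumEvents$ issue steps (each event issued once) and at most $\NumEvents$ flush steps (each write committed once); thus $\expartial$ is consistent iff some legal interleaving of these $O(\NumEvents)$ steps issues every event while matching the value annotated on every read. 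I would decide this by reachability over a transition system whose states record the progress of such a run.

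The key observation is that, although buffers may hold unboundedly many writes, the reachable buffer contents are pinned down by a bounded number of counters. Because a buffer is FIFO and writes enter it in program order, the set of a thread's committed writes is always a program-order \emph{prefix} of its issued writes (in $\psomm$, a prefix per location). Hence a state can be encoded as: (i)~a \emph{frontier} $(i_1,\dots,i_{\NumThreads})$ recording how many events of each thread have been issued; (ii)~\emph{commit counters} --- one counter $f_t\le i_t$ per thread for $\tsomm$, or one counter $f_{t,x}$ per thread-location pair for $\psomm$ --- recording how many of that thread's (location's) issued writes have been flushed; and (iii)~the current main-memory contents, i.e.\ the last committed write per location. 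The frontier together with a commit counter determines the exact buffer (its entries are the issued-but-uncommitted writes, in program order), so no further information about the interleaving is needed. Counting configurations, there are $O(\NumEvents^{\NumThreads})$ frontiers, $O(\NumEvents^{\NumThreads})$ counter-tuples for $\tsomm$ (resp.\ $O(\NumEvents^{\NumThreads\cdot\NumLocations})$ for $\psomm$), and $O(\NumEvents^{\NumLocations})$ memory contents, so the state space is polynomial in $\NumEvents$ whenever $\NumThreads$ and $\NumLocations$ are bounded.

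I would then define the three transition families and verify each is checkable in polynomial time. Issuing a write advances the frontier and leaves the new write buffered; a flush increments a commit counter and overwrites the corresponding memory location with the dequeued write's value; issuing a read advances the frontier only if the value it must observe matches what the current state exposes --- the latest same-location buffered write of the reading thread if its (per-location) buffer is non-empty, and otherwise the last committed write to that location in memory --- both of which are recoverable from the frontier, the counters, and the memory component. The abstract execution is consistent exactly when a state with a fully advanced frontier is reachable from the initial (empty-buffer, initial-memory) state, which a standard BFS over the polynomial-size state graph decides in polynomial time. The same skeleton absorbs fences and atomic read-modify-writes should they be present, since these merely impose additional guards (an empty buffer, or a forced drain) on the transitions.

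The main obstacle, and the crux of the argument, is establishing that this finite abstraction is both \emph{sound} and \emph{complete}: that the runs of the true (unbounded-buffer) operational semantics project exactly onto the paths of the compressed transition system. Soundness is immediate, but completeness hinges on the prefix property above --- that FIFO commit order coincides with program order, so that commit counters faithfully reconstruct buffers --- together with the claim that every read's observed value depends on the global interleaving only through the last-committed-write-per-location summary. I expect the care to lie in this equivalence (and in confirming that coherence per location, which $\tsomm$ and $\psomm$ both require, is discharged automatically by the operational commit order rather than needing a separate global check), after which the polynomial bound follows from the counting above.
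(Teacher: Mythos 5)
Your proposal is correct and follows essentially the same route as the paper: both switch to the store-buffer operational semantics, encode a state by the executed (downward-closed) frontier, the buffer contents --- which, being a contiguous program-order segment of issued-but-uncommitted writes, admit only polynomially many instantiations per thread (or thread-location pair for $\psomm$) --- and the last committed write per location, and then decide consistency by reachability over this polynomial-size transition system. Your commit-counter encoding is a marginally tighter parameterization of the buffers than the paper's ``contiguous subsequence'' count, but the argument and the resulting bounds are the same.
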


One natural, final question concerns the vanilla $\rlxmm$ model, i.e., if we remove the acyclicity condition from $\arlxmm$.
In this case, the problem becomes polynomial-time, which is a corollary of the corresponding result for $\scmm$~\cite{Agarwal2021}.

\begin{restatable}{corollary}{corrlx}\label{cor:rlx}
Consistency testing for bounded threads is in polynomial time for $\rlxmm$.
\end{restatable}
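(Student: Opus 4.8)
The plan is to reduce bounded-threads consistency testing for $\rlxmm$ to a family of \emph{independent, single-location} $\scmm$-consistency instances, and then to invoke the polynomial-time $\scmm$ algorithm of \citet{Agarwal2021}. The starting observation is that vanilla $\rlxmm$ is obtained from $\arlxmm$ precisely by dropping the global $(\po\cup\rf)$-acyclicity axiom. What remains are the coherence axioms; and since every access in this fragment is relaxed, there are no $\sw$ edges, so $\hb=\po$. Consequently every surviving constraint---the coherence conditions on $\mo$, $\rf$, $\fr$, and RMW-atomicity if RMWs are present---refers only to events of a single location, via the per-location relations $\poloc$, $\mo$, $\rf$, and $\fr$.

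First I would formalize the per-location decomposition: a candidate $\expartial=(\E,\po)$ is $\rlxmm$-consistent iff for every location $x$ the sub-execution induced by the events accessing $x$ is consistent under the single-location coherence axioms. The key point making this sound is \emph{independence}: since $\hb=\po$ is fixed by the input and the witness relations $\rf$, $\mo$, $\fr$ are each confined to a single location, the choices made for one location neither constrain nor are constrained by those for another. Hence a global witness can be assembled by gluing together independent per-location witnesses, and conversely any global witness restricts to per-location witnesses.

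Second I would identify each single-location coherence check with single-location $\scmm$-consistency. For a fixed $x$, requiring an $\mo$ that totally orders the writes to $x$ compatibly with $\poloc$, together with each read of $x$ taking the value of its immediate $\mo$-predecessor write, is exactly the statement that the events on $x$ admit an $\scmm$ interleaving. This is an $\scmm$ instance over a single location (a trivially bounded number of locations) with a number of threads bounded by hypothesis, so the algorithm of \citet{Agarwal2021} decides it in polynomial time. Running this check for each of the at most $|\E|$ locations and taking the conjunction yields an overall polynomial-time procedure.

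The main obstacle is justifying the decomposition rigorously, rather than the algorithmics, which are inherited. Concretely, I must argue that once $(\po\cup\rf)$-acyclicity is removed no axiom of vanilla $\rlxmm$ couples distinct locations, and in particular that $\hbloc$ coincides with $\poloc$---so that each per-location instance is a genuine $\scmm$ instance carrying the correct program-order projection within each thread. The remaining points---that there are at most linearly many locations, and that independent per-location witnesses compose into a single global witness---are then routine.
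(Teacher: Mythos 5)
Your proposal is correct and follows essentially the same route as the paper: decompose $\expartial$ into per-location sub-executions $\expartial_x$, observe that the only axioms of $\rlxmm$ (\RlxWCoh{} and \RlxRCoh{}) are per-location so each $\expartial_x$ need only be checked for $\scmm$-consistency, and invoke the polynomial-time bounded-$\scmm$ algorithm of \citet{Agarwal2021} on each of the polynomially many single-location, bounded-thread instances. Your additional justification of the independence of the per-location witnesses (via $\hb=\po$ in the absence of synchronizing accesses) is a sensible elaboration of what the paper leaves implicit.
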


Although \cref{cor:rlx} is technically straightforward, it is conceptually interesting under the following realization.
For all of our previous results (as well as for $\scmm$), the hardness of consistency coincides with whether the corresponding model exhibits multi-copy atomicity.
In contrast, $\rlxmm$ is non-multi-copy atomic, yet consistency testing is in polynomial time.

Following the results of this paper, \cref{fig:models_lattice} pictorially presents the full landscape of the tractability and the hardness in testing weak memories.

\Paragraph{High-level intuition.}
Our proofs exploit complex combinatorial properties that arise in weak memory.
Although it is hard to pinpoint one key insight that fully explains our hardness results, 
our proofs rely on the fact that most of the models we consider
(i)~are causally consistent, and
(ii)~allow $(\po'\cup\rf\cup\fr)$-cycles,
where $\po'$ is the standard program order restricted to instructions of the same type (read-read and write-write orderings) on different locations.
In contrast, the polynomial-time models $\scmm$ and $\tsomm$ forbid (ii), while $\psomm$ allows (ii) but also fails (i).

\Paragraph{Outline.}
The rest of the paper is organized as follows.
\begin{compactitem}
\item In \cref{SEC:PRELIMINARIES}, we define our problem setting and the memory models we consider based on C/C++ atomics.
We also develop relevant notation that will be helpful in later sections.
\item In \cref{SEC:LOWER_THREADS_LOCATIONS_RLX}, we prove \cref{thm:lower_ra} for $\arlxmm$.
For readability, we prove a weaker version of \cref{thm:lower_ra} in which the inputs use boundedly many threads and locations but manipulate unboundedly many values.
Later in \cref{SEC:BOUNDED_VALUES}, we explain how to perform simple modifications to our reduction to make it work even for bounded values.
\item In \cref{SEC:LOWER_THREADS_LOCATIONS_WRA_RA_SRA}, we prove \cref{thm:lower_ra} for all models $\sramm\mmorder \MemModel\mmorder \wramm$.
Similarly to the previous case, our reduction uses unboundedly many values, while the modifications described in \cref{SEC:BOUNDED_VALUES} also apply to this model, to arrive at the final result.
\item In \cref{SEC:OTHER_MODELS}, we establish \cref{thm:lower_cc}, \cref{cor:power}, \cref{thm:upper_tso_pso} and \cref{cor:rlx}.
\item Finally, in \cref{SEC:BOUNDED_VALUES}, we present the modifications in the reductions of \cref{SEC:LOWER_THREADS_LOCATIONS_RLX} and \cref{SEC:LOWER_THREADS_LOCATIONS_WRA_RA_SRA} that fully establish \cref{thm:lower_ra}.
\end{compactitem}

Some proofs are relegated to the Appendix.


\section{Preliminaries}\label{SEC:PRELIMINARIES}
This section defines the axiomatic semantics of the $\sramm$, $\ramm$, $\wramm$, and $\rlxmm$ memory models.
As these are standard concepts, our exposition follows recent work on the topic (e.g.,~\cite{Margalit2021,Lahav:2022,Tunc2023}).
In axiomatic semantics, program executions consist of sets of events and relations between them. 
Given an integer $i$, we let $[i]=\{1,2,\dots,i\}$.

\Paragraph{Events.} 
An event is a tuple $\tup{\id,\tid,\llab}$ where $\id$, $\tid$, $\llab$ denote a unique identifier, thread identifier, and the label respectively. 
The label is of the form $\llab=\tup{\op, \lloc, \Val, \ord}$ where $\op$, $\lloc$, $\Val$, $\ord$ respectively denote a read ($\rd$) or write ($\wt$) memory operation, accessed memory location, read or written value, and memory order respectively. 
For the $\sramm$, $\ramm$, and $\wramm$ models, reads and writes are of \emph{acquire} and \emph{release} orders respectively.
For the $\rlxmm$ model, the read and write accesses have \emph{relaxed} order. 
These memory orders are used to define the semantics of models like~C11, but we will not be using them explicitly here.
As we treat each model separately, all access orders are determined by the models and are never mixed.
Hence, we will simply write $\rd(t,x,v)$/$\wt(t,x,v)$ to denote a read/write event of thread $t$, accessing location $x$ and reading/writing value $v$.
We occasionally omit $x$ and/or $v$, when it is irrelevant or clear from the context, while we let $\tid(\event)$ denote the thread of event $\event$.
We do not introduce fences or atomic read-modify-write (RMW) events, as all our hardness results hold even with only read/write events, while our positive results can be easily extended to handle fences and RMWs.
Finally, we denote the set of read and write accesses by $\R$ and $\W$ respectively.

\Paragraph{Notation on relations.}
Let $B$ be a binary relation over a set of events $\E$.
The reflexive, transitive, reflexive-transitive closures, and inverse relations of $B$
are denoted as $B^?$, $B^+$, $B^*$, and $B^{-1}$, respectively. 
We compose two relations $B_1$ and $B_2$ as $B_1;B_2$. 
$[A]$ denotes the identity relation on a set $A$.
We write $\irr(B)$ and $\acy(B)$ to denote that $B$ is irreflexive and acyclic, respectively. 
We occasionally write that there exists a $B$-edge $\event\LTo{B}\event'$ to denote that $(\event, \event')\in B$.
We naturally extend this notation to paths, so that a $B$-path $P\colon \event\LPath{B}\event'$ is a sequence of $B$-edges
$
\event=\event_1\LTo{B}\event_2\LTo{B}\cdots \LTo{B}\event_i=\event'
$.
Finally, we write $B_x$ to restrict $B$ on events accessing location $x$.

\Paragraph{Executions and relations.}
An execution is a tuple $\ex = \tup{\E, \po, \rf, \mo}$  where $\E$ is a set of events and $\po$, $\rf$, $\mo$ are binary relations over $\E$.
In particular, the \emph{program order} ($\po \subseteq (\E \times \E)$) is a strict total order on the events of each thread.
The \emph{reads-from} relation ($\rf \subseteq (\W \times \R)$) relates a write and read event pair $(\wt, \rd)$, 
denoting that $\rd$ obtains its value from $\wt$.
Every read reads from exactly one write on the same memory location and having the same value (thus $\rf^{-1}$ is a function).   
The \emph{modification order} ($\mo \subseteq \bigcup_x (\W_x \times \W_x)$) is a strict total order over
same-location writes in an execution.
Finally, the \emph{happens-before} relation is defined as $\hb\triangleq (\po\cup\rf)^+$.
\cref{fig:axex} shows examples of executions presented as \emph{execution graphs}. 
In each execution graph the nodes represent events and the edges represent relations. 
We omit some relation-edges that are clear from the context
\footnote{It is also common to define a \emph{from reads} relation $\fr\triangleq \rf^{-1};\mo$.
However, we will not be using $\fr$ explicitly in this paper.}.

\Paragraph{Consistency Axioms.} 
Consistency axioms capture different aspects or properties of an execution, such as coherence and causality cycles, under a memory model. 
These properties are interpreted differently in different memory models.

\SubParagraph{Coherence.}
In an execution, \emph{coherence} enforces an ordering between same-location events. 
For events using the release-acquire memory orders,
\emph{write-coherence} requires that each $\mo_x$ order agrees with $\hb$. 
A stronger variant is \emph{strong-write-coherence}, which requires that $\mo$ agrees with $\hb$, transitively.
\emph{Read coherence} enforces that a read $\rd$ can read from a write $\wt$ when  
there is no \emph{intermediate} write $\wt'$ on the same-location that happens-before $\rd$.  
Depending upon how ``intermediate'' writes are treated,
two variations of read coherence are 
popular --- in standard \emph{read-coherence}, 
$\wt$ and $\wt'$ are ordered by $\mo_x$ whereas in \emph{weak-read-coherence} they are ordered by $\hb_x$. 
Finally, we also have variants of write and read coherence when all accesses are relaxed. 
Here $\hb$ is replaced with $\po$, as $\rf$-edges do not contribute to $\hb$ between different memory locations.

\begin{table}
\caption{\label{tab:axioms}
Variants of the coherence axioms.
}
\small
\begin{tabular}{lr|lr}
\multicolumn{2}{c|}{\underline{Release-Acquire}} & \multicolumn{2}{c}{\underline{Relaxed}} \\
$\irr(\mo_x;\hb)$ & (\WCoh{}) & $\irr(\mo_x;\po)$ & (\RlxWCoh{})\\
$\acy(\hb \cup \mo)$ & (\SWCoh{}) && \\
$\irr(\rf^{-1};\mo_x;\hb)$ & (\RCoh{}) & $\irr(\rf^{-1};\mo_x;\rf^?;\po)$ & (\RlxRCoh{})\\
$\irr(\hb_x;[\W];\hb_x;\rf^{-1})$ & (\WRCoh{}) &&\\
\end{tabular}
\end{table}

\SubParagraph{Causality Cycles.} 
A causality cycle arises in the presence of relaxed accesses and consists of $\po$ and $\rf$ orderings. 
A causality cycle may result in `out-of-thin-air' behavior in an execution. 
To avoid such `out-of-thin-air' behavior, many consistency models and verification tools explicitly disallow such cycles~\cite{NorrisDemsky:2013,Margalit2021,Luo:2021}.
\begin{compactitem}
 \item $\acy(\po \cup \rf)$ \hfill(\PORF{})
 \end{compactitem}

\begin{figure}
\centering
\def\ystep{0.4}
\begin{subfigure}{0.14\textwidth}
\centering
\scalebox{0.9}{
\begin{tikzpicture}[yscale=1]
  \node (t11) at (0,0*\ystep)  {$\wt(x)$};
  \node (t12) at (0,-4*\ystep) {$\wt(y)$};
  \node (t21) at (1.4,0*\ystep) {$\rd(y)$};
  \node (t22) at (1.4,-4*\ystep) {$\wt(x)$};
%
  \draw[po] (t11) to (t12);
  \draw[po] (t21) to (t22);
%
  \draw[rf,bend left=0] (t12) to node[above,pos=0.9, sloped]{$\rf$} (t21);
%
  \draw[mo,bend left=0] (t22) to node[above,pos=0.8, sloped]{$\mo$} (t11);
\end{tikzpicture} 
}  
\caption{}
\label{subfig:wc}
\end{subfigure}
\hfill
\begin{subfigure}{0.14\textwidth}
\centering
\scalebox{0.9}{
\begin{tikzpicture}[yscale=1]
  \node (t11) at (0,0*\ystep)  {$\wt(y)$};
  \node (t12) at (0,-4*\ystep) {$\wt(x)$};
%
  \node (t21) at (1.4,0*\ystep) {$\wt(x)$};
  \node (t22) at (1.4,-4*\ystep) {$\wt(y)$};
%
  \draw[po] (t11) to (t12);
%
  \draw[po] (t21) to (t22);
%
%
  \draw[mo,bend left=0] (t12) to (t21);
  \draw[mo,bend right=0] (t22) to node[above,pos=0.8, sloped]{$\mo$} (t11);
\end{tikzpicture} 
}  
\caption{}
\label{subfig:swc}
\end{subfigure}
\hfill
\begin{subfigure}{0.14\textwidth}
\centering
\scalebox{0.9}{
\begin{tikzpicture}[yscale=1]
  \node (t11) at (0,0*\ystep)  {$\wt(x)$};
  \node (t12) at (0,-2*\ystep) {$\wt(x)$};
  \node (t13) at (0,-4*\ystep) {$\wt(y)$};
  \node (t21) at (1.4,0*\ystep) {$\rd(y)$};
  \node (t22) at (1.4,-4*\ystep) {$\rd(x)$};
%
  \draw[po] (t12) to (t13);
  \draw[po] (t21) to (t22);
 \draw[mo] (t11) to node[right]{$\mo$} (t12);
  \draw[rf,bend left=0] (t13) to node[below,pos=0.2, sloped]{$\rf$} (t21);
  \draw[rf,bend left=15] (t11.east) to node[below,pos=0.8, sloped]{$\rf$} (t22);
%
\end{tikzpicture}
}
\caption{}
\label{subfig:rc}
\end{subfigure}
\hfill
\begin{subfigure}{0.14\textwidth}
\centering
\scalebox{0.9}{
\begin{tikzpicture}[yscale=1]
  \node (t11) at (0,0*\ystep)  {$\wt(x)$};
  \node (t12) at (0,-4*\ystep) {$\wt(y)$};
  \node (t21) at (1.4,0*\ystep) {$\rd(y)$};
  \node (t22) at (1.4,-2*\ystep) {$\wt(x)$};
  \node (t23) at (1.4,-4*\ystep) {$\rd(x)$};
  \draw[po] (t11) to (t12);
  \draw[po] (t21) to (t22);
  \draw[po] (t22) to (t23);
  \draw[rf,bend left=0] (t12) to node[above,pos=0.8, sloped]{$\rf$} (t21);
  \draw[rf,bend right=0] (t11) to node[below,pos=0.8, sloped]{$\rf$} (t23);
\end{tikzpicture}   
} 
\caption{}
\label{subfig:wrc}
\end{subfigure}
\hfill
\begin{subfigure}{0.14\textwidth}
\centering
\scalebox{0.9}{
\begin{tikzpicture}[yscale=1]
  \node (t11) at (0,0*\ystep)  {$\rd(x)$};
  \node (t12) at (0,-4*\ystep) {$\wt(y)$};
  \node (t21) at (1.4,0,0*\ystep) {$\rd(y)$};
  \node (t22) at (1.4,-4*\ystep) {$\wt(x)$};
  \draw[po] (t11) to (t12);
  \draw[po] (t21) to (t22);
  \draw[rf,bend left=0] (t12) to node[above,pos=0.1, sloped]{$\rf$} (t21);
  \draw[rf,bend right=0] (t22) to node[above,pos=0.1, sloped]{$\rf$} (t11);
\end{tikzpicture}      
}
\caption{}
\label{subfig:porf}
\end{subfigure}
\begin{subfigure}{0.09\textwidth}
\centering
\scalebox{0.9}{
\begin{tikzpicture}[yscale=1]
  \node (t11) at (0,0*\ystep)  {$\wt(x)$};
  \node (t12) at (0,-4*\ystep) {$\wt(x)$};
%
%
  \draw[po] (t11) to (t12);
  \draw[mo,bend left=20] (t12) to node[left]{$\mo$} (t11);
%
\end{tikzpicture}      
}
\caption{}
\label{subfig:rlxwcoh}
\end{subfigure}
\begin{subfigure}{0.14\textwidth}
\centering
\scalebox{0.9}{
\begin{tikzpicture}[yscale=1]
  \node (t11) at (0,0*\ystep)  {$\wt(x)$};
  \node (t12) at (0,-4*\ystep) {$\wt(x)$};
  \node (t21) at (1.4,0,0*\ystep) {$\rd(x)$};
  \node (t22) at (1.4,-4*\ystep) {$\rd(x)$};
  \draw[mo, bend right=15] (t11) to node[right]{$\mo$} (t12);
  \draw[po] (t21) to (t22);
  \draw[rf,bend left=0] (t12) to node[below,pos=0.1, sloped]{$\rf$} (t21);
  \draw[rf,bend right=0] (t11) to node[above,pos=0.1, sloped]{$\rf$} (t22);
\end{tikzpicture}      
}
\caption{}
\label{subfig:rlxrcoh}
\end{subfigure}
\caption{
Executions forbidden by (\subref{subfig:wc}) \WCoh{}, (\subref{subfig:swc}) \SWCoh, (\subref{subfig:rc}) \RCoh{}, (\subref{subfig:wrc}) \WRCoh{}, (\subref{subfig:porf}) \PORF{}, (\subref{subfig:rlxwcoh}) \RlxWCoh{}, (\subref{subfig:rlxrcoh}) \RlxRCoh{}. 
}
\label{fig:axex}
\end{figure}

\cref{fig:axex} shows examples of executions forbidden by different axioms. 
The \WCoh{} axiom forbids the execution in \cref{subfig:wc} as it violates the irreflexivity of $(\mo_x;\hb)$. 
The $(\hb \cup \mo)$ cycle in \cref{subfig:swc} is forbidden by \SWCoh{}. 
The execution in \cref{subfig:rc} violates irreflexivity of $(\rf^{-1};\mo_x;\hb)$ and thus fails \RCoh{}.
In \cref{subfig:wrc}, we have $(\wt(x),\wt(x)) \in \hb_x;[\W]$, $(\wt(x),\rd(x)) \in \po \subseteq \hb$, and $(\rd(x),\wt(x)) \in \rf^{-1}$,
violating \WRCoh{}.
The execution in \cref{subfig:porf} violates \PORF{}. 
Finally, the executions in \cref{subfig:rlxwcoh} and \cref{subfig:rlxrcoh} violate \RlxWCoh{} and \RlxRCoh{}, respectively.

\Paragraph{Memory Models.}
We can now describe the main memory models we consider in this work, by listing the axioms that each execution needs to satisfy in the respective model \cref{tab:main_models}.

\SubParagraph{Release-Acquire and variants.}
The release-acquire ($\ramm$) memory model is weaker than sequential consistency and is arguably the most well-understood fragment of C11. 
Here, the reads-from relation $\rf$ induces synchronization between thread threads, which is captured in the semantics by the happens-before relation $\hb$.
Following~\cite{Lahav:2022}, we consider three variants of 
release-acquire models: Release-Acquire ($\ramm$), and its Strong ($\sramm$) and Weak ($\wramm$) variants.

$\sramm$ enforces strong-write-coherence on write accesses whereas $\ramm$ enforces write-coherence. 
On the other hand, $\wramm$ does not place any ordering between same-location writes by $\mo_x$. 
Instead, the only orderings considered between same-location writes are through the $[\W];\hb_x;[\W]$ relation.

\begin{table}
\caption{\label{tab:main_models}
The main weak-memory models based on C11 that we consider in this work.
}
\centering
\setlength\tabcolsep{10pt}
\begin{tabular}{lr|lr}
\multicolumn{2}{c|}{\underline{Release-Acquire}} & \multicolumn{2}{c}{\underline{Relaxed}} \\
$\wramm$ & \makecell{\PORF{}\\ \WRCoh{}} & $\rlxmm$ & \makecell{\RlxWCoh{}\\ \RlxRCoh{}} \\ 
\hline
$\ramm$ & \makecell{\WCoh{}\\ \RCoh{}} & \multirow{ 2}{*}{$\arlxmm$} & \multirow{ 2}{*}{ \makecell{\RlxWCoh{}\\ \RlxRCoh{} \\\PORF{} } } \\ 
\cline{1-2}
$\sramm$ & \makecell{\SWCoh{}\\ \RCoh{}} & & \\ 
\end{tabular}
\end{table}

\SubParagraph{Relaxed.}
All accesses in the $\rlxmm$ model 
satisfy the corresponding coherence axioms \RlxWCoh{} and \RlxRCoh{}, which guarantee $\scmm$-per-location.
The $\arlxmm$ model strengthens $\rlxmm$ by also requiring the acyclicity of $(\po\cup\rf)$.

Based on the set of allowed behaviors, these models can be partially ordered as $\sramm \mmorder \ramm \mmorder \set{\wramm, \set{\arlxmm\mmorder\rlxmm}}$, where models towards the right allow more behaviors.

\Paragraph{The consistency-testing problem.}
An execution $\ex$ is \emph{consistent} in a memory model $\MemModel$, written $\ex\models \MemModel$, if it satisfies the axioms of $\MemModel$. 
For example, the execution in \cref{subfig:swc} satisfies all axioms except \SWCoh{}, and hence it is consistent in $\ramm$, $\wramm$, and $\parlxmm$.

When testing the behavior of a program within a memory model, one does not have access to \emph{concrete} executions, but rather to \emph{abstract} executions.
The latter contains only information observed by the program, i.e., the events it executed and the values it read/wrote.
Formally, an \emph{abstract execution} $\expartial=\tup{\E, \po}$ is a coarser object than concrete executions, missing the $\mo$ and $\rf$ relations, and a concrete execution
$\ex = \tup{\E', \po', \rf', \mo'}$ is said to be an extension of $\expartial$ if $\E' = \E$ and $\po' = \po$.
We call $\expartial$ \emph{consistent} in $\MemModel$, written similarly as $\expartial\models \MemModel$, if there exists an $\rf$ and an $\mo$ such that the extension $\ex = \tup{\E, \po, \rf, \mo}$ is an execution with $\ex\models \MemModel$.
The problem of \emph{consistency testing} in a memory model $\MemModel$ is to determine whether $\expartial$ is consistent in $\MemModel$,
i.e., whether there is a way to resolve $\rf$ and $\mo$ in $\MemModel$ that would give rise to the observed behavior $\expartial$ on the program level.

\Paragraph{Conflicting triplets.}
In the coming sections, we use the notion of \emph{conflicting triplets}.
Given an abstract execution $\expartial=(\E, \po)$, we say that two events $\event_1,\event_2\in \E$ \emph{conflict} if they access the same location and at least one of them is a write.
Given additionally a reads-from relation $\rf$,
a \emph{conflicting triplet} (or \emph{triplet}, for short) is a tuple $(\wt, \rd, \wt')$ of pairwise conflicting events such that $(\wt, \rd)\in \rf$.


\section{Hardness for Relaxed-Acyclic}\label{SEC:LOWER_THREADS_LOCATIONS_RLX}
We start with the $\arlxmm$ memory model and show that consistency testing is  $\NP$-complete under bounded threads and memory locations.
This differs slightly from \cref{thm:lower_ra}, which states that the problem remains hard even with bounded values.
Since our proof is rather technical, we choose to present this intermediate result here.
We will make the final step towards \cref{thm:lower_ra} in \cref{SEC:BOUNDED_VALUES}, which consists of a simple modification of the technique presented here.
In \cref{SEC:LOWER_THREADS_LOCATIONS_RLX_REDUCTION}, we present the hardness reduction
and argue about its correctness in \cref{SEC:LOWER_THREADS_LOCATIONS_RLX_SOUNDNESS,SEC:LOWER_THREADS_LOCATIONS_RLX_COMPLETENESS}.


\subsection{Reduction}
\label{SEC:LOWER_THREADS_LOCATIONS_RLX_REDUCTION}

Our reduction is from Monotone 1-in-3 SAT which is known to be $\NP$-complete~\cite{Garey:1990}. 
The input is a monotone formula $\varphi$ in conjunctive normal form, where each clause contains three literals, all of which are positive.
The task is to determine if there exists a 1-in-3 truth assignment for $\varphi$, i.e., one that 
sets exactly one literal to \emph{true} in each clause.

We remark that our reduction is combinatorially elaborate.
We found that complex interactions between threads are necessary to
expose the the nuances that make the consistency problem for the $\arlxmm$ memory model (or for that matter, other memory models we consider) hard.
Nevertheless, we assist the text with illustrations that help visualize and generalize the interaction patterns that are exploited in our reduction.
To further enhance readability, we distinguish different memory locations with different colors (in both figures and main text).

Let $\varphi=\set{\Clause_i}_{i\in[m]}$ be a monotone Boolean formula over $n$ variables $\{s_j\}_{j\in[n]}$ and $m$ clauses of the form $\Clause_i=(s_j, s_k, s_{\ell})$.
We construct an abstract execution $\expartial=(\E, \po)$ such that $\expartial\models \arlxmm$ 
iff
 $\varphi$ is satisfiable using a 1-in-3 assignment\footnote{We often use the phrase `$\varphi$ is satisfiable' to mean `$\varphi$ is satisfiable by a 1-in-3 assignment'.}.


\begin{figure}
\newcommand{\xdisposition}{0}
\newcommand{\ydisposition}{0}
\newcommand{\xtstep}{0.3}
\newcommand{\ytstep}{0.7}
\newcommand{\ybias}{-0.3 }
\newcommand{\xstep}{2}
\newcommand{\ystep}{-0.6}
\newcommand{\xtscale}{0.8}
\def\crossoutopacity{0.3}
\def \numevents{9}
\def\scale{0.85}
\centering
\scalebox{\scale}{
\begin{tikzpicture}[thick, font=\footnotesize,
pre/.style={<-,shorten >= 2pt, shorten <=2pt, very thick},
post/.style={->,shorten >= 3pt, shorten <=3pt,   thick},
seqtrace/.style={line width=1},
und/.style={very thick, draw=gray},
virt/.style={circle,draw=black!50,fill=black!20, opacity=0}]

\tikzstyle{event} = [rectangle, fill=white, inner sep=0, minimum height=4.5mm, minimum width=13mm, rounded corners]

\node[] (S11) at (1*\xstep,0.15) {\normalsize $t_1$};
\node[] (S12) at (1*\xstep,\numevents * \ystep) {};
\node[] (S21) at (2*\xstep,0.15) {\normalsize $t_2$};
\node[] (S22) at (2*\xstep,\numevents * \ystep) {};
\node[] (S31) at (3*\xstep,0.15) {\normalsize $t_3$};
\node[] (S32) at (3*\xstep,\numevents * \ystep) {};

\draw[seqtrace] (S11) to (S12);
\draw[seqtrace] (S21) to (S22);
\draw[seqtrace] (S31) to (S32);

\node[event, draw=black] (11) at (1*\xstep, 1*\ystep + 0*\ybias) {$\WriteXOne{v^{1}_{1}}$};
\node[event, draw=black] (12) at (1*\xstep, 2*\ystep + 0*\ybias) {$\WriteXOne{v^{1}_{2}}$};
\node[event, draw=black] (14) at (1*\xstep, 4*\ystep + 0*\ybias) {$\WriteXOne{v^{1}_{n}}$};

\node[event, draw=black] (17) at (1*\xstep, 7*\ystep + 0*\ybias) {$\WriteXOne{v^{m}_{1}}$};
\node[event, draw=black] (19) at (1*\xstep, 9*\ystep + 0*\ybias) {$\WriteXOne{v^{m}_{n}}$};

\node[event, draw=black] (21) at (2*\xstep, 1*\ystep + 0*\ybias) {$\WriteXOne{v^{1}_{1}}$};
\node[event, draw=black] (22) at (2*\xstep, 2*\ystep + 0*\ybias) {$\WriteXOne{v^{1}_{2}}$};
\node[event, draw=black] (24) at (2*\xstep, 4*\ystep + 0*\ybias) {$\WriteXOne{v^{1}_{n}}$};

\node[event, draw=black] (27) at (2*\xstep, 7*\ystep + 0*\ybias) {$\WriteXOne{v^{m}_{1}}$};
\node[event, draw=black] (29) at (2*\xstep, 9*\ystep + 0*\ybias) {$\WriteXOne{v^{m}_{n}}$};

\node[event, draw=black] (31) at (3*\xstep, 1*\ystep + 0*\ybias) {$\ReadXOne{v^{1}_{1}}$};
\node[event, draw=black] (32) at (3*\xstep, 2*\ystep + 0*\ybias) {$\ReadXOne{v^{1}_{2}}$};
\node[event, draw=black] (34) at (3*\xstep, 4*\ystep + 0*\ybias) {$\ReadXOne{v^{1}_{n}}$};

\node[event, draw=black] (37) at (3*\xstep, 7*\ystep + 0*\ybias) {$\ReadXOne{v^{m}_{1}}$};
\node[event, draw=black] (39) at (3*\xstep, 9*\ystep + 0*\ybias) {$\ReadXOne{v^{m}_{n}}$};

\node[] (1) at (-1*\xstep, 4.5*\ystep) {};
\node[] (2) at (5*\xstep, 4.5*\ystep) {};

\node[] (3) at (-1*\xstep, 6.5*\ystep) {};
\node[] (4) at (5*\xstep, 6.5*\ystep) {};

\draw[gray, dashed, thick] (1) to (2);
\draw[gray, dashed, thick] (3) to (4);

\node[] at (-0.5*\xstep,2.5*\ystep) {\large Phase 1};
\node[] at (-0.5*\xstep,5.5*\ystep) {\large \vdots};
\node[] at (1*\xstep + \xtstep,5.5*\ystep) {\Large \vdots};
\node[] at (2*\xstep + \xtstep,5.5*\ystep) {\Large \vdots};
\node[] at (3*\xstep + \xtstep,5.5*\ystep) {\Large \vdots};
\node[] at (-0.5*\xstep,8*\ystep) {\large Phase $m$};

\node[] at (4*\xstep,1*\ystep) {\normalsize Step 1};
\node[] at (4*\xstep,2*\ystep) {\normalsize Step 2};
\node[] at (4*\xstep,3*\ystep) {\normalsize \vdots};
\node[] at (4*\xstep,4*\ystep) {\normalsize Step n};

\node[] at (4*\xstep,5.5*\ystep) {\normalsize \vdots};

\node[] at (4*\xstep,7*\ystep) {\normalsize Step 1};
\node[] at (4*\xstep,8*\ystep) {\normalsize \vdots};
\node[] at (4*\xstep,9*\ystep) {\normalsize Step n};

\end{tikzpicture}
}
\caption{\label{fig:reduction_scheme}
The schematic reduction from a monotone formula $\varphi$ to an abstract execution $\expartial$.
}
\end{figure}

\Paragraph{High-level description.}
Our reduction constructs an abstract execution with $O(n\cdot m)$ events
accessing $\NumLocations=14$  memory locations
in $\NumThreads=23$ threads,
of which the three threads $t_1$, $t_2$ and $t_3$ form the core of the construction.
Events appear in each of these threads in $m$ \emph{phases} (one phase per clause $\Clause_i$), starting from phase $1$ and going to larger phases as we go downwards in the threads.
Each phase, in turn, consists of $n$ \emph{steps} (one step per variable $s_j$), again starting from step $1$ and going to larger steps as we go downwards.
In phase $i$ and step $j$ we have a read event $\TReadXOne{v^i_j}{t_3}$ that can read from either of two writes $\TWriteXOne{v^i_j}{t_1}$ or $\TWriteXOne{v^i_j}{t_2}$.
The former case corresponds to the assignment $s_j=\bfalse$, while the latter case corresponds to the assignment $s_j=\btrue$.
See \cref{fig:reduction_scheme} for an illustration.
Our construction guarantees that the choices for the writer of $\TReadXOne{v^i_j}{t_3}$ are consistent across all phases $i$:~either each $\TReadXOne{v^i_j}{t_3}$ reads from $\TWriteXOne{v^i_j}{t_1}$, which corresponds to setting $s_j=\bfalse$ in $\varphi$, or each $\TReadXOne{v^i_j}{t_3}$ reads from $\TWriteXOne{v^i_j}{t_2}$, which corresponds to setting $s_j=\btrue$ in $\varphi$. 
Moreover, for each clause $\Clause_i=(s_j, s_k, s_{\ell})$, our reduction guarantees that exactly one of $\TReadXOne{v^i_j}{t_3}$, $\TReadXOne{v^i_k}{t_3}$, and $\TReadXOne{v^i_{\ell}}{t_3}$  reads from thread $t_2$, which implies that the corresponding assignment on $s_j$, $s_k$ and $s_{\ell}$ satisfies the 1-in-3 property.
To achieve all these constraints, we introduce four \emph{gadgets}, which consist of events on additional threads and memory locations, that guarantee the desired properties.
In the following, we first describe each gadget separately, and then explain how to interleave them in order to obtain the abstract execution $\expartial$.


\begin{figure}
\newcommand{\xdisposition}{3.4*\xstep}
\newcommand{\ydisposition}{2*\ystep}
\newcommand{\xtstep}{0.75}
\newcommand{\ytstep}{0.7}
\newcommand{\ybias}{-0.3 }
\newcommand{\xstep}{1.435}
\newcommand{\ystep}{-1}
\newcommand{\xtscale}{0.8}
\def \numevents{3}
\def\scale{0.825}
\newcommand{\BaseCopyGadget}{
\begin{scope}[shift={(0*\xdisposition,0*\ydisposition)}]

\foreach \x [evaluate=\x as \i using ({int(\x)})] in {1,...,3}{
\node[] (T\i1) at (\i*\xstep,0.15) {\normalsize $t_{\x}$};
\node[] (T\i2) at (\i*\xstep,\numevents * \ystep) {};
\draw[seqtrace] (T\i1) to (T\i2);
}

\foreach \x [evaluate=\x as \i using ({int(\x+3)})] in {1,...,6}{
\node[] (T\i1) at (\i*\xstep,0.15) {\normalsize $f_{\x}$};
\node[] (T\i2) at (\i*\xstep,\numevents * \ystep) {};
\draw[seqtrace] (T\i1) to (T\i2);
}

\foreach \x [evaluate=\x as \i using ({int(\x+6)})] in {4,...,6}{
\node[] (T\i1) at (\i*\xstep,0.15) {\normalsize $t_{\x}$};
\node[] (T\i2) at (\i*\xstep,\numevents * \ystep) {};
\draw[seqtrace] (T\i1) to (T\i2);
}

\node[event] (T11) at (1*\xstep, 1*\ystep + 0*\ybias) {$\ReadYOne{v^{i}_{j}}$};
\node[event] (T12) at (1*\xstep, 2*\ystep + 0*\ybias) {$\ReadYOne{u^{i}_{j}}$};
\node[event, draw=black] (T13) at (1*\xstep, 3*\ystep + 0*\ybias) {$\WriteXOne{v^{i}_{j}}$};

\node[event] (T21) at (2*\xstep, 1*\ystep + 0*\ybias) {$\ReadYTwo{v^{i}_{j}}$};
\node[event] (T22) at (2*\xstep, 2*\ystep + 0*\ybias) {$\ReadYTwo{u^{i}_{j}}$};
\node[event, draw=black] (T23) at (2*\xstep, 3*\ystep + 0*\ybias) {$\WriteXOne{v^{i}_{j}}$};

\node[event, draw=black] (T31) at (3*\xstep, 1*\ystep + 0*\ybias) {$\ReadXOne{v^{i}_{j}}$};
\node[event] (T32) at (3*\xstep, 2*\ystep + 0*\ybias) {$\WriteYOne{v^{i}_{j}}$};
\node[event] (T33) at (3*\xstep, 3*\ystep + 0*\ybias) {$\WriteYTwo{v^{i}_{j}}$};

\node[event] (T41) at (4*\xstep, 1*\ystep + 0*\ybias) {$\WriteYOne{\ov{v}^{i}_{j}}$};
\node[event] (T42) at (4*\xstep, 2*\ystep + 0*\ybias) {$\WriteYOne{v^{i}_{j}}$};

\node[event] (T51) at (5*\xstep, 1*\ystep + 0*\ybias) {$\WriteYOne{u^{i}_{j}}$};
\node[event] (T52) at (5*\xstep, 2*\ystep + 0*\ybias) {$\WriteYOne{\ov{u}^{i}_{j}}$};

\node[event] (T61) at (6*\xstep, 1*\ystep + 0*\ybias) {$\ReadYOne{\ov{u}^{i}_{j}}$};
\node[event] (T62) at (6*\xstep, 2*\ystep + 0*\ybias) {$\ReadYOne{\ov{v}^{i}_{j}}$};
\node[event] (T63) at (6*\xstep, 3*\ystep + 0*\ybias) {$\WriteYFour{v^{i}_{j}}$};

\node[event] (T71) at (7*\xstep, 1*\ystep + 0*\ybias) {$\WriteYTwo{\ov{v}^{i}_{j}}$};
\node[event] (T72) at (7*\xstep, 2*\ystep + 0*\ybias) {$\WriteYTwo{v^{i}_{j}}$};

\node[event] (T81) at (8*\xstep, 1*\ystep + 0*\ybias) {$\WriteYTwo{u^{i}_{j}}$};
\node[event] (T82) at (8*\xstep, 2*\ystep + 0*\ybias) {$\WriteYTwo{\ov{u}^{i}_{j}}$};

\node[event] (T91) at (9*\xstep, 1*\ystep + 0*\ybias) {$\ReadYTwo{\ov{u}^{i}_{j}}$};
\node[event] (T92) at (9*\xstep, 2*\ystep + 0*\ybias) {$\ReadYTwo{\ov{v}^{i}_{j}}$};
\node[event] (T93) at (9*\xstep, 3*\ystep + 0*\ybias) {$\WriteYThree{v^{i}_{j}}$};

\node[event] (T101) at (10*\xstep, 1*\ystep + 0*\ybias) {$\ReadYThree{v^{i}_{j}}$};
\node[event] (T102) at (10*\xstep, 2*\ystep + 0*\ybias) {$\WriteYTwo{\ov{v}^{i}_{j}}$};
\node[event, draw=black] (T103) at (10*\xstep, 3*\ystep + 0*\ybias) {$\WriteXTwo{v^{i}_{j}}$};

\node[event] (T111) at (11*\xstep, 1*\ystep + 0*\ybias) {$\ReadYFour{v^{i}_{j}}$};
\node[event] (T112) at (11*\xstep, 2*\ystep + 0*\ybias) {$\WriteYOne{\ov{v}^{i}_{j}}$};
\node[event, draw=black] (T113) at (11*\xstep, 3*\ystep + 0*\ybias) {$\WriteXTwo{v^{i}_{j}}$};

\node[event, draw=black] (T121) at (12*\xstep, 1*\ystep + 0*\ybias) {$\ReadXTwo{v^{i}_{j}}$};
\node[event] (T122) at (12*\xstep, 2*\ystep + 0*\ybias) {$\WriteYThree{v^{i}_{j}}$};
\node[event] (T123) at (12*\xstep, 3*\ystep + 0*\ybias) {$\WriteYFour{v^{i}_{j}}$};

\begin{pgfonlayer}{bg}
\draw[rf, out=-160, in=25, looseness=0.5] (T51) to (T12);
\draw[rf, out=-160, in=25, looseness=0.1] (T81) to (T22);
\draw[rf] (T52) to (T61);
\draw[rf] (T82) to (T91);
\end{pgfonlayer}

\end{scope}
}

\begin{subfigure}{\textwidth}
\centering
\scalebox{\scale}{%
\begin{tikzpicture}[thick, font=\footnotesize,
pre/.style={<-,shorten >= 2pt, shorten <=2pt, very thick},
post/.style={->,shorten >= 3pt, shorten <=3pt,   thick},
seqtrace/.style={line width=1},
und/.style={very thick, draw=gray},
virt/.style={circle,draw=black!50,fill=black!20, opacity=0}]

\BaseCopyGadget

\end{tikzpicture}
}
\caption{\label{subfig:copy_gadget_rlx}
The copy gadget $\CopyGadget{i}{j}$.
}
\end{subfigure}
\\
\begin{subfigure}{\textwidth}
\centering
\scalebox{\scale}{
\begin{tikzpicture}[thick, font=\footnotesize,
pre/.style={<-,shorten >= 2pt, shorten <=2pt, very thick},
post/.style={->,shorten >= 3pt, shorten <=3pt,   thick},
seqtrace/.style={line width=1},
und/.style={very thick, draw=gray},
virt/.style={circle,draw=black!50,fill=black!20, opacity=0}]

\BaseCopyGadget

\begin{pgfonlayer}{fg}
\draw[rf, out=50, in=-165, looseness=1.5] (T13) to  node [pos=0.2] (fir) {\circledsmall{1}} (T31);
\draw[rf, out=150, in=-20, looseness=0.5] (T42) to node [pos=0.85] (sec) {\circledsmall{2}} (T11);
\draw[mo] (T42) -- (T51);
\draw[rf, out=160, in=30, looseness=0.2] (T112) to node [pos=0.8] (thi) {\circledsmall{3}} (T62);
\draw[rf] (T123) -- (T111) node [pos=0.2] (thi) {\circledsmall{4}};
\draw[rf, out=50, in=-165, looseness=1.5] (T103) to node[pos=.65] (four) {\circledsmall{5}} (T121);
\draw[rf] (T93) to (T101);
\draw[rf] (T71) to (T92);
\draw[rf] (T33) to (T21);

\draw[mo, out=-20, in=-160, looseness=0.2] (T52) to (T112);
\draw[mo] (T82) to (T71);
\draw[mo, out=20, in=-130, looseness=1.5] (T33) to (T81);
\end{pgfonlayer}

\end{tikzpicture}
}
\caption{\label{subfig:copy_gadget_rlxFalse}
Choosing $\TReadXOne{v^i_j}{t_3}$ to read from $t_1$ forces the sequence of $\rf$ and $\mo$ edges shown.
}
\end{subfigure}
\\
\begin{subfigure}{\textwidth}
\centering
\scalebox{\scale}{%
\begin{tikzpicture}[thick, font=\footnotesize,
pre/.style={<-,shorten >= 2pt, shorten <=2pt, very thick},
post/.style={->,shorten >= 3pt, shorten <=3pt,   thick},
seqtrace/.style={line width=1},
und/.style={very thick, draw=gray},
virt/.style={circle,draw=black!50,fill=black!20, opacity=0}]

\BaseCopyGadget

\begin{pgfonlayer}{bg}

\draw[rf] (T23) -- (T31) node [pos=0.2] (fir) {\circledsmall{1}};
\draw[rf, out=160, in=-20, looseness=0.5] (T72) to node [pos=0.1] (sec) {\circledsmall{2}} (T21); 
\draw[mo] (T72) to (T81);
\draw[rf]  (T102) -- (T92) node [pos=0.2,below] (th) {\circledsmall{3}};
\draw[rf] (T122) -- (T101) node [pos=0.6] (fo) {\circledsmall{4}};
\draw[rf] (T113) -- (T121) node [pos=0.8] (fiv) {\circledsmall{5}};
\draw[rf, out=10, in=-160] (T63) to (T111);
\draw[rf] (T41) to (T62);
\draw[rf] (T32) to (T11);

\draw[mo] (T52) to (T41);
\draw[mo, out=-20, in=-160, bend left=20] (T82) to (T102);
\draw[mo] (T32) to (T51);
\end{pgfonlayer}

\end{tikzpicture}
}
\caption{\label{subfig:copy_gadget_rlxTrue}
Choosing $\TReadXOne{v^i_j}{t_3}$ to read from $t_2$ forces the sequence of $\rf$ and $\mo$ edges shown.
}
\end{subfigure}
\caption{\label{fig:copy_gadget_rlx}
The copy gadget $\CopyGadget{i}{j}$ (\subref{subfig:copy_gadget_rlx}) captures the Boolean assignment to variable $s_j$ in phase $i$.
There are two ways to realize this gadget, by choosing which of the two writes $\WriteXOne{v^i_j}$ the read $\ReadXOne{v^i_j}$ observes.
Choosing the write of $t_1$ (\subref{subfig:copy_gadget_rlxFalse}) corresponds to setting $s_j=\bfalse$ and also forces $\ReadXTwo{v^i_j}$ to read from $t_4$. 
Choosing the write of $t_2$ (\subref{subfig:copy_gadget_rlxTrue}) corresponds to setting $s_j=\btrue$ and also forces $\ReadXTwo{v^i_j}$ to read from $t_5$.
This $\rf$ coupling is formalized in \cref{lem:rlx_soundness_copy}.
The edge numbers specify the order in which $\rf$-edges are inferred.
}
\end{figure}

\Paragraph{The copy gadget $\CopyGadget{i}{j}$.}
The main gadget in our construction is the copy gadget $\CopyGadget{i}{j}$, 
defined for each $i\in[m]$ and $j\in [n]$, and shown in \cref{fig:copy_gadget_rlx}.
This gadget contains 
(i)~the three focal events  $\TWriteXOne{v^i_j}{t_1}$, $\TWriteXOne{v^i_j}{t_2}$ and $\TReadXOne{v^i_j}{t_3}$  that determine the truth value of $s_j$,
(ii)~three ``mirror'' events $\TWriteXTwo{v^i_j}{t_4}$, $\TWriteXTwo{v^i_j}{t_5}$ and $\TReadXTwo{v^i_j}{t_6}$, and
(iii)~other events on memory locations
$\VarYOne$, $\VarYTwo$, $\VarYThree$, $\VarYFour$.

The gadget couples the writers of $\TReadXOne{v^i_j}{t_3}$ and $\TReadXTwo{v^i_j}{t_6}$:~if $\TReadXOne{v^i_j}{t_3}$ reads from thread $t_1$ then $\TReadXTwo{v^i_j}{t_6}$ reads from thread $t_4$ (see \cref{subfig:copy_gadget_rlxFalse}), while
if $\TReadXOne{v^i_j}{t_3}$ reads from thread $t_2$ then $\TReadXTwo{v^i_j}{t_6}$ reads from thread $t_5$ (see \cref{subfig:copy_gadget_rlxTrue}).


\begin{figure}
\newcommand{\xdisposition}{3.4*\xstep}
\newcommand{\ydisposition}{2*\ystep}
\newcommand{\xtstep}{0.75}
\newcommand{\ytstep}{0.7}
\newcommand{\ybias}{-0.3 }
\newcommand{\xstep}{1.435}
\newcommand{\ystep}{-1}
\newcommand{\xtscale}{0.8}
\def \numevents{3}
\def\scale{0.825}
\newcommand{\BaseCopyGadget}{
\begin{scope}[shift={(0*\xdisposition,0*\ydisposition)}]

\foreach \x [evaluate=\x as \i using ({int(\x)})] in {1,...,3}{
\node[] (T\i1) at (\i*\xstep,0.15) {\normalsize $t_{\x}$};
\node[] (T\i2) at (\i*\xstep,\numevents * \ystep) {};
\draw[seqtrace] (T\i1) to (T\i2);
}

\foreach \x [evaluate=\x as \i using ({int(\x+3)})] in {1,...,6}{
\node[] (T\i1) at (\i*\xstep,0.15) {\normalsize $g_{\x}$};
\node[] (T\i2) at (\i*\xstep,\numevents * \ystep) {};
\draw[seqtrace] (T\i1) to (T\i2);
}

\foreach \x [evaluate=\x as \i using ({int(\x+6)})] in {4,...,6}{
\node[] (T\i1) at (\i*\xstep,0.15) {\normalsize $t_{\x}$};
\node[] (T\i2) at (\i*\xstep,\numevents * \ystep) {};
\draw[seqtrace] (T\i1) to (T\i2);
}

\node[event] (T11) at (1*\xstep, 1*\ystep + 0*\ybias) {$\ReadZOne{v^{i+1}_{j}}$};
\node[event] (T12) at (1*\xstep, 2*\ystep + 0*\ybias) {$\ReadZOne{u^{i+1}_{j}}$};
\node[event, draw=black] (T13) at (1*\xstep, 3*\ystep + 0*\ybias) {$\WriteXOne{v^{i+1}_{j}}$};

\node[event] (T21) at (2*\xstep, 1*\ystep + 0*\ybias) {$\ReadZTwo{v^{i+1}_{j}}$};
\node[event] (T22) at (2*\xstep, 2*\ystep + 0*\ybias) {$\ReadZTwo{u^{i+1}_{j}}$};
\node[event, draw=black] (T23) at (2*\xstep, 3*\ystep + 0*\ybias) {$\WriteXOne{v^{i+1}_{j}}$};

\node[event, draw=black] (T31) at (3*\xstep, 1*\ystep + 0*\ybias) {$\ReadXOne{v^{i+1}_{j}}$};
\node[event] (T32) at (3*\xstep, 2*\ystep + 0*\ybias) {$\WriteZOne{v^{i+1}_{j}}$};
\node[event] (T33) at (3*\xstep, 3*\ystep + 0*\ybias) {$\WriteZTwo{v^{i+1}_{j}}$};

\node[event] (T41) at (4*\xstep, 1*\ystep + 0*\ybias) {$\WriteZOne{\ov{v}^{i}_{j}}$};
\node[event] (T42) at (4*\xstep, 2*\ystep + 0*\ybias) {$\WriteZOne{v^{i+1}_{j}}$};

\node[event] (T51) at (5*\xstep, 1*\ystep + 0*\ybias) {$\WriteZOne{u^{i+1}_{j}}$};
\node[event] (T52) at (5*\xstep, 2*\ystep + 0*\ybias) {$\WriteZOne{\ov{u}^{i}_{j}}$};

\node[event] (T61) at (6*\xstep, 1*\ystep + 0*\ybias) {$\ReadZOne{\ov{u}^{i}_{j}}$};
\node[event] (T62) at (6*\xstep, 2*\ystep + 0*\ybias) {$\ReadZOne{\ov{v}^{i}_{j}}$};
\node[event] (T63) at (6*\xstep, 3*\ystep + 0*\ybias) {$\WriteZFour{v^{i}_{j}}$};

\node[event] (T71) at (7*\xstep, 1*\ystep + 0*\ybias) {$\WriteZTwo{\ov{v}^{i}_{j}}$};
\node[event] (T72) at (7*\xstep, 2*\ystep + 0*\ybias) {$\WriteZTwo{v^{i+1}_{j}}$};

\node[event] (T81) at (8*\xstep, 1*\ystep + 0*\ybias) {$\WriteZTwo{u^{i+1}_{j}}$};
\node[event] (T82) at (8*\xstep, 2*\ystep + 0*\ybias) {$\WriteZTwo{\ov{u}^{i}_{j}}$};

\node[event] (T91) at (9*\xstep, 1*\ystep + 0*\ybias) {$\ReadZTwo{\ov{u}^{i}_{j}}$};
\node[event] (T92) at (9*\xstep, 2*\ystep + 0*\ybias) {$\ReadZTwo{\ov{v}^{i}_{j}}$};
\node[event] (T93) at (9*\xstep, 3*\ystep + 0*\ybias) {$\WriteZThree{v^{i}_{j}}$};

\node[event] (T101) at (10*\xstep, 1*\ystep + 0*\ybias) {$\ReadZThree{v^{i}_{j}}$};
\node[event] (T102) at (10*\xstep, 2*\ystep + 0*\ybias) {$\WriteZTwo{\ov{v}^{i}_{j}}$};
\node[event, draw=black] (T103) at (10*\xstep, 3*\ystep + 0*\ybias) {$\WriteXTwo{v^{i}_{j}}$};

\node[event] (T111) at (11*\xstep, 1*\ystep + 0*\ybias) {$\ReadZFour{v^{i}_{j}}$};
\node[event] (T112) at (11*\xstep, 2*\ystep + 0*\ybias) {$\WriteZOne{\ov{v}^{i}_{j}}$};
\node[event, draw=black] (T113) at (11*\xstep, 3*\ystep + 0*\ybias) {$\WriteXTwo{v^{i}_{j}}$};

\node[event, draw=black] (T121) at (12*\xstep, 1*\ystep + 0*\ybias) {$\ReadXTwo{v^{i}_{j}}$};
\node[event] (T122) at (12*\xstep, 2*\ystep + 0*\ybias) {$\WriteZThree{v^{i}_{j}}$};
\node[event] (T123) at (12*\xstep, 3*\ystep + 0*\ybias) {$\WriteZFour{v^{i}_{j}}$};

\begin{pgfonlayer}{bg}
\draw[rf, out=-160, in=25, looseness=0.5] (T51) to (T12);
\draw[rf, out=-160, in=25, looseness=0.1] (T81) to (T22);
\draw[rf] (T52) to (T61);
\draw[rf] (T82) to (T91);
\end{pgfonlayer}

\end{scope}
}
%
%
\begin{subfigure}{\textwidth}
\centering
\scalebox{\scale}{
\begin{tikzpicture}[thick, font=\footnotesize,
pre/.style={<-,shorten >= 2pt, shorten <=2pt, very thick},
post/.style={->,shorten >= 3pt, shorten <=3pt,   thick},
seqtrace/.style={line width=1},
und/.style={very thick, draw=gray},
virt/.style={circle,draw=black!50,fill=black!20, opacity=0}]

\BaseCopyGadget

%

\begin{pgfonlayer}{fg}
\draw[rf, out=50, in=-165, looseness=1.5] (T13) to  node [pos=0.2] (fir) {\circledsmall{1}} (T31);
\draw[rf, out=150, in=-20, looseness=0.5] (T42) to node [pos=0.85] (sec) {\circledsmall{2}} (T11);
\draw[mo] (T42) -- (T51);
\draw[rf, out=160, in=30, looseness=0.2] (T112) to node [pos=0.8] (thi) {\circledsmall{3}} (T62);
\draw[rf] (T123) -- (T111) node [pos=0.2] (thi) {\circledsmall{4}};
\draw[rf, out=50, in=-165, looseness=1.5] (T103) to node[pos=.65] (four) {\circledsmall{5}} (T121);
\draw[rf] (T93) to (T101);
\draw[rf] (T71) to (T92);
\draw[rf] (T33) to (T21);

\draw[mo, out=-20, in=-160, looseness=0.2] (T52) to (T112);
\draw[mo] (T82) to (T71);
\draw[mo, out=20, in=-130, looseness=1.5] (T33) to (T81);
\end{pgfonlayer}

\end{tikzpicture}
}
\caption{\label{subfig:copy_gadget_down_rlxFalse}
Choosing $\TReadXOne{v^i_j}{t_3}$ to read from $t_1$ forces the sequence of $\rf$ and $\mo$ edges shown.
}
\end{subfigure}
\\
\begin{subfigure}{\textwidth}
\centering
\scalebox{\scale}{%
\begin{tikzpicture}[thick, font=\footnotesize,
pre/.style={<-,shorten >= 2pt, shorten <=2pt, very thick},
post/.style={->,shorten >= 3pt, shorten <=3pt,   thick},
seqtrace/.style={line width=1},
und/.style={very thick, draw=gray},
virt/.style={circle,draw=black!50,fill=black!20, opacity=0}]

\BaseCopyGadget

%

\begin{pgfonlayer}{bg}

\draw[rf] (T23) -- (T31) node [pos=0.2] (fir) {\circledsmall{1}};
\draw[rf, out=160, in=-20, looseness=0.5] (T72) to node [pos=0.1] (sec) {\circledsmall{2}} (T21); 
\draw[mo] (T72) to (T81);
\draw[rf]  (T102) -- (T92) node [pos=0.2,below] (th) {\circledsmall{3}};
\draw[rf] (T122) -- (T101) node [pos=0.6] (fo) {\circledsmall{4}};
\draw[rf] (T113) -- (T121) node [pos=0.8] (fiv) {\circledsmall{5}};
\draw[rf, out=10, in=-160] (T63) to (T111);
\draw[rf] (T41) to (T62);
\draw[rf] (T32) to (T11);

\draw[mo] (T52) to (T41);
\draw[mo, out=-20, in=-160, bend left=20] (T82) to (T102);
\draw[mo] (T32) to (T51);
\end{pgfonlayer}

\end{tikzpicture}
}
\caption{\label{subfig:copy_gadget_down_rlxTrue}
Choosing $\TReadXOne{v^i_j}{t_3}$ to read from $t_2$ forces the sequence of $\rf$ and $\mo$ edges shown.
}
\end{subfigure}
\caption{\label{fig:copy_gadget_down_rlx}
The copy-down gadget $\CopyGadgetDown{i}{j}$ is very similar to $\CopyGadget{i}{j}$.
Choosing $\ReadXOne{v^{i+1}_j}$ to read from $t_1$ (\subref{subfig:copy_gadget_down_rlxFalse}) corresponds to setting $s_j=\bfalse$ and also forces $\ReadXTwo{v^i_j}$ to read from $t_4$. 
Choosing $\ReadXOne{v^{i+1}_j}$ to read from $t_2$ (\subref{subfig:copy_gadget_down_rlxTrue}) corresponds to setting $s_j=\btrue$ and also forces $\ReadXTwo{v^i_j}$ to read from $t_5$.
This $\rf$ coupling is formalized in \cref{lem:rlx_soundness_copy}.
The edge numbers specify the order in which $\rf$-edges are inferred.
}
\end{figure}

\Paragraph{The copy-down gadget $\CopyGadgetDown{i}{j}$.}
We use a copy-down gadget $\CopyGadgetDown{i}{j}$, defined for $i\in[m-1]$ and $j\in[n]$, with structure identical to $\CopyGadget{i}{j}$, and shown in \cref{fig:copy_gadget_down_rlx}. This gadget contains
(i)~the three focal events $\TWriteXOne{v^{i+1}_j}{t_1}$, $\TWriteXOne{v^{i+1}_j}{t_2}$ and $\TReadXOne{v^{i+1}_j}{t_3}$,
(ii)~the three mirror events $\TWriteXTwo{v^i_j}{t_4}$, $\TWriteXTwo{v^i_j}{t_5}$ and $\TReadXTwo{v^i_j}{t_6}$, and
(iii)~other events on memory locations $\VarZOne$, $\VarZTwo$, $\VarZThree$, $\VarZFour$.

While $\CopyGadget{i}{j}$ couples the writers of the focal and mirror read events --- $\TReadXOne{v^i_j}{t_3}$ and $\TReadXTwo{v^i_j}{t_6}$ --- belonging to the same phase, 
$\CopyGadgetDown{i}{j}$ couples the writers of the focal and mirror read events --- $\TReadXOne{v^{i+1}_j}{t_3}$ and $\TReadXTwo{v^i_j}{t_6}$ --- of consecutive phases.
If $\TReadXOne{v^{i+1}_j}{t_3}$ reads from thread $t_1$, then $\TReadXTwo{v^i_j}{t_6}$ reads from thread $t_4$, while
if $\TReadXOne{v^{i+1}_j}{t_3}$ reads from thread $t_2$, then $\TReadXTwo{v^i_j}{t_6}$ reads from thread $t_5$.
Together, $\CopyGadget{i}{j}$ and $\CopyGadgetDown{i}{j}$ couple the writers of the focal reads $\TReadXOne{v^i_j}{t_3}$ and $\TReadXOne{v^{i+1}_j}{t_3}$ across consecutive phases --- either both read from thread $t_1$ or both read from $t_2$.


\begin{figure}
\newcommand{\xdisposition}{0}
\newcommand{\ydisposition}{0}
\newcommand{\xtstep}{0.75}
\newcommand{\ytstep}{0.7}
\newcommand{\ybias}{-0.3 }
\newcommand{\xstep}{1.4}
\newcommand{\ystep}{-0.8}
\newcommand{\xtscale}{0.8}
\def\crossoutopacity{0.3}
\def \numevents{4}
\newcommand{\BaseAtMostOneGadget}[1]{

\node[] (S11) at (1*\xstep,0.15) {\normalsize $t_1$};
\node[] (S12) at (1*\xstep,\numevents * \ystep) {};
\node[] (S21) at (2*\xstep,0.15) {\normalsize $t_2$};
\node[] (S22) at (2*\xstep,\numevents * \ystep) {};
\node[] (S31) at (3*\xstep,0.15) {\normalsize $t_3$};
\node[] (S32) at (3*\xstep,\numevents * \ystep) {};
\node[] (S41) at (4*\xstep,0.15) {\normalsize $h_{c}$};
\node[] (S42) at (4*\xstep,\numevents * \ystep) {};

\draw[seqtrace] (S11) to (S12);
\draw[seqtrace] (S21) to (S22);
\draw[seqtrace] (S31) to (S32);
\draw[seqtrace] (S41) to (S42);

\node[event, draw=black] (12) at (1*\xstep, 2*\ystep + 0*\ybias) {$\WriteXOne{v^{i}_{j}}$};
\node[event, draw=black] (14) at (1*\xstep, 4*\ystep + 0*\ybias) {$\WriteXOne{v^{i}_{k}}$};

\node[event] (21) at (2*\xstep, 1*\ystep + 0*\ybias) {$\ReadA{v^{i}_{j}}{c}$};
\node[event, draw=black] (22) at (2*\xstep, 2*\ystep + 0*\ybias) {$\WriteXOne{v^{i}_{j}}$};
\node[event] (23) at (2*\xstep, 3*\ystep + 0*\ybias) {$\ReadA{v^{i}_{k}}{c}$};
\node[event, draw=black] (24) at (2*\xstep, 4*\ystep + 0*\ybias) {$\WriteXOne{v^{i}_{k}}$};

\node[event, draw=black] (31) at (3*\xstep, 1*\ystep + 0*\ybias) {$\ReadXOne{v^{i}_{j}}$};
\node[event] (32) at (3*\xstep, 2*\ystep + 0*\ybias) {$\WriteA{v^{i}_{j}}{c}$};
\node[event, draw=black] (33) at (3*\xstep, 3*\ystep + 0*\ybias) {$\ReadXOne{v^{i}_{k}}$};
\node[event] (34) at (3*\xstep, 4*\ystep + 0*\ybias) {$\WriteA{v^{i}_{k}}{c}$};

\ifthenelse{\equal{#1}{0}}{
\node[event] (42) at (4*\xstep, 2*\ystep + 0*\ybias) {$\WriteA{v^{i}_{k}}{c}$};
\node[event] (44) at (4*\xstep, 4*\ystep + 0*\ybias) {$\WriteA{v^{i}_{j}}{c}$};
}{
\ifthenelse{\equal{#1}{1}}{
\node[event, cross out, draw, draw opacity=\crossoutopacity] (42) at (4*\xstep, 2*\ystep + 0*\ybias) {$\WriteA{v^{i}_{k}}{c}$};
\node[event] (44) at (4*\xstep, 4*\ystep + 0*\ybias) {$\WriteA{v^{i}_{j}}{c}$};
}{
\ifthenelse{\equal{#1}{2}}{
\node[event] (42) at (4*\xstep, 2*\ystep + 0*\ybias) {$\WriteA{v^{i}_{k}}{c}$};
\node[event, cross out, draw, draw opacity=\crossoutopacity] (44) at (4*\xstep, 4*\ystep + 0*\ybias) {$\WriteA{v^{i}_{j}}{c}$};
}{
\node[event, cross out, draw, draw opacity=\crossoutopacity] (42) at (4*\xstep, 2*\ystep + 0*\ybias) {$\WriteA{v^{i}_{k}}{c}$};
\node[event, cross out, draw, draw opacity=\crossoutopacity] (44) at (4*\xstep, 4*\ystep + 0*\ybias) {$\WriteA{v^{i}_{j}}{c}$};
}}}

}
\def\scale{1}
\centering
\begin{subfigure}[b]{0.475\textwidth}
\centering
\scalebox{\scale}{
\begin{tikzpicture}[thick, font=\footnotesize,
pre/.style={<-,shorten >= 2pt, shorten <=2pt, very thick},
post/.style={->,shorten >= 3pt, shorten <=3pt,   thick},
seqtrace/.style={line width=1},
und/.style={very thick, draw=gray},
virt/.style={circle,draw=black!50,fill=black!20, opacity=0}]

\BaseAtMostOneGadget{0}

\end{tikzpicture}
}
\caption{\label{subfig:at_most_one_gadget_rlx}
The at-most-one-true gadget $\AtMostOneGadget{c}{i}{j}{k}$.
}
\end{subfigure}
\hfill
\begin{subfigure}[b]{0.475\textwidth}
\centering
\scalebox{\scale}{
\begin{tikzpicture}[thick, font=\footnotesize,
pre/.style={<-,shorten >= 2pt, shorten <=2pt, very thick},
post/.style={->,shorten >= 3pt, shorten <=3pt,   thick},
seqtrace/.style={line width=1},
und/.style={very thick, draw=gray},
virt/.style={circle,draw=black!50,fill=black!20, opacity=0}]

\BaseAtMostOneGadget{1}

\begin{pgfonlayer}{bg}
\draw[rf] (22) -- (31) node [pos=0.6] (fir) {\circledsmall{1}};
\draw[rf] (44) -- (21) node [pos=0.1] (sec) {\circledsmall{2}};
\draw[rf] (34) -- (23) node [pos=0.3] (thr) {\circledsmall{3}};
\draw[rf] (14) -- (33) node [pos=0.1, above] (four) {\circledsmall{4}};

\draw[mo] (22) to (14);
\draw[mo] (44) to (34);
\end{pgfonlayer}

\end{tikzpicture}
}
\caption{\label{subfig:at_most_one_gadget_rlx_true_false}
Resolving with $s_j=\btrue$ and $s_k=\bfalse$.
}
\end{subfigure}
\\
\begin{subfigure}[b]{0.475\textwidth}
\centering
\scalebox{\scale}{
\begin{tikzpicture}[thick, font=\footnotesize,
pre/.style={<-,shorten >= 2pt, shorten <=2pt, very thick},
post/.style={->,shorten >= 3pt, shorten <=3pt,   thick},
seqtrace/.style={line width=1},
und/.style={very thick, draw=gray},
virt/.style={circle,draw=black!50,fill=black!20, opacity=0}]

\BaseAtMostOneGadget{2}

\begin{pgfonlayer}{bg}
\draw[rf] (24) -- (33) node [pos=0.4] (fir) {\circledsmall{1}};
\draw[rf] (42) -- (23) node [pos=0.4] (sec) {\circledsmall{2}};;
\draw[rf] (32) -- (21) node [pos=0.4] (third) {\circledsmall{3}};
\draw[rf] (12) -- (31) node [pos=0.1,above] (four) {\circledsmall{4}};

\draw[mo] (12) to (24);
\draw[mo] (32) to (42);
\end{pgfonlayer}

\end{tikzpicture}
}
\caption{\label{subfig:at_most_one_gadget_rlx_false_true}
Resolving with $s_j=\bfalse$ and $s_k=\btrue$.
}
\end{subfigure}
\hfill
\begin{subfigure}[b]{0.475\textwidth}
\centering
\scalebox{\scale}{
\begin{tikzpicture}[thick, font=\footnotesize,
pre/.style={<-,shorten >= 2pt, shorten <=2pt, very thick},
post/.style={->,shorten >= 3pt, shorten <=3pt,   thick},
seqtrace/.style={line width=1},
und/.style={very thick, draw=gray},
virt/.style={circle,draw=black!50,fill=black!20, opacity=0}]

\BaseAtMostOneGadget{3}

\begin{pgfonlayer}{bg}
\draw[rf] (12) -- (31) node [pos=0.1,above=.1mm] (fir) {\circledsmall{1}};
\draw[rf] (14) -- (33) node [pos=0.1,above=.1mm] (fir) {\circledsmall{1}};
\draw[rf] (32) -- (21) node [pos=0.4] (sec) {\circledsmall{2}};
\draw[rf] (34) -- (23) node [pos=0.4] (sec) {\circledsmall{2}};

\end{pgfonlayer}

\end{tikzpicture}
}
\caption{\label{subfig:at_most_one_gadget_rlx_false_false}
Resolving with $s_j=\bfalse$ and $s_k=\bfalse$.
}
\end{subfigure}
\caption{\label{fig:at_most_one_gadget_rlx}
The at-most-one-true gadget $\AtMostOneGadget{c}{i}{j}{k}$ parameterized by $c\in [3]$,
and the three ways to resolve it depending on the boolean assignment to $s_j$ and $s_k$ (\subref{subfig:at_most_one_gadget_rlx_true_false}, \subref{subfig:at_most_one_gadget_rlx_false_true}, \subref{subfig:at_most_one_gadget_rlx_false_false}).
These $\rf$ constraints  are formalized in \cref{lem:rlx_soundness_at_most_one}.
The edge numbers specify the order in which $\rf$-edges are inferred.
The crossed-out events are ignored in our analysis for simplicity (see \cref{lem:rlx_completeness_inactive_writes}).
}
\end{figure}

\Paragraph{The at-most-one-true gadgets $\AtMostOneGadget{c}{i}{j}{k}$.}
Consider a clause $\Clause_i$, and for each $c\in[3]$, let $(s_j, s_k)$ be the $c$-th pair of literals that appear in $\Clause_i$ (according to some arbitrary but fixed total ordering on pairs of propositional variables).
The at-most-one-true gadget $\AtMostOneGadget{c}{i}{j}{k}$ is shown in \cref{subfig:at_most_one_gadget_rlx} and contains
(i)~the six focal events $\TWriteXOne{v^i_j}{t_1}$, $\TWriteXOne{v^i_j}{t_2}$ and $\TReadXOne{v^i_j}{t_3}$; and $\TWriteXOne{v^i_k}{t_1}$, $\TWriteXOne{v^i_k}{t_2}$ and $\TReadXOne{v^i_k}{t_3}$, and
(ii)~other events on memory location $\VarA{c}$.
The gadget guarantees that at most one of $\TReadXOne{v^i_j}{t_3}$ and $\TReadXOne{v^i_k}{t_3}$ reads from thread $t_2$, which corresponds to assigning $\btrue$ to at most one of $s_j$ and $s_k$ (\cref{subfig:at_most_one_gadget_rlx_true_false,subfig:at_most_one_gadget_rlx_false_true,subfig:at_most_one_gadget_rlx_false_false}).

\Paragraph{The at-least-one-true gadget $\AtLeastOneGadget{i}{j}{k}{\ell}$.}
Consider a clause $\Clause_i=(s_j, s_k, s_{\ell})$.
The at-least-one-true gadget $\AtLeastOneGadget{i}{j}{k}{\ell}$ is shown in \cref{subfig:at_least_one_gadget_rlx} and contains 
(i)~the nine focal events $\TWriteXOne{v^i_j}{t_1}$, $\TWriteXOne{v^i_j}{t_2}$ and $\TReadXOne{v^i_j}{t_3}$; $\TWriteXOne{v^i_k}{t_1}$, $\TWriteXOne{v^i_k}{t_2}$ and $\TReadXOne{v^i_k}{t_3}$; and $\TWriteXOne{v^i_{\ell}}{t_1}$, $\TWriteXOne{v^i_{\ell}}{t_2}$ and $\TReadXOne{v^i_{\ell}}{t_3}$, and
(ii)~other events on memory location $\VarB$.
The gadget guarantees that at least one of 
$\TReadXOne{v^i_j}{t_3}$, $\TReadXOne{v^i_k}{t_3}$ and $\TReadXOne{v^i_{\ell}}{t_3}$
reads from thread $t_2$, which corresponds to assigning $\btrue$ to at least one of $s_j$, $s_k$ and $s_{\ell}$ (shown in \cref{subfig:at_least_one_gadget_rlx_true_false_false,subfig:at_least_one_gadget_rlx_false_true_false,subfig:at_least_one_gadget_rlx_false_false_true}).
Note that this gadget, by itself, allows for two or even all three of the literals to be assigned to $\btrue$, however, these cases are not shown as they are prohibited by the at-most-one-true gadgets above.


\begin{figure}
\newcommand{\xdisposition}{0}
\newcommand{\ydisposition}{0}
\newcommand{\xtstep}{0.75}
\newcommand{\ytstep}{0.7}
\newcommand{\ybias}{-0.3 }
\newcommand{\xstep}{1.4}
\newcommand{\ystep}{-0.8}
\newcommand{\xtscale}{0.8}
\def\crossoutopacity{0.3}
\newcommand{\BaseAtLeastOneGadget}[1]{
\node[] (S11) at (1*\xstep,0.15) {\normalsize $t_1$};
\node[] (S12) at (1*\xstep,\numevents * \ystep) {};
\node[] (S21) at (2*\xstep,0.15) {\normalsize $t_2$};
\node[] (S22) at (2*\xstep,\numevents * \ystep) {};
\node[] (S31) at (3*\xstep,0.15) {\normalsize $t_3$};
\node[] (S32) at (3*\xstep,\numevents * \ystep) {};
\node[] (S41) at (4*\xstep,0.15) {\normalsize $p$};
\node[] (S42) at (4*\xstep,\numevents * \ystep) {};
\node[] (S51) at (5*\xstep,0.15) {\normalsize $q$};
\node[] (S52) at (5*\xstep,\numevents * \ystep) {};

\draw[seqtrace] (S11) to (S12);
\draw[seqtrace] (S21) to (S22);
\draw[seqtrace] (S31) to (S32);
\draw[seqtrace] (S41) to (S42);
\draw[seqtrace] (S51) to (S52);

\node[event] (11) at (1*\xstep, 1*\ystep + 0*\ybias) {$\ReadB{v^{i}_j}$};
\node[event, draw=black] (12) at (1*\xstep, 2*\ystep + 0*\ybias) {$\WriteXOne{v^{i}_{j}}$};
\node[event] (13) at (1*\xstep, 3*\ystep + 0*\ybias) {$\ReadB{v^{i}_k}$};
\node[event, draw=black] (14) at (1*\xstep, 4*\ystep + 0*\ybias) {$\WriteXOne{v^{i}_{k}}$};
\node[event] (15) at (1*\xstep, 5*\ystep + 0*\ybias) {$\ReadB{v^{i}_\ell}$};
\node[event, draw=black] (16) at (1*\xstep, 6*\ystep + 0*\ybias) {$\WriteXOne{v^{i}_{\ell}}$};

\node[event, draw=black] (22) at (2*\xstep, 2*\ystep + 0*\ybias) {$\WriteXOne{v^{i}_{j}}$};
\node[event, draw=black] (24) at (2*\xstep, 4*\ystep + 0*\ybias) {$\WriteXOne{v^{i}_{k}}$};
\node[event, draw=black] (26) at (2*\xstep, 6*\ystep + 0*\ybias) {$\WriteXOne{v^{i}_{\ell}}$};

\node[event, draw=black] (31) at (3*\xstep, 1*\ystep + 0*\ybias) {$\ReadXOne{v^{i}_{j}}$};
\node[event] (32) at (3*\xstep, 2*\ystep + 0*\ybias) {$\WriteB{v^{i}_j}$};
\node[event, draw=black] (33) at (3*\xstep, 3*\ystep + 0*\ybias) {$\ReadXOne{v^{i}_{k}}$};
\node[event] (34) at (3*\xstep, 4*\ystep + 0*\ybias) {$\WriteB{v^{i}_k}$};
\node[event, draw=black] (35) at (3*\xstep, 5*\ystep + 0*\ybias) {$\ReadXOne{v^{i}_{\ell}}$};
\node[event] (36) at (3*\xstep, 6*\ystep + 0*\ybias) {$\WriteB{v^{i}_\ell}$};

%

\ifthenelse{\equal{#1}{0}}{
\node[event] (42) at (4*\xstep, 2*\ystep + 0*\ybias) {$\WriteB{v^{i}_\ell}$};
\node[event] (44) at (4*\xstep, 4*\ystep + 0*\ybias) {$\WriteB{v^{i}_j}$};

\node[event] (53) at (5*\xstep, 3*\ystep + 0*\ybias) {$\WriteB{v^{i}_\ell}$};
\node[event] (55) at (5*\xstep, 5*\ystep + 0*\ybias) {$\WriteB{v^{i}_k}$};
}{
\ifthenelse{\equal{#1}{1}}{
\node[event, cross out, draw, draw opacity=\crossoutopacity] (42) at (4*\xstep, 2*\ystep + 0*\ybias) {$\WriteB{v^{i}_\ell}$};
\node[event] (44) at (4*\xstep, 4*\ystep + 0*\ybias) {$\WriteB{v^{i}_j}$};

\node[event, cross out, draw, draw opacity=\crossoutopacity] (53) at (5*\xstep, 3*\ystep + 0*\ybias) {$\WriteB{v^{i}_\ell}$};
\node[event] (55) at (5*\xstep, 5*\ystep + 0*\ybias) {$\WriteB{v^{i}_k}$};
}{
\ifthenelse{\equal{#1}{2}}{
\node[event, cross out, draw, draw opacity=\crossoutopacity] (42) at (4*\xstep, 2*\ystep + 0*\ybias) {$\WriteB{v^{i}_\ell}$};
\node[event] (44) at (4*\xstep, 4*\ystep + 0*\ybias) {$\WriteB{v^{i}_j}$};

\node[event] (53) at (5*\xstep, 3*\ystep + 0*\ybias) {$\WriteB{v^{i}_\ell}$};
\node[event, cross out, draw, draw opacity=\crossoutopacity] (55) at (5*\xstep, 5*\ystep + 0*\ybias) {$\WriteB{v^{i}_k}$};
}{
\ifthenelse{\equal{#1}{3}}{
\node[event] (42) at (4*\xstep, 2*\ystep + 0*\ybias) {$\WriteB{v^{i}_\ell}$};
\node[event, cross out, draw, draw opacity=\crossoutopacity] (44) at (4*\xstep, 4*\ystep + 0*\ybias) {$\WriteB{v^{i}_j}$};

\node[event, cross out, draw, draw opacity=\crossoutopacity] (53) at (5*\xstep, 3*\ystep + 0*\ybias) {$\WriteB{v^{i}_\ell}$};
\node[event] (55) at (5*\xstep, 5*\ystep + 0*\ybias) {$\WriteB{v^{i}_k}$};
}}}}

}
\def \numevents{6}
\def\scale{0.97}
\centering
\begin{subfigure}[b]{0.475\textwidth}
\centering
\scalebox{\scale}{%
\begin{tikzpicture}[thick, font=\footnotesize,
pre/.style={<-,shorten >= 2pt, shorten <=2pt, very thick},
post/.style={->,shorten >= 3pt, shorten <=3pt,   thick},
seqtrace/.style={line width=1},
und/.style={very thick, draw=gray},
virt/.style={circle,draw=black!50,fill=black!20, opacity=0}]

\BaseAtLeastOneGadget{0}

\end{tikzpicture}
}
\caption{\label{subfig:at_least_one_gadget_rlx}
The at-least-one-true gadget $\AtLeastOneGadget{i}{j}{k}{\ell}$.
}
\end{subfigure}
\hfill
\begin{subfigure}[b]{0.475\textwidth}
\centering
\scalebox{\scale}{%
\begin{tikzpicture}[thick, font=\footnotesize,
pre/.style={<-,shorten >= 2pt, shorten <=2pt, very thick},
post/.style={->,shorten >= 3pt, shorten <=3pt,   thick},
seqtrace/.style={line width=1},
und/.style={very thick, draw=gray},
virt/.style={circle,draw=black!50,fill=black!20, opacity=0}]

\BaseAtLeastOneGadget{1}

\begin{pgfonlayer}{bg}
\draw[rf] (12) -- (31) node [pos=0.7] (fir) {\circled{1}};
\draw[rf] (44) -- (11) node [pos=0.1] (sec1) {\circled{2}};
\draw[rf] (14) -- (33) node [pos=0.7] (fir1) {\circled{1}};
\draw[rf] (55) -- (13) node [pos=0.1] (sec2) {\circled{2}};
\draw[rf] (36) -- (15) node [pos=0.7] (th) {\circled{3}};
\draw[rf] (26) -- (35) node [pos=0.4,above=0.1mm] (four) {\circled{4}};


\draw[mo, bend left] (14) to (26);
\draw[mo] (44) to (55);
\draw[mo] (55) to (36);
\end{pgfonlayer}

\end{tikzpicture}
}
\caption{\label{subfig:at_least_one_gadget_rlx_false_false_true}
Resolving with $s_j=\bfalse$, $s_k=\bfalse$ and $s_\ell=\btrue$.
}
\end{subfigure}
\\[2em]
\begin{subfigure}[b]{0.475\textwidth}
\centering
\scalebox{\scale}{%
\begin{tikzpicture}[thick, font=\footnotesize,
pre/.style={<-,shorten >= 2pt, shorten <=2pt, very thick},
post/.style={->,shorten >= 3pt, shorten <=3pt,   thick},
seqtrace/.style={line width=1},
und/.style={very thick, draw=gray},
virt/.style={circle,draw=black!50,fill=black!20, opacity=0}]

\BaseAtLeastOneGadget{2}

\begin{pgfonlayer}{bg}
\draw[rf] (12) -- (31) node [pos=0.7] (fir) {\circled{1}};
\draw[rf] (44) -- (11) node [pos=0.9] (fir) {\circled{2}};
\draw[rf] (24) -- (33) node [pos=0.5] (fir) {\circled{4}};;
\draw[rf] (16) -- (35) node [pos=0.7] (fir) {\circled{1}};
\draw[rf] (53) -- (15) node [pos=0.9] (fir) {\circled{2}};
\draw[rf] (34) -- (13) node [pos=0.8] (fir) {\circled{3}};


\draw[mo, bend left] (12) to (24);
\draw[mo, bend left] (24) to (16);
\draw[mo] (44) to (34);
\draw[mo, bend left =10] (34) to (53);
\end{pgfonlayer}

\end{tikzpicture}
}
\caption{\label{subfig:at_least_one_gadget_rlx_false_true_false}
Resolving with $s_j=\bfalse$, $s_k=\btrue$ and $s_\ell=\bfalse$.
}
\end{subfigure}
\hfill
\begin{subfigure}[b]{0.475\textwidth}
\centering
\scalebox{\scale}{%
\begin{tikzpicture}[thick, font=\footnotesize,
pre/.style={<-,shorten >= 2pt, shorten <=2pt, very thick},
post/.style={->,shorten >= 3pt, shorten <=3pt,   thick},
seqtrace/.style={line width=1},
und/.style={very thick, draw=gray},
virt/.style={circle,draw=black!50,fill=black!20, opacity=0}]

\BaseAtLeastOneGadget{3}

\begin{pgfonlayer}{bg}
\draw[rf] (22) -- (31) node [pos=0.5,above=.1mm] (fir) {\circled{4}};
\draw[rf] (14) -- (33) node [pos=0.8] (fir) {\circled{1}};
\draw[rf] (16) -- (35) node [pos=0.8] (fir) {\circled{1}};
\draw[rf] (32) -- (11) node [pos=0.8] (fir) {\circled{3}};
\draw[rf] (55) -- (13) node [pos=0.9] (fir) {\circled{2}};
\draw[rf] (42) -- (15) node [pos=0.9] (fir) {\circled{2}};

\draw[mo, bend right,in=-180] (22) to (14);
\draw[mo, bend right] (14) to (26);
\draw[mo] (32) to (55);
\draw[mo] (55) to (42);
\end{pgfonlayer}

\end{tikzpicture}
}
\caption{\label{subfig:at_least_one_gadget_rlx_true_false_false}
Resolving with $s_j=\btrue$, $s_k=\bfalse$ and $s_\ell=\bfalse$.
}
\end{subfigure}
\caption{\label{fig:at_least_one_gadget_rlx}
The at-least-one-true gadget $\AtLeastOneGadget{i}{j}{k}{\ell}$ (\subref{subfig:at_least_one_gadget_rlx}) and the three ways to resolve it depending on the boolean assignment to $s_j$, $s_k$ and $s_\ell$ (\subref{subfig:at_least_one_gadget_rlx_true_false_false}, \subref{subfig:at_least_one_gadget_rlx_false_true_false}, \subref{subfig:at_least_one_gadget_rlx_false_false_true}).
These $\rf$ constraints  are formalized in \cref{lem:rlx_soundness_at_least_one}.
The edge numbers specify the order in which $\rf$-edges are inferred.
The crossed-out events are ignored in our analysis for simplicity (see \cref{lem:rlx_completeness_inactive_writes}).
}
\end{figure}

\Paragraph{Putting the gadgets together.}
We serially connect all gadgets in their common threads by $\po$.
In particular, $\CopyGadget{i_1}{j_1}$ appears before $\CopyGadget{i_2}{j_2}$ if $i_1<i_2$ or $i_1=i_2$ and $j_1<j_2$;
$\CopyGadgetDown{i_1}{j_1}$ appears before $\CopyGadgetDown{i_2}{j_2}$ if $i_1<i_2$ or $i_1=i_2$ and $j_1<j_2$;
each $\AtMostOneGadget{c}{i_1}{j_1}{k_1}$ appears before $\AtMostOneGadget{c}{i_2}{j_2}{k_2}$ if $i_1<i_2$, and finally
$\AtLeastOneGadget{i_1}{j_1}{k_1}{\ell_1}$ appears before $\AtLeastOneGadget{i_2}{j_2}{k_2}{\ell_2}$ if $i_1<i_2$.
As various gadgets have common threads and events, besides connecting them, we also need to specify the interleaving between them.
However, this interleaving can be arbitrary and we will not fix it here.
Finally, we have indeed used $\NumThreads=23$ threads and $\NumLocations=14$  memory locations.

\subsection{Soundness} 
\label{SEC:LOWER_THREADS_LOCATIONS_RLX_SOUNDNESS}

We first establish the soundness of the reduction, i.e., if $\expartial$ is consistent (using an extension $\ex$) within $\arlxmm$, then $\varphi$ has a satisfying assignment.
In this direction, we will establish some intermediate lemmas.
Recall that we obtain the satisfying assignment for $\varphi$ by assigning $s_j=\btrue$ if $(\TWriteXOne{v^i_j}{t_2}, \TReadXOne{v^i_j}{t_3})\in \rf$ for all $i\in [m]$, and $s_j=\bfalse$ if $(\TWriteXOne{v^i_j}{t_1}, \TReadXOne{v^i_j}{t_3})\in \rf$ for all $i\in[m]$.
The first lemma is based on the copy gadgets $\CopyGadget{i}{j}$ and $\CopyGadgetDown{i}{j}$ and states that each phase of $\ex$ makes consistent choices for the writer of $\TReadXOne{v^i_j}{t_3}$ (i.e., whether it reads from $t_1$ or $t_2$), which makes the above assignment well-defined.
It follows from the high-level description of these two gadgets and the accompanying \cref{fig:copy_gadget_rlx} and \cref{fig:copy_gadget_down_rlx}.

\begin{restatable}{lemma}{lemrlxsoundnesscopy}\label{lem:rlx_soundness_copy}
Let $\ex=(\E, \po,\rf,\mo)$ be a concrete extension of $\expartial$ with $\ex\models \arlxmm$.
For all $i_1, i_2\in[m]$, $j\in[n]$, we have that $(\TWriteXOne{v^{i_1}_j}{t_1}, \TReadXOne{v^{i_1}_j}{t_3}) \in \rf$ iff $(\TWriteXOne{v^{i_2}_j}{t_1}, \TReadXOne{v^{i_2}_j}{t_3}) \in \rf$.
\end{restatable}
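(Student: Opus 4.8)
The plan is to reduce the claim to a local ``coupling'' property of the two copy gadgets and then propagate it across phases by transitivity. Concretely, for each $i$ and $j$ I would first establish that in any consistent extension $\ex\models\arlxmm$ the copy gadget $\CopyGadget{i}{j}$ forces the equivalence
\[
(\TWriteXOne{v^i_j}{t_1}, \TReadXOne{v^i_j}{t_3}) \in \rf \iff (\TWriteXTwo{v^i_j}{t_4}, \TReadXTwo{v^i_j}{t_6}) \in \rf,
\]
and that the copy-down gadget $\CopyGadgetDown{i}{j}$ forces
\[
(\TWriteXOne{v^{i+1}_j}{t_1}, \TReadXOne{v^{i+1}_j}{t_3}) \in \rf \iff (\TWriteXTwo{v^i_j}{t_4}, \TReadXTwo{v^i_j}{t_6}) \in \rf .
\]
Since the focal read admits exactly two candidate writers ($t_1$ or $t_2$) and the mirror read exactly two ($t_4$ or $t_5$), each equivalence is just a compact restatement of the two implications drawn in \cref{fig:copy_gadget_rlx} and \cref{fig:copy_gadget_down_rlx} (the two implications cover the two exhaustive, mutually exclusive cases, so together they give the biconditional).

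To prove these coupling facts I would fix the writer of the focal read and trace the $\rf$- and $\mo$-edges forced by the axioms, following the edge numbering in the figures. The auxiliary values on $\VarYOne,\VarYTwo,\VarYThree,\VarYFour$ (resp. $\VarZOne,\dots,\VarZFour$) are chosen so that most intermediate reads have a unique same-value writer, which pins their $\rf$-edges immediately. At the genuine branch points---where a read on an auxiliary location could take either an auxiliary write or a same-value write issued by $t_3$---I would invoke \PORF{}: taking the $\po$-later write of $t_3$ closes a $(\po\cup\rf)$-cycle through the focal read, which is forbidden, so the auxiliary write is forced. The relaxed coherence axioms \RlxWCoh{} and \RlxRCoh{} then orient the $\mo$-edges needed to keep the chain consistent. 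Propagating this forced chain from the focal read through the gadget terminates at the mirror read and determines its writer, yielding the stated equivalence.

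With the coupling facts in hand, the remainder is easy. The gadgets $\CopyGadget{i}{j}$ and $\CopyGadgetDown{i}{j}$ deliberately share the mirror events $\TWriteXTwo{v^i_j}{t_4}$, $\TWriteXTwo{v^i_j}{t_5}$ and $\TReadXTwo{v^i_j}{t_6}$, so the single fact ``$\TReadXTwo{v^i_j}{t_6}$ reads from $t_4$'' links the two equivalences. Composing them gives, for every $i \in [m-1]$,
\[
(\TWriteXOne{v^i_j}{t_1}, \TReadXOne{v^i_j}{t_3}) \in \rf \iff (\TWriteXOne{v^{i+1}_j}{t_1}, \TReadXOne{v^{i+1}_j}{t_3}) \in \rf,
\]
i.e., the predicate ``$\TReadXOne{v^i_j}{t_3}$ reads from $t_1$'' has the same value in consecutive phases. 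Since agreement of truth values is transitive, a one-line induction on $|i_1-i_2|$ shows this predicate is constant over all $i \in [m]$, which is exactly the lemma.

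The hard part will be the gadget-coupling step. The difficulty is to verify that, once the focal writer is fixed, \emph{every} intermediate branch in the gadget is forced in exactly one way: at each of the roughly dozen intermediate reads one must check that all but one candidate writer would either create a $(\po\cup\rf)$-cycle or violate \RlxWCoh{}/\RlxRCoh{}, so that the inference chain propagates deterministically from the focal read all the way to the mirror read. Establishing this determinism rigorously---rather than reading it off the figures---is where the real work lies, and it is the content I would factor out into a dedicated soundness claim for the copy gadgets.
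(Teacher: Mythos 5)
Your proposal matches the paper's proof essentially step for step: the paper also establishes the two gadget-coupling biconditionals (focal read from $t_1$ iff mirror read from $t_4$, via $\CopyGadget{i}{j}$ and $\CopyGadgetDown{i}{j}$ respectively) by tracing the chain of $\rf$/$\mo$ edges forced by \PORF{}, \RlxWCoh{} and \RlxRCoh{}, and then composes them through the shared mirror events and inducts over consecutive phases. The "hard part" you identify—checking that every intermediate read is forced deterministically—is precisely the content of the paper's enumerated edge-inference steps, so no gap remains.
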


\begin{proof}
We argue by induction that for every $i\in[m-1]$, we have $(\TWriteXOne{v^{i}_j}{t_1}, \TReadXOne{v^{i}_j}{t_3}) \in \rf$ iff $(\TWriteXOne{v^{i+1}_j}{t_1}, \TReadXOne{v^{i+1}_j}{t_3}) \in \rf$. 
First, note that if $(\TWriteXOne{v^{i}_j}{t_1}, \TReadXOne{v^{i}_j}{t_3}) \in \rf$, then the copy gadget $\CopyGadget{i}{j}$ forces $(\TWriteXTwo{v^{i}_j}{t_4}, \TReadXTwo{v^{i}_j}{t_6}) \in \rf$.
Indeed, we have the following inferred sequence of $\rf$ edges (see \circledsmall{1}-\circledsmall{5} in 
\cref{subfig:copy_gadget_rlxFalse}, where \circledsmall{1} represents $(\TWriteXOne{v^{i}_j}{t_1}, \TReadXOne{v^{i}_j}{t_3}) \in \rf$).
\begin{compactenum}
\item We have $(\TReadYOne{v^i_j}{t_1}, \TWriteYOne{v^i_j}{t_3})\in (\po\cup \rf)^+$ and thus $(\TWriteYOne{v^i_j}{t_3}, \TReadYOne{v^i_j}{t_1})\not\in\rf$ due to \PORF{}.
Thus $\TReadYOne{v^i_j}{t_1}$ is forced to read from the only other available write of the same value, i.e., $(\TWriteYOne{v^i_j}{f_1}, \TReadYOne{v^i_j}{t_1})\in\rf$, depicted as \circled{2} in \cref{subfig:copy_gadget_rlxFalse}.
\item We now have $(\TWriteYOne{v^i_j}{f_1}, \TReadYOne{u^i_j}{t_1})\in(\po_{\VarYOne}\cup\rf_{\VarYOne})^+$, 
and since $(\TWriteYOne{u^i_j}{f_2}, \TReadYOne{u^i_j}{t_1})\in \rf$, we have $(\TWriteYOne{v^i_j}{f_1}, \TWriteYOne{u^i_j}{f_2})\in \mo$ due to \RlxRCoh{}.
Observe that now $(\TWriteYOne{\ov{v}^i_j}{f_1}, \TWriteYOne{\ov{u}^i_j}{f_2})\in(\po_{\VarYOne}\cup \mo_{\VarYOne})^+$, 
thus due to \RlxWCoh{}, we have 
$(\TWriteYOne{\ov{v}^i_j}{f_1}, \TWriteYOne{\ov{u}^i_j}{f_2})\in \mo_{\VarYOne}$.
Since $(\TWriteYOne{\ov{u}^i_j}{f_2}, \TReadYOne{\ov{v}^i_j}{f_3})\in (\po_{\VarYOne}\cup \rf_{\VarYOne})^+$, 
we have $(\TWriteYOne{\ov{v}^i_j}{f_1}, \TReadYOne{\ov{v}^i_j}{f_3})\not \in \rf$ due to \RlxRCoh{}.
Thus $\TReadYOne{\ov{v}^i_j}{f_3}$ is forced to read from the only other available write, i.e., $(\TWriteYOne{\ov{v}^i_j}{t_5}, \TReadYOne{\ov{v}^i_j}{f_3})\in \rf$, depicted by \circledsmall{3}. 
\item We now have $(\TReadYFour{v^i_j}{t_5}, \TWriteYFour{v^i_j}{f_3})\in (\po\cup \rf)^+$ and thus $(\TWriteYFour{v^i_j}{f_3}, \TReadYFour{v^i_j}{t_5})\not\in\rf$, due to \PORF{}.
Thus $\TReadYFour{v^i_j}{t_5}$ is forced to read from the only other available write, i.e., $(\TWriteYFour{v^i_j}{t_6}, \TReadYFour{v^i_j}{t_5})\in\rf$, 
depicted by \circledsmall{4}.
\item We now have $(\TReadXTwo{v^i_j}{t_6}, \TWriteXTwo{v^i_j}{t_5}) \in (\po\cup\rf)^+$ and thus $(\TWriteXTwo{v^i_j}{t_5}, \TReadXTwo{v^i_j}{t_6})\not \in \rf$ due to \PORF{}.
Thus $\TReadXTwo{v^i_j}{t_6}$ is forced to read from the only other available write, i.e., $(\TWriteXTwo{v^i_j}{t_4}, \TReadXTwo{v^i_j}{t_6}) \in \rf$,
depicted by \circledsmall{5}. 
\end{compactenum}
On the other hand, if $(\TWriteXOne{v^{i}_j}{t_2}, \TReadXOne{v^{i}_j}{t_3}) \in \rf$, then the copy gadget $\CopyGadget{i}{j}$ forces $(\TWriteXTwo{v^{i}_j}{t_5}, \TReadXTwo{v^{i}_j}{t_6}) \in \rf$, by a similar analysis (see \cref{subfig:copy_gadget_rlxTrue}, depicted by \circledsmall{1}-\circledsmall{5}).

Finally, a similar analysis on the copy-down gadget $\CopyGadgetDown{i}{j}$ (see \cref{fig:copy_gadget_down_rlx}, and the forced $\rf$ edges \circledsmall{1}-\circledsmall{5}) concludes that $(\TWriteXOne{v^{i+1}_j}{t_1}, \TReadXOne{v^{i+1}_j}{t_3}) \in \rf$ iff $(\TWriteXTwo{v^{i}_j}{t_4}, \TReadXTwo{v^{i}_j}{t_6}) \in \rf$,
and hence we have $(\TWriteXOne{v^{i}_j}{t_1}, \TReadXOne{v^{i}_j}{t_3}) \in \rf$ iff $(\TWriteXOne{v^{i+1}_j}{t_1}, \TReadXOne{v^{i+1}_j}{t_3}) \in \rf$, as desired.
\end{proof}

The next lemma is based on the at-most-one-true gadgets $\AtMostOneGadget{c}{i}{j}{k}$, and it is used to show that for every clause $\Clause_i$ and for each of the three pairs of literals $(s_j, s_k)$ in $\Clause_i$, at most one of them is assigned to true.
Again, it follows by a direct analysis of the accompanying figure, \cref{fig:at_most_one_gadget_rlx}.

\begin{restatable}{lemma}{lemrlxsoundnessatmostone}\label{lem:rlx_soundness_at_most_one}
Let $\ex=(\E, \po,\rf,\mo)$ be a concrete extension of $\expartial$ with $\ex\models \arlxmm$.
For every $i\in[m]$ and $j,k\in [n]$
such that $s_j$ and $s_k$ appear in clause $\Clause_i$, we have $(\TWriteXOne{v^i_j}{t_2}, \TReadXOne{v^i_j}{t_3})\not \in \rf$ or $(\TWriteXOne{v^i_k}{t_2}, \TReadXOne{v^i_k}{t_3})\not \in \rf$.
\end{restatable}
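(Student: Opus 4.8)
The plan is to argue by contradiction. Suppose, towards a contradiction, that both focal reads of $t_3$ in $\AtMostOneGadget{c}{i}{j}{k}$ read from $t_2$, i.e., that both $(\TWriteXOne{v^i_j}{t_2}, \TReadXOne{v^i_j}{t_3})\in\rf$ and $(\TWriteXOne{v^i_k}{t_2}, \TReadXOne{v^i_k}{t_3})\in\rf$ hold, and then exhibit a violation of the $\arlxmm$ axioms located entirely on the auxiliary memory location $\VarA{c}$. Inside this gadget $\VarA{c}$ carries exactly two writes of value $v^i_j$, namely $\TWriteA{v^i_j}{c}{t_3}$ and $\TWriteA{v^i_j}{c}{h_c}$, exactly two writes of value $v^i_k$, namely $\TWriteA{v^i_k}{c}{t_3}$ and $\TWriteA{v^i_k}{c}{h_c}$, together with the two reads $\TReadA{v^i_j}{c}{t_2}$ and $\TReadA{v^i_k}{c}{t_2}$. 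Since a value $v^i_j$ fixes both the phase $i$ and the literal, and $\VarA{c}$ is accessed only by the $c$-th at-most-one gadgets, these are the only $\VarA{c}$-writes of these two values, so each of the two $t_2$-reads has precisely two candidate writers (one in $t_3$, one in $h_c$).

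First I would use \PORF{} to fix the writer of each $\VarA{c}$-read of $t_2$. For the $j$-read, program order gives $\TReadA{v^i_j}{c}{t_2}\LTo{\po}\TWriteXOne{v^i_j}{t_2}$, the assumed reads-from edge gives $\TWriteXOne{v^i_j}{t_2}\LTo{\rf}\TReadXOne{v^i_j}{t_3}$, and program order again gives $\TReadXOne{v^i_j}{t_3}\LTo{\po}\TWriteA{v^i_j}{c}{t_3}$; hence $(\TReadA{v^i_j}{c}{t_2}, \TWriteA{v^i_j}{c}{t_3})\in(\po\cup\rf)^+$. Consequently $\TReadA{v^i_j}{c}{t_2}$ cannot read from $\TWriteA{v^i_j}{c}{t_3}$ without closing a $(\po\cup\rf)$-cycle, so by \PORF{} it must read from the remaining candidate $\TWriteA{v^i_j}{c}{h_c}$. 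The symmetric argument on the $k$-side forces $\TReadA{v^i_k}{c}{t_2}$ to read from $\TWriteA{v^i_k}{c}{h_c}$.

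Next I would close the argument with the two relaxed-coherence axioms on $\VarA{c}$. Thread $h_c$ executes $\TWriteA{v^i_k}{c}{h_c}\LTo{\po}\TWriteA{v^i_j}{c}{h_c}$, so \RlxWCoh{} forces $\TWriteA{v^i_k}{c}{h_c}\LTo{\mo}\TWriteA{v^i_j}{c}{h_c}$. But the two $t_2$-reads now observe these writes in the ``crossed'' order: the $\po$-earlier read $\TReadA{v^i_j}{c}{t_2}$ takes the $\mo$-later write, while the $\po$-later read $\TReadA{v^i_k}{c}{t_2}$ takes the $\mo$-earlier write. This yields the reflexive chain
\[
\TReadA{v^i_k}{c}{t_2}\LTo{\rf^{-1}}\TWriteA{v^i_k}{c}{h_c}\LTo{\mo_{\VarA{c}}}\TWriteA{v^i_j}{c}{h_c}\LTo{\rf}\TReadA{v^i_j}{c}{t_2}\LTo{\po}\TReadA{v^i_k}{c}{t_2},
\]
i.e., $(\TReadA{v^i_k}{c}{t_2}, \TReadA{v^i_k}{c}{t_2})\in\rf^{-1};\mo_{\VarA{c}};\rf^?;\po$, contradicting \RlxRCoh{}. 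Hence the two assumptions cannot hold together, which proves the lemma.

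I expect the bookkeeping of the first step, rather than the coherence argument, to be the main obstacle: one must verify that on $\VarA{c}$ the only writes of value $v^i_j$ (resp.\ $v^i_k$) are the two named events, so that eliminating the $t_3$-writer via \PORF{} leaves the $h_c$-writer as the unique alternative. Once both $t_2$-reads are pinned onto the $h_c$-writes, the mismatch between the $\po$-order of the reads and the $\po$-induced $\mo$-order of the $h_c$-writes makes the \RlxRCoh{} violation immediate, and the symmetric roles of $j$ and $k$ mean no separate case analysis is needed.
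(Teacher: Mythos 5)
Your proof is correct and rests on exactly the same mechanism as the paper's: \PORF{} pins both $\VarA{c}$-reads of $t_2$ onto the $h_c$-writes, \RlxWCoh{} orders those two writes by $\mo$ according to $h_c$'s program order, and \RlxRCoh{} rules out the crossed reading pattern. The only difference is presentational --- you package it as a single symmetric contradiction from assuming both focal reads take $t_2$, whereas the paper runs two forward-chaining cases that force the other focal read onto $t_1$; the edges and axioms invoked are identical.
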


\begin{proof}
The statement follows by analyzing the at-most-one-true gadget $\AtMostOneGadget{c}{i}{j}{k}$, where $c\in[3]$ is such that $(s_j, s_k)$ is the $c$-th pair of variables in $\Clause_i$ (see \cref{fig:at_most_one_gadget_rlx}).

First, if $(\TWriteXOne{v^i_j}{t_2}, \TReadXOne{v^i_j}{t_3}) \in \rf$ holds (marked \circled{1} in \cref{subfig:at_most_one_gadget_rlx_true_false})
then $(\TWriteXOne{v^i_k}{t_2}, \TReadXOne{v^i_k}{t_3})\not \in \rf$.
Indeed, we have the following sequence of $\rf$ edges (see \circled{1}-\circled{4}, \cref{subfig:at_most_one_gadget_rlx_true_false}).
\begin{compactenum}
\item We have $(\TReadA{v^i_j}{c}{t_2}, \TWriteA{v^i_j}{c}{t_3})\in (\po\cup\rf)^+$ and thus $(\TWriteA{v^i_j}{c}{t_3}, \TReadA{v^i_j}{c}{t_2})\not \in \rf$ due to \PORF{}.
Thus $\TReadA{v^i_j}{c}{t_2}$ is forced to read from the only other available write, i.e., $(\TWriteA{v^i_j}{c}{h_c}, \TReadA{v^i_j}{c}{t_2}) \in \rf$, depicted by \circled{2}. 

\item We now have $(\TWriteA{v^i_j}{c}{h_c}, \TReadA{v^i_k}{c}{t_2})\in (\po_{\VarA{c}}\cup \rf_{\VarA{c}})^+$, 
and due to \RlxWCoh{}, we also have $(\TWriteA{v^i_k}{c}{h_c},\TWriteA{v^i_j}{c}{h_c})\in \mo_{\VarA{c}}$.
In turn, this implies $(\TWriteA{v^i_k}{c}{h_c}, \TReadA{v^i_k}{c}{t_2})\not \in \rf$ due to \RlxRCoh{}.
Thus $\TReadA{v^i_k}{c}{t_2}$ is forced to read from the only other available write, i.e., $(\TWriteA{v^i_k}{c}{t_3}, \TReadA{v^i_k}{c}{t_2}) \in \rf$, 
depicted by \circled{3}. 

\item We now have $(\TReadXOne{v^i_k}{t_3}, \TWriteXOne{v^i_k}{t_2})\in (\po\cup\rf)^+$ and thus $(\TWriteXOne{v^i_k}{t_2}, \TReadXOne{v^i_k}{t_3})\not \in \rf$ due to \PORF{}.
Thus $\TReadXOne{v^i_k}{t_3}$ is forced to read from the only other available write, i.e., $(\TWriteXOne{v^i_k}{t_1}, \TReadXOne{v^i_k}{t_3}) \in \rf$, 
depicted by \circled{4}.
\end{compactenum}

Second, if $(\TWriteXOne{v^i_k}{t_2}, \TReadXOne{v^i_k}{t_3}) \in \rf$ marked 
\circled{1} in \cref{subfig:at_most_one_gadget_rlx_false_true}), 
then $(\TWriteXOne{v^i_j}{t_2}, \TReadXOne{v^i_j}{t_3})\not \in \rf$.
Indeed, we have the following forced sequence of $\rf$ edges (see \cref{subfig:at_most_one_gadget_rlx_false_true}).
\begin{compactenum}
\item We have $(\TReadA{v^i_k}{c}{t_2}, \TWriteA{v^i_k}{c}{t_3})\in (\po\cup\rf)^+$ and thus $(\TWriteA{v^i_k}{c}{t_3}, \TReadA{v^i_k}{c}{t_2})\not \in \rf$ due to \PORF{}.
Thus $\TReadA{v^i_k}{c}{t_2}$ is forced to read from the only other available write, i.e., $(\TWriteA{v^i_k}{c}{h_c}, \TReadA{v^i_k}{c}{t_2}) \in \rf$, marked \circled{2}.
\item Due to \RlxWCoh{}, we have $(\TWriteA{v^i_k}{c}{h_c},\TWriteA{v^i_j}{c}{h_c})\in \mo_{\VarA{c}}$
We thus have $(\TWriteA{v^i_j}{c}{h_c}, \TReadA{v^i_j}{c}{t_2})\not \in \rf$, as this would imply that  $(\TWriteA{v^i_j}{c}{h_c}, \TReadA{v^i_k}{c}{t_2})\in (\po_{\VarA{c}}\cup\rf_{\VarA{c}})$, which would violate \RlxRCoh{}.
Thus $\TReadA{v^i_j}{c}{t_2}$ is forced to read from the only other available write, i.e., $(\TWriteA{v^i_j}{c}{t_3}, \TReadA{v^i_j}{c}{t_2})\in \rf$, 
marked \circled{3}.
\item We now have $(\TReadXOne{v^i_j}{t_3}, \TWriteXOne{v^i_j}{t_2})\in (\po\cup\rf)^+$ and thus $(\TWriteXOne{v^i_j}{t_2}, \TReadXOne{v^i_j}{t_3})\not \in \rf$ due to \PORF{}. 
Thus $\TReadXOne{v^i_j}{t_3}$ is forced to read from the only other available write, i.e., $(\TWriteXOne{v^i_j}{t_1}, \TReadXOne{v^i_j}{t_3}) \in \rf$, 
depicted by \circled{4}.
\qedhere
\end{compactenum}
\end{proof}

The third lemma is based on the at-least-one-true gadget $\AtLeastOneGadget{i}{j}{k}{\ell}$, and it is used to show that for every clause $\Clause_i=(s_j, s_k, s_{\ell})$, at least one of its literals is assigned to true.
Again, it follows by a direct analysis on the accompanying figure, \cref{fig:at_least_one_gadget_rlx}.

\begin{restatable}{lemma}{lemrlxsoundnessatleastone}\label{lem:rlx_soundness_at_least_one}
Let $\ex=(\E, \po,\rf,\mo)$ be a concrete extension of $\expartial$ with $\ex\models \arlxmm$.
For every $i\in[m]$ and $j,k,\ell\in [n]$ such that $s_j$, $s_k$ and $s_{\ell}$ appear in clause $\Clause_i$, we have $(\TWriteXOne{v^i_j}{t_2}, \TReadXOne{v^i_j}{t_3}) \in \rf$ or $(\TWriteXOne{v^i_k}{t_2}, \TReadXOne{v^i_k}{t_3}) \in \rf$ or $(\TWriteXOne{v^i_{\ell}}{t_2}, \TReadXOne{v^i_{\ell}}{t_3}) \in \rf$.
\end{restatable}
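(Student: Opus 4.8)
The plan is to establish the contrapositive: assuming all three focal reads $\TReadXOne{v^i_j}{t_3}$, $\TReadXOne{v^i_k}{t_3}$, and $\TReadXOne{v^i_{\ell}}{t_3}$ read from $t_1$ (so none reads from $t_2$), I derive a violation of the $\arlxmm$ axioms. All the relevant activity takes place on location $\VarB$, whose three reads $\TReadB{v^i_j}{t_1}$, $\TReadB{v^i_k}{t_1}$, $\TReadB{v^i_{\ell}}{t_1}$ sit in $t_1$, and whose writes are distributed across $t_3$, $p$, and $q$.

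First, I would reuse the \PORF{} argument from the earlier lemmas to block the $t_3$-writes on $\VarB$. If $\TReadXOne{v^i_j}{t_3}$ reads from $\TWriteXOne{v^i_j}{t_1}$, then $\TReadB{v^i_j}{t_1} \LTo{\po} \TWriteXOne{v^i_j}{t_1} \LTo{\rf} \TReadXOne{v^i_j}{t_3} \LTo{\po} \TWriteB{v^i_j}{t_3}$, so having $\TReadB{v^i_j}{t_1}$ read from $\TWriteB{v^i_j}{t_3}$ would close a $(\po\cup\rf)$-cycle, contradicting \PORF{}. Since the only writes of value $v^i_j$ on $\VarB$ are $\TWriteB{v^i_j}{t_3}$ and $\TWriteB{v^i_j}{p}$, this forces $\TReadB{v^i_j}{t_1}$ to read from $p$. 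By identical reasoning, $\TReadB{v^i_k}{t_1}$ is forced to read from $\TWriteB{v^i_k}{q}$, and $\TReadB{v^i_{\ell}}{t_1}$ is forced to read from one of the two remaining writes $\TWriteB{v^i_{\ell}}{p}$ or $\TWriteB{v^i_{\ell}}{q}$.

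The final step is a coherence clash on $\VarB$, split according to the writer of $\TReadB{v^i_{\ell}}{t_1}$. If it reads from $\TWriteB{v^i_{\ell}}{p}$: since $\TWriteB{v^i_{\ell}}{p} \LTo{\po} \TWriteB{v^i_j}{p}$ inside $p$, \RlxWCoh{} (together with $\mo$ being total on $\VarB$) yields $\TWriteB{v^i_{\ell}}{p} \LTo{\mo} \TWriteB{v^i_j}{p}$; combined with $\TWriteB{v^i_j}{p} \LTo{\rf} \TReadB{v^i_j}{t_1} \LTo{\po} \TReadB{v^i_{\ell}}{t_1}$, this is a self-loop on $\TReadB{v^i_{\ell}}{t_1}$ in $\rf^{-1};\mo_{\VarB};\rf;\po$, an instance of the pattern forbidden by \RlxRCoh{}. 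The symmetric case, where $\TReadB{v^i_{\ell}}{t_1}$ reads from $\TWriteB{v^i_{\ell}}{q}$, uses $\TWriteB{v^i_{\ell}}{q} \LTo{\po} \TWriteB{v^i_k}{q}$ and $\TReadB{v^i_k}{t_1} \LTo{\po} \TReadB{v^i_{\ell}}{t_1}$ to produce the same kind of \RlxRCoh{} violation. Either branch reaches a contradiction, so at least one focal read reads from $t_2$.

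Conceptually this proof is light once the gadget layout is fixed; the work is in the bookkeeping. The point to verify carefully is that the enumerated writes of each value on $\VarB$ are exactly those claimed, so that the \PORF{} step genuinely funnels $\TReadB{v^i_j}{t_1}$ into $p$ and $\TReadB{v^i_k}{t_1}$ into $q$, while $\TReadB{v^i_{\ell}}{t_1}$ is the only read with a two-way choice. The other thing to keep straight is the orientation of the $\mo$-edge and the direction of the composition $\rf^{-1};\mo_{\VarB};\rf;\po$, which I expect to be the only place an error could creep in; the genuine case analysis reduces to the single binary branch on the writer of $\TReadB{v^i_{\ell}}{t_1}$, both cases following the identical coherence template.
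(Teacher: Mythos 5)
Your proof is correct and uses essentially the same argument as the paper: \PORF{} funnels $\TReadB{v^i_j}{t_1}$ into $p$ and $\TReadB{v^i_k}{t_1}$ into $q$, and then the $\po$-order of the writes in $p$ (resp.\ $q$) combined with \RlxWCoh{} and \RlxRCoh{} rules out the remaining sources for $\TReadB{v^i_{\ell}}{t_1}$. The only difference is organizational: you argue by contrapositive and close with a direct \RlxRCoh{} contradiction, whereas the paper forward-chains the forced $\rf$-edges to conclude that $\TReadXOne{v^i_{\ell}}{t_3}$ must read from $t_2$ (and repeats the argument for the two symmetric cases); the ingredients and the identification of the available writes on $\VarB$ are identical.
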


\begin{proof}

First, if $(\TWriteXOne{v^i_j}{t_2}, \TReadXOne{v^i_j}{t_3})\not \in \rf$ and $(\TWriteXOne{v^i_k}{t_2}, \TReadXOne{v^i_k}{t_3})\not \in \rf$ (hence both read from $t_1$, marked \circled{1} in \cref{subfig:at_least_one_gadget_rlx_false_false_true}) then $(\TWriteXOne{v^i_{\ell}}{t_2}, \TReadXOne{v^i_{\ell}}{t_3}) \in \rf$.
Indeed, we have the following forced sequence of $\rf$ edges (see \cref{subfig:at_least_one_gadget_rlx_false_false_true}, \circled{1}-\circled{4}).
\begin{compactenum}
\item We have $(\TReadB{v^i_j}{t_1}, \TWriteB{v^i_j}{t_3})\in (\po\cup\rf)^+$ and thus $(\TWriteB{v^i_j}{t_3}, \TReadB{v^i_j}{t_1})\not \in \rf$ due to \PORF{}.
Thus $\TReadB{v^i_j}{t_1}$ is forced to read from the only other available write, i.e., $(\TWriteB{v^i_j}{p}, \TReadB{v^i_j}{t_1}) \in \rf$.
Similarly, we have $(\TReadB{v^i_k}{t_1}, \TWriteB{v^i_k}{t_3})\in (\po\cup\rf)^+$ and thus $(\TWriteB{v^i_k}{t_3}, \TReadB{v^i_k}{t_1})\not \in \rf$ due to \PORF{}.
Thus $\TReadB{v^i_k}{t_1}$ is forced to read from the only other available write, i.e., $(\TWriteB{v^i_k}{q}, \TReadB{v^i_k}{t_1}) \in \rf$. 
These two edges are depicted \circled{2}.
\item We now have $(\TWriteB{v^i_j}{p}, \TReadB{v^i_{\ell}}{t_1})\in (\po_{\VarB} \cup \rf_{\VarB})^+$, 
and due to \RlxWCoh{}, we also have 
$(\TWriteB{v^i_{\ell}}{p},\TWriteB{v^i_j}{p})\in \mo_{\VarB}$.
In turn, this implies $(\TWriteB{v^i_{\ell}}{p}, \TReadB{v^i_{\ell}}{t_1})\not \in \rf$ due to \RlxRCoh{}.
Similarly, we now have $(\TWriteB{v^i_j}{q}, \TReadB{v^i_{\ell}}{t_1})\in (\po_{\VarB}\cup \rf_{\VarB})^+$, 
and due to \RlxWCoh{}, we also have 
$(\TWriteB{v^i_{\ell}}{q},\TWriteB{v^i_j}{q})\in \mo_{\VarB}$.
In turn, this implies $(\TWriteB{v^i_{\ell}}{q}, \TReadB{v^i_{\ell}}{t_1})\not \in \rf$ due to \RlxRCoh{}.
Thus $\TReadB{v^i_{\ell}}{t_1}$ is forced to read from the only other available write, i.e., $(\TWriteB{v^i_{\ell}}{t_3}, \TReadB{v^i_{\ell}}{t_1}) \in \rf$, depicted \circled{3}. 
\item We now have $(\TReadXOne{v^i_{\ell}}{t_3}, \TWriteXOne{v^i_{\ell}}{t_1})\in (\po\cup\rf)^+$ and thus $(\TWriteXOne{v^i_{\ell}}{t_1}, \TReadXOne{v^i_{\ell}}{t_3})\not \in \rf$ due to \PORF{}. 
Thus $\TReadXOne{v^i_{\ell}}{t_3}$ is forced to read from the only other available write, i.e., $(\TWriteXOne{v^i_{\ell}}{t_2}, \TReadXOne{v^i_{\ell}}{t_3}) \in \rf$, 
depicted by \circled{4}.
\end{compactenum}
A similar analysis establishes that if $(\TWriteXOne{v^i_j}{t_2}, \TReadXOne{v^i_j}{t_3})\not \in \rf$ and $(\TWriteXOne{v^i_{\ell}}{t_2}, \TReadXOne{v^i_{\ell}}{t_3})\not \in \rf$ then $(\TWriteXOne{v^i_k}{t_2}, \TReadXOne{v^i_k}{t_3}) \in \rf$ (see \cref{subfig:at_least_one_gadget_rlx_false_true_false}), as well as that if $(\TWriteXOne{v^i_k}{t_2}, \TReadXOne{v^i_k}{t_3})\not \in \rf$ and $(\TWriteXOne{v^i_{\ell}}{t_2}, \TReadXOne{v^i_{\ell}}{t_3})\not \in \rf$ then $(\TWriteXOne{v^i_j}{t_2}, \TReadXOne{v^i_j}{t_3}) \in \rf$ (see \cref{subfig:at_least_one_gadget_rlx_true_false_false}).
\end{proof}

\cref{lem:rlx_soundness_copy} states that our truth assignment for $\varphi$ is valid, while \cref{lem:rlx_soundness_at_most_one} and \cref{lem:rlx_soundness_at_least_one} guarantee that in every clause, exactly one literal is set to true.
Hence we have the following corollary.

\begin{restatable}{corollary}{corrlxlowerthreadssoundness}\label{cor:rlx_lower_threads_soundness}
If  $\expartial\models \arlxmm$ then $\varphi$ is satisfiable.
\end{restatable}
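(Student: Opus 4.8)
The plan is to assemble the three structural lemmas into a single $1$-in-$3$ satisfying assignment. Since $\expartial\models\arlxmm$, I would first fix a concrete extension $\ex=(\E,\po,\rf,\mo)$ of $\expartial$ with $\ex\models\arlxmm$, which exists by definition of consistency. The idea is to read off a truth assignment $\sigma$ for $\varphi$ directly from $\rf$: declare $\sigma(s_j)=\btrue$ exactly when the focal read observes the write of $t_2$, i.e.\ when $(\TWriteXOne{v^i_j}{t_2},\TReadXOne{v^i_j}{t_3})\in\rf$, and $\sigma(s_j)=\bfalse$ when it observes the write of $t_1$, i.e.\ when $(\TWriteXOne{v^i_j}{t_1},\TReadXOne{v^i_j}{t_3})\in\rf$.

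The first step is to check that $\sigma$ is well defined. Each focal read $\TReadXOne{v^i_j}{t_3}$ reads value $v^i_j$ on location $x_1$, and by construction the only two writes of that value on $x_1$ are $\TWriteXOne{v^i_j}{t_1}$ and $\TWriteXOne{v^i_j}{t_2}$; since $\rf^{-1}$ is a function, precisely one of them is the writer, so the assignment is a genuine binary choice. That this choice is independent of the phase $i$ is exactly \cref{lem:rlx_soundness_copy}, which guarantees $(\TWriteXOne{v^{i_1}_j}{t_1},\TReadXOne{v^{i_1}_j}{t_3})\in\rf$ iff $(\TWriteXOne{v^{i_2}_j}{t_1},\TReadXOne{v^{i_2}_j}{t_3})\in\rf$ for all $i_1,i_2\in[m]$. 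Hence $\sigma(s_j)$ is unambiguous.

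The second step is to verify that $\sigma$ meets the $1$-in-$3$ condition on every clause $\Clause_i=(s_j,s_k,s_{\ell})$. By \cref{lem:rlx_soundness_at_least_one}, at least one of the three focal reads observes $t_2$, so at least one of $s_j,s_k,s_{\ell}$ is set to $\btrue$. For the matching upper bound, I would apply \cref{lem:rlx_soundness_at_most_one} to each of the three pairs $(s_j,s_k)$, $(s_j,s_{\ell})$, $(s_k,s_{\ell})$ appearing in $\Clause_i$ (each being the $c$-th pair for some $c\in[3]$), obtaining that no two of the three focal reads can simultaneously observe $t_2$; thus at most one of $s_j,s_k,s_{\ell}$ is $\btrue$. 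Combining the two bounds, exactly one literal per clause is true, so $\sigma$ is a $1$-in-$3$ satisfying assignment and $\varphi$ is satisfiable.

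This corollary is essentially a bookkeeping step, since all the combinatorial content already resides in the three preceding lemmas. The only point needing a moment's care is the well-definedness of $\sigma$: one must confirm both that every focal read has precisely the two candidate writers on $x_1$, and that the phase-independence supplied by \cref{lem:rlx_soundness_copy} makes the per-variable choice coherent across all $m$ phases. Once that is settled, the exactly-one-per-clause conclusion is immediate from the at-least-one and at-most-one lemmas.
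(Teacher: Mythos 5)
Your proposal is correct and matches the paper's own argument: the assignment is read off from which of $t_1$ or $t_2$ the focal reads observe, \cref{lem:rlx_soundness_copy} gives phase-independence (well-definedness), and \cref{lem:rlx_soundness_at_least_one} together with \cref{lem:rlx_soundness_at_most_one} applied to all three pairs yields the exactly-one-per-clause property. The extra care you take about $\rf^{-1}$ being a function and the two candidate writers is a fine (if implicit in the paper) sanity check, but the route is the same.
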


\subsection{Completeness}
\label{SEC:LOWER_THREADS_LOCATIONS_RLX_COMPLETENESS}

We now turn our attention to the completeness property, i.e., if $\varphi$ is satisfiable then $\expartial$ is consistent in the $\arlxmm$ model.
To this end, we construct a reads-from relation $\rf$ and a partial modification order $\mopartial$ as follows.
\begin{compactenum}
\item For each gadget, we insert $\rf$-edges according to 
\cref{fig:copy_gadget_rlx,fig:copy_gadget_down_rlx,fig:at_most_one_gadget_rlx,fig:at_least_one_gadget_rlx}
and the truth assignments on literals $s_j$, $s_k$, $s_{\ell}$ involved in that gadget.
In particular, this implies that, for each $i\in[m]$ and $j\in [n]$, we have $(\TWriteXOne{v^i_j}{t_1}, \TReadXOne{v^i_j}{t_3})\in \rf$ if $s_j=\bfalse$ and $(\TWriteXOne{v^i_j}{t_2}, \TReadXOne{v^i_j}{t_3})\in \rf$ if $s_j=\btrue$.
Observe that $\rf$ is fully specified, i.e., every read has been assigned a write.
\item For every two conflicting writes $\wt_1, \wt_2$ with
(i)~$\tid(\wt_1)\neq \tid(\wt_2)$, and
(ii)~there exist two reads $\rd_1$, $\rd_2$ with $(\rd_1, \rd_2)\in \po$ such that $(\wt_i, \rd_i)\in \rf$ for each $i\in[2]$,
we have $(\wt_1, \wt_2)\in\mopartial$.
\end{compactenum}
We call a triplet $(\wt, \rd, \wt')$ on location $x$ \emph{safe} in $\arlxmm$ if either $(\wt', \rd)\not \in (\po_x \cup \rf_x)^+$ or $(\wt', \wt)\in (\po_x\cup \rf_x \cup \mopartial_x)^+$.
To prove the consistency of $\expartial$ it suffices to prove the following:
\begin{compactenum}
\item $(\po\cup\rf)$ is acyclic, and
\item $\mopartial$ is \emph{minimally coherent} for $\arlxmm$, namely, that
(i)~$(\po_x\cup\rf_x\cup \mopartial_x)$ is acyclic for every location $x$, and
(ii)~every triplet is safe.
\end{compactenum} 
Indeed, minimal-coherence guarantees that, for each location $x$, there exists a total extension $\mo_x$ of $\mopartial_x$ that satisfies \RlxWCoh{} and \RlxRCoh{}~\cite{Tunc2023}.

To simplify our analysis, we ignore the writes in threads $\{h_{c}\}_{c\in[3]}$, $p$ and $q$ that are not read in $\rf$ (crossed-out in \cref{fig:at_most_one_gadget_rlx} and \cref{fig:at_least_one_gadget_rlx}).
This allows us to make our formal statements more uniform, while this omission does not affect the correctness of the analysis.
Indeed, let
$
\InactiveWrites= \setpred{\wt\in \W}{\tid(\wt)\in\set{ h_1, h_2, h_3, p, q} \text{ and } \not\exists \rd\in\R.~(\wt, \rd)\in \rf}
$.
The following lemma is straightforward.

\begin{restatable}{lemma}{lemrlxcompletenessignoreevents}\label{lem:rlx_completeness_inactive_writes}
The following statements hold.
\begin{compactenum}
\item If there is a $(\po \cup \rf)$-cycle then there is such a cycle without any write in $\InactiveWrites$.
\item If there is a $(\po_x \cup \rf_x \cup \mopartial_x)$-cycle for some memory location $x$ then there is such a cycle  without any write in $\InactiveWrites$.
\item If there is an unsafe triplet, then there is such a triplet $(\wt, \rd, \wt')$ where $\wt'\not \in \InactiveWrites$.
\end{compactenum}
\end{restatable}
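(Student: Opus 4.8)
The plan is to isolate a single structural observation about $\InactiveWrites$ and then push it through all three items. By definition, a write $\wt\in\InactiveWrites$ lies in one of the threads $h_1,h_2,h_3,p,q$, and, inspecting the at-most-one-true and at-least-one-true gadgets, each of these threads consists \emph{solely} of writes (to $\VarA{c}$, resp.\ $\VarB$). Moreover $\wt$ has no outgoing $\rf$-edge (nothing reads from it, by definition of $\InactiveWrites$), no incoming $\rf$-edge (it is a write and $\rf\subseteq\W\times\R$), and no incident $\mopartial$-edge whatsoever: the construction of $\mopartial$ relates two writes only when \emph{both} are read from, whereas $\wt$ is not. Hence in each of $\po\cup\rf$ and $\po_x\cup\rf_x\cup\mopartial_x$, every edge incident to $\wt$ is a $\po$- (resp.\ $\po_x$-) edge, and thus links $\wt$ to another event of its own thread.

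For item~(1), I would take any $(\po\cup\rf)$-cycle containing some $\wt\in\InactiveWrites$ and \emph{short-circuit} it. Its predecessor $\event$ and successor $\event'$ on the cycle are both joined to $\wt$ by $\po$, so they lie in $\wt$'s thread and satisfy $\event\LTo{\po}\wt\LTo{\po}\event'$; by transitivity of $\po$ within a thread, $(\event,\event')\in\po$, so replacing $\event\LTo{\po}\wt\LTo{\po}\event'$ by the single edge $\event\LTo{\po}\event'$ yields a shorter closed $(\po\cup\rf)$-walk with one fewer occurrence of an inactive write. Iterating (an induction on the number of inactive writes on the cycle) removes all of them. The process cannot collapse the cycle to nothing: a cycle using no $\rf$-edge is a pure $\po$-cycle, impossible since $\po$ is a strict order, so every cycle contains an $\rf$-edge and hence a read, which is never in $\InactiveWrites$. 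Item~(2) is identical but carried out inside $\po_x\cup\rf_x\cup\mopartial_x$: an inactive $\wt$ on such a cycle is necessarily a write on location $x$, its only incident edges there are $\po_x$-edges, and the same transitivity-based short-circuit applies.

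For item~(3), suppose $(\wt,\rd,\wt')$ is an unsafe triplet on location $x$ with $\wt'\in\InactiveWrites$; I would replace $\wt'$ by a carefully chosen \emph{active} write. Unsafety gives $(\wt',\rd)\in(\po_x\cup\rf_x)^+$ together with $(\wt',\wt)\notin(\po_x\cup\rf_x\cup\mopartial_x)^+$. Since $\rd$ is a read it cannot lie in $\wt'$'s write-only thread $\tau$, so the witnessing path must leave $\tau$; as $\po_x$ never crosses threads, the first escaping edge is an $\rf_x$-edge out of some write $\wt''\in\tau$, and this $\wt''$ is therefore read from, i.e.\ $\wt''\notin\InactiveWrites$. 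Every edge before it stays inside $\tau$, and, $\tau$ being write-only on $x$, is a $\po_x$-edge, whence $(\wt',\wt'')\in\po_x$, while the suffix of the path gives $(\wt'',\rd)\in(\po_x\cup\rf_x)^+$. Finally $(\wt'',\wt)\notin(\po_x\cup\rf_x\cup\mopartial_x)^+$, since otherwise prepending $(\wt',\wt'')\in\po_x$ would yield $(\wt',\wt)\in(\po_x\cup\rf_x\cup\mopartial_x)^+$, contradicting unsafety of the original triplet; this also forces $\wt''\neq\wt$. Hence $(\wt,\rd,\wt'')$ is an unsafe triplet with $\wt''\notin\InactiveWrites$.

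Items~(1) and~(2) are routine once the structural observation is in hand; the delicate step is item~(3), where the replacement $\wt''$ must simultaneously be (a)~active, (b)~$\po_x$-after $\wt'$ so that non-reachability of $\wt$ is inherited, and (c)~still able to reach $\rd$. The crux is precisely that the threads carrying inactive writes are write-only on the relevant location, which forces any path escaping such a thread to do so through an $\rf$-edge emanating from an \emph{active} write.
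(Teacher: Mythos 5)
Your proof is correct and follows essentially the same route as the paper's: the same short-circuiting of $\po$-edges around inactive writes for items~(1) and~(2), and the same replacement of $\wt'$ by the first write $\wt''$ whose $\rf_x$-edge lets the path escape the write-only thread for item~(3). You are in fact slightly more careful than the paper in spelling out that inactive writes have no incident $\mopartial$-edges at all and that the short-circuited cycle cannot degenerate, but the underlying argument is identical.
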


\begin{proof}
We prove each item separately.
\begin{compactenum}
\item 
We first observe that for any $\wt \in \InactiveWrites$, there is no outgoing $\rf$ or even $\mopartial$ edge.
Hence any cycle containing a write $\wt\in \InactiveWrites$ must contain a sequence of edges $\event_1\LTo{\po}{\wt}\LTo{\po}\event_2$, which can be replaced by the single edge $\event_1\LTo{\po}\event_2$.
\item Proof same as above.
\item Consider a memory location $x$ and an unsafe triplet $(\wt, \rd, \wt')$ on $x$.
This means that there is a $(\po_x \cup \rf_x)$-path $P\colon \wt' \LPath{\po_x \cup \rf_x} \rd$.
Since the threads $\{h_{c}\}_{c\in [3]}$, $p$ and $q$ contain only same-location writes, 
$P$ must be of the form $P\colon \wt' \LPath{\po_x} \wt''\LTo{\rf_x}\rd''\LPath{\po_x \cup \rf_x} \rd$ ,where, $\wt''$ and $\rd''$ conflict with $\wt'$. 
Note that $\rd'' \neq \rd$, otherwise $\wt''=\wt$ since $(\wt, \rd) \in \rf$, implying that $(\wt, \rd, \wt')$ is safe.
Since $(\wt, \rd, \wt')$ is unsafe, we have $(\wt', \wt)\not \in (\po_x \cup \rf_x \cup\mopartial_x)^+$ and hence, $(\wt'', \wt)\not \in (\po_x \cup \rf_x \cup\mopartial_x)^+$. 
Thus $(\wt, \rd, \wt'')$ is also unsafe, while clearly $\wt''\not\in \InactiveWrites$.
\qedhere
\end{compactenum}
\end{proof}

The completeness can now be established in three lemmas, one for each of the properties (1), (2i), and (2ii) above
(i.e., $(\po\cup\rf)$-acyclicity and minimal coherence).
Before we proceed, we will use the following notation to make our analysis easier.

\Paragraph{Notation.}
Given an event $\event$, we say that \emph{$\event$ appears in phase $i$}, written $\Phase(\event)=i$, if it writes/reads a value superscripted by $i$ (i.e., of the form $v^i$ or $\ov{v}^i$).
Similarly, we say \emph{that $\event$ appears in step $j$}, written $\Step(\event)=j$, if it writes/reads a value subscripted by $j$ (i.e., of the form $v_j$ or $\ov{v}_j$).
We define a quasi order $\OrderedBefore$ on the event set $\E$ with $\event_1\OrderedBefore\event_2$ if either
(i)~$\Phase(\event_1)<\Phase(\event_2)$ or
(ii)~$\Phase(\event_1)=\Phase(\event_2)$ and $\Step(\event_1)\leq\Step(\event_2)$.
We also write $\event_1\StrictOrderedBefore \event_2$ to denote that $\event_1\OrderedBefore\event_2$ and $\event_2\not\OrderedBefore\event_1$.
Now consider an arbitrary path $P=\event_1,\dots, \event_k$.
We say that $P$ \emph{crosses} a thread $t$ if $\tid(\event_{\ell})$ for some
for some $\ell\in[k]$.
We call $P$ \emph{monotonic} if it linearizes $\OrderedBefore$, i.e., $\event_{\ell}\OrderedBefore\event_{\ell+1}$ for all $\ell\in[k-1]$.

We first establish the safety of each triplet.

\begin{restatable}{lemma}{lemrlxcompletenesssafetriplets}\label{lem:rlx_completeness_safe_triplets}
Every triplet $(\wt, \rd, \wt')$ is safe.
\end{restatable}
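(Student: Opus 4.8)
The plan is to establish safety directly: assuming the first disjunct fails, i.e. $(\wt',\rd)\in(\po_x\cup\rf_x)^+$, I will derive the second, $(\wt',\wt)\in(\po_x\cup\rf_x\cup\mopartial_x)^+$. By \cref{lem:rlx_completeness_inactive_writes}(3) I may assume $\wt'\notin\InactiveWrites$. The argument rests on one structural property of the construction, which I would record first: for every thread $t$ and location $x$, the $x$-accessing events of $t$ are \emph{either all reads or all writes}. This is read off directly from the gadget definitions (\cref{fig:copy_gadget_rlx,fig:copy_gadget_down_rlx,fig:at_most_one_gadget_rlx,fig:at_least_one_gadget_rlx}); in particular, no thread reads a location it also writes.

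Fix a path $P\colon \wt'\LPath{\po_x\cup\rf_x}\rd$. Since $\po_x$ never leaves a thread, and by the purity property $\wt'$ (a write) and $\rd$ (a read) cannot share a thread on location $x$, the path $P$ must use at least one $\rf_x$-edge. Let $\wt''\LTo{\rf_x}\rd''$ be the \emph{last} such edge on $P$. Because $\rf_x$-edges originate only at writes, every edge of $P$ after $\rd''$ is a $\po_x$-edge, so the suffix $\rd''\LPath{\po_x}\rd$ stays inside a single thread; hence $\rd''$ and $\rd$ lie in the same thread with $(\rd'',\rd)\in\po_x^{?}$. The prefix yields $(\wt',\wt'')\in(\po_x\cup\rf_x)^{*}$, so it remains only to link $\wt''$ to $\wt$.

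I would split on $\rd''$. If $\rd''=\rd$, then since $\rf^{-1}$ is a function and $(\wt,\rd),(\wt'',\rd'')\in\rf$, we get $\wt''=\wt$; as a triplet has $\wt\neq\wt'$, the prefix is nonempty and $(\wt',\wt)\in(\po_x\cup\rf_x)^+$. Otherwise $(\rd'',\rd)\in\po_x$, and I distinguish two sub-cases. If $\tid(\wt'')\neq\tid(\wt)$, then $\wt''$ and $\wt$ are conflicting writes in distinct threads whose readers satisfy $(\rd'',\rd)\in\po$ and $(\wt'',\rd''),(\wt,\rd)\in\rf$, so the defining condition of $\mopartial$ gives $(\wt'',\wt)\in\mopartial_x$. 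If instead $\tid(\wt'')=\tid(\wt)$, I invoke the \emph{monotone layout} of the construction: within any thread, reads and the writes they read from appear in the same $\OrderedBefore$-order, and $\OrderedBefore$ refines $\po$ on a single thread's same-location events; since $\rd''\StrictOrderedBefore\rd$, this forces $\wt''\StrictOrderedBefore\wt$ and hence $(\wt'',\wt)\in\po_x$. In all three cases $(\wt',\wt)\in(\po_x\cup\rf_x\cup\mopartial_x)^+$, so the triplet is safe.

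The main obstacle is the monotone-layout step in the same-thread sub-case: an inverted configuration, where an earlier read $\rd''$ reads from a \emph{later} same-thread write, would produce a genuinely unsafe triplet (and would indeed violate \RlxRCoh{}), so it must be ruled out rather than merely wished away. I would formalize this using the $\OrderedBefore$ quasi-order built from the phase superscripts and step subscripts of values: the core threads $t_1,t_2,t_3$ and their mirrors $t_4,t_5,t_6$ lay out their events in phase-then-step order, each value $v^i_j$ is unique to its $(\Phase,\Step)$ pair, and $\rf$ matches equal values, so the order of readers is inherited by their writers. Discharging both this monotonicity and the read/write-purity property uniformly across all fourteen locations and the arbitrarily interleaved gadgets is the tedious but essential bulk of the proof.
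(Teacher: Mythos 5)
Your proposal is correct and follows essentially the same route as the paper's proof: the same read/write-purity observation forces the path into the shape $\wt'\LPath{\po_x}\wt''\LTo{\rf_x}\rd''\LPath{\po_x}\rd$, and your case split on $\wt''$ versus $\wt$ (identical, different threads handled by the $\mopartial$ construction, same thread handled by ruling out the inverted configuration) matches the paper's argument, with your ``monotone layout'' step playing exactly the role of the paper's by-construction observation that $(\rd_1,\rd_2)\in\po$ with $(\wt_i,\rd_i)\in\rf$ forbids $(\wt_2,\wt_1)\in\po$. The only nit is that $(\rd'',\rd)\in\po_x$ gives $\rd''\OrderedBefore\rd$ but not necessarily $\rd''\StrictOrderedBefore\rd$ (same-phase-same-step reads exist), so the same-thread subcase still needs the per-gadget inspection you already flag as the tedious bulk.
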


\begin{proof}
First, observe that the following hold by construction.
\begin{compactenum}
\item\label{item:rlx_completeness_safe_triplets_no_read_written} There is no memory location that is both written and read by the same thread.
\item\label{item:rlx_completeness_safe_triplets_no_crossing} For every two conflicting writes $\wt_1$, $\wt_2$, if there exist two reads $\rd_1$, $\rd_ 2$ such that $(\wt_i,\rd_i)\in \rf$ for each $i\in[2]$ and $(\rd_1, \rd_2) \in \po$, then $(\wt_2, \wt_1)\not \in \po$.
\end{compactenum}
Now, let $x$ be the location accessed by a triplet $(\wt, \rd, \wt')$, and assume there exists a $(\po_x\cup \rf_x)$-path $P\colon \wt'\LPath{\po_x\cup \rf_x}\rd$.
Due to \cref{item:rlx_completeness_safe_triplets_no_read_written}, 
$P$ must leave a thread and enter another at most once.
Thus, $P$ must be of the form $P\colon \wt' \LPath{\po^?_x} \wt'' \LTo{\rf_x} \rd''\LPath{\po^?_x} \rd$.
If $(\wt'', \wt)\in \po^?_x$, we are done.
Otherwise, due to \cref{item:rlx_completeness_safe_triplets_no_crossing}, we can't have that $(\wt, \wt'')\in \po_x$.
Hence, $\wt$ and $\wt''$ are in different threads, and by the construction of $\mopartial_x$, we  have $(\wt'', \wt)\in \mopartial_x$, implying that $(\wt', \wt)\in (\po_x\cup\mopartial_x)^+$.
Hence the triplet is safe, as desired.
\end{proof}

Next, we establish the second ingredient of completeness, i.e., the acyclicity of $(\po\cup\rf)$.

\begin{restatable}{lemma}{lemrlxcompletenessnocycleporf}\label{lem:rlx_completeness_no_cycle_po_rf}
$(\po\cup\rf)$ is acyclic.
\end{restatable}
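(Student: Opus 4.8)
The plan is to rule out cycles using the quasi-order $\OrderedBefore$ on (phase, step), for which the decisive, interleaving-independent fact is that $\rf$ is \emph{flat}: every $\rf$-edge connects a write and a read carrying the \emph{same} value, and in our construction each value (such as $v^i_j$, $\ov{v}^i_j$, $u^i_j$, $\ov{u}^i_j$) determines a unique phase $i$ and step $j$. Hence if $(\wt,\rd)\in\rf$ then $\Phase(\wt)=\Phase(\rd)$ and $\Step(\wt)=\Step(\rd)$, so every $\rf$-edge keeps the $\OrderedBefore$-rank constant, regardless of how the gadgets are interleaved. By \cref{lem:rlx_completeness_inactive_writes}(1) it suffices to exclude $(\po\cup\rf)$-cycles among active events, so from now on every event is assumed active.

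Since $\rf$ is flat, any change of $\OrderedBefore$-rank around a hypothetical cycle is carried by $\po$-edges, and these changes must cancel. The next step is to classify the $\po$-edges by inspecting each gadget. The serial composition orders whole copy and copy-down gadgets by $(\Phase,\Step)$ and the one-true gadgets by phase, so inter-gadget $\po$ is $\OrderedBefore$-non-decreasing, and inside each gadget almost all $\po$-edges are non-decreasing as well. The only rank-\emph{decreasing} $\po$-edges are the helper writes of the copy-down gadget that emit a phase-$(i{+}1)$ value before a phase-$i$ value: the $\po$-edge from the write of $u^{i+1}_j$ to the write of $\ov{u}^i_j$ in thread $g_2$, the symmetric edge in $g_5$, and the analogous edges in $\CopyGadget{i}{j}$. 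I would record the exact $\rf$-images of both endpoints of each such edge; by flatness these images lie in the two fixed classes $(i{+}1,j)$ and $(i,j)$.

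The crux is to show that no rank-decreasing edge lies on a $(\po\cup\rf)$-cycle, which then forces every cycle to use only rank-non-decreasing edges and hence to be $\OrderedBefore$-constant, i.e.\ confined to a single phase-step class. I would establish this by the extremal method: take a putative cycle $C$ and an $\OrderedBefore$-maximal event $e^{\ast}$ on it; flatness of $\rf$ implies that $C$ can descend from and re-ascend to the rank of $e^{\ast}$ only through the helper edges identified above, and tracing the (flat) $\rf$-images of those helper writes shows that the descending branch enters classes from which the rank-\emph{increasing} helper $\po$-edges (e.g.\ $\ov{v}^i_j \to v^{i+1}_j$ in $g_1$) cannot climb back to the rank of $e^{\ast}$ without re-using $e^{\ast}$ — a contradiction. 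Once $C$ is confined to a single class $(i,j)$, I would finish by inspecting the finitely many active events of that class — the relevant fragments of $\CopyGadget{i}{j}$, $\CopyGadgetDown{i}{j}$, $\CopyGadgetDown{i-1}{j}$, and the one-true gadgets that carry $v^i_j$ and $\ov{v}^i_j$ — and verify directly, for the $\rf$-pattern selected by the truth assignment (\cref{fig:copy_gadget_rlx,fig:copy_gadget_down_rlx,fig:at_most_one_gadget_rlx,fig:at_least_one_gadget_rlx}), that the induced subgraph is a DAG.

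I expect the main obstacle to be precisely this confinement step. The rank-decreasing helper $\po$-edges, combined with the freedom in interleaving, defeat any naive global monotonicity argument ($\po$ itself need not respect $\OrderedBefore$). The heart of the proof is therefore the bookkeeping that tracks, class by class, which writes feed which reads, using the flatness of $\rf$ to bound the reachable set and thereby show that the backward helper edges can never be closed into a cycle.
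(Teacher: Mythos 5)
Your skeleton matches the paper's: $\rf$-edges preserve the $(\Phase,\Step)$ rank, inactive writes are discarded via \cref{lem:rlx_completeness_inactive_writes}, cycles are forced into a single phase-step class, and a finite per-gadget inspection finishes. One small inaccuracy first: there are no rank-decreasing $\po$-edges ``in $\CopyGadget{i}{j}$'' --- every event of a single copy gadget carries a value with the same superscript $i$ and subscript $j$. After removing $\InactiveWrites$ (which disposes of the step-decreasing edges in $h_1,h_2,h_3,p,q$), the only rank-decreasing $\po$-edges are $\wt(g_2,\VarZOne,u^{i+1}_j)\LTo{\po}\wt(g_2,\VarZOne,\ov{u}^i_j)$ and the symmetric pair in $g_5$, both inside $\CopyGadgetDown{i}{j}$.

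The substantive problem is your ``extremal method'' for the confinement step, which rests on a false intermediate claim. You want a contradiction from the assertion that, after descending through a helper edge, the cycle ``cannot climb back to the rank of $e^*$ without re-using $e^*$.'' But ranks are easy to climb: the serial composition makes $\po$ strictly increase the rank across consecutive steps and phases in almost every thread ($t_3$'s reads, $g_3$'s events, $t_1$'s writes, \dots), so from any class one reaches arbitrarily higher classes through many threads, without touching $e^*$ and without using the ``rank-increasing helper edges'' of $g_1$/$g_4$ that you single out. The inability to return to the \emph{rank} of $e^*$ simply does not hold, so the contradiction you aim for never materializes. The obstruction is structural, not rank-theoretic: $g_2$ and $g_5$ contain only write events, and writes are never targets of $\rf$-edges, so the only edges entering an event of $g_2$ (resp.\ $g_5$) are $\po$-edges from earlier events of the same thread; a cycle visiting $g_2$ would therefore lie entirely inside $g_2$, contradicting the acyclicity of per-thread $\po$. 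This one observation --- which is exactly how the paper dispatches the step --- kills every rank-decreasing edge at once and makes the remaining argument (monotonicity, hence confinement to one class, hence finite check) go through. Without it, your sketch does not close.
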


\begin{proof}
First, note that there is no $(\po\cup\rf)$-cycle crossing any of $g_2$ and $g_5$,
as these threads only contain writes, and hence there is no way to enter them by an $\rf$-edge.
Moreover, by construction, any $(\po\cup\rf)$-path $P$ that starts from a thread other than $g_2$ and $g_5$ is necessarily monotonic. 

Indeed, all $\rf$-edges connect events of the same phase and step, while the only $\po$-edges $\event_1\LTo{\po}\event_2$ with $\Phase(\event_1)<\Phase(\event_2)$ appear in $g_2$ and $g_5$.
Finally, the only $\po$-edges $\event_1\LTo{\po}\event_2$ with $\Phase(\event_1)=\Phase(\event_2)$ and $\Step(\event_1)> \Step(\event_2)$ occur in threads $h_1$, $h_2$, $h_3$, $p$ and $q$ (see \cref{fig:at_most_one_gadget_rlx} and \cref{fig:at_least_one_gadget_rlx}).
However, in light of \cref{lem:wra_ra_sra_completeness_inactive_writes}, these edges can be ignored in our analysis.

Thus, any potential $(\po\cup\rf)$-cycle has to traverse events of the same phase and step.
A straightforward analysis of each instantiation of each gadget 
(see \cref{subfig:copy_gadget_rlxFalse,subfig:copy_gadget_rlxTrue},
\cref{subfig:copy_gadget_down_rlxFalse,subfig:copy_gadget_down_rlxTrue},
\cref{subfig:at_most_one_gadget_rlx_true_false,subfig:at_most_one_gadget_rlx_false_true,subfig:at_most_one_gadget_rlx_false_false}, 
\cref{subfig:at_least_one_gadget_rlx_false_false_true,subfig:at_least_one_gadget_rlx_false_true_false,subfig:at_least_one_gadget_rlx_true_false_false}), and their combinations on common events, establishes that no $(\po\cup\rf)$-cycle exists, as desired.
\end{proof}

Finally, we establish the acyclicity of $(\po_x\cup \rf_x\cup\mopartial_x)$.

\begin{restatable}{lemma}{lemrlxcompletenessnocycleporfmo}\label{lem:rlx_completeness_no_cycle_po_rf_mo}
$(\po_x\cup \rf_x\cup\mopartial_x)$ is acyclic for all locations $x$.
\end{restatable}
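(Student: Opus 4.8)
The plan is to argue by contradiction, reusing the skeleton of \cref{lem:rlx_completeness_no_cycle_po_rf}. Suppose that for some location $x$ the relation $\po_x\cup\rf_x\cup\mopartial_x$ contains a cycle $C$. By \cref{lem:rlx_completeness_inactive_writes}(2) I may assume that $C$ uses no inactive write. Since \cref{lem:rlx_completeness_no_cycle_po_rf} already establishes that $\po\cup\rf$, and hence its restriction $\po_x\cup\rf_x$, is acyclic, the cycle $C$ must traverse at least one $\mopartial_x$-edge; the whole task is to show that adding $\mopartial_x$ cannot introduce a cycle on any single location.

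First I would classify every edge of $\po_x\cup\rf_x\cup\mopartial_x$ by its effect on the quasi-order $\OrderedBefore$. Each $\rf_x$-edge is $\OrderedBefore$-flat, since $\rf$ links only equal-valued events, which share phase and step. For a $\mopartial_x$-edge $(\wt_1,\wt_2)$, its defining witnesses are reads $\rd_1\LTo{\po}\rd_2$ with $(\wt_i,\rd_i)\in\rf$; because $\rf$ preserves phase and step, the $\OrderedBefore$-direction of $(\wt_1,\wt_2)$ coincides with that of $(\rd_1,\rd_2)$. Hence a $\mopartial_x$-edge is $\OrderedBefore$-decreasing exactly when it is induced by a $\po$-ordered pair of reads that is itself $\OrderedBefore$-decreasing. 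Feeding this into the catalogue of $\po$-edges compiled in \cref{lem:rlx_completeness_no_cycle_po_rf}, I would conclude that the only $\OrderedBefore$-decreasing edges are the phase-decreasing $\po$-edges confined to the write-only threads $g_2,g_5$ (together with the $\mopartial_x$-edges entering them), and step-decreasing edges that stay within a single phase and originate from the reader threads $t_1,t_2,t_3$ of the at-most-one and at-least-one gadgets.

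Next I would argue that these decreasing edges cannot be completed into a cycle spanning unboundedly many phases or steps. Since any cycle is $\OrderedBefore$-balanced, a cross-phase cycle would have to alternate the phase-decreasing $\po$-edges inside $g_2,g_5$ with phase-increasing edges; tracing the (flat) $\rf_x$-edges and the induced $\mopartial_x$-edges that can enter the write-only threads $g_2,g_5$ shows that the set of phases touched by $C$ collapses to a single value, and analogously every step-decreasing edge keeps $C$ within a single phase. Consequently $C$ is confined to the bounded event set associated with one instantiation of a gadget on location $x$, together with the at most one neighbouring instantiation that shares a common event through $t_1,\dots,t_6$. It then remains to rule out a cycle inside each such bounded fragment, which I would do by a finite case analysis: for each of the $\NumLocations=14$ locations and each gadget writing/reading it, I inspect the chosen $\rf$ (dictated by the truth assignment, as in \cref{subfig:copy_gadget_rlxFalse,subfig:copy_gadget_rlxTrue,subfig:at_most_one_gadget_rlx_true_false,subfig:at_most_one_gadget_rlx_false_true,subfig:at_least_one_gadget_rlx_false_false_true}) and the $\mopartial_x$-edges it induces, and verify directly that they form a DAG.

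I expect the confinement step for the copy-down gadget to be the main obstacle. Because $\mopartial_x$ may point into the write-only threads $g_2,g_5$, the phase-decreasing $\po$-edge inside them is a genuine candidate for closing a cross-phase cycle on $z_1,z_2$, and ruling it out requires carefully following which writes are actually read under the chosen assignment so as to show that no phase-increasing edge on the same location returns $C$ to its starting event. Everything else (the flatness of $\rf_x$, the monotonicity of the reader threads across instances, and the per-gadget DAG checks) is routine once this localisation is in place.
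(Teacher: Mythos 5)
Your overall strategy---classify every edge of the relation by its effect on the quasi-order $\OrderedBefore$, argue that a cycle must collapse to a single phase and step, and finish with a finite per-gadget check---is the same skeleton the paper uses, and your observation that a $\mopartial_x$-edge inherits its $\OrderedBefore$-direction from the $\po$-ordered pair of reads that induces it is correct and does the work you want it to. Two issues remain, one of economy and one of substance. On economy: you keep $\rf_x$-edges in play throughout. The paper first disposes of them entirely with a one-line observation: no location is both read and written by the same thread, so a cycle that follows an $\rf_x$-edge lands at a read in a thread whose only events on $x$ are reads, and reads have no outgoing $\rf_x$- or $\mopartial_x$-edges; the cycle can never leave that thread and hence cannot close. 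After this only $(\po_x\cup\mopartial_x)$-cycles need to be analyzed, which considerably shrinks your case analysis. (A small factual slip: the step-decreasing $\po$-edges within a phase live in the write-only helper threads $h_1,h_2,h_3,p,q$, not in the reader threads $t_1,t_2,t_3$; they are neutralized by \cref{lem:rlx_completeness_inactive_writes}, and the reads of $\VarA{c}$ and $\VarB$ in $t_2$ and $t_1$ are step-increasing, so the $\mopartial$-edges they induce are monotonic.)

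The substantive gap is exactly where you flag it: the locations $\VarZOne$ and $\VarZTwo$, whose write-only threads $g_2$ and $g_5$ contain the genuinely non-monotonic edge $\TWriteZOne{u^{i+1}_{j}}{g_2}\LTo{\po_{\VarZOne}}\TWriteZOne{\ov{u}^{i}_{j}}{g_2}$. You state that ruling out a cross-phase cycle here ``requires carefully following which writes are actually read,'' but you do not supply that argument, and this is the only part of the lemma that is not routine. The paper closes it by taking a \emph{shortest} $(\po_{\VarZOne}\cup\mopartial_{\VarZOne})$-cycle $C$: if $C$ is non-monotonic it must traverse the edge above, after which it can only visit events $\event$ with $\TWriteZOne{\ov{u}^{i}_{j}}{g_2}\OrderedBefore\event$ in $g_1$ and $t_5$, and one checks on \cref{fig:copy_gadget_down_rlx} that the first re-entry of $C$ into $g_2$ is at an event that is already a $\po_{\VarZOne}$-successor of $\TWriteZOne{u^{i+1}_{j}}{g_2}$---so the detour through the non-monotonic edge can be shortcut, contradicting minimality. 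Without this (or an equivalent) argument your proof is incomplete precisely on the one family of locations where acyclicity is actually in doubt.
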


\begin{proof}
Observe that no memory location is both read and written by the same thread in any of the gadgets. 
Hence, any $(\po_x\cup \rf_x\cup\mopartial_x)$-cycle $C$ following an $\rf$-edge $\wt \LTo{\rf} \rd$ enters a thread that it cannot exit.
This means that $C$ can only contain events of the same thread, which is forbidden by \cref{lem:rlx_completeness_no_cycle_po_rf}.
Thus we only need to reason about the absence of $(\po_x\cup\mopartial_x)$ cycles.

Observe that for every memory location $x$ except $\VarZOne$ and $\VarZTwo$, every $(\po_x\cup\mopartial_x)$-path is monotonic.
Hence any potential $(\po_x\cup\mopartial_x)$-cycle must traverse paths of the same phase and step, and thus be contained in one of the gadgets.
The absence of such cycles can be easily established by looking at the instantiations of these gadgets (\cref{fig:copy_gadget_rlx,fig:copy_gadget_down_rlx,fig:at_most_one_gadget_rlx,fig:at_least_one_gadget_rlx}).

On the other hand, consider the case  $x=\VarZOne$ (the analysis is similar for $x=\VarZTwo$).
Let $C$ be a shortest $(\po_{\VarZOne}\cup\mopartial_{\VarZOne})$-cycle.
If $C$ contains only events of the same phase and step, it can be dismissed again by looking at the gadget in \cref{fig:copy_gadget_down_rlx}.
Otherwise, $C$ is non-monotonic and must thus traverse the non-monotonic edge $\TWriteZOne{u^{i+1}_j}{g_2}\LTo{\po_{\VarZOne}}\TWriteZOne{\ov{u}^{i}_j}{g_2}$.
From there it can continue to to threads $g_1$ and thread $t_5$, following events $\event$ with $\TWriteZOne{\ov{u}^{i}_j}{g_2}\OrderedBefore \event$.
It can be easily verified on \cref{fig:copy_gadget_down_rlx} that the next time $C$ crosses $g_2$, it is on an event $\event'$ with $(\TWriteZOne{u^{i+1}_j}{g_2}, \event')\in \po_{\VarZOne}$, which contradicts the fact that $C$ is a shortest cycle.
The desired result follows.
\end{proof}

\cref{lem:rlx_completeness_safe_triplets}, \cref{lem:rlx_completeness_no_cycle_po_rf} and \cref{lem:rlx_completeness_no_cycle_po_rf_mo} imply the completeness of the reduction.

\begin{corollary}\label{cor:rlx_lower_threads_completeness}
If $\varphi$ is satisfiable then $\expartial\models \arlxmm$.
\end{corollary}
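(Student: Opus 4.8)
The plan is to assemble the corollary directly from the three completeness lemmas together with the minimal-coherence characterization of $\arlxmm$. Recall that an execution is consistent in $\arlxmm$ exactly when it satisfies \PORF{}, \RlxWCoh{}, and \RlxRCoh{}. So, starting from a fixed $1$-in-$3$ satisfying assignment for $\varphi$, I would first instantiate the reads-from relation $\rf$ and the partial modification order $\mopartial$ precisely as prescribed in the completeness construction: every gadget receives the $\rf$-edges dictated by the truth values of the literals it involves (so that $\TReadXOne{v^i_j}{t_3}$ reads from $t_1$ when $s_j=\bfalse$ and from $t_2$ when $s_j=\btrue$), and $\mopartial$ orders conflicting cross-thread writes whose $\rf$-targets are $\po$-ordered reads. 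A preliminary sanity check is that this $\rf$ is total and functional on reads --- each read is matched to the unique same-location, same-value write of the appropriate thread --- so that $\ex=\tup{\E,\po,\rf,\mo}$ is a well-formed extension of $\expartial$ once a suitable $\mo$ is supplied.

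Next, I would discharge the three axioms one at a time. \PORF{} is immediate from \cref{lem:rlx_completeness_no_cycle_po_rf}, which states that $(\po\cup\rf)$ is acyclic for the constructed $\rf$. For the two coherence axioms I would invoke minimal coherence: by \cref{lem:rlx_completeness_no_cycle_po_rf_mo} each $(\po_x\cup\rf_x\cup\mopartial_x)$ is acyclic, and by \cref{lem:rlx_completeness_safe_triplets} every triplet is safe, which are exactly the two conditions defining minimal coherence of $\mopartial$. The cited characterization from \cite{Tunc2023} then yields, for each location $x$, a total extension $\mo_x$ of $\mopartial_x$ such that the resulting $\ex$ satisfies \RlxWCoh{} and \RlxRCoh{}.

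Combining these, $\ex=\tup{\E,\po,\rf,\mo}$ is an extension of $\expartial$ that satisfies all three axioms of $\arlxmm$, so $\ex\models\arlxmm$ and hence $\expartial\models\arlxmm$, as required.

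The genuinely hard work lives in the lemmas rather than in the corollary: the obstacle is not the final assembly but establishing the acyclicity claims (\cref{lem:rlx_completeness_no_cycle_po_rf,lem:rlx_completeness_no_cycle_po_rf_mo}) via the monotonicity argument and the exhaustive per-gadget case analysis, together with the safety-of-triplets argument. For the corollary itself the only subtlety I would take care to verify is that the minimal-coherence hypotheses are applied per location and that $\rf$ is genuinely functional on reads, so that the extensions $\mo_x$ produced by the characterization combine consistently across locations into a single valid execution.
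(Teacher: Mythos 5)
Your proposal is correct and follows essentially the same route as the paper: the corollary is assembled directly from \cref{lem:rlx_completeness_safe_triplets}, \cref{lem:rlx_completeness_no_cycle_po_rf}, and \cref{lem:rlx_completeness_no_cycle_po_rf_mo}, using the minimal-coherence characterization of \cite{Tunc2023} to extend $\mopartial_x$ to a total $\mo_x$ per location and the $(\po\cup\rf)$-acyclicity lemma to discharge \PORF{}. Your added sanity checks (functionality of $\rf$ on reads, per-location application of minimal coherence) are reasonable but do not change the argument.
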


\section{Hardness for WRA, RA and SRA}\label{SEC:LOWER_THREADS_LOCATIONS_WRA_RA_SRA}

In this section we prove \cref{thm:lower_ra} for the range of models $\sramm \mmorder \MemModel\mmorder \wramm$,
which also implies hardness specifically for $\sramm$, $\ramm$, and $\wramm$.
Similarly to \cref{SEC:LOWER_THREADS_LOCATIONS_RLX}, our reduction uses unbounded values.
Again, we will make the final step towards \cref{thm:lower_ra} in \cref{SEC:BOUNDED_VALUES}, which consists of a simple modification of the technique presented here.

\begin{figure}
\newcommand{\xdisposition}{3.4*\xstep}
\newcommand{\ydisposition}{2*\ystep}
\newcommand{\xtstep}{0.75}
\newcommand{\ytstep}{0.7}
\newcommand{\ybias}{-0.3 }
\newcommand{\xstep}{1.435}
\newcommand{\ystep}{-0.6}
\newcommand{\xtscale}{0.8}
\def \numevents{5}
\def\scale{0.825}
\newcommand{\BaseCopyGadget}{
\begin{scope}[shift={(0*\xdisposition,0*\ydisposition)}]

\foreach \x [evaluate=\x as \i using ({int(\x)})] in {1,...,3}{
\node[] (T\i1) at (\i*\xstep,0.15) {\normalsize $t_{\x}$};
\node[] (T\i2) at (\i*\xstep,\numevents * \ystep) {};
\draw[seqtrace] (T\i1) to (T\i2);
}

\foreach \x [evaluate=\x as \i using ({int(\x+3)})] in {1,...,6}{
\node[] (T\i1) at (\i*\xstep,0.15) {\normalsize $f_{\x}$};
\node[] (T\i2) at (\i*\xstep,\numevents * \ystep) {};
\draw[seqtrace] (T\i1) to (T\i2);
}

\foreach \x [evaluate=\x as \i using ({int(\x+6)})] in {4,...,6}{
\node[] (T\i1) at (\i*\xstep,0.15) {\normalsize $t_{\x}$};
\node[] (T\i2) at (\i*\xstep,\numevents * \ystep) {};
\draw[seqtrace] (T\i1) to (T\i2);
}

\node[event] (T11) at (1*\xstep, 1*\ystep + 0*\ybias) {$\WriteYOne{v^{i}_{j}}$};
\node[event] (T13) at (1*\xstep, 3*\ystep + 0*\ybias) {$\WriteYOne{\ov{v}^{i}_{j}}$};
\node[event, draw=black] (T15) at (1*\xstep, 5*\ystep + 0*\ybias) {$\WriteXOne{v^{i}_{j}}$};

\node[event] (T21) at (2*\xstep, 1*\ystep + 0*\ybias) {$\WriteYTwo{v^{i}_{j}}$};
\node[event] (T23) at (2*\xstep, 3*\ystep + 0*\ybias) {$\WriteYTwo{\ov{v}^{i}_{j}}$};
\node[event, draw=black] (T25) at (2*\xstep, 5*\ystep + 0*\ybias) {$\WriteXOne{v^{i}_{j}}$};

\node[event, draw=black] (T31) at (3*\xstep, 1*\ystep + 0*\ybias) {$\ReadXOne{v^{i}_{j}}$};
\node[event] (T32) at (3*\xstep, 2*\ystep + 0*\ybias) {$\ReadYOne{v^{i}_{j}}$};
\node[event] (T33) at (3*\xstep, 3*\ystep + 0*\ybias) {$\ReadYTwo{v^{i}_{j}}$};
\node[event] (T34) at (3*\xstep, 4*\ystep + 0*\ybias) {$\ReadYThree{v^{i}_{j}}$};
\node[event] (T35) at (3*\xstep, 5*\ystep + 0*\ybias) {$\ReadYFour{v^{i}_{j}}$};

\node[event] (T41) at (4*\xstep, 1*\ystep + 0*\ybias) {$\ReadYFive{v^{i}_{j}}$};
\node[event] (T43) at (4*\xstep, 3*\ystep + 0*\ybias) {$\WriteYTwo{v^{i}_{j}}$};
\node[event] (T45) at (4*\xstep, 5*\ystep + 0*\ybias) {$\ReadYSix{v^{i}_{j}}$};

\node[event] (T51) at (5*\xstep, 1*\ystep + 0*\ybias) {$\WriteYThree{v^{i}_{j}}$};
\node[event] (T53) at (5*\xstep, 3*\ystep + 0*\ybias) {$\WriteYThree{\ov{v}^{i}_{j}}$};
\node[event] (T55) at (5*\xstep, 5*\ystep + 0*\ybias) {$\WriteYFive{v^{i}_{j}}$};

\node[event] (T63) at (6*\xstep, 3*\ystep + 0*\ybias) {$\WriteYFive{v^{i}_{j}}$};

\node[event] (T71) at (7*\xstep, 1*\ystep + 0*\ybias) {$\ReadYSeven{v^{i}_{j}}$};
\node[event] (T73) at (7*\xstep, 3*\ystep + 0*\ybias) {$\WriteYOne{v^{i}_{j}}$};
\node[event] (T75) at (7*\xstep, 5*\ystep + 0*\ybias) {$\ReadYEight{v^{i}_{j}}$};

\node[event] (T81) at (8*\xstep, 1*\ystep + 0*\ybias) {$\WriteYFour{v^{i}_{j}}$};
\node[event] (T83) at (8*\xstep, 3*\ystep + 0*\ybias) {$\WriteYFour{\ov{v}^{i}_{j}}$};
\node[event] (T85) at (8*\xstep, 5*\ystep + 0*\ybias) {$\WriteYSeven{v^{i}_{j}}$};

\node[event] (T93) at (9*\xstep, 3*\ystep + 0*\ybias) {$\WriteYSeven{v^{i}_{j}}$};

\node[event] (T101) at (10*\xstep, 1*\ystep + 0*\ybias) {$\WriteYSix{v^{i}_{j}}$};
\node[event] (T103) at (10*\xstep, 3*\ystep + 0*\ybias) {$\WriteYSix{\ov{v}^{i}_{j}}$};
\node[event, draw=black] (T105) at (10*\xstep, 5*\ystep + 0*\ybias) {$\WriteXTwo{v^{i}_{j}}$};

\node[event] (T111) at (11*\xstep, 1*\ystep + 0*\ybias) {$\WriteYEight{v^{i}_{j}}$};
\node[event] (T113) at (11*\xstep, 3*\ystep + 0*\ybias) {$\WriteYEight{\ov{v}^{i}_{j}}$};
\node[event, draw=black] (T115) at (11*\xstep, 5*\ystep + 0*\ybias) {$\WriteXTwo{v^{i}_{j}}$};

\node[event, draw=black] (T121) at (12*\xstep, 1*\ystep + 0*\ybias) {$\ReadXTwo{v^{i}_{j}}$};

\begin{pgfonlayer}{bg}
\draw[hb, out=-155, in=15, looseness=0.5] (T121) to (T63);
\draw[hb, out=-155, in=15, looseness=0.5] (T121) to node[below, pos=0.1] {$\hb$}(T93);
\draw[rf, out=-170, in=-5, looseness=0.5] (T51) to (T34);
\draw[rf] (T81) to node[above, pos=0.25] {$\rf$} (T35);
\draw[rf] (T101) to (T45);
\draw[rf] (T111) to (T75);
\end{pgfonlayer}

\end{scope}
}

\begin{subfigure}{\textwidth}
\centering
\scalebox{\scale}{%
\begin{tikzpicture}[thick, font=\footnotesize,
pre/.style={<-,shorten >= 2pt, shorten <=2pt, very thick},
post/.style={->,shorten >= 3pt, shorten <=3pt,   thick},
seqtrace/.style={line width=1},
und/.style={very thick, draw=gray},
virt/.style={circle,draw=black!50,fill=black!20, opacity=0}]

\BaseCopyGadget

\end{tikzpicture}
}
\caption{\label{subfig:copy_gadget_wra_ra_sra}
The copy gadget $\CopyGadget{i}{j}$.
}
\end{subfigure}
\\
\begin{subfigure}{\textwidth}
\centering
\scalebox{\scale}{
\begin{tikzpicture}[thick, font=\footnotesize,
pre/.style={<-,shorten >= 2pt, shorten <=2pt, very thick},
post/.style={->,shorten >= 3pt, shorten <=3pt,   thick},
seqtrace/.style={line width=1},
und/.style={very thick, draw=gray},
virt/.style={circle,draw=black!50,fill=black!20, opacity=0}]

\BaseCopyGadget

\begin{pgfonlayer}{bg}
\draw[rf, out=70, in=-165, looseness=0.75] (T15) to node [pos=0.15] (first) {\circled{1}} (T31);
\draw[rf, out=160, in=0, looseness=0.5] (T73) to node [pos=0.85] (sec) {\circled{2}} (T32);
\draw[rf] (T93) -- (T71) node [pos=0.65] (third) {\circled{3}};
\draw[rf, out=70, in=-165, looseness=0.75] (T105) to node [pos=0.15] (four) {\circled{4}} (T121);
\draw[rf] (T55) to (T41);
\draw[rf] (T21) to (T33);

\draw[mo, out=-30, in=210, looseness=0.1] (T13) to (T73);
\end{pgfonlayer}

\end{tikzpicture}
}
\caption{\label{subfig:copy_gadget_wra_ra_sraFalse}
Choosing $\TReadXOne{v^i_j}{t_3}$ to read from $t_1$ forces the sequence of $\rf$ and $\mo$ edges shown.
}
\end{subfigure}
\\
\begin{subfigure}{\textwidth}
\centering
\scalebox{\scale}{%
\begin{tikzpicture}[thick, font=\footnotesize,
pre/.style={<-,shorten >= 2pt, shorten <=2pt, very thick},
post/.style={->,shorten >= 3pt, shorten <=3pt,   thick},
seqtrace/.style={line width=1},
und/.style={very thick, draw=gray},
virt/.style={circle,draw=black!50,fill=black!20, opacity=0}]

\BaseCopyGadget

\begin{pgfonlayer}{bg}
\draw[rf, out=30, in=-155, looseness=0.5] (T25) to node [pos=0.15] (first) {\circled{1}} (T31);
\draw[rf, out=145, in=35, looseness=0.5] (T43) to  node [pos=0.1,above=.1mm] (seco) {\circled{2}} (T33);
\draw[rf] (T63) -- (T41) node [pos=0.3] (thir) {\circled{3}};
\draw[rf, out=30] (T115) -- (T121) node [pos=0.3] (fourt) {\circled{4}};
\draw[rf] (T85) to (T71);
\draw[rf, out=-30, in=170, looseness=0.5] (T11) to (T32);

\draw[mo, out=-20, in=200, looseness=0.3] (T23) to (T43);
\end{pgfonlayer}

\end{tikzpicture}
}
\caption{\label{subfig:copy_gadget_wra_ra_sraTrue}
Choosing $\TReadXOne{v^i_j}{t_3}$ to read from $t_2$ forces the sequence of $\rf$ and $\mo$ edges shown.
}
\end{subfigure}
\caption{\label{fig:copy_gadget_wra_ra_sra}
The copy gadget $\CopyGadget{i}{j}$ (\subref{subfig:copy_gadget_wra_ra_sra}) captures the Boolean assignment to variable $s_j$ in phase $i$.
There are two ways to realize this gadget, by choosing which of the two writes $\WriteXOne{v^i_j}$ the read $\ReadXOne{v^i_j}$ observes.
Choosing the write of $t_1$ (\subref{subfig:copy_gadget_wra_ra_sraFalse}) corresponds to setting $s_j=\bfalse$ and also forces $\ReadXTwo{v^i_j}$ to read from $t_4$. 
Choosing the write of $t_2$ (\subref{subfig:copy_gadget_wra_ra_sraTrue}) corresponds to setting $s_j=\btrue$ and also forces $\ReadXTwo{v^i_j}$ to read from $t_5$.
This $\rf$ coupling is formalized in \cref{lem:wra_ra_sra_soundness_copy}.
The edge numbers specify the order in which $\rf$-edges are inferred.
}
\end{figure}

\begin{figure}
\newcommand{\xdisposition}{3.4*\xstep}
\newcommand{\ydisposition}{2*\ystep}
\newcommand{\xtstep}{0.75}
\newcommand{\ytstep}{0.7}
\newcommand{\ybias}{-0.3 }
\newcommand{\xstep}{1.435}
\newcommand{\ystep}{-0.6}
\newcommand{\xtscale}{0.8}
\def \numevents{5}
\def\scale{0.825}
\newcommand{\BaseCopyGadget}{
\begin{scope}[shift={(0*\xdisposition,0*\ydisposition)}]

\foreach \x [evaluate=\x as \i using ({int(\x)})] in {1,...,3}{
\node[] (T\i1) at (\i*\xstep,0.15) {\normalsize $t_{\x}$};
\node[] (T\i2) at (\i*\xstep,\numevents * \ystep) {};
\draw[seqtrace] (T\i1) to (T\i2);
}

\foreach \x [evaluate=\x as \i using ({int(\x+3)})] in {1,...,6}{
\node[] (T\i1) at (\i*\xstep,0.15) {\normalsize $g_{\x}$};
\node[] (T\i2) at (\i*\xstep,\numevents * \ystep) {};
\draw[seqtrace] (T\i1) to (T\i2);
}

\foreach \x [evaluate=\x as \i using ({int(\x+6)})] in {4,...,6}{
\node[] (T\i1) at (\i*\xstep,0.15) {\normalsize $t_{\x}$};
\node[] (T\i2) at (\i*\xstep,\numevents * \ystep) {};
\draw[seqtrace] (T\i1) to (T\i2);
}

\node[event] (T11) at (1*\xstep, 1*\ystep + 0*\ybias) {$\WriteZOne{v^{i+1}_{j}}$};
\node[event] (T13) at (1*\xstep, 3*\ystep + 0*\ybias) {$\WriteZOne{\ov{v}^{i+1}_{j}}$};
\node[event, draw=black] (T15) at (1*\xstep, 5*\ystep + 0*\ybias) {$\WriteXOne{v^{i+1}_{j}}$};

\node[event] (T21) at (2*\xstep, 1*\ystep + 0*\ybias) {$\WriteZTwo{v^{i+1}_{j}}$};
\node[event] (T23) at (2*\xstep, 3*\ystep + 0*\ybias) {$\WriteZTwo{\ov{v}^{i+1}_{j}}$};
\node[event, draw=black] (T25) at (2*\xstep, 5*\ystep + 0*\ybias) {$\WriteXOne{v^{i+1}_{j}}$};

\node[event, draw=black] (T31) at (3*\xstep, 1*\ystep + 0*\ybias) {$\ReadXOne{v^{i+1}_{j}}$};
\node[event] (T32) at (3*\xstep, 2*\ystep + 0*\ybias) {$\ReadZOne{v^{i+1}_{j}}$};
\node[event] (T33) at (3*\xstep, 3*\ystep + 0*\ybias) {$\ReadZTwo{v^{i+1}_{j}}$};
\node[event] (T34) at (3*\xstep, 4*\ystep + 0*\ybias) {$\ReadZThree{v^{i+1}_{j}}$};
\node[event] (T35) at (3*\xstep, 5*\ystep + 0*\ybias) {$\ReadZFour{v^{i+1}_{j}}$};

\node[event] (T41) at (4*\xstep, 1*\ystep + 0*\ybias) {$\ReadZFive{v^{i+1}_{j}}$};
\node[event] (T43) at (4*\xstep, 3*\ystep + 0*\ybias) {$\WriteZTwo{v^{i+1}_{j}}$};
\node[event] (T45) at (4*\xstep, 5*\ystep + 0*\ybias) {$\ReadZSix{v^{i}_{j}}$};

\node[event] (T51) at (5*\xstep, 1*\ystep + 0*\ybias) {$\WriteZThree{v^{i+1}_{j}}$};
\node[event] (T53) at (5*\xstep, 3*\ystep + 0*\ybias) {$\WriteZThree{\ov{v}^{i+1}_{j}}$};
\node[event] (T55) at (5*\xstep, 5*\ystep + 0*\ybias) {$\WriteZFive{v^{i+1}_{j}}$};

\node[event] (T63) at (6*\xstep, 3*\ystep + 0*\ybias) {$\WriteZFive{v^{i+1}_{j}}$};

\node[event] (T71) at (7*\xstep, 1*\ystep + 0*\ybias) {$\ReadZSeven{v^{i+1}_{j}}$};
\node[event] (T73) at (7*\xstep, 3*\ystep + 0*\ybias) {$\WriteZOne{v^{i+1}_{j}}$};
\node[event] (T75) at (7*\xstep, 5*\ystep + 0*\ybias) {$\ReadZEight{v^{i}_{j}}$};

\node[event] (T81) at (8*\xstep, 1*\ystep + 0*\ybias) {$\WriteZFour{v^{i+1}_{j}}$};
\node[event] (T83) at (8*\xstep, 3*\ystep + 0*\ybias) {$\WriteZFour{\ov{v}^{i+1}_{j}}$};
\node[event] (T85) at (8*\xstep, 5*\ystep + 0*\ybias) {$\WriteZSeven{v^{i+1}_{j}}$};

\node[event] (T93) at (9*\xstep, 3*\ystep + 0*\ybias) {$\WriteZSeven{v^{i+1}_{j}}$};

\node[event] (T101) at (10*\xstep, 1*\ystep + 0*\ybias) {$\WriteZSix{v^{i}_{j}}$};
\node[event] (T103) at (10*\xstep, 3*\ystep + 0*\ybias) {$\WriteZSix{\ov{v}^{i}_{j}}$};
\node[event, draw=black] (T105) at (10*\xstep, 5*\ystep + 0*\ybias) {$\WriteXTwo{v^{i}_{j}}$};

\node[event] (T111) at (11*\xstep, 1*\ystep + 0*\ybias) {$\WriteZEight{v^{i}_{j}}$};
\node[event] (T113) at (11*\xstep, 3*\ystep + 0*\ybias) {$\WriteZEight{\ov{v}^{i}_{j}}$};
\node[event, draw=black] (T115) at (11*\xstep, 5*\ystep + 0*\ybias) {$\WriteXTwo{v^{i}_{j}}$};

\node[event, draw=black] (T121) at (12*\xstep, 1*\ystep + 0*\ybias) {$\ReadXTwo{v^{i}_{j}}$};

\begin{pgfonlayer}{bg}
\draw[hb, out=-155, in=15, looseness=0.5] (T121) to (T63);
\draw[hb, out=-155, in=15, looseness=0.5] (T121) to node[below, pos=0.1] {$\hb$}(T93);
\draw[rf, out=-170, in=-5, looseness=0.5] (T51) to (T34);
\draw[rf] (T81) to node[above, pos=0.25] {$\rf$} (T35);
\draw[rf] (T101) to (T45);
\draw[rf] (T111) to (T75);
\end{pgfonlayer}

\end{scope}
}
%
%
\begin{subfigure}{\textwidth}
\centering
\scalebox{\scale}{
\begin{tikzpicture}[thick, font=\footnotesize,
pre/.style={<-,shorten >= 2pt, shorten <=2pt, very thick},
post/.style={->,shorten >= 3pt, shorten <=3pt,   thick},
seqtrace/.style={line width=1},
und/.style={very thick, draw=gray},
virt/.style={circle,draw=black!50,fill=black!20, opacity=0}]

\BaseCopyGadget

%

\begin{pgfonlayer}{bg}
\draw[rf, out=70, in=-165, looseness=0.75] (T15) to node [pos=0.15] (first) {\circled{1}} (T31);
\draw[rf, out=160, in=0, looseness=0.5] (T73) to node [pos=0.85] (sec) {\circled{2}} (T32);
\draw[rf] (T93) -- (T71) node [pos=0.65] (third) {\circled{3}};
\draw[rf, out=70, in=-165, looseness=0.75] (T105) to node [pos=0.15] (four) {\circled{4}} (T121);
\draw[rf] (T55) to (T41);
\draw[rf] (T21) to (T33);

\draw[mo, out=-30, in=210, looseness=0.1] (T13) to (T73);
\end{pgfonlayer}

\end{tikzpicture}
}
\caption{\label{subfig:copy_gadget_down_wra_ra_sraFalse}
Choosing $\TReadXOne{v^{i+1}_j}{t_3}$ to observe from $t_1$ forces the sequence of $\rf$ and $\mo$ edges shown.
}
\end{subfigure}
\\
\begin{subfigure}{\textwidth}
\centering
\scalebox{\scale}{%
\begin{tikzpicture}[thick, font=\footnotesize,
pre/.style={<-,shorten >= 2pt, shorten <=2pt, very thick},
post/.style={->,shorten >= 3pt, shorten <=3pt,   thick},
seqtrace/.style={line width=1},
und/.style={very thick, draw=gray},
virt/.style={circle,draw=black!50,fill=black!20, opacity=0}]

\BaseCopyGadget

%

\begin{pgfonlayer}{bg}
\draw[rf, out=30, in=-155, looseness=0.5] (T25) to node [pos=0.15] (first) {\circled{1}} (T31);
\draw[rf, out=145, in=35, looseness=0.5] (T43) to  node [pos=0.1,above=.1mm] (seco) {\circled{2}} (T33);
\draw[rf] (T63) -- (T41) node [pos=0.3] (thir) {\circled{3}};
\draw[rf, out=30] (T115) -- (T121) node [pos=0.3] (fourt) {\circled{4}};
\draw[rf] (T85) to (T71);
\draw[rf, out=-30, in=170, looseness=0.5] (T11) to (T32);

\draw[mo, out=-20, in=200, looseness=0.3] (T23) to (T43);
\end{pgfonlayer}

\end{tikzpicture}
}
\caption{\label{subfig:copy_gadget_down_wra_ra_sraTrue}
Choosing $\TReadXOne{v^{i+1}_j}{t_3}$ to read from $t_2$ forces the sequence of $\rf$ and $\mo$ edges shown.
}
\end{subfigure}
\caption{\label{fig:copy_gadget_down_wra_ra_sra}
The copy-down gadget $\CopyGadgetDown{i}{j}$ is very similar to $\CopyGadget{i}{j}$.
Choosing $\ReadXOne{v^{i+1}_j}$ to read from $t_1$ (\subref{subfig:copy_gadget_down_wra_ra_sraFalse}) corresponds to setting $s_j=\bfalse$ and also forces $\ReadXTwo{v^i_j}$ to read from $t_4$. 
Choosing $\ReadXOne{v^{i+1}_j}$ to read from $t_2$ (\subref{subfig:copy_gadget_down_wra_ra_sraTrue}) corresponds to setting $s_j=\btrue$ and also forces $\ReadXTwo{v^i_j}$ to read from $t_5$.
This $\rf$ coupling is formalized in \cref{lem:wra_ra_sra_soundness_copy}.
The edge numbers specify the order in which $\rf$-edges are inferred.
}
\end{figure}
\begin{figure}
\newcommand{\xdisposition}{0}
\newcommand{\ydisposition}{0}
\newcommand{\xtstep}{0.75}
\newcommand{\ytstep}{0.7}
\newcommand{\ybias}{-0.3 }
\newcommand{\xstep}{1.4}
\newcommand{\ystep}{-0.8}
\newcommand{\xtscale}{0.8}
\def\crossoutopacity{0.3}
\def \numevents{6}
\newcommand{\BaseAtMostOneGadget}[1]{
\node[] (S11) at (1*\xstep,0.15) {\normalsize $t_1$};
\node[] (S12) at (1*\xstep,\numevents * \ystep) {};
\node[] (S21) at (2*\xstep,0.15) {\normalsize $t_2$};
\node[] (S22) at (2*\xstep,\numevents * \ystep) {};
\node[] (S31) at (3*\xstep,0.15) {\normalsize $t_3$};
\node[] (S32) at (3*\xstep,\numevents * \ystep) {};
\node[] (S41) at (4*\xstep,0.15) {\normalsize $h_{c}$};
\node[] (S42) at (4*\xstep,\numevents * \ystep) {};

\draw[seqtrace] (S11) to (S12);
\draw[seqtrace] (S21) to (S22);
\draw[seqtrace] (S31) to (S32);
\draw[seqtrace] (S41) to (S42);

\node[event, draw=black] (13) at (1*\xstep, 3*\ystep + 0*\ybias) {$\WriteXOne{v^{i}_{j}}$};
\node[event, draw=black] (16) at (1*\xstep, 6*\ystep + 0*\ybias) {$\WriteXOne{v^{i}_{k}}$};

\node[event] (21) at (2*\xstep, 1*\ystep + 0*\ybias) {$\WriteA{v^{i}_{j}}{c}$};
\node[event] (22) at (2*\xstep, 2*\ystep + 0*\ybias) {$\WriteA{\ov{v}^{i}_{j}}{c}$};
\node[event, draw=black] (23) at (2*\xstep, 3*\ystep + 0*\ybias) {$\WriteXOne{v^{i}_{j}}$};
\node[event] (24) at (2*\xstep, 4*\ystep + 0*\ybias) {$\WriteA{v^{i}_{k}}{c}$};
\node[event] (25) at (2*\xstep, 5*\ystep + 0*\ybias) {$\WriteA{\ov{v}^{i}_{k}}{c}$};
\node[event, draw=black] (26) at (2*\xstep, 6*\ystep + 0*\ybias) {$\WriteXOne{v^{i}_{k}}$};

\node[event, draw=black] (32) at (3*\xstep, 2*\ystep + 0*\ybias) {$\ReadXOne{v^{i}_{j}}$};
\node[event] (33) at (3*\xstep, 3*\ystep + 0*\ybias) {$\ReadA{v^{i}_{j}}{c}$};
\node[event, draw=black] (35) at (3*\xstep, 5*\ystep + 0*\ybias) {$\ReadXOne{v^{i}_{k}}$};
\node[event] (36) at (3*\xstep, 6*\ystep + 0*\ybias) {$\ReadA{v^{i}_{k}}{c}$};

\ifthenelse{\equal{#1}{0}}{
\node[event] (43) at (4*\xstep, 3*\ystep + 0*\ybias) {$\WriteA{v^{i}_{k}}{c}$};
\node[event] (45) at (4*\xstep, 5*\ystep + 0*\ybias) {$\WriteA{v^{i}_{j}}{c}$};
}{
\ifthenelse{\equal{#1}{1}}{
\node[event, cross out, draw, draw opacity=\crossoutopacity] (43) at (4*\xstep, 3*\ystep + 0*\ybias) {$\WriteA{v^{i}_{k}}{c}$};
\node[event] (45) at (4*\xstep, 5*\ystep + 0*\ybias) {$\WriteA{v^{i}_{j}}{c}$};
}{
\ifthenelse{\equal{#1}{2}}{
\node[event] (43) at (4*\xstep, 3*\ystep + 0*\ybias) {$\WriteA{v^{i}_{k}}{c}$};
\node[event, cross out, draw, draw opacity=\crossoutopacity] (45) at (4*\xstep, 5*\ystep + 0*\ybias) {$\WriteA{v^{i}_{j}}{c}$};
}{
\node[event, cross out, draw, draw opacity=\crossoutopacity] (43) at (4*\xstep, 3*\ystep + 0*\ybias) {$\WriteA{v^{i}_{k}}{c}$};
\node[event, cross out, draw, draw opacity=\crossoutopacity] (45) at (4*\xstep, 5*\ystep + 0*\ybias) {$\WriteA{v^{i}_{j}}{c}$};
}}}
}
\def\scale{1}
\centering
\begin{subfigure}[b]{0.475\textwidth}
\centering
\scalebox{\scale}{
\begin{tikzpicture}[thick, font=\footnotesize,
pre/.style={<-,shorten >= 2pt, shorten <=2pt, very thick},
post/.style={->,shorten >= 3pt, shorten <=3pt,   thick},
seqtrace/.style={line width=1},
und/.style={very thick, draw=gray},
virt/.style={circle,draw=black!50,fill=black!20, opacity=0}]

\BaseAtMostOneGadget{0}

\end{tikzpicture}
}
\caption{\label{subfig:at_most_one_gadget_wra_ra_sra}
The at-most-one-true gadget $\AtMostOneGadget{c}{i}{j}{k}$.
}
\end{subfigure}
\hfill
\begin{subfigure}[b]{0.475\textwidth}
\centering
\scalebox{\scale}{
\begin{tikzpicture}[thick, font=\footnotesize,
pre/.style={<-,shorten >= 2pt, shorten <=2pt, very thick},
post/.style={->,shorten >= 3pt, shorten <=3pt,   thick},
seqtrace/.style={line width=1},
und/.style={very thick, draw=gray},
virt/.style={circle,draw=black!50,fill=black!20, opacity=0}]

\BaseAtMostOneGadget{1}

\begin{pgfonlayer}{bg}
\draw[rf] (23) -- (32) node [pos=0.6] (fir) {\circledsmall{1}};
\draw[rf] (45) -- (33) node [pos=0.6] (seco) {\circledsmall{2}};
\draw[rf] (24) -- (36) node [pos=0.2] (thi) {\circledsmall{3}}; 
\draw[rf] (16) -- (35) node [pos=0.6] (four) {\circledsmall{4}};

\draw[mo] (22) to (45);
\draw[mo] (45) to (24);
\end{pgfonlayer}

\end{tikzpicture}
}
\caption{\label{subfig:at_most_one_gadget_wra_ra_sra_true_false}
Resolving with $s_j=\btrue$ and $s_k=\bfalse$.
}
\end{subfigure}
\\
\begin{subfigure}[b]{0.475\textwidth}
\centering
\scalebox{\scale}{
\begin{tikzpicture}[thick, font=\footnotesize,
pre/.style={<-,shorten >= 2pt, shorten <=2pt, very thick},
post/.style={->,shorten >= 3pt, shorten <=3pt,   thick},
seqtrace/.style={line width=1},
und/.style={very thick, draw=gray},
virt/.style={circle,draw=black!50,fill=black!20, opacity=0}]

\BaseAtMostOneGadget{2}

\begin{pgfonlayer}{bg}
\draw[rf] (26) -- (35) node [pos=0.4] (fir) {\circledsmall{1}};
\draw[rf, bend left=15] (43) -- (36) node [pos=0.2] (sec) {\circledsmall{2}};
\draw[rf] (21) -- (33) node [pos=0.2] (thi) {\circledsmall{3}};
\draw[rf] (13) -- (32) node [pos=0.65] (for) {\circledsmall{4}};

\draw[mo] (25) to (43);
\end{pgfonlayer}

\end{tikzpicture}
}
\caption{\label{subfig:at_most_one_gadget_wra_ra_sra_false_true}
Resolving with $s_j=\bfalse$ and $s_k=\btrue$.
}
\end{subfigure}
\hfill
\begin{subfigure}[b]{0.475\textwidth}
\centering
\scalebox{\scale}{
\begin{tikzpicture}[thick, font=\footnotesize,
pre/.style={<-,shorten >= 2pt, shorten <=2pt, very thick},
post/.style={->,shorten >= 3pt, shorten <=3pt,   thick},
seqtrace/.style={line width=1},
und/.style={very thick, draw=gray},
virt/.style={circle,draw=black!50,fill=black!20, opacity=0}]

\BaseAtMostOneGadget

\begin{pgfonlayer}{bg}{3}
\draw[rf] (13) -- (32) node [pos=0.6] (fir) {\circledsmall{1}};
\draw[rf] (21) -- (33) node [pos=0.2] (sec) {\circledsmall{2}};
\draw[rf] (16) -- (35) node [pos=0.6] (thi) {\circledsmall{1}};
\draw[rf] (24) -- (36) node [pos=0.2] (fo) {\circledsmall{2}};

\end{pgfonlayer}

\end{tikzpicture}
}
\caption{\label{subfig:at_most_one_gadget_wra_ra_sra_false_false}
Resolving with $s_j=\bfalse$ and $s_k=\bfalse$.
}
\end{subfigure}
\caption{\label{fig:at_most_one_gadget_wra_ra_sra}
The at-most-one-true gadget $\AtMostOneGadget{c}{i}{j}{k}$ parameterized by $c\in [3]$,
and the three ways to resolve it depending on the boolean assignment to $s_j$ and $s_k$ (\subref{subfig:at_most_one_gadget_wra_ra_sra_true_false}, \subref{subfig:at_most_one_gadget_wra_ra_sra_false_true}, \subref{subfig:at_most_one_gadget_wra_ra_sra_false_false}).
These $\rf$ constraints  are formalized in \cref{lem:wra_ra_sra_soundness_at_most_one}.
The edge numbers specify the order in which $\rf$-edges are implied.
The crossed-out events are ignored in our analysis for simplicity (see \cref{lem:wra_ra_sra_completeness_inactive_writes}).
}
\end{figure}

\begin{figure}
\newcommand{\xdisposition}{0}
\newcommand{\ydisposition}{0}
\newcommand{\xtstep}{0.75}
\newcommand{\ytstep}{0.7}
\newcommand{\ybias}{-0.3 }
\newcommand{\xstep}{1.4}
\newcommand{\ystep}{-0.6}
\newcommand{\xtscale}{0.8}
\def\crossoutopacity{0.3}
\def \numevents{9}
\newcommand{\BaseAtLeastOneGadget}[1]{
\node[] (S11) at (1*\xstep,0.15) {\normalsize $t_1$};
\node[] (S12) at (1*\xstep,\numevents * \ystep) {};
\node[] (S21) at (2*\xstep,0.15) {\normalsize $t_2$};
\node[] (S22) at (2*\xstep,\numevents * \ystep) {};
\node[] (S31) at (3*\xstep,0.15) {\normalsize $t_3$};
\node[] (S32) at (3*\xstep,\numevents * \ystep) {};
\node[] (S41) at (4*\xstep,0.15) {\normalsize $p$};
\node[] (S42) at (4*\xstep,\numevents * \ystep) {};
\node[] (S51) at (5*\xstep,0.15) {\normalsize $q$};
\node[] (S52) at (5*\xstep,\numevents * \ystep) {};

\draw[seqtrace] (S11) to (S12);
\draw[seqtrace] (S21) to (S22);
\draw[seqtrace] (S31) to (S32);
\draw[seqtrace] (S41) to (S42);
\draw[seqtrace] (S51) to (S52);

\node[event] (11) at (1*\xstep, 1*\ystep + 0*\ybias) {$\WriteB{v^{i}_j}$};
\node[event] (12) at (1*\xstep, 2*\ystep + 0*\ybias) {$\WriteB{\ov{v}^{i}_j}$};
\node[event, draw=black] (13) at (1*\xstep, 3*\ystep + 0*\ybias) {$\WriteXOne{v^{i}_j}$};
\node[event] (14) at (1*\xstep, 4*\ystep + 0*\ybias) {$\WriteB{v^{i}_k}$};
\node[event] (15) at (1*\xstep, 5*\ystep + 0*\ybias) {$\WriteB{\ov{v}^{i}_k}$};
\node[event, draw=black] (16) at (1*\xstep, 6*\ystep + 0*\ybias) {$\WriteXOne{v^{i}_k}$};
\node[event] (17) at (1*\xstep, 7*\ystep + 0*\ybias) {$\WriteB{v^{i}_{\ell}}$};
\node[event] (18) at (1*\xstep, 8*\ystep + 0*\ybias) {$\WriteB{\ov{v}^{i}_{\ell}}$};
\node[event, draw=black] (19) at (1*\xstep, 9*\ystep + 0*\ybias) {$\WriteXOne{v^{i}_{\ell}}$};

\node[event, draw=black] (23) at (2*\xstep, 3*\ystep + 0*\ybias) {$\WriteXOne{v^{i}_j}$};
\node[event, draw=black] (26) at (2*\xstep, 6*\ystep + 0*\ybias) {$\WriteXOne{v^{i}_k}$};
\node[event, draw=black] (29) at (2*\xstep, 9*\ystep + 0*\ybias) {$\WriteXOne{v^{i}_{\ell}}$};

\node[event, draw=black] (32) at (3*\xstep, 2*\ystep + 0*\ybias) {$\ReadXOne{v^{i}_j}$};
\node[event] (33) at (3*\xstep, 3*\ystep + 0*\ybias) {$\ReadB{v^{i}_j}$};
\node[event, draw=black] (35) at (3*\xstep, 5*\ystep + 0*\ybias) {$\ReadXOne{v^{i}_k}$};
\node[event] (36) at (3*\xstep, 6*\ystep + 0*\ybias) {$\ReadB{v^{i}_k}$};
\node[event, draw=black] (38) at (3*\xstep, 8*\ystep + 0*\ybias) {$\ReadXOne{v^{i}_\ell}$};
\node[event] (39) at (3*\xstep, 9*\ystep + 0*\ybias) {$\ReadB{v^{i}_{\ell}}$};

\ifthenelse{\equal{#1}{0}}{
\node[event] (43) at (4*\xstep, 3*\ystep + 0*\ybias) {$\WriteB{v^{i}_{\ell}}$};
\node[event] (45) at (4*\xstep, 5*\ystep + 0*\ybias) {$\WriteB{v^{i}_j}$};

\node[event] (56) at (5*\xstep, 6*\ystep + 0*\ybias) {$\WriteB{v^{i}_{\ell}}$};
\node[event] (58) at (5*\xstep, 8*\ystep + 0*\ybias) {$\WriteB{v^{i}_k}$};
}{
\ifthenelse{\equal{#1}{1}}{
\node[event, cross out, draw, draw opacity=\crossoutopacity] (43) at (4*\xstep, 3*\ystep + 0*\ybias) {$\WriteB{v^{i}_{\ell}}$};
\node[event] (45) at (4*\xstep, 5*\ystep + 0*\ybias) {$\WriteB{v^{i}_j}$};

\node[event, cross out, draw, draw opacity=\crossoutopacity] (56) at (5*\xstep, 6*\ystep + 0*\ybias) {$\WriteB{v^{i}_{\ell}}$};
\node[event] (58) at (5*\xstep, 8*\ystep + 0*\ybias) {$\WriteB{v^{i}_k}$};
}{
\ifthenelse{\equal{#1}{2}}{
\node[event, cross out, draw, draw opacity=\crossoutopacity] (43) at (4*\xstep, 3*\ystep + 0*\ybias) {$\WriteB{v^{i}_{\ell}}$};
\node[event] (45) at (4*\xstep, 5*\ystep + 0*\ybias) {$\WriteB{v^{i}_j}$};

\node[event] (56) at (5*\xstep, 6*\ystep + 0*\ybias) {$\WriteB{v^{i}_\ell}$};
\node[event, cross out, draw, draw opacity=\crossoutopacity] (58) at (5*\xstep, 8*\ystep + 0*\ybias) {$\WriteB{v^{i}_k}$};
}{
\ifthenelse{\equal{#1}{3}}{
\node[event] (43) at (4*\xstep, 3*\ystep + 0*\ybias) {$\WriteB{v^{i}_{\ell}}$};
\node[event, cross out, draw, draw opacity=\crossoutopacity] (45) at (4*\xstep, 5*\ystep + 0*\ybias) {$\WriteB{v^{i}_j}$};

\node[event, cross out, draw, draw opacity=\crossoutopacity] (56) at (5*\xstep, 6*\ystep + 0*\ybias) {$\WriteB{v^{i}_{\ell}}$};
\node[event] (58) at (5*\xstep, 8*\ystep + 0*\ybias) {$\WriteB{v^{i}_k}$};
}}}}
}
\def\scale{0.97}
\centering
\begin{subfigure}[b]{0.475\textwidth}
\centering
\scalebox{\scale}{%
\begin{tikzpicture}[thick, font=\footnotesize,
pre/.style={<-,shorten >= 2pt, shorten <=2pt, very thick},
post/.style={->,shorten >= 3pt, shorten <=3pt,   thick},
seqtrace/.style={line width=1},
und/.style={very thick, draw=gray},
virt/.style={circle,draw=black!50,fill=black!20, opacity=0}]

\BaseAtLeastOneGadget{0}

\end{tikzpicture}
}
\caption{\label{subfig:at_least_one_gadget_wra_ra_sra}
The at-least-one-true gadget $\AtLeastOneGadget{i}{j}{k}{\ell}$.
}
\end{subfigure}
\hfill
\begin{subfigure}[b]{0.475\textwidth}
\centering
\scalebox{\scale}{%
\begin{tikzpicture}[thick, font=\footnotesize,
pre/.style={<-,shorten >= 2pt, shorten <=2pt, very thick},
post/.style={->,shorten >= 3pt, shorten <=3pt,   thick},
seqtrace/.style={line width=1},
und/.style={very thick, draw=gray},
virt/.style={circle,draw=black!50,fill=black!20, opacity=0}]

\BaseAtLeastOneGadget{1}

\begin{pgfonlayer}{bg}
\draw[rf, bend left=10] (13) -- (32) node [pos=0.7] (fir) {\circledsmall{1}};
\draw[rf] (45) -- (33) node [pos=0.6] (sec) {\circledsmall{2}};
\draw[rf, bend left=10] (16) -- (35) node [pos=0.6] (thr) {\circledsmall{1}};
\draw[rf] (58) -- (36) node [pos=0.6] (fo) {\circledsmall{2}};
\draw[rf] (17) -- (39) node [pos=0.6] (fiv) {\circledsmall{3}};
\draw[rf, bend left=10] (29) -- (38) node [pos=0.5,above] (six) {\circledsmall{4}};

\draw[mo] (12) to (45);
\draw[mo] (45) to (58);
\draw[mo] (15) to (58);
\draw[mo, bend right=5] (58) to (17);
\end{pgfonlayer}

\end{tikzpicture}
}
\caption{\label{subfig:at_least_one_gadget_wra_ra_sra_false_false_true}
Resolving with $s_j=\bfalse$, $s_k=\bfalse$ and $s_\ell=\btrue$.
}
\end{subfigure}
\\[2em]
\begin{subfigure}[b]{0.475\textwidth}
\centering
\scalebox{\scale}{%
\begin{tikzpicture}[thick, font=\footnotesize,
pre/.style={<-,shorten >= 2pt, shorten <=2pt, very thick},
post/.style={->,shorten >= 3pt, shorten <=3pt, thick},
seqtrace/.style={line width=1},
und/.style={very thick, draw=gray},
virt/.style={circle,draw=black!50,fill=black!20, opacity=0}]

\BaseAtLeastOneGadget{2}

\begin{pgfonlayer}{bg}
\draw[rf, bend left=10] (13) -- (32) node [pos=0.7] (fir) {\circledsmall{1}};
\draw[rf] (45) -- (33) node [pos=0.7] (sec) {\circledsmall{2}};
\draw[rf, bend left=10] (26) -- (35) node [pos=0.7,below] (thr) {\circledsmall{4}};
\draw[rf] (14) -- (36) node [pos=0.3] (fo) {\circledsmall{3}};
\draw[rf] (56) -- (39) node [pos=0.3] (fiv) {\circledsmall{2}};
\draw[rf, bend left=10] (19) -- (38) node [pos=0.65] (six) {\circledsmall{1}};

\draw[mo] (12) to (45);
\draw[mo, bend right=10] (45) to (14);
\draw[mo] (18) to (56);
\end{pgfonlayer}

\end{tikzpicture}
}
\caption{\label{subfig:at_least_one_gadget_wra_ra_sra_false_true_false}
Resolving with $s_j=\bfalse$, $s_k=\btrue$ and $s_\ell=\bfalse$.
}
\end{subfigure}
\hfill
\begin{subfigure}[b]{0.475\textwidth}
\centering
\scalebox{\scale}{%
\begin{tikzpicture}[thick, font=\footnotesize,
pre/.style={<-,shorten >= 2pt, shorten <=2pt, very thick},
post/.style={->,shorten >= 3pt, shorten <=3pt,   thick},
seqtrace/.style={line width=1},
und/.style={very thick, draw=gray},
virt/.style={circle,draw=black!50,fill=black!20, opacity=0}]

\BaseAtLeastOneGadget{3}

\begin{pgfonlayer}{bg}
\draw[rf, bend left=10] (23) -- (32) node [pos=0.5,below] (one) {\circledsmall{4}};
\draw[rf] (11) -- (33) node [pos=0.3] (two) {\circledsmall{3}};
\draw[rf, bend left=10] (16) -- (35) node [pos=0.8] (three) {\circledsmall{1}};
\draw[rf, bend right=-50] (58) -- (36)node [pos=0.7] (four) {\circledsmall{2}};
\draw[rf, out=-150, in=20, looseness=0.5] (43) -- (39) node [pos=0.7] (five) {\circledsmall{2}};
\draw[rf, bend left=10] (19) -- (38) node [pos=0.7] (six) {\circledsmall{1}};

\draw[mo] (15) to (58);
\draw[mo] (58) to (43);
\draw[mo] (18) to (43);
\end{pgfonlayer}

\end{tikzpicture}
}
\caption{\label{subfig:at_least_one_gadget_wra_ra_sra_true_false_false}
Resolving with $s_j=\btrue$, $s_k=\bfalse$ and $s_\ell=\bfalse$.
}
\end{subfigure}
\caption{\label{fig:at_least_one_gadget_wra_ra_sra}
The at-least-one-true gadget $\AtLeastOneGadget{i}{j}{k}{\ell}$ (\subref{subfig:at_least_one_gadget_wra_ra_sra}) and the three ways to resolve it depending on the boolean assignment to $s_j$ and $s_k$ (\subref{subfig:at_least_one_gadget_wra_ra_sra_true_false_false}, \subref{subfig:at_least_one_gadget_wra_ra_sra_false_true_false}, \subref{subfig:at_least_one_gadget_wra_ra_sra_false_false_true}).
These $\rf$ constraints  are formalized in \cref{lem:wra_ra_sra_soundness_at_least_one}.
The edge numbers specify the order in which $\rf$-edges are implied.
The crossed-out events are ignored in our analysis for simplicity (see \cref{lem:wra_ra_sra_completeness_inactive_writes}).
}
\end{figure}

\subsection{Reduction}
Given a monotone formula $\varphi=\{\Clause_i\}_{i\in[m]}$ over $n$ variables $\{s_j\}_{j\in[n]}$, we construct an abstract execution $\expartial=(\E, \po)$ with $\NumThreads=23$ threads and accessing $\NumLocations=26$ memory locations such that $\varphi$ is satisfiable iff $\expartial\models \MemModel$, for any memory model $\MemModel$ such that $\sramm \mmorder \MemModel\mmorder \wramm$.
$\expartial$ follows the general scheme of \cref{fig:reduction_scheme}.
Again, in phase $i$ and step $j$ we have a read event $\TReadXOne{v^i_j}{t_3}$ that can read from either of two writes $\TWriteXOne{v^i_j}{t_1}$ (corresponding to $s_j=\bfalse$) or $\TWriteXOne{v^i_j}{t_2}$ (corresponding to $s_j=\btrue$).
We use the same types of gadgets as in the previous section to force the desirable interaction patterns between threads. 
However, the contents of each gadget are different to account for the different memory models.
In particular, these gadgets now rely on the \WRCoh{} axiom to couple the readers in threads $t_3$ and $t_6$ and force the 1-in-3 property in each clause $\Clause_i$, as opposed to the \PORF{} and \RlxRCoh{} axioms used in \cref{SEC:LOWER_THREADS_LOCATIONS_RLX} for $\arlxmm$.

\Paragraph{The copy gadget $\CopyGadget{i}{j}$.}
The copy gadget $\CopyGadget{i}{j}$ (\cref{fig:copy_gadget_wra_ra_sra}) guarantees that $\TReadXOne{v^i_j}{t_3}$ reads from thread $t_1$ iff $\TReadXTwo{v^i_j}{t_6}$ reads from thread $t_4$.
Besides $\VarXOne$ and $\VarXTwo$, this gadget also uses locations $\VarYOne$, $\VarYTwo$, $\VarYThree$, $\VarYFour$, $\VarYFive$, $\VarYSix$, $\VarYSeven$ and $\VarYEight$.
Moreover, the gadget also contains two $\hb$-edges out of $\TReadXTwo{v^i_j}{t_6}$.
Though not shown explicitly, these $\hb$-edges can be easily enforced by an $\rf$-edge $\wt\LTo{\rf}\rd$, where $\wt$ and $\rd$ access a new memory location.
These events are independent to our analysis later, and will thus be ignored, i.e., we will only be considering the $\hb$-edges as they appear in the gadget.

\Paragraph{The copy-down gadget $\CopyGadgetDown{i}{j}$.}
The copy-down gadget $\CopyGadgetDown{i}{j}$ (\cref{fig:copy_gadget_down_wra_ra_sra}), as before, has a similar structure to the copy gadget $\CopyGadget{i}{j}$ and guarantees that $\TReadXOne{v^{i+1}_j}{t_3}$ reads from thread $t_1$ iff $\TReadXTwo{v^i_j}{t_6}$ reads from thread $t_4$.
Together, the two copy gadgets $\CopyGadget{i}{j}$ and $\CopyGadgetDown{i}{j}$
ensure that $\TReadXOne{v^i_j}{t_3}$ reads from thread $t_1$ iff $\TReadXOne{v^{i+1}_j}{t_3}$ reads from thread $t_1$, guaranteeing a valid truth assignment on $s_j$.
Besides $\VarXOne$ and $\VarXTwo$, this gadget also uses locations $\VarZOne$, $\VarZTwo$, $\VarZThree$, $\VarZFour$, $\VarZFive$, $\VarZSix$, $\VarZSeven$ and $\VarZEight$.
Moreover, the gadget also contains two $\hb$-edges out of $\TReadXTwo{v^i_j}{t_6}$, which, as argued above, will be ignored in our analysis.

\Paragraph{The at-most-one-true gadgets $\AtMostOneGadget{c}{i}{j}{k}$.}
For each $c\in[3]$, the at-most-one-true gadget (\cref{fig:at_most_one_gadget_wra_ra_sra}) guarantees that for every clause $\Clause_i$, the $c$-th pair of literals $(s_j, s_k)$ in $\Clause_i$ is such that 
at most one of $s_j$ and $s_k$ is set to true.
For each $c\in[3]$, the corresponding gadget contains one additional memory location $\VarA{c}$.

\Paragraph{The at-least-one-true gadget $\AtLeastOneGadget{i}{j}{k}{\ell}$.}
The at-least-one-true gadget (\cref{fig:at_least_one_gadget_wra_ra_sra}) guarantees that for every clause $\Clause_i$ with three literals $s_j, s_k, s_{\ell}$, at least one of them is set to true.
It contains one additional memory location $\VarB$.

\Paragraph{Putting the gadgets together.}
We obtain the abstract execution $\expartial$ by appropriately connecting all gadgets and specifying the interleaving of events in common threads.

First, we serially connect all gadgets in their common threads by $\po$.
In particular, $\CopyGadget{i_1}{j_1}$ appears before $\CopyGadget{i_2}{j_2}$ if $i_1<i_2$ or $i_1=i_2$ and $j_1<j_2$;
$\CopyGadgetDown{i_1}{j_1}$ appears before $\CopyGadgetDown{i_2}{j_2}$ if $i_1<i_2$ or $i_1=i_2$ and $j_1<j_2$;
$\AtMostOneGadget{c}{i_1}{j_1}{k_1}$ appears before $\AtMostOneGadget{c}{i_2}{j_2}{k_2}$ if $i_1<i_2$, and finally
$\AtLeastOneGadget{i_1}{j_1}{k_1}{\ell_1}$ appears before $\AtLeastOneGadget{i_2}{j_2}{k_2}{\ell_2}$ if $i_1<i_2$.
Second, observe that the threads $t_{i}$, for $i\in[6]$, appear in multiple gadgets, thus we have to specify how to interleave their corresponding events. 
We do so by first fixing a total order on memory locations
$
\sigma=\VarYOne,\VarYTwo, \VarYThree, \VarYFour, \VarZOne, \VarZTwo, \VarZThree, \VarZFour, \VarA{1}, \VarA{2}, \VarA{3}, \VarB
$.
\begin{compactenum}
\item\label{item:wra_ra_sra_interleaving1} The read events succeeding every $\TReadXOne{v^i_j}{t_3}$ in each gadget (i.e., those on locations $\{\VarYEll, \VarZEll\}_{\ell\in [4]}$,  $\{\VarA{c}\}_{c\in [3]}$ and $\VarB$) are $\po$-ordered according to $\sigma$
(note that reads on some locations, such as $\VarA{c}, \VarB$, might not appear at all after $\TReadXOne{v^i_j}{t_3}$, e.g., if clause $\Clause_i$ does not contain variable $s_j$).
Moreover, all these reads appear before any read on $\VarXOne$ that is a $\po$-successor of $\TReadXOne{v^i_j}{t_3}$ (in particular, reads on $\VarXOne$ with phase $\geq i$ or reads with phase $i$ and step $\geq j$).
\item\label{item:wra_ra_sra_interleaving2} 
Similarly, the write events preceding every $\TWriteXOne{v^i_j}{t_1}$ in each gadget are $\po$-ordered according to $\sigma$.
Moreover, all these writes appear after any write on $\VarXOne$ that is a $\po$-predecessor of $\TWriteXOne{v^i_j}{t_1}$.
Likewise for $\TWriteXOne{v^i_j}{t_2}$, $\TWriteXTwo{v^i_j}{t_4}$ and $\TWriteXTwo{v^i_j}{t_5}$.
\end{compactenum}
Finally, observe that we have indeed used $\NumThreads=23$ threads and $\NumLocations=26$  memory locations.
For the latter, we also count one extra memory location for each $\hb$ edge out of thread $t_6$, thus $4$ in total.

\subsection{Soundness}
We start with the soundness of the reduction, in particular, if $\expartial\models \wramm$ (and thus also $\expartial\models \MemModel$)  then $\varphi$ is satisfiable.
We achieve this by proving a sequence of intermediate lemmas similar to \cref{SEC:LOWER_THREADS_LOCATIONS_RLX}, however, each lemma now requires reasoning about the semantics of $\wramm$.
Recall that we assign $s_j=\btrue$ if $(\TWriteXOne{v^i_j}{t_2}, \TReadXOne{v^i_j}{t_3})\in \rf$ for all $i\in [m]$, and $s_j=\bfalse$ if $(\TWriteXOne{v^i_j}{t_1}, \TReadXOne{v^i_j}{t_3})\in \rf$ for all $i\in[m]$.
The first lemma is based on the copy gadgets $\CopyGadget{i}{j}$ and $\CopyGadgetDown{i}{j}$ and states that each phase of $\ex$ makes consistent choices for the writer of $\TReadXOne{v^i_j}{t_3}$ (i.e., whether it reads from $t_1$ or $t_2$), which makes the above assignment well-defined.
It follows from the high-level description of these two gadgets and the accompanying \cref{fig:copy_gadget_wra_ra_sra} and \cref{fig:copy_gadget_down_wra_ra_sra}.

\begin{restatable}{lemma}{lemmawrarasrasoundnesscopy}\label{lem:wra_ra_sra_soundness_copy}
Let $\ex=(\E, \po,\rf,\mo)$ be a concrete extension of $\expartial$ that satisfies \WRCoh{}.
For all $i_1, i_2\in[m]$, $j\in[n]$, we have $(\TWriteXOne{v^{i_1}_j}{t_1}, \TReadXOne{v^{i_1}_j}{t_3}) \in \rf$ iff $(\TWriteXOne{v^{i_2}_j}{t_1}, \TReadXOne{v^{i_2}_j}{t_3}) \in \rf$.
\end{restatable}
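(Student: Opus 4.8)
The plan is to mirror the inductive argument used for \cref{lem:rlx_soundness_copy}, now driving every forced read-choice by \WRCoh{} rather than by the \PORF{}/\RlxRCoh{} combination that served in the relaxed case. It suffices to prove, for every $i\in[m-1]$ and $j\in[n]$, the one-step equivalence $(\TWriteXOne{v^{i}_j}{t_1}, \TReadXOne{v^{i}_j}{t_3}) \in \rf$ iff $(\TWriteXOne{v^{i+1}_j}{t_1}, \TReadXOne{v^{i+1}_j}{t_3}) \in \rf$; composing this equivalence across all phases then yields the statement for arbitrary $i_1,i_2\in[m]$. As before, I would route the argument through the mirror read $\TReadXTwo{v^i_j}{t_6}$: first show that inside $\CopyGadget{i}{j}$ the focal read $\TReadXOne{v^i_j}{t_3}$ reads from $t_1$ iff the mirror read $\TReadXTwo{v^i_j}{t_6}$ reads from $t_4$ (and from $t_2$ iff from $t_5$), and then establish the analogous coupling inside $\CopyGadgetDown{i}{j}$ between $\TReadXTwo{v^i_j}{t_6}$ and the next-phase focal read $\TReadXOne{v^{i+1}_j}{t_3}$. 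Composing the two couplings gives the inductive step.

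For each coupling I would trace the forced sequence of $\rf$- and $\mo$-edges annotated in \cref{subfig:copy_gadget_wra_ra_sraFalse,subfig:copy_gadget_wra_ra_sraTrue} and the corresponding panels of \cref{fig:copy_gadget_down_wra_ra_sra}, following the numbered edge order. The engine at each link is the contrapositive of \WRCoh{}: a read $\rd$ on a location $x$ cannot read from a write $\wt$ whenever there is an intermediate same-location write $\wt'$ with $\wt \LTo{\hb_x} \wt' \LTo{\hb_x} \rd$, since $(\wt,\rd)\in\rf$ would then place $(\wt,\wt)$ into $\hb_x;[\W];\hb_x;\rf^{-1}$, contradicting its irreflexivity. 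Concretely, each $\rf$-edge already inferred installs an $\hb$-edge between the two threads it connects (as $\rf\subseteq\hb$); combined with the intra-thread $\po$-edges and the two explicitly drawn $\hb$-edges out of $\TReadXTwo{v^i_j}{t_6}$, these supply exactly the $\hb_x$-ordering between the two candidate writers of the next read in the chain, ruling out one choice and forcing the other. Propagating this through the auxiliary locations $\VarYOne,\dots,\VarYEight$ (respectively $\VarZOne,\dots,\VarZEight$) carries the choice from $\TReadXOne{v^i_j}{t_3}$ all the way to $\TReadXTwo{v^i_j}{t_6}$, and a symmetric analysis handles the reads-from-$t_2$ case.

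The step I expect to be the main obstacle is certifying, at each link, that the two relevant writes are genuinely $\hb_x$-ordered, so that \WRCoh{} can actually be invoked. This is more delicate than in \cref{SEC:LOWER_THREADS_LOCATIONS_RLX}: there, \PORF{} excluded a writer outright and \RlxRCoh{} needed only single-location $\po$/$\mo$ reasoning, whereas here the ordering between candidate writers must be synthesized from $\rf$-induced synchronization that spans several threads. I would therefore maintain a careful ledger of which $\hb$-edges are available before each application of the axiom, and in particular verify that the auxiliary $\hb$-edges out of $t_6$ (which, as noted in the construction, are realizable by a fresh reads-from pair on a dedicated location) are precisely what close the final link of the chain. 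Once both intra-gadget couplings are established, combining them and carrying out the induction is routine, and the equivalence closes.
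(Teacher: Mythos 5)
Your proposal matches the paper's own proof: the same induction on phases via the one-step equivalence, the same routing through the mirror read $\TReadXTwo{v^i_j}{t_6}$ using the couplings in $\CopyGadget{i}{j}$ and $\CopyGadgetDown{i}{j}$, and the same engine of excluding one candidate writer at each link by exhibiting an intermediate $\hb_x$-ordered write and invoking the irreflexivity of $\hb_x;[\W];\hb_x;\rf^{-1}$. The delicate step you flag — certifying the $\hb_x$-ordering before each application of \WRCoh{} — is exactly what the paper's edge-by-edge ledger (the numbered edges \circled{1}--\circled{4}) carries out.
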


The next lemma is based on the at-most-one-true gadgets $\AtMostOneGadget{c}{i}{j}{k}$ and states that for every clause $\Clause_i$ and for each of the $c\in[3]$ pair of variables $(s_j, s_k)$ in $\Clause_i$, at most one of them is assigned to true.
Again, it follows by a direct analysis on the accompanying \cref{fig:at_most_one_gadget_wra_ra_sra}.

\begin{restatable}{lemma}{lemwrarasrasoundnessatmostone}\label{lem:wra_ra_sra_soundness_at_most_one}
Let $\ex=(\E, \po,\rf,\mo)$ be a concrete extension of $\expartial$ that satisfies \WRCoh{}.
For every $i\in[m]$ and $j,k\in [n]$
such that $s_j$ and $s_k$ appear in clause $\Clause_i$, we have $(\TWriteXOne{v^i_j}{t_2}, \TReadXOne{v^i_j}{t_3})\not \in \rf$ or $(\TWriteXOne{v^i_k}{t_2}, \TReadXOne{v^i_k}{t_3})\not \in \rf$.
\end{restatable}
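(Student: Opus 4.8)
The plan is to follow the same high-level strategy as \cref{lem:rlx_soundness_at_most_one}, but to replace every appeal to \PORF{} and \RlxRCoh{} with a single structural axiom, namely \WRCoh{}. Concretely, I would prove the implication directly: assuming $(\TWriteXOne{v^i_j}{t_2}, \TReadXOne{v^i_j}{t_3})\in \rf$ (i.e.\ $s_j=\btrue$), I would show that \WRCoh{} forces $(\TWriteXOne{v^i_k}{t_2}, \TReadXOne{v^i_k}{t_3})\notin \rf$ (i.e.\ $s_k=\bfalse$), which gives the second disjunct; the case $s_j=\bfalse$ makes the first disjunct hold trivially. Here $c\in[3]$ is the index for which $(s_j,s_k)$ is the $c$-th pair in $\Clause_i$, so the whole argument stays inside $\AtMostOneGadget{c}{i}{j}{k}$ and is read off \cref{subfig:at_most_one_gadget_wra_ra_sra_true_false}, tracing the forced $\rf$-edges \circledsmall{2}--\circledsmall{4} in order. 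The enabling observation is that on location $\VarA{c}$ each of the two values $v^i_j$ and $v^i_k$ is written by exactly two events — one in $t_2$ and one in $h_c$ — so whenever \WRCoh{} rules out one of them as the source of a read, the read is forced onto the other.

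The three forcing steps each instantiate the pattern $\wt_1 \LTo{\hb_{\VarA{c}}}\wt_2\LTo{\hb_{\VarA{c}}}\rd$ with $(\wt_1,\rd)\in\rf$ forbidden by \WRCoh{}:
\begin{compactenum}
\item (edge \circledsmall{2}) Using $s_j=\btrue$, the path $\TWriteA{\ov{v}^i_j}{c}{t_2}\LTo{\po}\TWriteXOne{v^i_j}{t_2}\LTo{\rf}\TReadXOne{v^i_j}{t_3}\LTo{\po}\TReadA{v^i_j}{c}{t_3}$ gives $\TWriteA{\ov{v}^i_j}{c}{t_2}\LTo{\hb_{\VarA{c}}}\TReadA{v^i_j}{c}{t_3}$, while $\TWriteA{v^i_j}{c}{t_2}\LTo{\hb_{\VarA{c}}}\TWriteA{\ov{v}^i_j}{c}{t_2}$ holds by $\po$ in $t_2$. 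Hence $\TReadA{v^i_j}{c}{t_3}$ reading from $\TWriteA{v^i_j}{c}{t_2}$ would violate \WRCoh{}, forcing $(\TWriteA{v^i_j}{c}{h_c}, \TReadA{v^i_j}{c}{t_3})\in\rf$.
\item (edge \circledsmall{3}) Now $\TWriteA{v^i_k}{c}{h_c}\LTo{\po}\TWriteA{v^i_j}{c}{h_c}\LTo{\rf}\TReadA{v^i_j}{c}{t_3}\LTo{\po}\TReadXOne{v^i_k}{t_3}\LTo{\po}\TReadA{v^i_k}{c}{t_3}$ gives $\TWriteA{v^i_k}{c}{h_c}\LTo{\hb_{\VarA{c}}}\TWriteA{v^i_j}{c}{h_c}\LTo{\hb_{\VarA{c}}}\TReadA{v^i_k}{c}{t_3}$, so $\TReadA{v^i_k}{c}{t_3}$ reading from $\TWriteA{v^i_k}{c}{h_c}$ would violate \WRCoh{}, forcing $(\TWriteA{v^i_k}{c}{t_2}, \TReadA{v^i_k}{c}{t_3})\in\rf$.
\item (edge \circledsmall{4}) Suppose towards a contradiction $(\TWriteXOne{v^i_k}{t_2}, \TReadXOne{v^i_k}{t_3})\in\rf$. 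Then $\TWriteA{\ov{v}^i_k}{c}{t_2}\LTo{\po}\TWriteXOne{v^i_k}{t_2}\LTo{\rf}\TReadXOne{v^i_k}{t_3}\LTo{\po}\TReadA{v^i_k}{c}{t_3}$ yields $\TWriteA{v^i_k}{c}{t_2}\LTo{\hb_{\VarA{c}}}\TWriteA{\ov{v}^i_k}{c}{t_2}\LTo{\hb_{\VarA{c}}}\TReadA{v^i_k}{c}{t_3}$ together with $(\TWriteA{v^i_k}{c}{t_2}, \TReadA{v^i_k}{c}{t_3})\in\rf$ from Step~2, again violating \WRCoh{}. Hence $\TReadXOne{v^i_k}{t_3}$ reads from $t_1$, i.e.\ $s_k=\bfalse$.
\end{compactenum}

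I expect the main (though mild) obstacle to be purely the bookkeeping of $\hb$-paths: for each step one must exhibit the correct intermediate write on $\VarA{c}$ and verify that the relevant cross-thread $\hb_{\VarA{c}}$-edge genuinely holds, which in every case routes through a focal $\rf$-edge on $\VarXOne$ supplied by the truth-assignment hypothesis. Unlike the $\arlxmm$ argument, no acyclicity axiom is used, so the lemma follows from \WRCoh{} alone; in turn it applies to every $\MemModel$ with $\sramm\mmorder\MemModel\mmorder\wramm$, since consistency in any such model implies consistency in $\wramm$ and hence \WRCoh{}. The symmetric case — assuming $s_k=\btrue$ and forcing $s_j=\bfalse$ via the mirror chain of \cref{subfig:at_most_one_gadget_wra_ra_sra_false_true} — is logically redundant, as establishing $s_j=\btrue\Rightarrow s_k=\bfalse$ already yields the claimed disjunction.
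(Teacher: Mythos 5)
Your proof is correct and follows essentially the same route as the paper's own argument: the same chain of three forced $\rf$-edges inside $\AtMostOneGadget{c}{i}{j}{k}$, each obtained by exhibiting an intermediate write on $\VarA{c}$ that makes the alternative source violate \WRCoh{}. The only (harmless) difference is that you prove just one implication and observe the symmetric direction is redundant for the disjunction, whereas the paper spells out both.
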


The third lemma is based on the at-least-one-true gadget $\AtLeastOneGadget{i}{j}{k}{\ell}$, and it is used to show that for every clause $\Clause_i=(s_j, s_k, s_{\ell})$, at least one of its variables is assigned to true.
Again, it follows by a direct analysis on the accompanying \cref{fig:at_least_one_gadget_wra_ra_sra}.

\begin{restatable}{lemma}{lemwrarasrasoundnessatleastone}\label{lem:wra_ra_sra_soundness_at_least_one}
Let $\ex=(\E, \po,\rf,\mo)$ be a concrete extension of $\expartial$ that satisfies \WRCoh{}.
For every $i\in[m]$ and $j,k,\ell\in [n]$ such that $s_j$, $s_k$ and $s_{\ell}$ appear in clause $\Clause_i$, we have $(\TWriteXOne{v^i_j}{t_2}, \TReadXOne{v^i_j}{t_3}) \in \rf$ or $(\TWriteXOne{v^i_k}{t_2}, \TReadXOne{v^i_k}{t_3}) \in \rf$ or $(\TWriteXOne{v^i_{\ell}}{t_2}, \TReadXOne{v^i_{\ell}}{t_3}) \in \rf$.
\end{restatable}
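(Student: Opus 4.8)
The plan is to establish the disjunction by contraposition. By the encoding convention, each focal read observes either the $t_1$-write or the $t_2$-write of its value on $\VarXOne$, so the claimed disjunction can fail only if all three of $\TReadXOne{v^i_j}{t_3}$, $\TReadXOne{v^i_k}{t_3}$, $\TReadXOne{v^i_\ell}{t_3}$ observe $t_1$. It therefore suffices to assume $(\TWriteXOne{v^i_j}{t_1},\TReadXOne{v^i_j}{t_3})\in\rf$ and $(\TWriteXOne{v^i_k}{t_1},\TReadXOne{v^i_k}{t_3})\in\rf$ and then derive $(\TWriteXOne{v^i_\ell}{t_2},\TReadXOne{v^i_\ell}{t_3})\in\rf$, which is exactly the resolution depicted in \cref{subfig:at_least_one_gadget_wra_ra_sra_false_false_true} and which gives the lemma. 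The whole argument uses only \WRCoh{}, i.e. irreflexivity of $\hb_{x};[\W];\hb_{x};\rf^{-1}$ specialized to $x=\VarB$, in keeping with the hypothesis; neither \PORF{} nor $\mo$ is invoked.

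I would then chase a chain of forced $\rf$-edges on location $\VarB$, in the order \circledsmall{1}--\circledsmall{4} of \cref{subfig:at_least_one_gadget_wra_ra_sra_false_false_true}. The recurring mechanism is this: whenever a focal read $\TReadXOne{v^i_j}{t_3}$ observes $t_1$, the write $\TWriteB{\ov{v}^i_j}{t_1}$ that is $\po$-adjacent to $\TWriteB{v^i_j}{t_1}$ lies on $\VarB$ and is forced $\hb_{\VarB}$-after $\TWriteB{v^i_j}{t_1}$ (by $\po$ in $t_1$) and $\hb_{\VarB}$-before the later read $\TReadB{v^i_j}{t_3}$ (by $\po$ in $t_1$, the assumed $\rf$-edge into $\TReadXOne{v^i_j}{t_3}$, and $\po$ in $t_3$). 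Hence $\TReadB{v^i_j}{t_3}$ cannot observe $\TWriteB{v^i_j}{t_1}$ without a \WRCoh{} violation and is forced onto the auxiliary write $\TWriteB{v^i_j}{p}$; symmetrically $\TReadB{v^i_k}{t_3}$ is forced onto $\TWriteB{v^i_k}{q}$ (steps \circledsmall{1}--\circledsmall{2}). Using these two edges, the writes $\TWriteB{v^i_j}{p}$ and $\TWriteB{v^i_k}{q}$ become $\hb_{\VarB}$-intermediate for $\TReadB{v^i_\ell}{t_3}$, ruling out the two auxiliary $v^i_\ell$-writes on threads $p$ and $q$ and forcing $\TReadB{v^i_\ell}{t_3}$ to observe $\TWriteB{v^i_\ell}{t_1}$ (step \circledsmall{3}).

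The final step (\circledsmall{4}) forces $\TReadXOne{v^i_\ell}{t_3}$ off $t_1$, and I expect it to be the main obstacle, because the \WRCoh{} cycle that excludes the unwanted option is created by the very $\rf$-edge being refuted. Concretely, suppose for contradiction $(\TWriteXOne{v^i_\ell}{t_1},\TReadXOne{v^i_\ell}{t_3})\in\rf$. Having already fixed $\TWriteB{v^i_\ell}{t_1}$ as the writer of $\TReadB{v^i_\ell}{t_3}$, the $\po$-next write $\TWriteB{\ov{v}^i_\ell}{t_1}$ becomes $\hb_{\VarB}$-intermediate: it is $\hb_{\VarB}$-after $\TWriteB{v^i_\ell}{t_1}$ by $\po$, while the assumed edge into $\TReadXOne{v^i_\ell}{t_3}$ composed with $\po$ in $t_3$ yields $\TWriteB{\ov{v}^i_\ell}{t_1}\LTo{\hb_{\VarB}}\TReadB{v^i_\ell}{t_3}$, contradicting \WRCoh{} at $\VarB$. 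Thus $\TReadXOne{v^i_\ell}{t_3}$ must observe $t_2$, as required. The remaining bookkeeping --- that each overlined write is $\po$-adjacent to its focal write and that the reads of $t_3$ occur in the stated $\po$ order, both guaranteed by the interleaving rules of the construction --- is routine, and the two other figure resolutions serve the completeness direction only, exactly as for the analogous \cref{lem:rlx_soundness_at_least_one}.
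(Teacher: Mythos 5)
Your proposal is correct and follows essentially the same argument as the paper's proof: the identical chain of forced $\rf$-edges on $\VarB$ (reads for $v^i_j$ and $v^i_k$ pushed onto $p$ and $q$ by \WRCoh{}, then the read for $v^i_\ell$ pushed onto $t_1$, then the final \WRCoh{} contradiction excluding $\TWriteXOne{v^i_\ell}{t_1}$). The only difference is cosmetic: you observe that the single case ``$j$ and $k$ both read from $t_1$'' suffices by contraposition, whereas the paper also sketches the two symmetric cases, which are not needed for this soundness direction.
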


\cref{lem:wra_ra_sra_soundness_copy} states that our truth assignment for $\varphi$ is valid, while \cref{lem:wra_ra_sra_soundness_at_most_one} and \cref{lem:wra_ra_sra_soundness_at_least_one} guarantee that in every clause, exactly one literal is set to true.
Hence we have the following corollary.

\begin{restatable}{corollary}{corwrarasralowerthreadslocationssoundness}\label{cor:wra_ra_sra_lower_threads_locations_soundness}
If  $\expartial\models \wramm$ then $\varphi$ is satisfiable.
\end{restatable}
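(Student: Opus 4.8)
The plan is to assemble the corollary directly from the three soundness lemmas, since each has already carried out the heavy combinatorial work on the individual gadgets. First I would unfold the hypothesis $\expartial \models \wramm$: by definition this yields a concrete extension $\ex = (\E, \po, \rf, \mo)$ with $\ex \models \wramm$, and in particular (consulting \cref{tab:main_models}) this extension satisfies \WRCoh{}. This is exactly the premise required by \cref{lem:wra_ra_sra_soundness_copy}, \cref{lem:wra_ra_sra_soundness_at_most_one}, and \cref{lem:wra_ra_sra_soundness_at_least_one}, so all three become available for this particular $\ex$.

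Next I would define the truth assignment as stated before the lemmas: set $s_j = \btrue$ when $(\TWriteXOne{v^i_j}{t_2}, \TReadXOne{v^i_j}{t_3}) \in \rf$, and $s_j = \bfalse$ when $(\TWriteXOne{v^i_j}{t_1}, \TReadXOne{v^i_j}{t_3}) \in \rf$. The point to verify is well-definedness: the value assigned to $s_j$ must not depend on the phase $i$. This is precisely \cref{lem:wra_ra_sra_soundness_copy}, which guarantees $(\TWriteXOne{v^{i_1}_j}{t_1}, \TReadXOne{v^{i_1}_j}{t_3}) \in \rf$ iff $(\TWriteXOne{v^{i_2}_j}{t_1}, \TReadXOne{v^{i_2}_j}{t_3}) \in \rf$ for all $i_1, i_2 \in [m]$. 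Since $\rf^{-1}$ is a function and each focal read $\TReadXOne{v^i_j}{t_3}$ has exactly the two candidate writers in $t_1$ and $t_2$, the assignment is total and consistent.

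Finally I would verify the 1-in-3 property clause by clause. Fix a clause $\Clause_i = (s_j, s_k, s_\ell)$. \cref{lem:wra_ra_sra_soundness_at_least_one} ensures that at least one of the three focal reads of phase $i$ reads from $t_2$, i.e.\ at least one literal of $\Clause_i$ is true. For the upper bound I would apply \cref{lem:wra_ra_sra_soundness_at_most_one} to each of the three pairs of literals $(s_j, s_k)$, $(s_j, s_\ell)$, $(s_k, s_\ell)$ occurring in $\Clause_i$; these are exactly the three instantiations indexed by $c \in [3]$. If two distinct literals of $\Clause_i$ were simultaneously true, the corresponding pair would violate the at-most-one conclusion, so no two can both be true. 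Combining the two bounds yields exactly one true literal per clause, which is precisely a 1-in-3 satisfying assignment for $\varphi$.

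The argument is routine once the lemmas are in hand; the only step needing care is the bookkeeping of the last paragraph — confirming that the three at-most-one gadgets over $c \in [3]$ really do cover all three unordered pairs of a clause, so that ``at most one per pair'' upgrades to ``at most one per clause''. No single step is a genuine obstacle, since the technical difficulty is isolated in the gadget-level lemmas that may be assumed here.
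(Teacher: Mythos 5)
Your proposal is correct and matches the paper's own argument: the corollary is obtained exactly by combining \cref{lem:wra_ra_sra_soundness_copy} (well-definedness of the assignment), \cref{lem:wra_ra_sra_soundness_at_most_one}, and \cref{lem:wra_ra_sra_soundness_at_least_one} (exactly one true literal per clause), after unfolding $\expartial\models\wramm$ into a concrete extension satisfying \WRCoh{}. Your version simply spells out the bookkeeping that the paper leaves implicit.
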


\subsection{Completeness}

We now turn our attention to the completeness property, i.e., if $\varphi$ is satisfiable then $\expartial\models \sramm$ (and thus also $\expartial\models \MemModel$).
We use the notions of phase, step, and monotonicity as in \cref{SEC:LOWER_THREADS_LOCATIONS_RLX}.
We construct a reads-from relation $\rf$ and a partial modification order $\mopartial$ as follows.
\begin{compactenum}
\item  \label{item:wra_ra_sra_intra_edges}
For each gadget, we insert $\rf$ and $\mo$ edges according to \cref{fig:copy_gadget_wra_ra_sra,fig:copy_gadget_down_wra_ra_sra,fig:at_most_one_gadget_wra_ra_sra,fig:at_least_one_gadget_wra_ra_sra} and the truth assignments on variables $s_j$, $s_k$, $s_{\ell}$ involved in that gadget.
In particular, this implies that, for each $i\in[m]$ and $j\in [n]$, we have $(\TWriteXOne{v^i_j}{t_1}, \TReadXOne{v^i_j}{t_3})\in \rf$ if $s_j=\bfalse$ and $(\TWriteXOne{v^i_j}{t_2}, \TReadXOne{v^i_j}{t_3})\in \rf$ if $s_j=\btrue$.
Observe that $\rf$ is fully specified, i.e., every read has a write to read from.
Note that for every pair $(\wt,\rd)\in \rf$ we have $\Phase(\wt)=\Phase(\rd)$ and $\Step(\wt)=\Step(\rd)$, thus $\rd\OrderedBefore\wt$.
\item \label{item:wra_ra_sra_inter_edges}
For every two conflicting writes $\wt_1, \wt_2$ such that 
(i)~$\wt_1\StrictOrderedBefore\wt_2$, and
(ii)~$(\wt_1,\wt_2)\not \in \hb$,
we have $(\wt_1, \wt_2)\in \mopartial$.
Thus, for any two conflicting writes $\wt_1, \wt_2$ with  $\wt_1 \StrictOrderedBefore\wt_2$ we have $(\wt_1, \wt_2) \in (\hb \cup \mopartial)^+$.
\end{compactenum}

We call a triplet $(\wt, \rd, \wt')$ \emph{safe} if either $(\wt', \rd)\not \in \hb$ or $(\wt', \wt)\in (\hb\cup \mopartial)^+$.
To prove the consistency of $\expartial$, it suffices to argue that $\mopartial$ is \emph{minimally coherent},
namely, that 
(i)~$(\hb \cup \mopartial)^+$ is acyclic, and
(ii)~every triplet $(\wt, \rd, \wt')$ is safe.
Indeed,  these two conditions guarantee that $\mopartial$ can be linearized to a total $\mo$ such that any extension $\ex$ of $\expartial$ satisfies $\ex=(\E, \po, \rf, \mo)\models \sramm$~\cite{Tunc2023},
which implies that also $\ex\models \MemModel$.

In order to simplify our analysis, we again ignore the writes in threads $\{h_{c}\}_{c\in[3]}$, $p$, and $q$ that are not read in $\rf$ (crossed-out in 
\cref{fig:at_most_one_gadget_wra_ra_sra} and \cref{fig:at_least_one_gadget_wra_ra_sra}).
This allows us to make our formal statements more uniform, while it does not affect the correctness of the analysis.
Indeed, let 
$
\InactiveWrites= \{\wt\in \W\colon \tid(\wt)\in\{ h_1, h_2, h_3, p, q\} \text{ and } \not\exists \rd\in\R.~(\wt, \rd)\in \rf\}
$.
The following lemma is straightforward.

\begin{restatable}{lemma}{lemwrarasracompletenessignoreevents}\label{lem:wra_ra_sra_completeness_inactive_writes}
The following statements hold.
\begin{compactenum}
\item If there is an $(\hb\cup \mopartial)$-cycle then there is such a cycle without any write in $\InactiveWrites$.
\item If there is an unsafe triplet, then there is such a triplet $(\wt, \rd, \wt')$ where $\wt'\not \in \InactiveWrites$.
\end{compactenum}
\end{restatable}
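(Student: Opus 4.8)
The plan is to prove both items by showing that every inactive write can be \emph{bypassed}, so that any witness (a cycle for item~1, an unsafe triplet for item~2) can be transformed into one avoiding $\InactiveWrites$. The two structural facts I would rely on throughout are: (a)~by definition no $\wt\in\InactiveWrites$ has an outgoing $\rf$-edge; and (b)~each of the threads $h_1,h_2,h_3,p,q$ contains only writes, all to a single location ($\VarA{c}$ for $h_c$, and $\VarB$ for $p,q$). Consequently, an $\hb$-edge entering an inactive write must end with a $\po$-step inside its own single-location thread, and an $\hb$-edge leaving an inactive write must begin with a $\po$-step to its $\po$-successor. Moreover, by construction (\cref{item:wra_ra_sra_inter_edges}) every $\mopartial$-edge runs strictly forward in the quasi-order $\StrictOrderedBefore$, and for any two conflicting writes with $\wt_1 \StrictOrderedBefore \wt_2$ there is a single edge $\wt_1 \to \wt_2$ in $\hb\cup\mopartial$ (namely $\hb$, or $\mopartial$ when $(\wt_1,\wt_2)\notin\hb$).

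I would first dispatch item~2, which mirrors the corresponding item of \cref{lem:rlx_completeness_inactive_writes}. Let $(\wt,\rd,\wt')$ be an unsafe triplet on a location $x$ with $\wt'\in\InactiveWrites$; unsafety means $(\wt',\rd)\in\hb$ while $(\wt',\wt)\notin(\hb\cup\mopartial)^{+}$. Since $\wt'$ lies in a single-location thread with no outgoing $\rf$, the witnessing path $\wt'\LPath{\hb}\rd$ must stay inside $\tid(\wt')$ along $\po$ until it first leaves via an $\rf$-edge out of some write $\wt''$ on $x$; this $\wt''$ is read, hence $\wt''\notin\InactiveWrites$, and it conflicts with $\wt$. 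From $(\wt',\wt'')\in\po^{+}\subseteq\hb$ (the $\po$-segment is nonempty as $\wt'$ has no outgoing $\rf$) and $(\wt'',\rd)\in\hb$ I conclude that $(\wt,\rd,\wt'')$ is again a triplet with $(\wt'',\rd)\in\hb$, and that $(\wt'',\wt)\notin(\hb\cup\mopartial)^{+}$ (otherwise $(\wt',\wt)\in\hb;(\hb\cup\mopartial)^{+}\subseteq(\hb\cup\mopartial)^{+}$, contradicting unsafety). Hence $(\wt,\rd,\wt'')$ is unsafe with $\wt''\notin\InactiveWrites$, as required.

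For item~1 I would take a counterexample cycle that is minimal, first in the number of inactive writes it traverses and then in length, and argue it uses no inactive write. Suppose it enters an inactive write $\wt$ along an edge $a\to\wt$ and leaves along $\wt\to b$. If the outgoing edge is $\mopartial$ (so $\wt\StrictOrderedBefore b$ on $\wt$'s location) and the incoming edge is $\mopartial$ (so $a\StrictOrderedBefore\wt$), then transitivity of $\StrictOrderedBefore$ gives $a\StrictOrderedBefore b$ on a common location, yielding the single bypass edge $a\to b$ and fewer inactive writes. If the incoming edge is $\hb$, I factor it as $a\LPath{\hb}\wt^{-}\LTo{\po}\wt$ through the $\po$-predecessor $\wt^{-}$; when the outgoing edge is also $\hb$, transitivity of $\hb$ gives $a\LPath{\hb}b$ directly. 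The remaining combination --- incoming $\hb$ together with outgoing $\mopartial$ --- is where the deliberately non-monotonic $\po$-edges inside the helper threads (e.g.\ the step-$k$-to-step-$j$ edge in $h_c$) interact with $\mopartial$. Here I would use the \emph{twin} $\wt^{*}\notin\InactiveWrites$ of $\wt$, the active write with the same location and value, hence the same $\Phase$ and $\Step$, which the construction provides. Since $\mopartial$ depends only on $\StrictOrderedBefore$ and location, $\wt\StrictOrderedBefore b$ gives $\wt^{*}\StrictOrderedBefore b$ and so a single edge $\wt^{*}\to b$; and within a gadget instance $\wt^{*}$ is $\po$-ordered before the same-phase targets $b$, which lets me re-route through $\wt^{*}$ and the active $\po$-neighbours of $\wt$, again lowering the inactive-write count and contradicting minimality.

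I expect this last case to be the main obstacle. Unlike the Relaxed construction of \cref{lem:rlx_completeness_inactive_writes}, where inactive writes have neither outgoing $\rf$ nor outgoing $\mopartial$ edges and the bypass is immediate, here $\mopartial$ is induced by the phase/step quasi-order, so inactive writes are genuinely embedded in it and may carry both incoming and outgoing $\mopartial$-edges. Making the twin-substitution rigorous amounts to checking, for each of the four gadget types and each of its resolutions, that every $\mopartial$-target of an inactive write is reachable by a single $\hb\cup\mopartial$ edge from its active twin or its active $\po$-neighbour; this is a finite, gadget-local verification, but it is the step requiring the most care.
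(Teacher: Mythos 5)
Your item~2 is correct and is essentially the paper's own argument: since $\tid(\wt')\in\{h_1,h_2,h_3,p,q\}$ contains only same-location writes and $\wt'$ has no outgoing $\rf$, the witnessing path factors as $\wt'\LPath{\po}\wt''\LTo{\rf}\rd''\LPath{\hb}\rd$ through the first \emph{read} write $\wt''$ of that thread, and unsafety transfers from $\wt'$ to $\wt''$ because $(\wt',\wt'')\in\po^+\subseteq\hb$. That part needs no changes.

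For item~1 you depart from the paper, and that is where the gap is. The paper's proof is a one-line contraction: a cycle can only pass through a write of $\InactiveWrites$ via two $\po$-edges, which collapse to one. This presupposes that inactive writes carry no incident $\mopartial$-edges at all --- the property that is explicit in the Relaxed analogue (\cref{lem:rlx_completeness_inactive_writes}, where $\mopartial$ only relates writes that are read) and that the simplest repair here is to enforce by constructing $\mopartial$ only over writes outside $\InactiveWrites$ (a total $\mo$ can be completed afterwards). You instead read \cref{item:wra_ra_sra_inter_edges} literally, so that inactive writes do acquire $\mopartial$-edges, and you then need a real bypass argument; but your bypass does not close. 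First, you treat only three of the four in/out combinations: the case of a $\mopartial$-edge in followed by a $\po$-edge out is never addressed, and it is problematic for exactly the reason you identify elsewhere, since the outgoing $\po$-edge inside $h_c$, $p$, $q$ can be step-decreasing, leaving $a\StrictOrderedBefore\wt$ and $b\StrictOrderedBefore\wt$ with no derivable relation between $a$ and $b$. Second, and more seriously, the twin substitution fails in the very case it is invoked for: there the incoming edge $a\LTo{\po}\wt$ is non-monotonic, so $\wt\StrictOrderedBefore a$; the twin $\wt^{*}$ has the same phase and step as $\wt$, hence $\wt^{*}\StrictOrderedBefore a$ as well, so the quasi-order supplies an edge from $\wt^{*}$ \emph{to} $a$, not from $a$ to $\wt^{*}$, and the intended splice $a\to\wt^{*}\to b$ has no first edge. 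The claim that ``$\wt^{*}$ is $\po$-ordered before the same-phase targets $b$'' also does not hold in general, since $\wt^{*}$ lives in $t_1$ or $t_2$ while $b$ ranges over arbitrary later conflicting writes in other threads. As written, item~1 of your proposal is a plan with an unresolved core case rather than a proof; either adopt the restricted definition of $\mopartial$ (after which the paper's two-edge contraction is immediate), or supply the missing gadget-local analysis showing that the two bad in/out configurations cannot occur on a minimal cycle.
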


We start with condition (i) of minimal coherence, i.e., we need to show that $(\hb\cup\mopartial)$ is acyclic.
Observe that each individual gadget is free from $(\hb\cup \mopartial)$-cycles, regardless of how we resolve the $\rf$ edges associated with it (see \cref{subfig:copy_gadget_wra_ra_sraFalse,subfig:copy_gadget_wra_ra_sraTrue},
\cref{subfig:copy_gadget_down_wra_ra_sraFalse,subfig:copy_gadget_down_wra_ra_sraTrue},
 \cref{subfig:at_most_one_gadget_wra_ra_sra_true_false,subfig:at_most_one_gadget_wra_ra_sra_false_true,subfig:at_most_one_gadget_wra_ra_sra_false_false}, \cref{subfig:at_least_one_gadget_wra_ra_sra_false_false_true,subfig:at_least_one_gadget_wra_ra_sra_false_true_false,subfig:at_least_one_gadget_wra_ra_sra_true_false_false}).
However, we have to also argue that the interleaving of these gadgets is free from $(\hb\cup\mopartial)$-cycles.

Our first key lemma states that $\hb$ paths between writes are, without loss of generality, monotonic.
This is based on three observations. 
First, due to \cref{lem:wra_ra_sra_completeness_inactive_writes}, we can ignore writes in the threads $h_1$, $h_2$, $h_3$, $p$ and $q$, which contain $\po$-edges that would violate this statement.
Second, all $\rf$-edges connect events of the same phase and step.
Third, the only $\po$-edges that are non-monotonic enter read events (in particular, a read $\TReadZSix{g_1}{v^i_j}$ or a read $\TReadZEight{g_4}{v^i_j}$ in $\CopyGadgetDown{i}{j}$).
Since the only possible continuation of an $\hb$-path out of a read event is to take another $\po$-edge, we can remove the first non-monotonic edge (as $\po$ is transitive) and obtain a new valid $\hb$-path.
Repeating this process results in a monotonic $\hb$-path between the writes.
Formally, we have the following.

\begin{restatable}{lemma}{lemwrarasracompletenessmonotonichbpaths}\label{lem:wra_ra_sra_completeness_monotonic_hb_paths}
For every two writes $\wt_1$, $\wt_2$, if $(\wt_1,\wt_2)\in  \hb$ then there exists a monotonic $\hb$-path $\wt_1\LPath{\hb}\wt_2$. 
\end{restatable}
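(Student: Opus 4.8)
The plan is to start with an arbitrary $\hb$-path $P\colon \wt_1 \LPath{\hb} \wt_2$ and rewrite it, edge by edge, into a monotonic one. The first observation that makes this tractable is that $\rf$-edges never break monotonicity: by construction (item~\ref{item:wra_ra_sra_intra_edges}) every $\rf$-edge connects two events of the same phase and step, so for any $\wt\LTo{\rf}\rd$ we have both $\wt\OrderedBefore\rd$ and $\rd\OrderedBefore\wt$. Hence the only edges of $P$ that can fail $\event_{\ell}\OrderedBefore\event_{\ell+1}$ are $\po$-edges, and I would begin by classifying exactly which $\po$-edges of the construction are non-monotonic.

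Next I would dispense with the threads $h_1,h_2,h_3,p,q$. Invoking \cref{lem:wra_ra_sra_completeness_inactive_writes}, I may assume $P$ contains no inactive write. Each of these threads contributes only writes, and its intra-gadget $\po$-edge (e.g.\ $\WriteA{v^i_k}{c}\LTo{\po}\WriteA{v^i_j}{c}$ in $h_c$, or the analogous edges in $p$ and $q$) joins two writes of which, in every resolution of the gadget, at most one is active (see the crossed-out events in \cref{fig:at_most_one_gadget_wra_ra_sra} and \cref{fig:at_least_one_gadget_wra_ra_sra}). Such an edge cannot lie on an inactive-free path, since that would require both of its endpoints to be active. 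Moreover, since writes receive no $\rf$-edges, an active write of $h_c$, $p$, or $q$ can only be reached by $P$ through $\po$ from its (inactive, hence excluded) predecessor or as the source $\wt_1$ itself, and is always left through a monotonic edge (an $\rf$-edge into $t_3$, or the phase-increasing $\po$-edge into the next gadget's copy). Consequently the only non-monotonic edges surviving on $P$ are the $\po$-edges entering the reads $\ReadZSix{v^i_j}$ of $g_1$ and $\ReadZEight{v^i_j}$ of $g_4$ inside the copy-down gadgets $\CopyGadgetDown{i}{j}$.

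The core of the argument is then the elimination of these remaining edges. I would locate the first non-monotonic edge $\event\LTo{\po}\rd$ of $P$, where $\rd$ is one of the two distinguished reads. Because $\rd$ is a read it has no outgoing $\rf$-edge, so the successor edge in $P$ is necessarily another $\po$-edge $\rd\LTo{\po}\event'$; by transitivity of $\po$ I replace the two edges by the single edge $\event\LTo{\po}\event'$, obtaining a shorter valid $\hb$-path with the same endpoints. Since $\rd$ is the last $g_1$- (resp.\ $g_4$-) event of its gadget, $\event'$ is the first such event of the next copy-down gadget, whose phase and step dominate those of $\event$; thus the new edge is monotonic (and, even if not, again enters a read and is composable). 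Each rewrite strictly shortens $P$, so the process terminates, necessarily at a monotonic $\hb$-path from $\wt_1$ to $\wt_2$.

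The main obstacle, and the part demanding the most care, is the second paragraph: one must verify, gadget by gadget and resolution by resolution, both that every non-monotonic $\po$-edge of the construction is accounted for (either it enters $\ReadZSix{v^i_j}$/$\ReadZEight{v^i_j}$, or it lives in an inactive-ignored thread) and that the active writes in $h_c$, $p$, $q$ are always exited monotonically. The composition step of the third paragraph is then routine once the phase/step bookkeeping for the successor events $\event'$ is confirmed.
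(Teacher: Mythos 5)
Your proposal is correct and follows essentially the same route as the paper's proof: both reduce to the observation that, after discarding inactive writes via \cref{lem:wra_ra_sra_completeness_inactive_writes}, all $\rf$-edges and all surviving $\po$-edges are monotonic except those entering $\ReadZSix{v^i_j}$ in $g_1$ and $\ReadZEight{v^i_j}$ in $g_4$, and both then shorten the path by splicing out such a read using transitivity of $\po$. Your handling of the $h_c$, $p$, $q$ threads (at most one of the two same-phase writes is active, so the offending intra-gadget edge never occurs on an inactive-free path) is a harmless, slightly more explicit variant of the paper's remark that the remaining $\po$-edges in those threads are monotonic.
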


We can now prove the acyclicity condition of minimal coherence.
Intuitively, any potential $(\hb\cup\mopartial)$-cycle $C$ can be seen as a sequence of write events connected by $\hb$ and $\mopartial$.
By construction,  every edge $\wt_1\LTo{\mopartial}\wt_2$ is monotonic,
while, due to \cref{lem:wra_ra_sra_completeness_monotonic_hb_paths}, every subpath $\wt_1\LPath{\hb}\wt_2$ of $C$ is, without loss of generality, monotonic.
Thus $C$ is monotonic, and since it is a cycle, every event in $C$ has the same phase and step.
The absence of such cycles can then be directly established by inspecting the gadgets in \cref{fig:copy_gadget_wra_ra_sra,fig:copy_gadget_down_wra_ra_sra,fig:at_most_one_gadget_wra_ra_sra,fig:at_least_one_gadget_wra_ra_sra}.

\begin{restatable}{lemma}{lemwrarasracompletenessnocyclehbmo}\label{lem:wra_ra_sra_completeness_no_cycle_hb_mo}
$(\hb\cup\mopartial)$ is acyclic.
\end{restatable}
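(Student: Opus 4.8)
The plan is to show that any putative $(\hb \cup \mopartial)$-cycle must be confined to the events of a single phase and step, and then to rule out such local cycles by inspecting the finitely many gadget instantiations. The two ingredients that drive this are the monotonicity of the constructed edges and \cref{lem:wra_ra_sra_completeness_monotonic_hb_paths}.

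First I would record two structural facts about the relevant edges. By the construction of $\mopartial$ (item~\ref{item:wra_ra_sra_inter_edges}), every edge $\wt_1 \LTo{\mopartial} \wt_2$ satisfies $\wt_1 \StrictOrderedBefore \wt_2$ and hence strictly increases the $\OrderedBefore$-rank; in particular it is monotonic. By \cref{lem:wra_ra_sra_completeness_monotonic_hb_paths}, any $\hb$-connection between two writes can be replaced by a monotonic $\hb$-path, so $\hb$-segments between writes are, without loss of generality, monotonic as well.

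Next I would take a hypothetical $(\hb \cup \mopartial)$-cycle $C$ and normalize it. Using \cref{lem:wra_ra_sra_completeness_inactive_writes}, I may assume $C$ avoids all inactive writes. Since $\mopartial$ relates only writes, I can view $C$ as a cyclic sequence of writes $\wt^{(1)} \to \cdots \to \wt^{(r)} = \wt^{(1)}$, in which each consecutive pair is joined either by a single $\mopartial$-edge or by a maximal $\hb$-subpath whose endpoints are writes. Replacing each such $\hb$-subpath by a monotonic one via \cref{lem:wra_ra_sra_completeness_monotonic_hb_paths}, the whole cycle becomes monotonic, i.e. $\wt^{(1)} \OrderedBefore \wt^{(2)} \OrderedBefore \cdots \OrderedBefore \wt^{(1)}$. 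A monotonic cycle forces every vertex to be $\OrderedBefore$-equivalent: the phases form a non-decreasing sequence returning to its start, so all phases coincide, and within that common phase the steps do likewise, so all steps coincide. Hence every event traversed by $C$ has the same phase $i$ and step $j$.

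Finally, I would localize the remaining cycle. Once confined to phase $i$ and step $j$, the only events $C$ may use are those whose value is $v^i_j$ or $\ov{v}^i_j$, and these live inside the copy gadget $\CopyGadget{i}{j}$, the copy-down gadget $\CopyGadgetDown{i}{j}$, and the at-most-/at-least-one gadgets that involve step $j$ in phase $i$. The figures fix, for each Boolean resolution, exactly the $\hb$- and $\mopartial$-edges present among these events, so it remains to check that the union of these edges---including the shared focal events $\TWriteXOne{v^i_j}{t_1}$, $\TWriteXOne{v^i_j}{t_2}$, $\TReadXOne{v^i_j}{t_3}$ and the shared mirror events on $\VarXTwo$---is acyclic under every admissible assignment. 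I expect this last inter-gadget bookkeeping to be the main obstacle: each individual gadget is visibly acyclic, but one must verify that gluing them along their common phase-$i$, step-$j$ events does not close a new cycle, which is a finite but careful case analysis over the resolutions of the participating gadgets.
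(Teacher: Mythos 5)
Your proposal matches the paper's proof in essentially every respect: $\mopartial$-edges are monotonic by construction, \cref{lem:wra_ra_sra_completeness_monotonic_hb_paths} makes the $\hb$-segments between consecutive writes monotonic, so any cycle is confined to a single phase and step, and the remainder is the finite gadget inspection (which the paper carries out by excluding threads one at a time: first $t_3$, then $f_1,f_4,g_1,g_4$, then the remaining $f$/$g$ threads, then $t_6$, then $t_4,t_5$, leaving only $\po$/$\mopartial$-edges among $t_1$, $t_2$, $h_c$, $p$, $q$). One minor imprecision: the intra-gadget $\mopartial$-edges inserted per the figures (item (1) of the construction, not item (2)) connect events of the same phase and step and are therefore only weakly monotonic rather than strictly rank-increasing---but weak monotonicity is all the confinement argument needs, and your localized check correctly retains these edges.
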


Next, we turn our attention to the second condition of minimal coherence, i.e., we argue that every triplet is safe.
We first prove a general statement that prohibits $\hb$-paths to a read $\rd$ from writes $\wt'$ that are $\po$-successors to the write $\wt$ that $\rd$ reads from.
This will help us establish the safety of each triplet, and will also prove useful later in \cref{SUBSEC:CAUSAL_MEMORY} when we address Causal Memory.

\begin{restatable}{lemma}{lemwrarasrarfimmediate}\label{lem:wra_ra_sra_rf_immediate}
For every pair $(\wt, \rd)\in \rf$ and write $\wt'$ with $(\wt, \wt')\in \po$, we have $(\wt',\rd)\not \in \hb$.
\end{restatable}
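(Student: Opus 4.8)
The plan is to argue by contradiction: assume $(\wt',\rd)\in\hb$, so there is a $(\po\cup\rf)$-path $P\colon \wt'=e_0\LTo{}\cdots\LTo{}e_k=\rd$, and derive a contradiction from the acyclicity and monotonicity facts already established. The conceptual core is the observation (recorded in the discussion preceding \cref{lem:wra_ra_sra_completeness_monotonic_hb_paths}) that, after ignoring the inactive writes of \cref{lem:wra_ra_sra_completeness_inactive_writes}, the only non-monotonic $\po$-edges in the construction are those entering the reads $\TReadZSix{v^i_j}{g_1}$ and $\TReadZEight{v^i_j}{g_4}$ of $\CopyGadgetDown{i}{j}$; these reads are the last events of their threads, so they have no $\po$-successor and source no $\rf$. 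Since every $\rf$-edge preserves $\Phase$ and $\Step$, it follows that the unique non-monotonic edge that can appear on any $(\po\cup\rf)$-path is its \emph{final} edge. From this I would extract two facts: first, $\wt\OrderedBefore\wt'$, because $(\wt,\wt')\in\po$ decomposes into immediate edges whose intermediate targets all have $\po$-successors (hence are non-terminal, hence reached monotonically) and whose last target $\wt'$ is a write; second, the prefix $\wt'\LPath{}e_{k-1}$ is monotonic, so $\wt'\OrderedBefore e_{k-1}$. Together with $\rd\equiv\wt$ (same phase/step, as $(\wt,\rd)\in\rf$) this gives $\rd\OrderedBefore\wt'$.

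I would then split on the last edge of $P$. If it is an $\rf$-edge, functionality of $\rf^{-1}$ forces it to be $\wt\LTo{\rf}\rd$, whence $(\wt',\wt)\in(\po\cup\rf)^{*}$; as $\wt'\neq\wt$ this is $(\wt',\wt)\in\hb$, and combined with $(\wt,\wt')\in\po\subseteq\hb$ it closes an $\hb$-cycle, contradicting \cref{lem:wra_ra_sra_completeness_no_cycle_hb_mo}. If the last edge is a \emph{monotonic} edge into $\rd$, then $\wt'\OrderedBefore\rd$; together with $\rd\OrderedBefore\wt'$ this yields $\wt'\equiv\rd\equiv\wt$, so every event of $P$ has the common phase $i$ and step $j$ of $\rd$. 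Then $P$ is confined to a single $(i,j)$-slice, i.e.\ to the events of the gadgets touching step $j$ of phase $i$, and a direct inspection of the finitely many instantiations in \cref{fig:copy_gadget_wra_ra_sra,fig:copy_gadget_down_wra_ra_sra,fig:at_most_one_gadget_wra_ra_sra,fig:at_least_one_gadget_wra_ra_sra} (equivalently, re-running the $\rf$-last-edge argument inside the slice to close an $\hb$-cycle) rules out any such $\wt'\LPath{\hb}\rd$.

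The remaining case is when the last edge is non-monotonic, which by the core observation forces $\rd\in\{\TReadZSix{v^i_j}{g_1},\TReadZEight{v^i_j}{g_4}\}$ and $e_{k-1}$ to be its unique phase-$(i{+}1)$ $\po$-predecessor inside $\CopyGadgetDown{i}{j}$; here $\wt$ is the phase-$i$ write in $t_4$ (resp.\ $t_5$) that $\rd$ reads from, and $\wt'$ a $\po$-later write of the same thread with $\wt\OrderedBefore\wt'\OrderedBefore e_{k-1}$. Since the phases no longer give an immediate contradiction, I would dispatch this case by inspecting the wiring of $\CopyGadgetDown{i}{j}$ together with its overlaps on the shared threads $t_4,t_5,t_6$: the monotonic prefix $\wt'\LPath{\hb}e_{k-1}$ would have to reach the $g_1$ (resp.\ $g_4$) write $e_{k-1}$ starting from $\wt'$, and the explicit $\rf/\po$ structure of the gadget forbids this. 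This last verification — checking that no residual $\hb$-path survives once the gadgets are interleaved on their common threads — is the main obstacle; the monotonicity reduction of the first paragraph is precisely what confines it to a bounded neighborhood of $\rd$ and turns it into a finite, if tedious, case check. Finally, \cref{lem:wra_ra_sra_completeness_inactive_writes} lets me discard the inactive writes at the outset so that the ``non-monotonic edges target terminal reads'' dichotomy holds without exception.
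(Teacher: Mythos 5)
Your overall route is the paper's: normalize the $\hb$-path to a monotonic one, use $\rd\OrderedBefore\wt'$ to pin the path inside a single phase--step slice, and finish by gadget inspection (the explicit split on the last edge, and the $\hb$-cycle argument via functionality of $\rf^{-1}$ and \cref{lem:wra_ra_sra_completeness_no_cycle_hb_mo} when that edge is $\wt\LTo{\rf}\rd$, are nice local touches). But the step you lean on throughout is wrong as stated: $\TReadZSix{v^i_j}{g_1}$ and $\TReadZEight{v^i_j}{g_4}$ are \emph{not} the last events of their threads. They are only the last events of $g_1$ and $g_4$ \emph{within} $\CopyGadgetDown{i}{j}$; these threads continue into $\CopyGadgetDown{i}{j+1}$, so the reads do have $\po$-successors, and a $(\po\cup\rf)$-path can traverse a non-monotonic edge and keep going. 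Hence ``the unique non-monotonic edge on any path is its final edge'' is false as a claim about all paths. What is true --- and what \cref{lem:wra_ra_sra_completeness_monotonic_hb_paths} actually proves --- is the without-loss-of-generality version: an interior occurrence of $\TWriteZTwo{v^{i+1}_j}{g_1}\LTo{\po}\TReadZSix{v^i_j}{g_1}$ can be short-circuited because the path can only leave the read by another $\po$-edge and $\po$ is transitive. Your case analysis only needs the wlog version, so this is repairable, but you must replace the terminal-reads justification (and the analogous justification of $\wt\OrderedBefore\wt'$) by the short-circuiting argument.

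The second gap is in your case (c). There the monotonic prefix runs from $\wt'$, at phase $i$ and step $j$, up to $e_{k-1}=\TWriteZTwo{v^{i+1}_j}{g_1}$, at phase $i+1$ and step $j$; this spans on the order of $n$ phase--step slices, so the monotonicity reduction does \emph{not} confine the verification to a bounded neighborhood of $\rd$, contrary to what you assert. The case is still dispatchable, but by a different observation: to reach a write of $g_1$ (resp.\ $g_4$) the prefix must enter that thread through an $\rf$-edge into one of its reads; the reads $\ReadZFive{v^{\cdot}_{\cdot}}$ (resp.\ $\ReadZSeven{v^{\cdot}_{\cdot}}$) are fed only by the write-only threads $g_2,g_3$ (resp.\ $g_5,g_6$), which no $\hb$-path can enter, while the reads $\ReadZSix{v^{i'}_{j'}}$ with $(i',j')$ earlier than $(i,j)$ are fed by $\po$-predecessors of $\wt$ in $t_4$, and reaching one of those from $\wt'$ would close an $\hb$-cycle. (The paper absorbs this case into its blanket claim that the final $\po$-segment of the decomposition $\wt'\LPath{\hb}\wt''\LTo{\rf}\rd''\LTo{\po}\rd$ is monotonic; that claim rests on the same inspection, so you are right to isolate it --- you just need the correct reason it closes.) With these two repairs your argument coincides with the paper's.
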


To realize \cref{lem:wra_ra_sra_rf_immediate}, we first argue that any path $P\colon\wt'\LPath{\hb}\rd$ contains events of the same phase and step.
Indeed, as no location is ever written and read by the same thread, $P$ has the general form $P\colon \wt'\LPath{\hb^?}\wt''\LTo{\rf}\rd''\LTo{\po^?}\rd$ for some write $\wt''$ and read $\rd''$.
Due to \cref{lem:wra_ra_sra_completeness_monotonic_hb_paths}, the subpath $\wt'\LPath{\hb^?}\wt''$ is monotonic (wlog), while the last two edges of $P$ are also monotonic ($\rf$-edges are monotonic, while non-monotonic $\po$-edges go from writes to reads).
Hence $P$ is monotonic.
On the other hand,  we have $\wt\OrderedBefore\wt'$ (as $(\wt, \wt')\in \po$), while, by construction, $\rd\OrderedBefore\wt$.
Hence $\rd\OrderedBefore\wt'$, and since $P$ is monotonic, it must contain only events of the same phase and step.
In particular, $P$ must be contained in the gadgets in \cref{fig:copy_gadget_wra_ra_sra,fig:copy_gadget_down_wra_ra_sra,fig:at_most_one_gadget_wra_ra_sra,fig:at_least_one_gadget_wra_ra_sra}.
The absence of such paths $P$ can then be established by a careful inspection of these gadgets.

We can now prove the safety of each triplet $(\wt, \rd,\wt')$.
Intuitively, if $\wt'\StrictOrderedBefore\wt$, then we have $(\wt', \wt)\in \mopartial$ by construction.
On the other hand, if $\wt\StrictOrderedBefore \wt'$, \cref{lem:wra_ra_sra_completeness_monotonic_hb_paths} and \cref{lem:wra_ra_sra_rf_immediate} exclude the existence of $\hb$-paths $\wt'\LPath{\hb}\rd$.
Hence, it again suffices to only consider $\hb$-paths $P\colon \wt'\LPath{\hb}\rd$ that are contained in the same gadget. 
Again, a careful inspection of \cref{fig:copy_gadget_wra_ra_sra,fig:copy_gadget_down_wra_ra_sra,fig:at_most_one_gadget_wra_ra_sra,fig:at_least_one_gadget_wra_ra_sra} and the use of \cref{lem:wra_ra_sra_rf_immediate} show that $(\wt, \rd,\wt')$ is indeed safe.

\begin{restatable}{lemma}{lemwrarasracompletenesssafetriplets}\label{lem:wra_ra_sra_completeness_safe_triplets}
Every triplet $(\wt, \rd, \wt')$ is safe.
\end{restatable}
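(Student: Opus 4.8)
The plan is to fix a triplet $(\wt,\rd,\wt')$ on some location $x$ (so $(\wt,\rd)\in\rf$ and $\wt,\wt'$ are conflicting writes) and to case on the relative position of $\wt$ and $\wt'$ under the quasi-order $\OrderedBefore$. Throughout I will use that $(\wt,\rd)\in\rf$ connects events of the same phase and step, so $\rd\OrderedBefore\wt$ (and $\wt\OrderedBefore\rd$). If $(\wt',\rd)\notin\hb$ the triplet is safe by definition, so I may assume an $\hb$-path $P\colon\wt'\LPath{\hb}\rd$ exists and aim for $(\wt',\wt)\in(\hb\cup\mopartial)^+$.

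First I would dispatch the case $\wt'\StrictOrderedBefore\wt$: by the construction of $\mopartial$ (\cref{item:wra_ra_sra_inter_edges}), any two conflicting writes $\wt_1\StrictOrderedBefore\wt_2$ satisfy $(\wt_1,\wt_2)\in(\hb\cup\mopartial)^+$, so $(\wt',\wt)\in(\hb\cup\mopartial)^+$ and the triplet is safe irrespective of $P$. Next I would treat $\wt\StrictOrderedBefore\wt'$ by showing that no path $P$ can exist at all. The key structural fact is that no location is both written and read by the same thread, so $\wt'$ and $\rd$ lie in different threads and $P$ must traverse at least one $\rf$-edge; taking the last one, $P$ decomposes as $\wt'\LPath{\hb^?}\wt''\LTo{\rf}\rd''\LTo{\po^?}\rd$. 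By \cref{lem:wra_ra_sra_completeness_monotonic_hb_paths} the write-to-write prefix may be taken monotonic, the $\rf$-edge is monotonic (same phase and step), and the trailing $\po$-edge emanates from a read and is therefore monotonic as well (non-monotonic $\po$-edges go from writes to reads). Hence $P$ is monotonic, giving $\wt'\OrderedBefore\rd\OrderedBefore\wt$, i.e.\ $\wt'\OrderedBefore\wt$, which contradicts $\wt\StrictOrderedBefore\wt'$. Thus $(\wt',\rd)\notin\hb$ and the triplet is safe.

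The remaining case, where $\wt$ and $\wt'$ share the same phase and step, is where the real work lies. Here the same monotonicity argument shows that any $\hb$-path $\wt'\LPath{\hb}\rd$ is monotonic and, being pinned between $\wt'$ and $\rd\OrderedBefore\wt$ with $\wt,\wt'$ of equal phase and step, consists entirely of events of that one phase and step; consequently $P$ is confined to a single instantiation of one of the four gadgets (or their overlap on shared focal events). I would then discharge safety by a finite inspection of \cref{fig:copy_gadget_wra_ra_sra,fig:copy_gadget_down_wra_ra_sra,fig:at_most_one_gadget_wra_ra_sra,fig:at_least_one_gadget_wra_ra_sra}: for each gadget and each way of resolving its $\rf$-edges, I check that whenever $(\wt',\rd)\in\hb$ we also have $(\wt',\wt)\in(\hb\cup\mopartial)^+$. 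The crucial tool is \cref{lem:wra_ra_sra_rf_immediate}, which rules out $\hb$-paths into $\rd$ from writes $\wt'$ that are $\po$-successors of $\wt$ — precisely the configurations that would otherwise be unsafe.

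I expect this gadget-by-gadget verification, rather than the monotonicity bookkeeping of the first two cases, to be the main obstacle, since one must correctly account for the shared events at which several gadgets of the same phase and step are interleaved, and verify that no resolution of their $\rf$-edges produces an $\hb$-edge $\wt'\LPath{\hb}\rd$ without a corresponding $(\hb\cup\mopartial)^+$ ordering from $\wt'$ to $\wt$.
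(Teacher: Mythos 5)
Your overall decomposition mirrors the paper's: the case $\wt'\StrictOrderedBefore\wt$ is discharged by the construction of $\mopartial$, the case where $\wt$ and $\wt'$ share a phase and step by a finite gadget inspection driven by \cref{lem:wra_ra_sra_rf_immediate}, and the case $\wt\StrictOrderedBefore\wt'$ by ruling out the $\hb$-path altogether. The first and third of these are sound (the third correctly locates the real work, even though you do not carry out the inspection). The gap is in the middle case. You claim that the trailing segment $\rd''\LTo{\po^?}\rd$ of $P\colon\wt'\LPath{\hb^?}\wt''\LTo{\rf}\rd''\LTo{\po^?}\rd$ is monotonic ``because it emanates from a read.'' That inference is unsound: the fact that non-monotonic $\po$-edges go from writes to reads is a statement about \emph{immediate} $\po$-edges, whereas $(\rd'',\rd)$ is an arbitrary $\po$-pair and may pass through an intermediate write before taking a non-monotonic write-to-read step. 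Concretely, in thread $g_1$ of $\CopyGadgetDown{i}{j}$ one has $\TReadZFive{v^{i+1}_j}{g_1}\LTo{\po}\TWriteZTwo{v^{i+1}_j}{g_1}\LTo{\po}\TReadZSix{v^{i}_j}{g_1}$, so the read-to-read $\po$-pair from $\TReadZFive{v^{i+1}_j}{g_1}$ (phase $i+1$) to $\TReadZSix{v^{i}_j}{g_1}$ (phase $i$) is non-monotonic; the analogous situation arises in $g_4$ with $\VarZEight$. Monotonicity of $P$ therefore does not follow, and the contradiction $\wt'\OrderedBefore\wt$ is not yet established.

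The paper closes exactly this hole. From monotonicity of the prefix one gets $\wt'\OrderedBefore\rd''$, hence $\rd\StrictOrderedBefore\rd''$, which forces $\rd$ to be $\TReadZSix{v^i_j}{g_1}$ or $\TReadZEight{v^i_j}{g_4}$ --- the only reads sitting at the target of a non-monotonic $\po$-pair. Since $\VarZSix$ and $\VarZEight$ are each written by a single thread ($t_4$, resp.\ $t_5$), the conflicting writes $\wt$ and $\wt'$ are $\po$-ordered with $(\wt,\wt')\in\po$, and \cref{lem:wra_ra_sra_rf_immediate} then yields $(\wt',\rd)\notin\hb$, the desired contradiction. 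You need to add this step (or an equivalent one) to your second case; as written, your argument does not go through for triplets on $\VarZSix$ and $\VarZEight$.
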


\cref{lem:wra_ra_sra_completeness_no_cycle_hb_mo} and \cref{lem:wra_ra_sra_completeness_safe_triplets} show that $\mopartial$ is indeed minimally coherent, which implies that $\ex=(\E, \po, \rf, \mo)\models \sramm$.
Thus we have the following corollary.

\begin{corollary}\label{cor:wra_ra_sra_lower_threads_completeness}
If $\varphi$ is satisfiable then $\expartial\models \sramm$.
\end{corollary}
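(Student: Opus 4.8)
The plan is to treat this corollary as the capstone of the completeness direction: essentially all the combinatorial work has been pushed into the construction of $\rf$ and $\mopartial$ and into the two preceding lemmas, so what remains is to package these into a single witnessing concrete execution. Concretely, I would argue that the pair $(\rf, \mopartial)$ fixed just above (items \cref{item:wra_ra_sra_intra_edges} and \cref{item:wra_ra_sra_inter_edges}) already satisfies the hypotheses of the linearization result of~\cite{Tunc2023}, and that this immediately yields a total $\mo$ realizing $\expartial$ in $\sramm$.

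First I would verify that $\rf$ is a legal reads-from relation for $\expartial$: by item \cref{item:wra_ra_sra_intra_edges}, every read event has been assigned exactly one write of the same location and value (the edges are read off the gadget figures together with the satisfying assignment), so $\rf^{-1}$ is a total function on reads, as required. Next, I would invoke the two completeness lemmas to certify that $\mopartial$ is \emph{minimally coherent} for $\sramm$: \cref{lem:wra_ra_sra_completeness_no_cycle_hb_mo} supplies condition (i), namely that $(\hb \cup \mopartial)$ is acyclic, and \cref{lem:wra_ra_sra_completeness_safe_triplets} supplies condition (ii), namely that every conflicting triplet $(\wt, \rd, \wt')$ is safe. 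These are exactly the two requirements of minimal coherence stated just before the lemmas.

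With minimal coherence in hand, I would apply the result of~\cite{Tunc2023}, which guarantees that $\mopartial$ can be linearized to a total modification order $\mo \supseteq \mopartial$ such that the extension $\ex = (\E, \po, \rf, \mo)$ satisfies the coherence axioms \SWCoh{} and \RCoh{}, i.e.\ $\ex \models \sramm$. Since $\ex$ shares the same event set $\E$ and program order $\po$ with $\expartial$, it is by definition a concrete extension of $\expartial$; hence $\expartial \models \sramm$, and because $\sramm \mmorder \MemModel$ for every $\MemModel$ in the range $\sramm \mmorder \MemModel \mmorder \wramm$, also $\expartial \models \MemModel$. Combined with the soundness direction (\cref{cor:wra_ra_sra_lower_threads_locations_soundness}), this establishes the equivalence between satisfiability of $\varphi$ and $\expartial \models \MemModel$ underlying the reduction.

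Given that the two minimal-coherence lemmas may be assumed, the only genuinely load-bearing step here is checking that the imported linearization theorem of~\cite{Tunc2023} applies to the precise coherence notion of $\sramm$ (strong-write-coherence together with read-coherence) and to the $\mopartial$ we built. The main obstacle is therefore not new combinatorics but ensuring that the local notion of ``minimal coherence'' used in this section coincides with the hypothesis of that theorem, so that the linearization is legitimate; everything else is bookkeeping about totality of $\rf$ and the $\mmorder$ ordering of the models.
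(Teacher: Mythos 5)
Your proposal matches the paper's argument exactly: the paper also derives this corollary by combining \cref{lem:wra_ra_sra_completeness_no_cycle_hb_mo} and \cref{lem:wra_ra_sra_completeness_safe_triplets} to conclude that $\mopartial$ is minimally coherent, and then invokes the linearization result of~\cite{Tunc2023} to obtain a total $\mo$ with $\ex=(\E,\po,\rf,\mo)\models\sramm$. The additional checks you mention (totality of $\rf$, the $\mmorder$ transfer to all $\MemModel$ in the range) are the same bookkeeping the paper performs in the surrounding text.
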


Together, \cref{cor:wra_ra_sra_lower_threads_locations_soundness} and \cref{cor:wra_ra_sra_lower_threads_completeness} establish the correctness of the reduction, i.e., $\varphi$ is satisfiable iff $\expartial\models\MemModel$ for any memory model $\sramm\mmorder \MemModel\mmorder\wramm$.

\section{Implications and Other Memory Models}\label{SEC:OTHER_MODELS}

Our proof of \cref{thm:lower_ra} is strong enough to yield hardness on other popular memory models across different domains.
In this section, we explore its implications.

\Paragraph{Causal Consistency models.}
In a distributed setting, consistency commonly captures the concept of \emph{causality}.
Three of the most standard causal models are the basic Causal Consistency ($\ccmm$), Causal Convergence ($\cvmm$), and Causal Memory ($\cmmm$)~\cite{Bouajjani2017}.
It was recently shown that $\ccmm$ coincides with $\wramm$ while $\cvmm$ coincides with $\sramm$~\cite{Lahav:2022}. 
Thus, \cref{thm:lower_ra} implies $\NP$-completeness for all models between $\cvmm$ and $\ccmm$.
In \cref{SUBSEC:CAUSAL_MEMORY} we also establish $\NP$-completeness for $\cmmm$, by extending the proof of \cref{thm:lower_ra}, thereby completing \cref{thm:lower_cc}.

\Paragraph{Hardware memory models.}
Next, we turn our attention to some popular hardware memory models, namely, for the POWER and x86-TSO architectures, as well as PSO.
We show that \cref{thm:lower_ra} implies $\NP$-completeness for POWER,
but consistency checks for TSO/PSO run in polynomial time.

\Paragraph{The fully Relaxed model.}
Finally, we consider the Relaxed model which does not require $(\po\cup\rf)$-acyclicity, 
and remark that this brings the problem in polynomial time.
Interestingly, $\rlxmm$ is the only non-multi-copy atomic model in our list for which consistency testing is tractable.

\subsection{Implications for Causal Memory}\label{SUBSEC:CAUSAL_MEMORY}

Here we prove the case (ii) of \cref{thm:lower_cc}, i.e., the hardness of bounded consistency testing for any memory model $\MemModel$ with $\ccmm\mmorder \MemModel\mmorder \cmmm$.
Instead of performing a separate reduction, we reuse our reduction from \cref{SEC:LOWER_THREADS_LOCATIONS_WRA_RA_SRA}.
Let $\varphi=\{\Clause_i\}_{i\in[m]}$ be a Boolean formula over $n$ variables $\{s_j\}_{j\in[n]}$ and $m$ clauses of the form $\Clause_i=(s_j, s_k, s_{\ell})$, for which we have to solve Monotone 3SAT.
Moreover, let $\expartial=(\E, \po)$ be the abstract execution as constructed in \cref{SEC:LOWER_THREADS_LOCATIONS_WRA_RA_SRA}.
Since $\wramm$ coincides with $\ccmm$~\cite{Lahav:2022}, we have the following soundness corollary.

\begin{corollary}\label{cor:cm_soundness}
If $\expartial\models \ccmm$ then $\varphi$ is satisfiable.
\end{corollary}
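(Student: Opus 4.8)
The plan is to obtain this as an immediate consequence of the model equivalence cited just above the statement, combined with the soundness result already proved for the release-acquire range. Since \citet{Lahav:2022} establish that $\ccmm$ coincides with $\wramm$, the two models have exactly the same set of consistent executions. In particular, for the specific abstract execution $\expartial$ built in \cref{SEC:LOWER_THREADS_LOCATIONS_WRA_RA_SRA}, we have $\expartial\models\ccmm$ if and only if $\expartial\models\wramm$.

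First I would invoke this equivalence to convert the hypothesis: assuming $\expartial\models\ccmm$, the coincidence $\ccmm=\wramm$ yields $\expartial\models\wramm$. Then I would apply \cref{cor:wra_ra_sra_lower_threads_locations_soundness}, which states precisely that $\expartial\models\wramm$ implies that $\varphi$ is satisfiable by a 1-in-3 assignment. Chaining these two steps closes the argument, so the entire proof is a single substitution followed by one citation.

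There is essentially no technical obstacle here, since all the combinatorial work has already been discharged: the soundness lemmas \cref{lem:wra_ra_sra_soundness_copy}, \cref{lem:wra_ra_sra_soundness_at_most_one}, and \cref{lem:wra_ra_sra_soundness_at_least_one}, assembled into \cref{cor:wra_ra_sra_lower_threads_locations_soundness}, already extract the satisfying assignment from any $\wramm$-consistent extension of $\expartial$, and the equivalence $\ccmm=\wramm$ is imported wholesale from prior work. The only point worth a sentence of care is to confirm that the reduction instance $\expartial$ here is literally the one from \cref{SEC:LOWER_THREADS_LOCATIONS_WRA_RA_SRA} (which the preamble to the corollary states explicitly), so that \cref{cor:wra_ra_sra_lower_threads_locations_soundness} applies verbatim without any re-derivation of the gadget-forcing arguments.

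\begin{proof}
Suppose $\expartial\models\ccmm$. Since $\ccmm$ coincides with $\wramm$~\cite{Lahav:2022}, it follows that $\expartial\models\wramm$. By \cref{cor:wra_ra_sra_lower_threads_locations_soundness}, $\varphi$ is satisfiable.
\end{proof}
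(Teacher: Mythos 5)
Your proposal is correct and matches the paper exactly: the corollary is stated immediately after the paper notes that $\ccmm$ coincides with $\wramm$~\cite{Lahav:2022} and that $\expartial$ is the execution from \cref{SEC:LOWER_THREADS_LOCATIONS_WRA_RA_SRA}, so the intended argument is precisely the substitution followed by an appeal to \cref{cor:wra_ra_sra_lower_threads_locations_soundness}. Nothing further is needed.
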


Thus, to complete the theorem, it remains to show that if $\expartial\models \cmmm$ then $\varphi$ is satisfiable.
Towards this, we next formalize $\cmmm$ and then prove the statement.

\Paragraph{Causal Memory.}
Causal memory is similar to Causal Consistency but further requires that each thread has a locally-consistent view of the order which different writes have been executed~\cite{Ahamad1995,Bouajjani2017}.
This is made formal by introducing one additional relation for each thread, called the \emph{observed-before} relation $\ob{}$.

\Paragraph{The observed-before relation $\ob{}$.}
Given an event $\event$, the \emph{observed-before relation}\footnote{Other works refer to this relation as ``happened-before'' for $\event$. We avoid this term here to not confuse it with $\hb$.} for $\event$ is the smallest transitive relation $\ob{\event}\subseteq \E\times \E$ with the following properties.
\begin{compactenum}
\item For every $(\event_1, \event_2)\in \hb$ such that $(\event_i, \event)\in \hb$ for each $i\in[2]$, we have $(\event_1, \event_2)\in \ob{\event}$.
\item For every conflicting triplet $(\wt, \rd, \wt')$ such that 
(i)~$(\wt',\rd)\in \ob{\event}$ and
(ii)~$(\rd, \event)\in \po^?$,
we have $(\wt', \wt)\in \ob{\event}$.
\end{compactenum}
Intuitively, when $t$ executed $\rd$, it must have observed $\wt$ after $\wt'$, so that $\rd$ indeed obtained its value from $\wt$.
The $\ob{\event}$ relation specifies that this ordering cannot change later when $t$ executes $\event$.
Notice the fixpoint style of the above definition.
As we order $(\wt_1, \wt_2)\in \ob{\event}$ (and since the relation is transitive and contains $\hb$), 
more and more write-read pairs satisfy property (i)~$(\wt', \rd)\in \ob{\event}$, triggering the addition of new orderings $(\wt', \wt)\in \ob{\event}$.
For any two events $\event_1$, $\event_2$ with $(\event_1, \event_2)\in \po$, we have $\ob{\event_1}\subseteq \ob{\event_2}$, i.e., $\ob{}$ grows monotonically as we go downwards in each thread.
The observed-before relation for a thread $t$ is defined as $\ob{t}=\ob{\event^{\max}}$, where $\event^{\max}$ is the $\po$-maximal event of $t$.
The new axiom requires that $\ob{t}$ is irreflexive~\cite{Bouajjani2017}.

\begin{compactitem}
\item $\ob{t}$ is irreflexive for each thread $t$ \hfill (\OB{})    
\end{compactitem}

In turn, $\cmmm$ is equal to $\wramm$ with \OB{} as an extra axiom.

\begin{compactitem}
\item $(\text{\PORF{}})\land(\text{\WRCoh{}})\land (\text{\OB{}})$ \hfill [$\cmmm$]
\end{compactitem}


\begin{figure}
\centering
\begin{subfigure}[b]{0.20\textwidth}
\centering
\scalebox{0.9}{
\begin{tikzpicture}[yscale=1]
\def\ystep{1.6}
\def\xstep{2}
\def\ybias{0.6}
	\node (t10) at (0,0*\ystep+\ybias)  {\large $t_1$};
  \node (t11) at (0,0*\ystep)  {$\wt(x)$};
  \node (t12) at (0,-1*\ystep) {$\rd(x)$};
	\node (t20) at (1*\xstep,0*\ystep+\ybias)  {\large $t_2$};
  \node (t21) at (1*\xstep,0*\ystep) {$\wt(x)$};
  \node (t22) at (1*\xstep,-1*\ystep) {$\rd(x)$};
  \draw[po] (t11) to (t12);
  \draw[po] (t21) to (t22);
\draw[rf,bend left=0] (t11) to node[above,pos=0.9, sloped]{$\rf$} (t22);
\draw[rf,bend left=0] (t21) to node[above,pos=0.9, sloped]{$\rf$} (t12);
\draw[ob,bend left=10] (t11) to node[above,pos=0.5, sloped]{$\ob{t_1}$} (t21);
\draw[ob,bend left=10] (t21) to node[below,pos=0.5, sloped]{$\ob{t_2}$} (t11);
\end{tikzpicture} 
}  
\caption{$\cmmm$-consistent}
\label{subfig:cm_not_ra}
\end{subfigure}
\hfill
\begin{subfigure}[b]{0.3\textwidth}
\centering
\scalebox{0.9}{
\begin{tikzpicture}[yscale=1]
\def\ystep{0.8}
\def\xstep{2}
\def\ybias{0.6}
	\node (t10) at (0,0*\ystep+\ybias)  {\large $t_1$};
  \node (t11) at (0*\xstep,0*\ystep)  {$\wt(z)$};
  \node (t12) at (0*\xstep,-1*\ystep) {$\wt(z)$};
  \node (t13) at (0*\xstep,-2*\ystep) {$\wt(x)$};
  \node (t14) at (0*\xstep,-3*\ystep) {$\wt(y)$};
	\node (t20) at (1*\xstep,0*\ystep+\ybias)  {\large $t_2$};
  \node (t21) at (1*\xstep,0*\ystep) {$\wt(x)$};
  \node (t22) at (1*\xstep,-1*\ystep) {$\rd(z)$};
  \node (t23) at (1*\xstep,-2*\ystep) {$\rd(y)$};
  \node (t24) at (1*\xstep,-3*\ystep) {$\rd(x)$};
  \draw[po] (t11) to (t12) to (t13) to (t14);
  \draw[po] (t21) to (t22) to (t23) to (t24);
\draw[rf,bend left=0] (t11) to node[above,pos=0.5,]{$\rf$} (t22);
\draw[rf,bend left=0] (t14) to node[above,pos=0.5,]{$\rf$} (t23);
\draw[rf,out=-60, in=60] (t21) to node[right,pos=0.5,]{$\rf$} (t24);
\draw[ob,bend left=0] (t13) to node[above,pos=0.5, sloped]{$\ob{t_2}$} (t21);
\draw[ob, out=160, in=-160, looseness=3] (t12) to node[left,pos=0.5]{$\ob{t_2}$} (t11);
\end{tikzpicture} 
}  
\caption{$\cmmm$-inconsistent}
\label{subfig:sra_not_cm}
\end{subfigure}
\hfill
\begin{subfigure}[b]{0.4\textwidth}
\centering
\scalebox{0.9}{
\begin{tikzpicture}[yscale=1]
\def\ystep{1.2}
\def\xstep{1.8}
\def\ybias{0.6}

\node (t10) at (0,0*\ystep+\ybias)  {\large $t_1$};
\node (t11) at (0*\xstep,0*\ystep)  {};
\node (t12) at (0*\xstep,-1*\ystep)  {$\wt'(x)$};
\node (t13) at (0*\xstep,-2*\ystep)  {};

\node (t20) at (1*\xstep,0*\ystep+\ybias)  {\large $t_2$};
\node (t21) at (1*\xstep,0*\ystep)  {};
\node (t22) at (1*\xstep,-1*\ystep)  {$\wt(x)$};
\node (t23) at (1*\xstep,-2*\ystep)  {};

\node (t30) at (2*\xstep,0*\ystep+\ybias)  {\large $t_3$};
\node (t31) at (2*\xstep,0*\ystep)  {};
\node (t3m) at (2*\xstep,-0.3*\ystep)  {};
\node (t32) at (2*\xstep,-1*\ystep)  {$\rd(x)$};
\node (t33) at (2*\xstep,-2*\ystep)  {};

\draw[po] (t11) to (t12) to (t13);
\draw[po] (t21) to (t22) to (t23);
\draw[po] (t31) to (t32) to (t33);

\draw[rf,bend left=0] (t22) to node[above,pos=0.5,]{$\rf$} (t32);
\draw[ob,bend left=0] (t12) to node[above,pos=0.5,]{$\ob{t_3}$} (t22);
\draw[hb,out=45, in=-180] (t12) to node[above,pos=0.8,]{$\hb$} (t3m);
\end{tikzpicture}   
} 
\caption{\label{subfig:ob}
The argument of \cref{lem:cm_ob}
}
\end{subfigure}
\caption{\label{fig:cm}
(\subref{subfig:cm_not_ra}) An execution consistent in $\cmmm$, as each of $\ob{t_1}$ and $\ob{t_2}$ is acyclic, but inconsistent in $\sramm$.
(\subref{subfig:sra_not_cm}) An execution inconsistent in $\cmmm$, as $\ob{t_2}$ creates a cycle, but consistent in $\sramm$.
(\subref{subfig:ob}) Illustration of the argument behind \cref{lem:cm_ob}.
Since $\wt'(x)$ reaches an earlier event in $t_3$ than $\wt(x)$, the edge $\wt'(x)\LTo{\ob{t_3}}\wt(x)$ does not create a new path from $\wt'(x)$ to $t_3$.
}
\end{figure}

Observe that $\wramm\mmorder \cmmm$, but $\cmmm$ is incomparable with $\ramm$/$\sramm$, i.e., $\cmmm$ allows executions that are inconsistent in $\ramm$/$\sramm$ and vice versa.
See \cref{subfig:cm_not_ra} and \cref{subfig:sra_not_cm} for illustrations.

Next, we prove the completeness of the construction, i.e., if $\varphi$ is satisfiable then $\expartial\models \cmmm$.
Consider the reads-from relation $\rf$ and the partial modification order $\mopartial$ exactly as constructed in the completeness argument of \cref{SEC:LOWER_THREADS_LOCATIONS_WRA_RA_SRA} (i.e., following the gadgets in \cref{fig:copy_gadget_wra_ra_sra,fig:copy_gadget_down_wra_ra_sra,fig:at_most_one_gadget_wra_ra_sra,fig:at_least_one_gadget_wra_ra_sra}).
Let $\ex=(\E, \po, \rf, \mo)$ be the execution witnessing the $\sramm$-consistency of $\expartial$ according to \cref{cor:wra_ra_sra_lower_threads_completeness}.
Since $\sramm$ satisfies the \PORF{} and \WRCoh{} axioms, we only need to argue that $\ex$ also satisfies \OB{} to conclude that $\ex\models \cmmm$.
For this, we have to establish some additional lemmas.

Our first lemma stems from \cref{lem:wra_ra_sra_rf_immediate} and states an important property of $\rf$:~for every pair $(\wt,\rd)\in \rf$, $\wt$ has no $\hb$-path to $\po$-predecessors of $\rd$.
In other words, the first event of the thread of $\rd$ that $\wt$ can reach by means of an $\hb$-path is $\rd$ itself via the $\rf$-edge $\wt\LTo{\rf}\rd$.

\begin{restatable}{lemma}{lemcmrfimmediate}\label{lem:cm_rf_immediate}
For every $(\wt, \rd)\in \rf$ and event $\event$ such that $(\event, \rd)\in \po$, we have $(\wt, \event)\not \in \hb$.
\end{restatable}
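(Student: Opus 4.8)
The plan is to argue by contradiction, following the same strategy as \cref{lem:wra_ra_sra_rf_immediate}: assume $(\wt,\event)\in\hb$ for some $\event$ with $(\event,\rd)\in\po$, and show that $\wt$, $\rd$, and $\event$ are forced to share a common phase and step, so that the witnessing $\hb$-path lies inside a single instantiation of one of the gadgets, where a direct inspection excludes it. First I would record the phase/step bookkeeping. Since $(\wt,\rd)\in\rf$, the construction gives $\Phase(\wt)=\Phase(\rd)$ and $\Step(\wt)=\Step(\rd)$, hence $\rd\OrderedBefore\wt$; and since $(\event,\rd)\in\po$ we obtain (generically) $\event\OrderedBefore\rd$. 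Thus it suffices to show that any $\hb$-path $P\colon\wt\LPath{\hb}\event$ forces $\wt\OrderedBefore\event$: combined with $\rd\OrderedBefore\wt$ and $\event\OrderedBefore\rd$ this yields the quasi-order cycle $\rd\OrderedBefore\wt\OrderedBefore\event\OrderedBefore\rd$, so all three events coincide in phase and step, and since $P$ is monotonic with equal endpoints it is constant in phase and step, confining it to one gadget of \cref{fig:copy_gadget_wra_ra_sra,fig:copy_gadget_down_wra_ra_sra,fig:at_most_one_gadget_wra_ra_sra,fig:at_least_one_gadget_wra_ra_sra}, none of which contains such a path.

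To establish $\wt\OrderedBefore\event$ I would exploit that $\wt$ and $\event$ lie in different threads: no location is both written and read by the same thread, so there is no intra-thread $\rf$ and $\wt\notin\tid(\rd)=\tid(\event)$. Hence $P$ enters $\tid(\rd)$ via an $\rf$-edge, and taking the \emph{last} such entry decomposes $P$ as $\wt\LPath{\hb}\wt''\LTo{\rf}\rd''\LPath{\po}\event$, where the suffix after $\rd''$ stays inside $\tid(\rd)$ (leaving would require an outgoing $\rf$ and a subsequent re-entry, contradicting maximality) and therefore, by transitivity of $\po$ on a thread, compresses to a single edge $\rd''\LTo{\po^?}\event$ originating at the \emph{read} $\rd''$. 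The two monotonicity facts I would invoke are that $\rf$-edges preserve phase and step, and that the only non-monotonic $\po$-edges go from writes into reads. Consequently the compressed edge $\rd''\LTo{\po^?}\event$ is monotonic (it starts at a read), the edge $\wt''\LTo{\rf}\rd''$ is monotonic, and the write-to-write prefix $\wt\LPath{\hb}\wt''$ is monotonic by \cref{lem:wra_ra_sra_completeness_monotonic_hb_paths}; chaining gives $\wt\OrderedBefore\wt''\OrderedBefore\rd''\OrderedBefore\event$. (When $\event$ is itself a write the decomposition simplifies and \cref{lem:wra_ra_sra_completeness_monotonic_hb_paths} applies directly.)

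The main obstacle is the step $\event\OrderedBefore\rd$, which can fail exactly when $\event\LTo{\po}\rd$ is one of the backward $\po$-edges entering the special reads $\TReadZSix{g_1}{v^i_j}$ or $\TReadZEight{g_4}{v^i_j}$ of $\CopyGadgetDown{i}{j}$ --- the only non-monotonic $\po$-edges, apart from those in the ignored threads $h_c,p,q$ that \cref{lem:wra_ra_sra_completeness_inactive_writes} lets us discard. For these finitely many cases the phase/step collapse no longer applies, so I would dispatch them by hand: here $\rd$ reads from a phase-$i$ write $\wt$ (in $t_4$ or $t_5$) while every $\po$-predecessor $\event$ of $\rd$ has phase $i+1$, and a direct trace in \cref{fig:copy_gadget_down_wra_ra_sra} shows that $\wt$ can $\hb$-reach $\rd$ and its $\po$-successors but none of its $\po$-predecessors, so $(\wt,\event)\notin\hb$. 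Combining the generic monotonicity argument with this residual inspection yields the lemma.
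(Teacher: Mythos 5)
Your strategy diverges substantially from the paper's, which disposes of this lemma in three lines by reduction: since every write in the construction is read at most once and always by a read in a different thread, a path $P\colon\wt\LPath{\hb}\event$ either begins with the $\rf$-edge $\wt\LTo{\rf}\rd$ --- impossible, since together with $(\event,\rd)\in\po$ this closes a $(\po\cup\rf)$-cycle --- or else passes through a write $\wt'$ with $(\wt,\wt')\in\po$ and $(\wt',\rd)\in\hb$ (append the edge $\event\LTo{\po}\rd$), which contradicts \cref{lem:wra_ra_sra_rf_immediate} outright. That reduction never needs to reason about $\po$-predecessors of $\rd$ or about monotonicity at all, and in particular is immune to the special backward edges of the copy-down gadgets.

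Your direct monotonicity argument, by contrast, has two concrete gaps. First, the claim that the compressed suffix $\rd''\LTo{\po^?}\event$ is monotonic ``because it originates at a read'' does not follow: monotonicity of a transitive $\po$-pair depends on the intermediate immediate edges, not on the type of its source. If the thread segment from $\rd''$ to $\event$ crosses one of the non-monotonic write-to-read edges, then $\rd''\not\OrderedBefore\event$ even though $\rd''$ is a read; this happens exactly when $\event$ is itself one of the special reads, e.g.\ $\rd=\TReadZFive{v^{i+1}_{j+1}}{g_1}$ with $\event=\TReadZSix{v^{i}_{j}}{g_1}$, a legitimate instance of the lemma ($\event$ is a $\po$-predecessor of $\rd$, and $\event\OrderedBefore\rd$ holds, so it falls in your ``generic'' branch). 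Your case split only treats the situation where $\rd$ is a special read and never addresses the one where $\event$ is. Second, even in the case you do treat, the set of problematic predecessors is larger than stated: the $\po$-predecessors $\event$ of $\TReadZSix{v^{i}_{j}}{g_1}$ with $\event\not\OrderedBefore\rd$ include not only the two phase-$(i{+}1)$ events of the same gadget instance but also the phase-$i$, step-$j'$ events (for every $j'>j$) contributed to thread $g_1$ by the earlier instances $\CopyGadgetDown{i-1}{j'}$, so the residual verification spans many gadget instances and cannot be discharged by ``a direct trace'' in a single figure. Both complications evaporate under the paper's reduction, which is the route I would take.
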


Next, we define a ``one-hop'' variant of $\ob{}$.
Given an event $\event$, the \emph{one-hop observed-before relation} for $\event$ is the smallest transitive relation $\obOne{\event}\subseteq \E\times \E$ with the following properties.
\begin{compactenum}
\item For every $(\event_1, \event_2)\in \hb$ such that $(\event_i, \event)\in \hb$ for each $i\in[2]$, we have $(\event_1, \event_2)\in \obOne{\event}$.
\item For every event $\event$ and conflicting triplet $(\wt, \rd, \wt')$ such that 
(i)~$(\wt',\rd)\in \hb$ and
(ii)~$(\rd, \event)\in \po^?$,
we have $(\wt', \wt)\in \obOne{\event}$.
\end{compactenum}
Contrasting $\obOne{\event}$ to $\ob{\event}$, the only difference is in condition (2i):~$\ob{\event}$ checks whether $(\wt',\rd)\in \ob{\event}$, while $\obOne{\event}$ checks the weaker condition $(\wt',\rd)\in \hb$.
Thus $\obOne{\event}$ does not have the fixpoint style of $\ob{\event}$.
Similarly to $\ob{t}$, we let $\obOne{t}=\obOne{\event^{\max}}$, where $\event^{\max}$ is the $\po$-maximal event of thread $t$.

Our next lemma states that for our execution $\ex$,
$\ob{\event}$ coincides with $\obOne{\event}$.
In other words, $\ob{\event}$ reaches a fixpoint after only one iteration.
This observation stems from \cref{lem:cm_rf_immediate}.
Intuitively, since for each triplet $(\wt, \rd, \wt')$, $\wt$ cannot $\hb$-reach any $\po$-predecessor of $\rd$,
traversing an edge $\wt'\LTo{\ob{\event}}\wt$ cannot lead to any events of the thread of $\rd$ that weren't already reachable via the $\hb$-path $\wt'\LPath{\hb}\rd$ that made us insert $(\wt',\wt)\in \ob{\event}$ in the first place (see~\cref{subfig:ob}).
Hence, adding such an ordering $(\wt',\wt)\in \ob{\event}$ cannot lead to further firings of condition (2i) of $\ob{e}$.
Formally, we have the following.

\begin{restatable}{lemma}{lemcmob}\label{lem:cm_ob}
For every thread $t$, we have $\obOne{t}=\ob{t}$.
\end{restatable}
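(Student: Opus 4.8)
The plan is to prove the stronger statement that $\ob{\event}=\obOne{\event}$ for every event $\event$; the lemma then follows by instantiating $\event$ with the $\po$-maximal event $\event^{\max}$ of each thread $t$, since $\ob{t}=\ob{\event^{\max}}$ and $\obOne{t}=\obOne{\event^{\max}}$. Both relations are, by definition, the smallest transitive relations closed under their respective rules (1) and (2), the only difference being that rule (2i) of $\ob{\event}$ tests $(\wt',\rd)\in\ob{\event}$ whereas that of $\obOne{\event}$ tests the weaker $(\wt',\rd)\in\hb$. I would establish the two inclusions separately, using the standard principle that if a transitive relation $R$ satisfies the closure rules defining a least relation $L$, then $L\subseteq R$.

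For $\obOne{\event}\subseteq\ob{\event}$ (the easy direction) I would show that $\ob{\event}$ is closed under the generating rules of $\obOne{\event}$. Rule (1) is literally identical, so it remains to verify rule (2) of $\obOne{\event}$: given a conflicting triplet $(\wt,\rd,\wt')$ with $(\wt',\rd)\in\hb$ and $(\rd,\event)\in\po^?$, I must place $(\wt',\wt)$ in $\ob{\event}$. When $(\rd,\event)\in\po$ strictly, both $\wt'$ and $\rd$ lie strictly $\hb$-before $\event$ (since $\wt'\LTo{\hb}\rd\LTo{\hb}\event$), so rule (1) of $\ob{\event}$ yields $(\wt',\rd)\in\ob{\event}$, and then rule (2) of $\ob{\event}$ gives $(\wt',\wt)\in\ob{\event}$. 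The only boundary case is $\rd=\event$, which I would dispatch directly from the coherence of the constructed execution $\ex$ (\WRCoh{}), as $\wt,\wt'$ are writes to the read's location with $(\wt',\event)\in\hb$ and $(\wt,\event)\in\rf$.

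The substantive direction is $\ob{\event}\subseteq\obOne{\event}$. Here I would show that $\obOne{\event}$ is closed under rule (2) of $\ob{\event}$, i.e.\ that $(\wt',\rd)\in\obOne{\event}$ together with $(\rd,\event)\in\po^?$ forces $(\wt',\wt)\in\obOne{\event}$ for the triplet $(\wt,\rd,\wt')$. By rule (2) of $\obOne{\event}$ it suffices to prove the following \emph{key claim} $(\star)$: every $\obOne{\event}$-edge whose target is a read $\rd$ with $(\rd,\event)\in\po^?$ is in fact an $\hb$-edge. Granting $(\star)$, closure is immediate. I would prove $(\star)$ by induction along a witnessing path $z_0\LTo{}z_1\LTo{}\cdots\LTo{}z_k=\rd$ in the transitive closure, whose edges are the base edges of $\obOne{\event}$: rule-(1) edges (which are $\hb$-edges inside the strict $\hb$-past of $\event$) and rule-(2) edges $(\wt',\wt)$ (which always target a \emph{write}). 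Since $\rd$ is a read, the final edge $z_{k-1}\LTo{\hb}\rd$ must be a rule-(1) edge; I then push this backwards, the only nontrivial step being a rule-(2) edge $(z_{i-1},z_i)=(\wt'',\wt_0)$ arising from a read $\rd_0$ with $(\wt_0,\rd_0)\in\rf$, $(\wt'',\rd_0)\in\hb$, and inductive hypothesis $\wt_0\LTo{\hb}\rd$.

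The crux, and the step I expect to be the main obstacle, is exactly this rule-(2) case: I must argue that the edge $\wt''\LTo{}\wt_0$ creates no $\hb$-reachability into reads-before-$\event$ beyond what $\wt''\LTo{\hb}\rd_0$ already supplies. The central tool is \cref{lem:cm_rf_immediate}: since $(\wt_0,\rd_0)\in\rf$, $\wt_0$ has no $\hb$-path to any strict $\po$-predecessor of $\rd_0$, so inside the thread of $\rd_0$ every event $\hb$-reachable from $\wt_0$ is $\po$-after-or-equal $\rd_0$ and hence already $\hb$-reachable from $\wt''$ (via $\wt''\LTo{\hb}\rd_0\LTo{\po^?}\cdot$). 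This settles $(\star)$ whenever the target $\rd$ lies in the thread of $\rd_0$. The remaining difficulty is cross-thread reachability --- when the $\hb$-path $\wt_0\LPath{\hb}\rd$ leaves $\rd_0$'s thread --- which I would resolve using the monotonic, phase/step-respecting structure of $\ex$: by \cref{lem:wra_ra_sra_completeness_monotonic_hb_paths} all $\hb$-paths between writes are monotonic and all $\rf$-edges preserve phase and step, forcing $\rd_0\OrderedBefore\rd$ and thereby allowing me to recover a genuine $\hb$-path $\wt''\LPath{\hb}\rd$ and close the induction. With $(\star)$ in hand both inclusions hold, giving $\ob{\event}=\obOne{\event}$ and hence $\ob{t}=\obOne{t}$.
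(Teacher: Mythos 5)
Your proof is correct and follows essentially the same route as the paper's: the substantive content in both is that any $\obOne{\event}$-path ending in a read $\rd$ with $(\rd,\event)\in\po^?$ collapses to an $\hb$-path, obtained by eliminating the rule-(2) edges via \cref{lem:cm_rf_immediate} (the paper organizes this as a minimal-counterexample argument on the number of non-$\hb$ edges rather than your backward induction, but the replacement step $\wt''\LPath{\hb}\rd_0\LTo{\po^?}\rd$ is identical). One correction: the ``cross-thread'' case you single out as the main obstacle is vacuous, because rule (2) of $\obOne{\event}$ only fires for triplets whose read $\rd_0$ satisfies $(\rd_0,\event)\in\po^?$, so $\rd_0$ and the target read $\rd$ both lie in thread $t$; your same-thread argument therefore already covers every case, and the appeal to \cref{lem:wra_ra_sra_completeness_monotonic_hb_paths} (which would in any event only yield $\rd_0\OrderedBefore\rd$, not an $\hb$-path) should simply be dropped. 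A minor caveat on the easy inclusion: \WRCoh{} does not by itself dispatch the boundary case $\rd=\event$ (it would additionally require $(\wt,\wt')\in\hb$), but that direction is never used downstream and the paper leaves it implicit as well.
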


Finally, observe that whenever we add $(\wt', \wt) \in\obOne{\event}$, we have $(\wt', \rd)\in \hb$.
Due to \cref{lem:wra_ra_sra_completeness_safe_triplets}, the triplet $(\wt, \rd, \wt')$ is safe, thus $(\wt', \wt)\in (\hb\cup\mopartial)^+$.
Hence, the acyclicity of $\ob{t}=\obOne{t}$ follows from the acyclicity of $(\hb\cup \mopartial)^+$ (\cref{lem:wra_ra_sra_completeness_no_cycle_hb_mo}).
We thus have the following lemma, which, together with \cref{cor:cm_soundness}, completes the proof case (ii) of \cref{thm:lower_cc}.

\begin{restatable}{lemma}{lemcmcompleteness}\label{lem:cm_completeness}
If $\varphi$ is satisfiable, then $\expartial\models \cmmm$.
\end{restatable}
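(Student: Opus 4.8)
The plan is to reuse the $\rf$ and $\mopartial$ constructed in \cref{SEC:LOWER_THREADS_LOCATIONS_WRA_RA_SRA} and to argue that the same execution $\ex=(\E,\po,\rf,\mo)$ that witnesses $\sramm$-consistency via \cref{cor:wra_ra_sra_lower_threads_completeness} additionally satisfies the \OB{} axiom. Since $\cmmm$ is precisely $\wramm$ augmented with \OB{}, and since $\sramm\mmorder\wramm$ guarantees that $\ex$ already meets \PORF{} and \WRCoh{}, the entire burden reduces to proving that $\ob{t}$ is irreflexive for every thread $t$.

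First I would replace $\ob{t}$ by its one-hop variant $\obOne{t}$ using \cref{lem:cm_ob}, so that it suffices to show $\obOne{t}$ is irreflexive. The advantage of $\obOne{}$ is that its defining clauses no longer refer to $\ob{}$ recursively, so each generating edge can be bounded directly against the relation $(\hb\cup\mopartial)^+$, whose acyclicity is already established in \cref{lem:wra_ra_sra_completeness_no_cycle_hb_mo}.

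The core step is to show $\obOne{t}\subseteq(\hb\cup\mopartial)^+$. I would inspect the two clauses generating $\obOne{\event}$, for $\event$ the $\po$-maximal event of $t$. Clause (1) contributes only pairs $(\event_1,\event_2)\in\hb$, which lie trivially in $(\hb\cup\mopartial)^+$. Clause (2) contributes a pair $(\wt',\wt)$ precisely when there is a conflicting triplet $(\wt,\rd,\wt')$ with $(\wt',\rd)\in\hb$ and $(\rd,\event)\in\po^?$. The key observation is that the precondition $(\wt',\rd)\in\hb$ is exactly the hypothesis under which \cref{lem:wra_ra_sra_completeness_safe_triplets} yields safety of the triplet, i.e.\ $(\wt',\wt)\in(\hb\cup\mopartial)^+$. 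Hence every generating edge of $\obOne{\event}$ lies in the transitive relation $(\hb\cup\mopartial)^+$; since $\obOne{\event}$ is the smallest transitive relation containing these edges, we conclude $\obOne{t}=\obOne{\event}\subseteq(\hb\cup\mopartial)^+$. Irreflexivity of $(\hb\cup\mopartial)^+$ (\cref{lem:wra_ra_sra_completeness_no_cycle_hb_mo}) then forces $\obOne{t}$, and therefore $\ob{t}$, to be irreflexive. As this holds for every $t$, the axiom \OB{} is satisfied, so $\ex\models\cmmm$ and thus $\expartial\models\cmmm$.

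The main obstacle is not this final assembly---which is a short containment argument---but the two lemmas it rests on. The delicate point is \cref{lem:cm_ob}: that the fixpoint defining $\ob{\event}$ closes after a single iteration, so that $\ob{t}$ collapses to $\obOne{t}$. This in turn depends on \cref{lem:cm_rf_immediate} (itself drawn from \cref{lem:wra_ra_sra_rf_immediate}), which guarantees that a write $\wt$ with $(\wt,\rd)\in\rf$ cannot $\hb$-reach any $\po$-predecessor of $\rd$; intuitively, orienting a new edge $\wt'\LTo{\ob{\event}}\wt$ exposes no earlier event of $\rd$'s thread than was already reachable along $\wt'\LPath{\hb}\rd$, so condition (2i) never fires afresh (cf.\ \cref{subfig:ob}). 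Once that collapse is in hand, matching the $\hb$-precondition of $\obOne{}$ to the safe-triplet hypothesis is immediate, and the proof goes through.
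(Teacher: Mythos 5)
Your proposal is correct and follows essentially the same route as the paper: reuse the $\sramm$-witness from \cref{cor:wra_ra_sra_lower_threads_completeness}, reduce the claim to \OB{}, collapse $\ob{t}$ to $\obOne{t}$ via \cref{lem:cm_ob}, and bound every generating edge of $\obOne{t}$ inside $(\hb\cup\mopartial)^+$ using the safe-triplet lemma, so that irreflexivity follows from \cref{lem:wra_ra_sra_completeness_no_cycle_hb_mo}. No gaps.
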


\subsection{Implications for POWER}\label{SUBSEC:POWER}
The memory model of the $\powermm$ architecture is defined on load, store, atomic read-modify-write memory accesses, and various types of fences. 
$\powermm$ orders memory accesses based on fences, address, data, and control dependencies, while, again, coherence forces a total order on same-location accesses. 
In addition, $\powermm$ defines two global orderings, namely, \emph{happens-before}, and \emph{propagation}. 
The happens-before relation is based on dependencies, fences, and the $\rf$ relation across threads. 
The propagation relation captures the propagation of read and written values by combining fences, happens-before, $\rf$, and $\mo$. 
Based on these relations $\powermm$ defines its consistency axioms, which we will not present here; instead, we refer the interested readers to~\cite{Alglave:2014}.

\citet{Lahav:2016} showed that $\sramm$ captures precisely the guarantees of $\powermm$ for programs that are compiled from the release-acquire fragment of C/C++.
In turn, this implies that the result established in \cref{SEC:LOWER_THREADS_LOCATIONS_WRA_RA_SRA} for $\sramm$ transfers over to $\powermm$. 
We thus have the following corollary.

\corpower*

\subsection{What About x86-TSO and PSO?}\label{SUBSEC:IMPLICATIONS_TSO_PSO}

Our results so far prove strong hardness for testing a variety of weak-memory models.
In contrast, in this section, we outline that the problem is solvable in polynomial time for x86-TSO and PSO.
Conceptually, this is less surprising for $\tsomm$, which diverges only a little from $\scmm$, but is more so for $\psomm$, which allows for behaviors that are not even causally consistent.

\begin{figure}
\centering
\begin{subfigure}[b]{0.3\textwidth}
\centering
\scalebox{0.9}{
\begin{tikzpicture}[yscale=1]
\def\ystep{0.9}
\def\xstep{2}
\def\ybias{0.6}
\node (t10) at (0*\xstep,0*\ystep+\ybias)  {\large $t_1$};
\node (t11) at (0*\xstep,0*\ystep)  {$\wt(x)$};
\node (t12) at (0*\xstep,-1*\ystep) {$\wt(x)$};
\node (t13) at (0*\xstep,-2*\ystep) {$\rd(y)$};

\node (t20) at (1*\xstep,0*\ystep+\ybias)  {\large $t_2$};
\node (t21) at (1*\xstep,0*\ystep)  {$\wt(y)$};
\node (t22) at (1*\xstep,-1*\ystep) {$\wt(y)$};
\node (t23) at (1*\xstep,-2*\ystep) {$\rd(x)$};

\draw[po] (t11) to (t12) to (t13);
\draw[po] (t21) to (t22) to (t23);
\draw[rf,bend left=0] (t11) to node[above,pos=0.9, sloped]{$\rf$} (t23);
\draw[rf,bend left=0] (t21) to node[above,pos=0.9, sloped]{$\rf$} (t13);
\end{tikzpicture} 
}  
\caption{$\tsomm$-consistent}
\label{subfig:tso}
\end{subfigure}
\hfill
\begin{subfigure}[b]{0.3\textwidth}
\centering
\scalebox{0.9}{
\begin{tikzpicture}[yscale=1]
\def\ystep{0.9}
\def\xstep{2}
\def\ybias{0.6}
\node (t10) at (0*\xstep,0*\ystep+\ybias)  {\large $t_1$};
\node (t11) at (0*\xstep,0*\ystep)  {$\wt(x)$};
\node (t12) at (0*\xstep,-1*\ystep) {$\wt(x)$};
\node (t13) at (0*\xstep,-2*\ystep) {$\wt(y)$};

\node (t20) at (1*\xstep,0*\ystep+\ybias)  {\large $t_2$};
\node (t21) at (1*\xstep,0*\ystep)  {$\rd(y)$};
\node (t22) at (1*\xstep,-1*\ystep) {$\rd(x)$};

\draw[po] (t11) to (t12) to (t13);
\draw[po] (t21) to (t22);
\draw[rf,bend left=0] (t11) to node[above,pos=0.3, sloped]{$\rf$} (t22);
\draw[rf,bend left=0] (t13) to node[above,pos=0.3, sloped]{$\rf$} (t21);
\end{tikzpicture} 
}  
\caption{$\psomm$-consistent}
\label{subfig:pso} 
\end{subfigure}
\hfill
\begin{subfigure}[b]{0.3\textwidth}
\centering
\scalebox{0.9}{
\begin{tikzpicture}[yscale=1]
\def\ystep{1.8}
\def\xstep{2}
\def\ybias{0.6}
	\node (t10) at (0,0*\ystep+\ybias)  {\large $t_1$};
  \node (t11) at (0,0*\ystep)  {$\wt(x)$};
  \node (t12) at (0,-1*\ystep) {$\rd(x)$};
	\node (t20) at (1*\xstep,0*\ystep+\ybias)  {\large $t_2$};
  \node (t21) at (1*\xstep,0*\ystep) {$\wt(x)$};
  \node (t22) at (1*\xstep,-1*\ystep) {$\rd(x)$};
  \draw[po] (t11) to (t12);
  \draw[po] (t21) to (t22);
\draw[rf,bend left=0] (t11) to node[above,pos=0.9, sloped]{$\rf$} (t22);
\draw[rf,bend left=0] (t21) to node[above,pos=0.9, sloped]{$\rf$} (t12);
\end{tikzpicture} 
}  
\caption{$\tsomm$/$\psomm$-inconsistent}
\label{subfig:not_tso_pso}
\end{subfigure}
\caption{\label{fig:tso_pso}
(\subref{subfig:tso}) An execution consistent in $\tsomm$, as well as in $\psomm$ and all other memory models we have considered, but not $\scmm$.
(\subref{subfig:pso}) An execution consistent in $\psomm$, as well as in $\arlxmm$ but not in $\tsomm$ or even $\wramm$.
(\subref{subfig:not_tso_pso}) An execution inconsistent in $\tsomm$/$\psomm$, but consistent in $\ccmm$/$\wramm$.
}
\end{figure}

\Paragraph{Total Store Order.}
The $\tsomm$ model deviates from $\scmm$ by introducing a write-buffer for each thread, which acts as a FIFO queue~\cite{Sewell:2010}.
When a thread $t$ executes a write $\wt(t,x)$, this does not modify the shared memory immediately and is thus not visible to the other threads.
Instead, $\wt(t,x)$ is stored in the buffer of $t$.
The buffer non-deterministically flushes some of its writes to the shared memory, at which point they become visible to the other threads.
On the other hand, when a thread $t$ executes a read $\rd(t,x)$, it is forced to read from the most recent write $\wt(t,x)$ in $t$'s buffer.
If no such write exists then $\rd(t,x)$  reads from the shared memory.
See \cref{subfig:tso} for an illustration.

For capturing the complexity of consistency-testing of an abstract execution $\expartial=(\E, \po)$ under $\tsomm$, it is helpful to switch to operational semantics.
The semantics are defined by means of a labeled transition system $\LTS_{\tsomm}$; as this is a standard model, we defer the description of the formal model to~\cref{SUBSEC:APP_IMPLICATIONS_TSO_PSO}.
In high level, a state in $\LTS_{\tsomm}$ is a triplet $\langle \Executed, \Buffers, \Memory\rangle$, where
\begin{compactenum}
\item $\Executed\subseteq \E$ is the set of events that have been executed so far.
\item $\Buffers\colon \Threads\to (\W)^*$ maps every thread $t$ to a sequence of writes $\wt(t,x_1),\wt(t,x_2),\dots, \wt(t,x_i)$, which represents the state of the buffer of thread $t$.
\item $\Memory\colon \Locations\to\W$ maps every memory location of the shared memory to the most recent write to it.
\end{compactenum}
A counting argument shows that the size of $\LTS_{\tsomm}$ is bounded by $\NumThreads^{\NumLocations}\cdot \NumEvents^{O(\NumThreads^{2})}$, for $\NumEvents=|\E|$, $\NumThreads$ threads, $\NumLocations$ locations, and thus becomes polynomial when $\NumThreads, \NumLocations=O(1)$.
We refer to \cref{SUBSEC:APP_IMPLICATIONS_TSO_PSO} for details.

\Paragraph{Partial Store Order (PSO).}
The $\psomm$ model~\cite{pso} is similar to $\tsomm$, with the difference that every thread has a different buffer for each location. 
This allows both write-read reorderings (like $\tsomm$) and write-write reorderings on different locations. 
This induces more behaviors than $\tsomm$, but is incomparable to some other models like $\wramm$ (see \cref{subfig:pso} and \cref{subfig:not_tso_pso}).
The operational semantics can be defined by means of an LTS $\LTS_{\psomm}$ analogously to $\LTS_{\tsomm}$.
A similar analysis shows that the size of $\LTS_{\psomm}$ is bounded by $\NumEvents^{O(\NumThreads(\NumThreads+\NumLocations))}$.
Hence we have the following theorem, which differentiates $\tsomm$/$\psomm$ from the other weak-memory models we have seen so far.

\thmuppertsopso*
\subsection{A Final Note on Relaxed}\label{SUBSEC:RLX}

Finally, we turn our attention to the $\rlxmm$ model.
The only two axioms of this model are \RlxWCoh{} and \RlxRCoh{}, which concern individual memory locations,
and guarantee per-location coherence, i.e., 
focusing on each location individually, the corresponding execution is $\scmm$-consistent.
Given an abstract execution $\expartial$, to decide whether $\expartial\models \rlxmm$, it suffices to check whether $\expartial_x\models \scmm$ for each location $x$,
where $\expartial_x$ occurs from $\expartial$ by considering only events accessing $x$.
As each consistency check $\expartial_x\models \scmm$ takes polynomial time~\cite{Agarwal2021},
and we clearly have polynomially many such checks, we arrive at \cref{cor:rlx}.

\corrlx*

\section{Hardness with Bounded Values}\label{SEC:BOUNDED_VALUES}

For ease of presentation, our reductions in \cref{SEC:LOWER_THREADS_LOCATIONS_RLX} and \cref{SEC:LOWER_THREADS_LOCATIONS_WRA_RA_SRA} use a bounded number of threads and memory locations but an unbounded value domain.
Indeed, given the Boolean formula $\varphi=\{\Clause_i\}_{i\in[m]}$ on $m$ clauses and $n$ variables, the value domain of the abstract execution $\expartial$ has size $\Theta(n\cdot m)$.
In this section we outline how to modify those reductions so that $\expartial$ also uses a bounded value domain, thereby arriving at \cref{thm:lower_ra} and \cref{thm:lower_cc}.

\Paragraph{Intuition.}
Our two reductions are such that every read $\rd$ can read from at most three writes, and these appear in the same gadget as $\rd$.
However, the values of these events are specific to the gadget, and in particular, 
specific to the phase $i$ and the step $j$ of the events (i.e., events are of the form $\ReadXOne{t_3, v^i_j}$).
Our strategy for decreasing the size of the value domain (of both reductions in \cref{SEC:LOWER_THREADS_LOCATIONS_WRA_RA_SRA} and \cref{SEC:LOWER_THREADS_LOCATIONS_RLX}) is by using repeating values which are not parameterized by the superscript $i$ and subscript $j$ 
(i.e., the events in the executions constructed now look like $\ReadXOne{t_3, v}$ or $\WriteXOne{t_1, v}$).
This change does not affect completeness but threatens soundness, as now,
some read events may read from write events in other gadgets that were 
previously forbidden simply because their values were not matching.
To avoid this, we slightly modify our abstract executions $\expartial$ by inserting a bounded number of auxiliary write and read events between consecutive gadgets, which also access a bounded number of values.
The auxiliary write events write dummy values read by the auxiliary read events. 
The effect of these additional reads-from edges due to auxiliary events
is to create $(\po_x\cup\rf_x)^+$ paths that once again forbid the 
original (i.e., non-auxiliary) read events of a gadget to 
access write events from other gadgets (while obeying the desired consistency axioms).

\Paragraph{Construction.}
We now outline the construction.
The process is similar for both the abstract executions of \cref{SEC:LOWER_THREADS_LOCATIONS_RLX} and \cref{SEC:LOWER_THREADS_LOCATIONS_WRA_RA_SRA}.
For this reason, we describe it generically on an abstract execution $\expartial$.
Our transformation is carried out in two steps, $\expartial\LPath{~} \expartial_1 \LPath{~}\expartial_2$, where $\expartial_1$ and $\expartial_2$ have the same number of threads and locations as $\expartial$, and $\expartial_2$ additionally has a bounded value domain.

\SubParagraph{Step 1.}
We obtain $\expartial_1$ by inserting various events in $\expartial$ while keeping the threads and memory locations the same.
We start by fixing a total order $\sigma_1$ on locations, and a total order $\sigma_2$ on threads.
\begin{align*}
\sigma_1&=\VarXOne, \VarXTwo, \VarYOne,\dots, \VarYEight, \VarZOne,\dots, \VarZEight, \VarA{1},\dots, \VarA{3}, \VarB\\
\sigma_2&=t_1,\dots,t_6, f_1, \dots, f_6, g_1, \dots, g_6, h_{1},\dots, h_{3}, p, q
\end{align*}

Fix a phase $i$ and step $j$ and let $\zeta=(i*m+j)\mod 2$.
For a location $x$ of $\expartial$, different from $\VarA{1},\VarA{2}, \VarA{3}, \VarB$,
we introduce auxiliary write and read events on $x$ as follows:
(i)~if a thread $t$ writes on $x$,we insert a write $\wt(t,x,v^{t}_{\zeta})$ after all events of phase $i$ and step $j$ in $t$, and
(ii)~if a thread $t$ reads from $x$, we insert a sequence of read events $\rd(t,x,v^{t^1}_{\zeta}),\rd(t,x,v^{t^2}_{\zeta}),\dots$ before all events of phase $i$ and step $j+1$ (or phase $i+1$ and step $1$, if $j=n$), where $t^1, t^2,\dots$ is the subsequence of $\sigma_2$ of threads writing to $x$ values read by thread $t$.
We repeat this process for all locations $x \not\in \set{\VarA{1}, \VarA{2}, \VarA{3}, \VarB}$ 
in the order of appearance in the total order $\sigma_1$, placing the auxiliary writes before the auxiliary reads in each thread.
Observe that each $\rd(t,x,v^{t^{\ell}}_{\zeta})$ event is forced to read from the respective $\wt(t^{\ell},x,v^{t^{\ell}}_{\zeta})$.

Next, we turn our attention to the locations $\VarA{1}, \VarA{2}, \VarA{3}$ and $\VarB$.
The auxiliary events are positioned similarly, except for the detail about the step number $j$, 
because accesses to these locations span an entire phase (in the at-most-one-true and at-least-one-true gadgets).
In particular, we have $\zeta=i\mod 2$, while auxiliary write events are placed in each thread after all events of phase $i$, and read events are placed before events of phase $i+1$.

Observe that since the number of threads and locations is bounded in $\expartial$, the same holds for $\expartial_1$, while the additional values accessed by the auxiliary events in $\expartial_1$ are also bounded.

\SubParagraph{Step 2.}
In the second step, we transform $\expartial_1$ to $\expartial_2$ so that the latter only accesses a bounded number of values.
In particular, we make $\expartial_2$ identical to $\expartial_1$ with the difference that, for every event of $\expartial_1$ that also appears in $\expartial$ (i.e., non-auxiliary events), 
we remove from its value the superscript of the phase and the subscript of the step of that event.
For example, each write $\TWriteXOne{v^i_j}{t_1}$ in $\expartial_1$ becomes $\TWriteXOne{v}{t_1}$ in $\expartial_2$.
It is straightforward to verify that $\expartial_2$ has a bounded domain of threads, locations, and values.
Moreover, $\expartial_2$ is consistent in the respective memory model iff $\expartial$ is, by repeating the arguments in \cref{SEC:LOWER_THREADS_LOCATIONS_RLX} and \cref{SEC:LOWER_THREADS_LOCATIONS_WRA_RA_SRA}, this time also accounting for the auxiliary events.

\section{Conclusion}\label{sec:conclusion}
We have studied the standard problem of consistency-testing for various popular weak-memory models spanning across software, hardware, and distributed systems.
We have shown that even the \emph{bounded} version of consistency testing is $\NP$-complete in most of these models, i.e., when every natural input parameter is bounded.
This is a significant improvement over an abundance of prior hardness results which primarily stemmed from parameters such as the number of threads or memory locations being unbounded. 
Our results thus highlight the true intricacies of weak-memory testing.
In particular, our results imply that the problem provably admits no parameterization with respect to natural input parameters.
Interesting future work includes the possibility of extending our hardness to other memory models such as the one in ARM architectures, 
as well as recovering tractability by imposing further restrictions (such as context/view-switching).


\clearpage

\begin{acks}
Andreas Pavlogiannis was partially supported by a research grant (VIL42117) from VILLUM FONDEN.
S. Krishna was partially supported by the SERB MATRICS grant MTR/2019/000095. 
Umang Mathur was partially supported by a Singapore Ministry of Education (MoE) Academic Research Fund (AcRF) Tier 1 grant.
\end{acks}

\bibliography{references}

\newpage

\appendix

\section{Proofs from \cref{SEC:LOWER_THREADS_LOCATIONS_WRA_RA_SRA}}\label{SEC:APP_LOWER_THREADS_LOCATIONS_WRA_RA_SRA}
\lemmawrarasrasoundnesscopy*
\begin{proof}
We argue by induction that for every $i\in[m-1]$, we have $(\TWriteXOne{v^{i}_j}{t_1}, \TReadXOne{v^{i}_j}{t_3}) \in \rf$ iff $(\TWriteXOne{v^{i+1}_j}{t_1}, \TReadXOne{v^{i+1}_j}{t_3}) \in \rf$.
First, note that if $(\TWriteXOne{v^{i}_j}{t_1}, \TReadXOne{v^{i}_j}{t_3}) \in \rf$, then the copy gadget $\CopyGadget{i}{j}$ forces $(\TWriteXTwo{v^{i}_j}{t_4}, \TReadXTwo{v^{i}_j}{t_6}) \in \rf$. 
Indeed, we have the following forced sequence of $\rf$ edges (see \cref{subfig:copy_gadget_wra_ra_sraFalse}). 
We begin the sequence with the first edge $(\TWriteXOne{v^{i}_j}{t_1}, \TReadXOne{v^{i}_j}{t_3}) \in \rf$, marked \circled{1} in \cref{subfig:copy_gadget_wra_ra_sraFalse}.
\begin{compactenum}
\item We have $(\TWriteYOne{\ov{v}^i_j}{t_1},\TReadYOne{v^i_j}{t_3})\in \hb$ and thus $(\TWriteYOne{v^i_j}{t_1}, \TReadYOne{v^i_j}{t_3})\not \in \rf$ due to \WRCoh. 
Thus $\TReadYOne{v^i_j}{t_3})$ is forced to read from the only other available write, i.e., $(\TWriteYOne{v^i_j}{f_4}, \TReadYOne{v^i_j}{t_3}) \in \rf$. 
This is depicted as \circled{2} in \cref{subfig:copy_gadget_wra_ra_sraFalse}. 
\item We have $(\TWriteYSeven{v^i_j}{f_5}, \TReadYSeven{v^i_j}{f_4})\not \in \rf$, otherwise we would have $(\TWriteYFour{\ov{v}^i_j}{f_5}, \TReadYFour{v^i_j}{t_3})\in \hb$, and since $(\TWriteYFour{v^i_j}{f_5}, \TReadYFour{v^i_j}{t_3})\in \rf$, we would have a violation of \WRCoh.
Thus $\TReadYSeven{v^i_j}{f_4}$ is forced to read from the only other available write, i.e., $(\TWriteYSeven{v^i_j}{f_6}, \TReadYSeven{v^i_j}{f_4}) \in \rf$.
This is depicted as \circled{3} in \cref{subfig:copy_gadget_wra_ra_sraFalse}. 

\item We have $(\TWriteXTwo{v^i_j}{t_5}, \TReadXTwo{v^i_j}{t_6})\not \in \rf$, otherwise we would have $(\TWriteYEight{\ov{v}^i_j}{t_5}, \TReadYEight{v^i_j}{f_4})\in \hb$, and since $(\TWriteYEight{v^i_j}{t_5}, \TReadYEight{v^i_j}{f_4})\in \rf$, we would have a violation of weak-read coherence.
Thus $\TReadXTwo{v^i_j}{t_6}$ is forced to read from the only other available write, i.e., $(\TWriteXTwo{v^i_j}{t_4}, \TReadXTwo{v^i_j}{t_6}) \in \rf$. 
This is depicted as \circled{4} in \cref{subfig:copy_gadget_wra_ra_sraFalse}. 
\end{compactenum}

On the other hand, if $(\TWriteXOne{v^{i}_j}{t_2}, \TReadXOne{v^{i}_j}{t_3}) \in \rf$, then the copy gadget $\CopyGadget{i}{j}$ forces $(\TWriteXTwo{v^{i}_j}{t_5}, \TReadXTwo{v^{i}_j}{t_6}) \in \rf$, by an analysis very similar to the one above (see \cref{subfig:copy_gadget_wra_ra_sraTrue}).
Starting with $(\TWriteXOne{v^{i}_j}{t_2}, \TReadXOne{v^{i}_j}{t_3}) \in \rf$ which is marked \circled{1} 
in \cref{subfig:copy_gadget_wra_ra_sraTrue}, the sequence of forced $\rf$ edges in order are marked \circled{2}, \circled{3} leading 
to \circled{4} which is $(\TWriteXTwo{v^{i}_j}{t_5}, \TReadXTwo{v^{i}_j}{t_6}) \in \rf$.

Finally, a similar analysis on the copy-down gadget $\CopyGadgetDown{i}{j}$ (see \cref{fig:copy_gadget_down_wra_ra_sra}) concludes that $(\TWriteXOne{v^{i+1}_j}{t_1}, \TReadXOne{v^{i+1}_j}{t_3}) \in \rf$ iff $(\TWriteXTwo{v^{i}_j}{t_4}, \TReadXTwo{v^{i}_j}{t_6}) \in \rf$,
and hence we have $(\TWriteXOne{v^{i}_j}{t_1}, \TReadXOne{v^{i}_j}{t_3}) \in \rf$ iff $(\TWriteXOne{v^{i+1}_j}{t_1}, \TReadXOne{v^{i+1}_j}{t_3}) \in \rf$, as desired.
The sequence of inferred $\rf$-edges in order are highlighted \circled{1} through \circled{4} in \cref{subfig:copy_gadget_down_wra_ra_sraFalse}(a). 
Likewise, \cref{subfig:copy_gadget_down_wra_ra_sraTrue}(b) depicts through 
the sequence of $\rf$-edges highlighted  \circled{1} - \circled{4}
that, 
$(\TWriteXOne{v^{i+1}_j}{t_2}, \TReadXOne{v^{i+1}_j}{t_3}) \in \rf$ iff $(\TWriteXTwo{v^{i}_j}{t_5}, \TReadXTwo{v^{i}_j}{t_6}) \in \rf$,
and hence we have $(\TWriteXOne{v^{i}_j}{t_2}, \TReadXOne{v^{i}_j}{t_3}) \in \rf$ iff $(\TWriteXOne{v^{i+1}_j}{t_2}, \TReadXOne{v^{i+1}_j}{t_3}) \in \rf$, as desired.

\end{proof}

\lemwrarasrasoundnessatmostone*
\begin{proof}
The statement follows by analyzing the at-most-one-true gadget $\AtMostOneGadget{c}{i}{j}{k}$, where $c\in[3]$ is such that $(s_j, s_k)$ is the $c$-th pair of variables in $\Clause_i$ (see \cref{fig:at_most_one_gadget_wra_ra_sra}).

First, we argue that if $(\TWriteXOne{v^i_j}{t_2}, \TReadXOne{v^i_j}{t_3}) \in \rf$ then $(\TWriteXOne{v^i_k}{t_2}, \TReadXOne{v^i_k}{t_3})\not \in \rf$.
Indeed, we have the following forced sequence of $\rf$ edges (see \circledsmall{1}-\circledsmall{4} in \cref{subfig:at_most_one_gadget_wra_ra_sra_true_false}
starting with $(\TWriteXOne{v^i_j}{t_2}, \TReadXOne{v^i_j}{t_3}) \in \rf$ marked \circledsmall{1}).
\begin{compactenum}
\item We have $(\TWriteA{\ov{v}^i_j}{c}{t_2}, \TReadA{v^i_j}{c}{t_3})\in \hb$, and thus $(\TWriteA{v^i_j}{c}{t_2}, \TReadA{v^i_j}{c}{t_3})\not \in \rf$, otherwise we would have a violation of weak-read-coherence.
Hence $\TReadA{v^i_j}{c}{t_3})$ is forced to read from the only other available write, i.e., $(\TWriteA{v^i_j}{c}{h_{c}}, \TReadA{v^i_j}{c}{t_3}) \in \rf$.
This is marked \circledsmall{2} in \cref{subfig:at_most_one_gadget_wra_ra_sra_true_false}.  
\item We now have $(\TWriteA{v^i_j}{c}{h_{c}}, \TReadA{v^i_k}{c}{t_3})\in \hb$, and thus $(\TWriteA{v^i_k}{c}{h_{c}}, \TReadA{v^i_k}{c}{t_3})\not \in \rf$, otherwise we would have a violation of weak-read-coherence.
Thus $\TReadA{v^i_k}{c}{t_3}$ is forced to read from the only other available write, i.e., $(\TWriteA{v^i_k}{c}{t_2}, \TReadA{v^i_k}{c}{t_3})\in \rf$.
This is marked \circledsmall{3} in \cref{subfig:at_most_one_gadget_wra_ra_sra_true_false}.

\item In turn, we have $(\TWriteXOne{v^i_k}{t_2}, \TReadXOne{v^i_k}{t_3})\not \in \rf$, as otherwise we would have $(\TWriteA{\ov{v}^i_2}{c}{t_2}, \TReadA{v^i_k}{c}{t_3})\in \hb$, which would violate weak-read coherence.
Hence we have $(\TWriteXOne{v^i_k}{t_1}, \TReadXOne{v^i_k}{t_3}) \in \rf$,  marked \circledsmall{4} in \cref{subfig:at_most_one_gadget_wra_ra_sra_true_false}.

\end{compactenum}

Second, we argue that if $(\TWriteXOne{v^i_k}{t_2}, \TReadXOne{v^i_k}{t_3}) \in \rf$ then $(\TWriteXOne{v^i_j}{t_2}, \TReadXOne{v^i_j}{t_3})\not \in \rf$.
Indeed, we have the following forced sequence of $\rf$ edges (marked \circledsmall{1}-\circledsmall{4}, see \cref{subfig:at_most_one_gadget_wra_ra_sra_false_true}, starting with $(\TWriteXOne{v^i_k}{t_2}, \TReadXOne{v^i_k}{t_3}) \in \rf$ marked 
\circledsmall{1}).
\begin{compactenum}
\item We have $(\TWriteA{\ov{v}^i_k}{c}{t_2}, \TReadA{v^i_k}{c}{t_3})\in \hb$, and thus $(\TWriteA{v^i_k}{c}{t_2}, \TReadA{v^i_k}{c}{t_3})\not \in \rf$, otherwise we would have a violation of weak-read-coherence.
Thus $\TReadA{v^i_k}{c}{t_3})$ is forced to read from the only other available write, i.e., $(\TWriteA{v^i_k}{c}{h_{c}}, \TReadA{v^i_k}{c}{t_3}) \in \rf$. 
This is marked \circledsmall{2} in \cref{subfig:at_most_one_gadget_wra_ra_sra_false_true}. 

\item We have $(\TWriteA{v^i_j}{c}{h_{c}}, \TReadA{v^i_j}{c}{t_3})\not \in \rf$, as otherwise we would have $(\TWriteA{v^i_j}{c}{h_{c}}, \TReadA{v^i_k}{c}{t_3})\in \hb$ which would violate weak-read-coherence, given that 
$(\TWriteA{v^i_k}{c}{h_{c}}, \TReadA{v^i_k}{c}{t_3}) \in \rf$ and 
$(\TWriteA{v^i_k}{c}{h_{c}}, \TWriteA{v^i_j}{c}{h_\ell}) \in \po \subseteq \hb$. 
Thus $\TReadA{v^i_j}{c}{t_3}$ is forced to read from the only other available write, i.e., $(\TWriteA{v^i_j}{c}{t_2}, \TReadA{v^i_j}{c}{t_3}) \in \rf$, see \circledsmall{3} in  \cref{subfig:at_most_one_gadget_wra_ra_sra_false_true}.

\item In turn, we have $(\TWriteXOne{v^i_j}{t_2}, \TReadXOne{v^i_j}{t_3})\not \in \rf$, as otherwise we would have 
$(\TWriteA{\ov{v}^i_j}{c}{t_2}, \TReadA{v^i_j}{c}{t_3})\in \hb$, which would violate weak-read coherence. 
Thus, $(\TWriteXOne{v^i_j}{t_1}, \TReadXOne{v^i_j}{t_3}) \in \rf$, depicted by \circledsmall{4} in  \cref{subfig:at_most_one_gadget_wra_ra_sra_false_true}.
\end{compactenum}
\end{proof}

\lemwrarasrasoundnessatleastone*
\begin{proof}
The statement follows by analyzing the at-least-one-true gadget $\AtLeastOneGadget{i}{j}{k}{\ell}$ (see \cref{fig:at_least_one_gadget_wra_ra_sra}).
We assume wlog that $\Clause_i=(s_j, s_k, s_{\ell})$, i.e., the three variables appear in $\Clause_i$ in this order.

First, we argue that if $(\TWriteXOne{v^i_j}{t_2}, \TReadXOne{v^i_j}{t_3})\not \in \rf$ and $(\TWriteXOne{v^i_k}{t_2}, \TReadXOne{v^i_k}{t_3})\not \in \rf$ then $(\TWriteXOne{v^i_{\ell}}{t_2}, \TReadXOne{v^i_{\ell}}{t_3}) \in \rf$.
Indeed, we have the following forced sequence of $\rf$ edges (see \circledsmall{1}-\circledsmall{4}, \cref{subfig:at_least_one_gadget_wra_ra_sra_false_false_true}. We begin with 
$(\TWriteXOne{v^i_j}{t_1}, \TReadXOne{v^i_j}{t_3})\in \rf$ and $(\TWriteXOne{v^i_k}{t_1}, \TReadXOne{v^i_k}{t_3}) \in \rf$
both marked \circledsmall{1}).
\begin{compactenum}
\item We have $(\TWriteB{\ov{v}^i_j}{t_1}, \TReadB{v^i_j}{t_3})\in \hb$, and thus $(\TWriteB{v^i_j}{t_1}, \TReadB{v^i_j}{t_3})\not \in \rf$, otherwise we would have a violation of weak-read-coherence.
Thus $\TReadB{v^i_j}{t_3}$ is forced to read from the only other available write, i.e., $(\TWriteB{v^i_j}{p}, \TReadB{v^i_j}{t_3})\in \rf$.
 Similarly, we have $(\TWriteB{\ov{v}^i_k}{t_1}, \TReadB{v^i_k}{t_3})\in \hb$, and thus $(\TWriteB{v^i_k}{t_1}, \TReadB{v^i_k}{t_3})\not \in \rf$, otherwise we would have a violation of weak-read-coherence. 
Thus $\TReadB{v^i_k}{t_3}$ is forced to read from the only other available write, i.e., $(\TWriteB{v^i_k}{q}, \TReadB{v^i_k}{t_3})\in \rf$. 
These two edges are marked \circledsmall{2}.
\item We now have $(\TWriteB{v^i_j}{p}, \TReadB{v^i_{\ell}}{t_3})\in \hb$, and thus $(\TWriteB{v^i_{\ell}}{p}, \TReadB{v^i_{\ell}}{t_3})\not \in \rf$, otherwise we would have a violation of weak-read coherence. 
Similarly, we have $(\TWriteB{v^i_k}{q}, \TReadB{v^i_{\ell}}{t_3})\in \hb$, and thus $(\TWriteB{v^i_{\ell}}{q}, \TReadB{v^i_{\ell}}{t_3})\not \in \rf$, otherwise we would have a violation of weak-read-coherence.
Thus $\TReadB{v^i_{\ell}}{t_3}$ is forced to read from the only other available write, i.e., $(\TWriteB{v^i_{\ell}}{t_1}, \TReadB{v^i_{\ell}}{t_3})\in \rf$.  
This is marked as \circledsmall{3}.
\item In turn, we have $(\TWriteXOne{v^i_{\ell}}{t_1}, \TReadXOne{v^i_{\ell}}{t_3})\not \in \rf$, as otherwise we would have $(\TWriteB{\ov{v}^i_{\ell}}{t_1}, \TReadB{v^i_{\ell}}{t_3})\in \hb$, which would violate weak-read-coherence. 
Thus we have $(\TWriteXOne{v^i_{\ell}}{t_2}, \TReadXOne{v^i_{\ell}}{t_3}) \in \rf$, marked \circledsmall{4}.
\end{compactenum}
A similar analysis establishes that if $(\TWriteXOne{v^i_j}{t_2}, \TReadXOne{v^i_j}{t_3})\not \in \rf$ and $(\TWriteXOne{v^i_{\ell}}{t_2}, \TReadXOne{v^i_{\ell}}{t_3})\not \in \rf$ then $(\TWriteXOne{v^i_k}{t_2}, \TReadXOne{v^i_k}{t_3}) \in \rf$ (see \cref{subfig:at_least_one_gadget_wra_ra_sra_false_true_false}), as well as that if $(\TWriteXOne{v^i_k}{t_2}, \TReadXOne{v^i_k}{t_3})\not \in \rf$ and $(\TWriteXOne{v^i_{\ell}}{t_2}, \TReadXOne{v^i_{\ell}}{t_3})\not \in \rf$ then $(\TWriteXOne{v^i_j}{t_2}, \TReadXOne{v^i_j}{t_3}) \in \rf$ (see \cref{subfig:at_least_one_gadget_wra_ra_sra_true_false_false}).
The desired result follows.
\end{proof}

\corwrarasralowerthreadslocationssoundness*

\lemwrarasracompletenessignoreevents*
\begin{proof}
We prove each item separately.
\begin{compactenum}
\item Any cycle containing a write $\wt\in \InactiveWrites$ must contain a sequence of edges $\event_1\LTo{\po}{\wt}\LTo{\po}\event_2$, which can be replaced by the single edge $\event_1\LTo{\po}\event_2$.
\item Any unsafe triplet $(\wt, \rd, \wt')$ requires the existence of an $\hb$-path $P\colon \wt' \LPath{\hb} \rd$.
Since the threads $\{h_{c}\}_{c\in [3]}$, $p$ and $q$ contain only same-location writes,  if $\wt' \in \InactiveWrites$, 
then $P$ must be of the form $P\colon \wt' \LPath{\po} \wt''\LTo{\rf}\rd''\LPath{\hb} \rd$, where, $\wt''$ and $\rd''$ conflict with $\wt'$.
Note that $\rd'' \neq \rd$, otherwise, by definition of a triplet, we will have $\wt''=\wt$ since $(\wt,\rd) \in \rf$. This would 
imply that $\wt' ~\po ~\wt$ making $(\wt, \rd, \wt')$ safe. 

Since $(\wt, \rd, \wt')$ is unsafe, we have $(\wt', \wt)\not \in (\hb\cup\mopartial)^+$ and hence, $(\wt'', \wt)\not \in (\hb\cup\mopartial)^+$. 
Thus $(\wt, \rd, \wt'')$ is also unsafe, while clearly $\wt''\not\in \InactiveWrites$.
\end{compactenum}
\end{proof}

\lemwrarasracompletenessmonotonichbpaths*
\begin{proof}
Observe that every $\rf$-edge connects events of the same phase and step.
Recall that, due to \cref{lem:wra_ra_sra_completeness_inactive_writes}, we can assume that the $\hb$-path between any two events does not contain any of the inactive writes $\InactiveWrites$.
Thus, for every $\po$-edge $\event_1\LTo{\po} \event_2$ between events in the threads $h_{c}$, $p$ or $q$, we have $\Step(\event_1)< \Step(\event_2)$ and thus $\event_1\OrderedBefore \event_2$.
Hence, the only $\po$-edge  $\event_1\LTo{\po} \event_2$ with $\event_2\StrictOrderedBefore\event_1$ is either in thread $g_1$ (where $\event_2=\TReadZSix{v^i_j}{g_1}$) or in thread $g_4$ (where $\event_2=\TReadZEight{v^i_j}{g_4}$).

Now consider any $\hb$-path $P\colon \wt_1\LPath{\hb}\wt_2$.
If $P$ is not monotonic, then it must contain a subpath $\event_1\LTo{\po}\event_2\LTo{\po}\event_3$, where $\event_2=\TReadZSix{v^i_j}{g_1}$ or  $\event_2=\TReadZEight{v^i_j}{g_4}$, for some $i\in[m]$ and $j\in[n]$, due to the analysis in the previous paragraph.
But then we can remove $\event_2$, obtaining a shorter path $\wt\LPath{\hb}\rd$.
Repeating this process, we end up with a monotonic $\hb$-path $\wt_1\LPath{\hb}\wt_2$.
\end{proof}

\lemwrarasracompletenessnocyclehbmo*
\begin{proof}
Consider any $(\hb\cup\mopartial)$-cycle $C$ that traverses only write events connected by $\hb$ and $\mopartial$.
By construction, whenever $(\wt_1, \wt_2)\in\mopartial$ we also have $\wt_1\OrderedBefore\wt_2$.
This fact, in conjunction with \cref{lem:wra_ra_sra_completeness_monotonic_hb_paths} implies that $C$ is monotonic.
Since $C$ is a cycle, it follows that it contains events of the same phase and step.
A direct inspection of each instantiation of each gadget concludes that no such cycle $C$ can exist.

In particular, regarding the copy gadgets $\CopyGadget{i}{j}$ and copy-down gadgets $\CopyGadgetDown{i}{j}$, we have the following.
\begin{compactenum}
\item Observe that $C$ does not cross $t_3$ as this thread has no writes.
\item In turn, $C$ does not cross any of $f_1$, $f_4$, $g_1$ and $g_4$, as each of these threads can only enter $t_3$.
\item In turn, $C$ does not cross any of $f_2$ and $f_3$, as  each of these threads can only enter $t_3$ and $f_1$.
Similarly, $C$ does not cross any of $f_5$ and $f_6$, as each of these threads can only enter $t_3$ and $f_4$.
Likewise, $C$ does not cross any of $g_2$ and $g_3$, as each of these threads can only enter $t_3$ and $g_1$.
Similarly, $C$ does not cross any of $g_5$ and $g_6$, as each of these threads can only enter $t_3$ and $g_4$.
\item In turn, $C$ does not cross $t_6$, as this thread can only enter $f_3$, $f_6$, $g_3$ and $g_6$.
\item Finally, $C$ does not cross any of $t_4$ and $t_5$, as these threads can only enter $t_6$, the $f$ threads and the $g$ threads.
\end{compactenum}

Having excluded the above threads, $C$ must be contained in $\{t_{\ell}\}_{\ell\in[2]}$, $\{h_{c}\}_{c\in [3]}$, $q$ and $p$.
Note that this implies that $C$ only consists of $\po$/$\mopartial$-edges.
A straightforward inspection of the at-most-one-true and at-least-one-true gadgets (see \cref{fig:at_most_one_gadget_wra_ra_sra} and \cref{fig:at_least_one_gadget_wra_ra_sra}) shows that no such cycle exists.
Thus $(\hb\cup\mopartial)$ is acyclic, as desired.
\end{proof}

\lemwrarasrarfimmediate*
\begin{proof}
Note that $\wt\OrderedBefore\wt'$, and thus $\rd\OrderedBefore\wt'$, as $\Phase(\wt)=\Phase(\rd)$ and $\Step(\wt)=\Step(\rd)$ by construction.
We first argue that any path $P\colon\wt'\LPath{\hb}\rd$ contains events of the same phase and step.
Indeed, as no location is ever written and read by the same thread, $P$ has the general form $P\colon \wt'\LPath{\hb^?}\wt''\LTo{\rf}\rd''\LTo{\po^?}\rd$ for some write $\wt''$ and read $\rd''$.
Due to \cref{lem:wra_ra_sra_completeness_monotonic_hb_paths}, the subpath $\wt'\LPath{\hb^?}\wt''$ is, without loss of generality monotonic, while the last two edges of $P$ are also monotonic by construction ($\rf$-edges are always monotonic, while the only non-monotonic $\po$-edges go from writes to reads).
Hence $P$ is monotonic.

On the other hand,  we have $\wt\OrderedBefore\wt'$ (as $(\wt, \wt')\in \po$), while, by construction, $\rd\OrderedBefore\wt$.
Hence $\rd\OrderedBefore\wt'$, and since $P$ is monotonic, it must contain only events of the same phase and step.
In particular, $P$ must be contained in one of the gadgets in \cref{fig:copy_gadget_wra_ra_sra,fig:copy_gadget_down_wra_ra_sra,fig:at_most_one_gadget_wra_ra_sra,fig:at_least_one_gadget_wra_ra_sra}.

The absence of such paths $P$ can then be established by a careful inspection of the gadgets.
The statement is then followed by a case-by-case analysis of each write $\wt$ in each thread and each gadget.
In particular:
\begin{compactenum}
\item For the threads $t_3$ and $t_6$ the statement follows trivially, as these contain no writes.
\item For all writes on threads $\{f_{\ell}, g_{\ell}\}_{\ell\in [6]}$, $\{h_c\}_{c\in[3]}$, $p$ and $q$, the statement follows by simply observing each of the possible instantiations of each gadget.
\item For all writes on threads $t_4$ and $t_5$, it again suffices to observe each of the possible instantiations of each gadget separately, as 
(i)~the writes on $\VarYSix$/$\VarYEight$ are read by the $f$ threads, 
(ii)~the writes on $\VarZSix$/$\VarZEight$ are read by the $g$ threads, and
(iii)~there are no $\hb$-paths between the $f$ threads and the $g$ threads.
\item Finally, for the threads $t_1$ and $t_2$, we have to consider all writes of a given phase and step that appear in different gadgets (e.g., $\TWriteYOne{v^i_j}{t_1}$, $\TWriteZOne{v^i_j}{t_1}$ etc.).
The statement follows by the fact that all these writes appear in $t_1$ and $t_2$ following the predefined order $\sigma$ on memory locations, and they are read in the same order by $t_3$.
\end{compactenum}
\end{proof}

\lemwrarasracompletenesssafetriplets*
\begin{proof}
Observe that $\Phase(\wt)=\Phase(\rd)$ and $\Step(\wt)=\Step(\rd)$ for all such triplets.
If $\wt' \StrictOrderedBefore \wt$, then by construction we have $(\wt', \wt)\in (\hb\cup \mo)^+$, thus the triplet is safe.
On the other hand, if $\wt\StrictOrderedBefore \wt'$,  then also $\rd\StrictOrderedBefore\wt'$ (as $\rd\OrderedBefore\wt$ by construction).
We argue that in this case there is no $\hb$-path $\wt'\LPath{\hb}\rd$.
Assume towards contradiction that such a path $P$ exists.
Since no location is both written and read by the same thread, $P$ has the form $P\colon \wt'\LPath{\hb}\wt''\LTo{\rf}\rd'' \LTo{\po?}\rd$.
Due to \cref{lem:wra_ra_sra_completeness_monotonic_hb_paths}, we have $\wt'\OrderedBefore\wt''$ and thus also $\wt'\OrderedBefore\rd''$.
Since $\rd\StrictOrderedBefore \wt'$, we have $\rd \StrictOrderedBefore \rd''$.
Observe that this can only happen if $\rd=\TReadZSix{v^i_j}{g_1}$ or $\rd=\TReadZEight{v^i_j}{g_4}$, for some $i\in[m]$ and $j\in [n]$.
However, the only thread that writes to $\VarZSix$ is $t_4$, and the only thread that writes to $\VarZEight$ is $t_5$.
In both cases we have $(\wt, \wt')\in \po$, and by \cref{lem:wra_ra_sra_rf_immediate}, we have $(\wt', \rd)\not \in \hb$.

Finally, we consider triplets where $\Phase(\wt)=\Phase(\wt')=\Phase(\rd)$ and $\Step(\wt)=\Step(\wt')=\Step(\rd)$.
Note that all events of such a triplet must exist together in one of the gadgets in \cref{fig:copy_gadget_wra_ra_sra,fig:copy_gadget_down_wra_ra_sra,fig:at_most_one_gadget_wra_ra_sra,fig:at_least_one_gadget_wra_ra_sra}.
Moreover, again due to \cref{lem:wra_ra_sra_completeness_monotonic_hb_paths}, it suffices to only consider $\hb$-paths $\wt'\LTo{\hb}\rd$ that traverse events in the same phase and step.
The lemma then follows by a laborious but straightforward case-by-base analysis of all triplets appearing in each of the gadgets.
We outline this analysis now.
\begin{compactenum}
\item For every triplet $(\wt, \rd, \wt')$ accessing a location that is written by a single thread (i.e., $\VarYThree$, $\VarYFour$, $\VarYSix$, $\VarYEight$, $\VarZThree$, $\VarZFour$, $\VarZSix$, $\VarZEight$), we have $(\wt, \wt')\in \po$ and thus $(\wt', \rd)\not \in \hb$ by \cref{lem:wra_ra_sra_rf_immediate}.
\item Consider the read $\TReadXOne{v^i_j}{t_3}$ (see \cref{fig:copy_gadget_wra_ra_sra}).
Regardless of which thread it reads from (i.e., thread $t_1$ if $s_j=\bfalse$ or thread $t_2$ if $s_j=\btrue$), the write in the opposite thread does not have an $\hb$-path to $\TReadXOne{v^i_j}{t_3}$ that is contained in phase $i$ and step $j$.
Similarly for the read $\TReadXTwo{v^i_j}{t_6}$.
\item Consider the read $\TReadYFive{v^i_j}{f_1}$ in $\CopyGadget{i}{j}$ (see \cref{fig:copy_gadget_wra_ra_sra}).
Regardless of which thread it reads from (i.e., thread $f_2$ if $s_j=\bfalse$ or thread $f_3$ if $s_j=\btrue$), the write in the opposite thread does not have an $\hb$-path to $\TReadYFive{v^i_j}{f_1}$ that is contained in phase $i$ and step $j$.
Similarly for the read $\TReadYSeven{v^i_j}{f_4}$, as well as for the reads $\TReadZFive{v^i_j}{g_2}$ and $\TReadZSeven{v^i_j}{g_4}$ in the copy-down gadget $\CopyGadgetDown{i}{j}$ (see \cref{fig:copy_gadget_down_wra_ra_sra}).
\item Consider the read $\TReadYOne{v^i_j}{t_3}$ in the copy gadget $\CopyGadget{i}{j}$ (see \cref{fig:copy_gadget_wra_ra_sra}).
If $s_j=\bfalse$ (see \cref{subfig:copy_gadget_wra_ra_sraFalse}) we have $(\TWriteYOne{v^i_j}{f_4}, \TReadYOne{v^i_j}{t_3})\in \rf$ and $(\TWriteYOne{\ov{v}^i_j}{t_1}, \TWriteYOne{v^i_j}{f_4})\in \mopartial$, which also implies $(\TWriteYOne{v^i_j}{t_1}, \TWriteYOne{v^i_j}{f_4})\in (\hb\cup \mopartial)^+$.
On the other hand, if $s_j=\btrue$ (see \cref{subfig:copy_gadget_wra_ra_sraTrue}), we have $(\TWriteYOne{v^i_j}{t_1}, \TReadYOne{v^i_j}{t_3})\in \rf$.
Observe that there is no $\hb$-path $\TWriteYOne{v^i_j}{f_4}\LPath{\hb}\TReadYOne{v^i_j}{t_3}$ or $\TWriteYOne{\ov{v}^i_j}{t_1}\LPath{\hb}\TReadYOne{v^i_j}{t_3}$ contained in this gadget.
The former case is straightforward by just observing \cref{subfig:copy_gadget_wra_ra_sraTrue}, while the latter case follows by \cref{lem:wra_ra_sra_rf_immediate}.
A similar analysis holds for the read $\TReadYTwo{v^i_j}{t_3}$, as well as for the reads $\TReadZOne{v^i_j}{t_3}$ and $\TReadZTwo{v^i_j}{t_3}$ in the copy-down gadget $\CopyGadgetDown{i}{j}$ (see \cref{fig:copy_gadget_down_wra_ra_sra}).
\item Consider the read $\TReadA{v^i_j}{c}{t_3}$ in the at-most-one-true gadget $\AtMostOneGadget{c}{i}{j}{k}$ (see \cref{fig:at_most_one_gadget_wra_ra_sra}).
If $s_j=\true$ (see \cref{subfig:at_most_one_gadget_wra_ra_sra_true_false}), we have $(\TWriteA{v^i_j}{c}{h_{c}},\TReadA{v^i_j}{c}{t_3})\in \rf$ and $(\TWriteA{\ov{v}^i_j}{c}{t_2}, \TWriteA{v^i_j}{c}{h_{c}})\in \mopartial$, which also implies  $(\TWriteA{v^i_j}{c}{t_2}, \TWriteA{v^i_j}{c}{h_{c}})\in (\hb\cup \mopartial)^+$.
Moreover, all other writes to $\VarA{c}$ are of a different step.
On the other hand, if $s_j=\bfalse$ (see \cref{subfig:at_most_one_gadget_wra_ra_sra_false_true,subfig:at_most_one_gadget_wra_ra_sra_false_false}), we have $(\TWriteA{v^i_j}{c}{t_2}, \TReadA{v^i_j}{c}{t_3})\in \rf$.
Observe that in this case, all other writes on $\VarA{c}$ are of a different step, 
except $\TWriteA{\ov{v}^i_j}{c}{t_2}$.
Due to \cref{lem:wra_ra_sra_rf_immediate}, we have $(\TWriteA{\ov{v}^i_j}{c}{t_2}, \TReadA{v^i_j}{c}{t_3})\not \in \hb$.
A similar analysis holds for  $\TReadA{v^i_k}{c}{t_3}$, as well as the reads $\TReadB{v^i_j}{t_3}$, $\TReadB{v^i_k}{t_3}$ and $\TReadB{v^i_{\ell}}{t_3}$ in the at-least-one-gadget $\AtLeastOneGadget{i}{j}{k}{\ell}$ (see \cref{fig:at_least_one_gadget_wra_ra_sra}).
\end{compactenum}
\end{proof}

\section{Proofs from \cref{SEC:OTHER_MODELS}}\label{SEC:APP_OTHER_MODELS}

\subsection{Proofs from \cref{SUBSEC:CAUSAL_MEMORY}}\label{SUBSEC:APP_CAUSAL_MEMORY}

\lemcmrfimmediate*
\begin{proof}
Assume towards contradiction that there exists an $\hb$-path $P\colon \wt\LPath{\hb}\event$.
Note that in our construction, every write is read at most once, and always by a read in a different thread.
Thus $P$ must have the form $P\colon \wt \LTo{\po}\wt'\LPath{\hb}\event$, implying that $(\wt', \rd)\in \hb$.
However, the latter is forbidden by \cref{lem:wra_ra_sra_rf_immediate}. 
\end{proof}

\lemcmob*
\begin{proof}
It suffices to show that for every event $\event$ of thread $t$ and conflicting triplet $(\wt, \rd, \wt')$ such that 
(i)~$(\wt',\rd)\in \obOne{\event}$ and
(ii)~$(\rd, \event)\in \po^?$,
we have $(\wt', \wt)\in \obOne{\event}$.
Indeed, this implies that $\obOne{\event}$ is the fixpoint of the definition of $\ob{\event}$.
Given a pair of writes ($\wt_1$ $\wt_2$) with $(\wt_1,\wt_2)\in\obOne{\event}$, we say that $\wt_1\LTo{\obOne{\event}}\wt_2$ is a \emph{true edge} if $(\wt_1, \wt_2)\not \in \hb$.
Let $P\colon \wt'\LPath{\obOne{\event}}\rd$ be a $\obOne{\event}$-path that contains the smallest number of true edges, and we argue that it contains $0$ true edges (i.e., it is an $\hb$-path).
In turn, this implies that $(\wt', \wt)\in \obOne{\event}$.

Assume towards contradiction otherwise, and let $\wt_1\LTo{\obOne{\event}}\wt_2$ be the last true edge in $P$.
Thus there is a triplet $(\wt_2, \rd_2, \wt_1)$ such that 
(i)~$\tid(\rd_2)=t$, and
(ii)~there is an $\hb$-path $P'\colon \wt_1\LPath{\hb}\rd_2$.
Observe that by construction $(\wt_2, \rd_2)\not\in \po$, as no write is read by a read in the same thread.
Let $\event'$ be the first event of thread $t$ in $P$ after $\wt_2$. Clearly such an $\event'$ exists 
since $w_2$ is an event in $P$ and $P$ culminates in $\rd$ with $\tid(\rd)=t$. 
Then it must be that   $(\event', \rd)\in \po$ : if not, we have $(\rd, \event')\in \po$, and this results in the   $\hb$-cycle
$\event'\LPath{\hb}\rd\LTo{\po} \event'$.
Due to \cref{lem:cm_rf_immediate}, we have that $\event'=\rd_2$.
Then, we can replace the subpath of $P$ between $\wt_1$ and $\rd_2$ with $P'$ thereby obtaining a $\obOne{\event}$-path $\wt'\LPath{\obOne{\event}}\rd$ traversing a smaller number of true edges, a contradiction.
Thus, $(\wt',\rd) \in \hb$, and hence $(\wt', \wt)\in \obOne{\event}$, as desired.
\end{proof}

\subsection{Proofs from \cref{SUBSEC:IMPLICATIONS_TSO_PSO}}\label{SUBSEC:APP_IMPLICATIONS_TSO_PSO}

For the consistency problem on an abstract execution $\expartial=(\E, \po)$,
the semantics of $\tsomm$ can be represented operationally as a labeled transition system $\LTS_{\tsomm}$.
A state in $\LTS_{\tsomm}$ is a triplet $\langle \Executed, \Buffers, \Memory\rangle$:
\begin{compactenum}
\item $\Executed\subseteq \E$ is the set of events that have been executed so far.
\item $\Buffers\colon \Threads\to (\W)^*$ maps every thread $t$ to a sequence of writes $\wt(t,x_1),\wt(t,x_2),\dots, \wt(t,x_i)$, which represents the state of the buffer of thread $t$.
\item $\Memory\colon \Locations\to\W$ maps every memory location of the shared memory to the most recent write to it.
\end{compactenum}
Given a set $\Executed\subseteq \E$, an event $\event$ is \emph{enabled} in $\Executed$, written $\Enabled(\Executed, \event)$, if $\event(t,x)\not\in \Executed$  and every $\po$-predecessor $\event'$ of $\event$ (i.e., $(\event', \event)\in \po$) is such that $\event'\in \Executed$.
\cref{fig:tso_semantics} lists the transitions in $\LTS_{\tsomm}$, which are of the following types.
\begin{compactenum}
\item ${\small [\textsc{Buffer Write}]}$ defines the execution of a write $\wt(t,x,v)$, which modifies the buffer of $t$ .
\item ${\small [\textsc{Memory Write}]}$ defines the flushing of a buffer write $\wt(t,x,v)$ to the shared memory.
\item ${\small [\textsc{Buffer Read}]}$ defines the execution of a read $\rd(t,x,v)$, reading from the buffer of $t$.
\item ${\small [\textsc{Memory Read}]}$ defines the execution of a read $\rd(t,x,v)$, reading from the shared memory.
\end{compactenum}
Naturally, $\expartial\models \tsomm$ iff a state $\langle\E, \Buffers,\Memory \rangle$ is reachable from the initial state $\langle \emptyset, \lambda t. \epsilon, \lambda x. 0 \rangle$ in $\LTS_{\tsomm}$.
\begin{figure}
\small
\begin{subfigure}[b]{0.45\textwidth}
\begin{align*}
&{\small [\textsc{Buffer Write}]}\\
&\frac{
\begin{gathered}
\Enabled(\Executed, \wt(t,x,v))\\ 
\Executed'=\Executed\cup \{\wt(t,x,v)\}\\
\Buffers'=\Buffers[t\mapsto \Buffers(t)\cdot \wt(t,x,v)]
\end{gathered}
}
{
\langle \Executed, \Buffers, \Memory \rangle \LTo{\wt(t,x,v)} \langle \Executed' , \Buffers',  \Memory\rangle
}
\end{align*}
\end{subfigure}
\begin{subfigure}[b]{0.45\textwidth}
\begin{align*}
&{\small [\textsc{Memory Write}]}\\
&\frac{
\begin{gathered}
\Buffers(t)=\wt_1, \dots, \wt_{i}\\
\Buffers'=\Buffers[t\mapsto \wt_2,\dots, \wt_i]\\
\Memory'=\Memory[x\mapsto \wt_1]
\end{gathered}
}
{
\langle \Executed, \Buffers, \Memory \rangle \LTo{b(t)} \langle \Executed , \Buffers',  \Memory'\rangle
}
\end{align*}
\end{subfigure}
\\[2em]
\begin{subfigure}[b]{0.45\textwidth}
\begin{align*}
&{\small [\textsc{Buffer Read}]}\\
&\frac{
\begin{gathered}
\Enabled(\Executed, \rd(t,x,v))\\ 
\Executed'=\Executed\cup \{\rd(t,x,v)\}\\
\Buffers(t)=\wt_1,\dots, \wt_i\\
\exists j\in[i]. \wt_j=\wt(t,x,v) \text{ and } \forall \ell> j, \wt_{\ell} \neq \wt(t,x,u)
\end{gathered}
}
{
\langle \Executed, \Buffers, \Memory \rangle \LTo{\rd(t,x,v)} \langle \Executed' , \Buffers,  \Memory\rangle
}
\end{align*}
\end{subfigure}
\begin{subfigure}[b]{0.45\textwidth}
\begin{align*}
&{\small [\textsc{Memory Read}]}\\
&\frac{
\begin{gathered}
\Enabled(\Executed, \rd(t,x,v))\\ 
\Executed'=\Executed\cup \{\rd(t,x,v)\}\\
\Buffers(t)=\wt_1,\dots, \wt_i\\
\not\exists j\in[i]. \wt_j=\wt(t,x,\cdot)\\
\Memory(x)=\wt(f,x,v)
\end{gathered}
}
{
\langle \Executed, \Buffers, \Memory \rangle \LTo{\rd(t,x,v)} \langle \Executed' , \Buffers,  \Memory\rangle
}
\end{align*}
\end{subfigure}
\caption{\label{fig:tso_semantics}
The operational semantics of TSO.
}
\end{figure}

\thmuppertsopso*
\begin{proof}
We only argue about $\tsomm$, as the case of $\psomm$ is similar.
Our proof is by bounding the size of $\LTS_{\tsomm}$ by a polynomial (in the size of $\expartial$) bound.
Indeed:
\begin{compactenum}
\item Since $\Executed$ is downward closed for $\po$, there are $\leq \NumEvents^{\NumThreads}$ different instantiations of $\Executed$.
\item For every thread $t$, $\Buffers(t)$ is a contiguous sequence of writes of thread $t$.
Hence there are $\leq \NumEvents^{2}$ different instantiations of $\Buffers(t)$, leading to $\leq \NumEvents^{2\cdot \NumThreads}$ different instantiations of $\Buffers$.
\item For every memory location $x$ and given an instantiation of $\Executed$ and $\Buffers$, $\Memory(x)$ can hold $\leq \NumThreads$ different writes, one from each thread $t$, which is uniquely determined as the latest (wrt $\po$) write of thread $t$ on $x$ that is in $\Executed$ but not in $\Buffers(t)$.
Hence, given an instantiation of $\Executed$ and $\Buffers$, there are $\leq \NumThreads^{\NumLocations}$ instantiations of $\Memory$.
\end{compactenum}
It follows from the above that the number of states of $\LTS_{\tsomm}$ is bounded by $\NumThreads^{\NumLocations}\cdot \NumEvents^{O(\NumThreads^{2})}$, and thus polynomial when $\NumThreads, \NumLocations=O(1)$.
\end{proof}

Finally, we remark that \cref{thm:upper_tso_pso} also holds in the presence of fences and atomic read-modify-write (RMW) events, by a very similar argument.

\end{document}